\RequirePackage{nameref}
\PassOptionsToPackage{table,dvipsnames}{xcolor}
\PassOptionsToPackage{hypertexnames=false}{hyperref}
\documentclass[a4paper, onecolumn, unpublished]{quantumarticle}
\pdfoutput=1
\usepackage[utf8]{inputenc}
\usepackage{maths}
\usepackage[ruled,vlined]{algorithm2e}
\usepackage{mathformatting}
\usepackage{formatting}
\usepackage{tikz}
\usepackage{comment}
\usepackage{amsmath, amssymb, amsthm, amsfonts, graphicx}
\usepackage{subcaption}
\theoremstyle{plain}
\usepackage{url}
\usepackage{enumerate}
\usepackage{enumitem}
\usepackage{quantikz}
\usepackage{multirow}
\usepackage{longtable}
\makeatletter
\let\LT@err\@gobbletwo
\makeatother
\usepackage{array}
\usepackage{booktabs}
\usepackage{rotating}
\usepackage{adjustbox}
\usepackage{placeins}
\usepackage[numbers,sort&compress]{natbib}
\usetikzlibrary{fit}
\usetikzlibrary{positioning, arrows.meta, shapes.geometric, backgrounds, fit}

\newcommand{\be}{\begin{equation}}
\newcommand{\ee}{\end{equation}}
\newcommand{\bea}{\begin{eqnarray}}
\newcommand{\eea}{\end{eqnarray}}

\newcommand{\tAut}{\colorbox{green!12}{\small\textsf{Aut}}}
\newcommand{\tH}{\colorbox{blue!12}{\small\textsf{H}}}
\newcommand{\tS}{\colorbox{orange!16}{\small\textsf{S}}}
\newcommand{\tCX}{\colorbox{red!12}{\small\textsf{CX}}}
\tikzset{lqstyle/.style={row sep={0.55cm,between origins}, column sep=0.30cm}}
\newcommand{\circwrap}[1]{\raisebox{-0.5\height}{#1}}

\newif\ifnotes
\notestrue

\author[1]{AJ Davenport}
\author[2]{John Blue}
\author[1,2]{Isaac Chuang}
\affil[1]{Electrical Engineering and Computer Science Department, Massachusetts Institute of Technology}
\affil[2]{Department of Physics, Massachusetts Institute of Technology}

\title{Generalized Bicycle Codes as Cyclic Submodules and their Automorphism Structure}

\begin{document}
\maketitle

\begin{abstract}
Automorphisms of quantum codes, when they exist, offer a pathway toward fault-tolerant gate implementation via qubit relabeling. Although useful, the conditions under which automorphisms appear in a given code remain poorly understood. In this paper, we develop an algebraic framework for systematically analyzing and engineering automorphisms in Generalized Bicycle (GB) codes. Central to our approach is the derivation of a three-space dependency between the polynomial ring space, the parity check matrix space, and the $\bbF_2^{2\ell}$ qubit space, similar to the structure found in the study of classical cyclic codes. By expressing GB codes as a pair of cyclic submodules of $R_\ell^2$, where $R_\ell \cong \bbF_2[x]/\langle x^\ell-1\rangle$, we reduce the search for code automorphisms to a deterministic algebraic problem, deriving necessary and sufficient conditions for the existence of block-separable automorphisms built from cyclic shifts, ring automorphisms and block-swaps. We connect these conditions to the fold-transversal gate framework, providing explicit criteria for the existence of $H$-, $S$-, and $CX$-type fold-transversal gates. We further discuss structured bases for logical operators in order to determine the logical action of a given automorphism. Finally, we introduce the Maximal Cube Root (MCR) code family, a family of GB codes constructed around the principle of maximizing automorphism flexibility and fold-CX gates. We demonstrate a collection of $k=2$ MCR codes up to $d=13$ generating the 2-qubit Clifford group via automorphism and fold-transversal gates, with stabilizer weight ranging from 8 to 16, and $k>2$ MCR codes with a minimum of 20 distinct logical gates achievable from automorphisms. This serves as a first demonstration of inverse design: using these methods to build codes around a rich automorphism structure from the ground up.
\end{abstract}

\tableofcontents

\section{Introduction}

Fault-tolerant quantum computation requires the ability to perform logical operations on encoded qubits while actively suppressing the accumulation of physical errors. At present, it is largely accepted that the primary path toward fault tolerance relies on quantum error correction, which involves encoding a small number of logical qubits into many physical qubits and adding ancilla qubits to facilitate the execution of memory and computation. Understanding and reducing the overhead required to achieve this fault-tolerantly, from qubit count to hardware challenges in implementation, remains one of the central engineering challenges standing between current hardware and practically useful quantum computers.

Historically, the surface code and small-$k$ sister codes such as the toric code have been the canonical codes of choice due to their low stabilizer weight, geometric locality, and ease of implementation in hardware. However, their $[[n, k, d]]$ parameters scale poorly, and in recent years there has been a push to investigate quantum Low-Density Parity-Check (qLDPC) codes. While these codes simultaneously bound stabilizer weight and qubit check degree, they typically require higher-weight stabilizers and long-range (non-local) connectivity compared to the surface code in order to achieve higher encoding rates.

The search for codes that simultaneously achieve good parameters across all of these axes has driven substantial recent progress, with constructions such as hypergraph product codes \cite{og_HGP, eczoo_hgp}, lifted and balanced product codes \cite{PK_asmyptoticallygood, qianzhao_constantoverhead, breuckmann2021balancedproduct}, bivariate-bicycle codes \cite{Original_BBpaper}, and others \cite{Zhao_highrate, kasai2026}. However, achieving good parameters and efficient syndrome extraction is only part of the quantum computing stack. Encoded qubits are useful only if one can perform a universal set of logical gates on them, ideally with minimal overhead. 

Particularly desirable are logical gates implemented either by permutation or constant-depth unitary circuits, as these circuits propagate errors in a controlled way that respects the code's error correction guarantees with minimal overhead. Such constant-depth logical gate implementations are possible with CSS codes and include transversal, automorphism~\cite{grassl_autos}, and fold-transversal gates \cite{Breuckmann_foldtransversal}. While these gates are often insufficient for universal computation, 
recent work on extractor architectures \cite{he2025extractors,yoder2025tourgrossmodularquantum} has provided an avenue toward universal quantum computation with arbitrary qLDPC codes using qLDPC surgery~\cite{cohen2022low-overhead,cross2024improved,williamson2024low-overhead,Ide_2025,swaroop2024universal,parsimonioussurgery} and available constant-depth unitaries. Notably, since surgery is almost always more costly than constant-depth gates, computation on qLDPC codes remains an active line of research.

The dominant paradigm in recent qLDPC work has been to identify codes with good distance, rate, and stabilizer weight, and then to ask whether the resulting code happens to admit such desirable gates. This approach has yielded important constructions for determining, given a code, when automorphisms and fold-transversal gates exist \cite{Breuckmann_foldtransversal, sayginel2025faulttolerantlogicalcliffordgates}, but it leaves much to be desired, such as a systematic principle connecting the algebraic choices made during code design to the gate set that ultimately emerges. This leaves code design, to some extent, at the mercy of good fortune, or at least, computationally expensive numerical searches.

Recent innovative studies such as Refs.~\cite{phantomcodes, ExhaustiveAtomorphisms, automorphisms_BBcodes, berthusen2025automorphismgadgetshomologicalproduct, berthusen2025simplelogicalquantumcomputation} begin to address the inverse question: what is it about a code's structure that yields good automorphisms? Of particular relevance to this work, \cite{automorphisms_BBcodes} established a framework for identifying bases of logical operators and fold-transversal gates in two-block group algebra (2GBA) codes, and used this framework to identify BB codes with rich symmetries. Motivated by these works, we move to assess Generalized Bicycle (GB) codes using the language of ring theory and ask: for a given GB code, precisely \textit{when} and \textit{how} do automorphisms arise, and what are the algebraic conditions on the code's defining polynomials that control them? By answering this question for any GB code, we aim to lay the groundwork for a more principled design methodology: one in which a code can be engineered from the start to realize a targeted logical gate set, without sacrificing the distance and rate properties that make it useful.

Central to this work is a framing of GB codes through what we refer to as a three-space dependency, a perspective that bridges the following three spaces that define a GB code: (a) the parity-check matrix space, built from binary circulant matrices; (b) the qubit space, represented by elements of $\bbF_2^{2\ell}$; and (c) the algebra underlying the code, the polynomial ring space $(\bbF_2[x]/\langle x^\ell-1\rangle )^2 = R_\ell^2$. Under the three-space dependency, a natural map emerges by first lifting from the check matrix space to the polynomial ring space, where automorphisms in the polynomial ring space can then be mapped to the qubit space. The three-space dependency framework, depicted in Figure \ref{fig:3spaceiso}, is a central tool for bridging logical gates, code automorphisms, and code definitions, and allows us to determine guaranteed automorphisms of a GB code given only the block length $\ell$ and the code's defining polynomials.

\begin{figure}[t]
\centering
\begin{tikzpicture}[
    >=Latex,
    spacebox/.style={
        rectangle, draw=black, thick, rounded corners,
        align=center, inner sep=8pt, minimum height=2.5cm
    },
    action/.style={
        ->, thick, draw=black!70
    },
    iso/.style={
        <->, double, thick, draw=blue!70
    },
    labeltext/.style={
        align=center, font=\small
    }
]


\node[spacebox, fill=green!5, text width=5.6cm] (Poly) at (0, 6.5) {
    \textbf{Polynomial Ring Space} \\[0.8ex]
    $R_\ell \oplus R_\ell$ \quad ($R_\ell = \mathbb{F}_2[x]/\langle x^\ell-1 \rangle$) \\[0.8ex]
    \textit{Cyclic Submodules:}\\
    $\mathcal{C}_{H_X} = \{(pa, qa) \mid a \in \langle f \rangle\}$ \\[0.8ex]
    \textit{Operations }$\psi$:\\
    Cyclic Shifts: $f \rightarrow x^i f$\\
    Ring Automorphisms: $ x \mapsto x^j$\\
    Block Swaps: $(pa, qa) \rightarrow (qa, pa)$
};

\node[spacebox, fill=blue!5, text width=5.5cm, minimum height=4.3cm] (Matrix) at (-4, 0) {
    \textbf{Matrix Space} \\[0.8ex]
    $\text{Mat}_{\ell \times 2\ell}(\mathbb{F}_2)$ \\[0.8ex]
    \textit{Stabilizer Rowspaces:}\\
    $H_X = [A \mid B]$\\
    $H_Z = [B^T \mid A^T]$ \\[0.8ex]
    \textit{Structure:}\\
    Commuting \mbox{Binary} Circulants
};

\node[spacebox, fill=red!5, text width=5.5cm, minimum height=4.3cm] (Physical) at (4, 0) {
    \textbf{Physical Qubit Space} \\[0.8ex]
    $(\mathbb{C}^2)^{\otimes 2\ell}\cong \mathbb{C}^{2\ell} \otimes \mathbb{C}^{2\ell}$\\[0.8ex]
    \textit{Qubit Coordinates:}\\
    $v = (v_0, \dots, v_{2\ell-1})$ \\[0.8ex]
    \textit{Operations:}\\
    Permutations $\phi \in S_{2\ell}$
};


\draw[iso] (Matrix.north) --
    node[labeltext, pos=0.75, left=0.4cm, text width=3cm] {
        \textbf{Algebraic Iso.}\\[0.3ex]
        $\mathrm{circ}(\mathbb{F}_2^{\ell \times \ell}) \overset{\text{ring}}\cong R_\ell$\\
        $A \mapsto pf$, $B \mapsto qf$
    }
    (Poly.south west);

\draw[action, <->, draw=red, line width=1.5pt] (Poly.south east) --
    node[labeltext, pos=0.3, right=0.4cm, text width=3.5cm] {
        \textbf{The New Method}\\[0.3ex]
        $R_\ell \oplus R_\ell \overset{\text{vec}}{\cong} \bbF_2^{2\ell}$\\
        $\psi$ preserving \\ 
        $\langle f \rangle, p, q \pmod{\ann(f)}$\\
        constructs $\phi \in S_{2\ell}$
    }
    (Physical.north);

\draw[action, <->, dashed] (Matrix.south)
    to[out=-90, in=-90, looseness=0.45]
    node[labeltext, midway, below=0.4cm, text width=4.5cm] {
        \textbf{The Historical Problem}\\[0.3ex]
        $\mathrm{circ}(\bbF_2^{\ell \times \ell}) \overset{\text{vec}}{\cong} \bbF_2^{2\ell}$\\
        Find $\phi$ preserving $H_{X,Z}$\\
    }
    (Physical.south);

\end{tikzpicture}
\caption{\textbf{The three-space framework} for Generalized Bicycle codes.  The historical approach (dashed line) attempts to find physical qubit permutations that preserve the global matrix rowspaces, an opaque and computationally heavy search outside of the trivial cyclic shift solutions (recently improved by ~\cite{sayginel2025faulttolerantlogicalcliffordgates}, however does not provide insight into why automorphisms arise). The fundamental insight of this work (double blue line and red arrow) is to map the matrices into the Polynomial Ring Space, where physical permutations are recontextualized as formal ring automorphisms ($x \mapsto x^j$), and can be further paired with cyclic shifts and block swaps. This allows deterministic automorphism code design from the top down by choosing polynomials $p$, $q$, and $f$ that satisfy specific constraints with respect to $\ell$.  The algebraic ring isomorphism between the Polynomial Ring and Matrix Spaces (blue arrow) and the cyclic submodule description of the rowspaces (top box) are established in Sec.~\ref{sec:3SpaceIso}; the lift from polynomial-ring operations $\psi$ to physical qubit permutations $\phi \in S_{2\ell}$ (red arrow), together with the algebraic conditions on $(f, p, q)$ that determine which gates exist, is developed in Sec.~\ref{sec:AutomorphismStructure}; the resulting logical action on the encoded qubits is the subject of Sec.~\ref{sec:LogicalOps}.  See main text for notation definitions.}
\label{fig:3spaceiso}
\end{figure}

We start in Sec.~\ref{sec:3SpaceIso} by introducing the concept of the three-space dependency based on well-understood classical cyclic coding theory. This allows us to present a generalized ansatz for GB codes as cyclic submodules of $(\bbF_2[x]/\langle x^\ell - 1 \rangle)^2 = R_\ell^{2}$, enabling the extension of the three-space dependency framework to GB codes. In Sec.~\ref{sec:AutomorphismStructure} we demonstrate that this ansatz can be leveraged, along with knowledge of the dimension $\ell$, the base ideal $f$, and what we call transfer polynomials $p, q$, to determine the full class of block-separable automorphisms built from cyclic shifts, block-swapping, and $R_\ell$ ring automorphisms. We further connect this work to the methods outlined in Ref.~\cite{Breuckmann_foldtransversal} for $H$- and $S$-type fold-transversal gates, and we derive sufficient conditions that ensure the existence of a CX-type fold-transversal gate. In Sec.~\ref{sec:LogicalOps} we discuss canonical logical operators and assignments for these codes in order to assess the logical action of a given automorphism. Finally, in Sec.~\ref{sec:MaxCubeRoot}, we present a new family of generalized bicycle codes (the ``Maximal Cube Root code family'') characterized by maximal automorphism flexibility and a large number of fold-CX gates\footnote{Scripts for finding MCR codes and determining the automorphisms and logical actions of a GB code given any valid $f, p, q$, and $\ell$ can be found at \url{https://github.com/ajdav136/GBAutomorphisms}}. For example, we demonstrate a collection of $k=2$ MCR codes ranging from $[[18, 2, 5]]$ to $[[66, 2, 13]]$ whose automorphisms and fold gates generate the 2-qubit Clifford group with stabilizer weight between 8 and 16. For larger values of $k$, we present a $[[102, 18, \leq 12]]$ code featuring 76 unique logical gates achievable via automorphism and fold-transversal operations.

We emphasize that this work does not complete the program of designing GB codes around targeted gate sets while simultaneously maintaining relative locality and low stabilizer weight. The inverse design strategy presented here represents a first step: this shows that the algebraic levers are well-understood and can be pulled deliberately to yield codes with many automorphisms, and when $k$ is small, the full Clifford group. However, while being designed to have many automorphisms, the resulting codes of Sec.~\ref{sec:MaxCubeRoot} do not target a pre-specified logical gate set, and at larger $k$ the automorphism-induced gates fall short of the Clifford group. Further, the codes have not yet been fully optimized for distance or low stabilizer weight. We view the reconciliation of automorphism-driven design with the numerical search techniques that have produced the best-known GB code parameters as the natural next direction for this line of work, and we hope the framework developed here provides useful foundations for that endeavor.

\section{Generalized Bicycle Codes as Cyclic Submodules of $R_\ell^2$}
\label{sec:3SpaceIso}

We begin in Sec.~\ref{sec:threespaces} by reviewing the three interconnected algebraic spaces at the heart of classical cyclic coding theory before reviewing the Generalized Bicycle code construction and motivating the central theorem used in this work, contained in Sec.~\ref{sec:gbreview}. Our central thesis is that the rich algebraic structure underpinning cyclic coding theory can be naturally extended to Generalized Bicycle codes through cyclic submodules, yielding a similar three-space dependency framework under which the symmetries and automorphisms of GB codes can be assessed (Sec.~\ref{sec:3spaceGB}). We call this a three-space dependency, and it serves as the organizing principle for the results that follow.

\subsection{The Three Spaces}
\label{sec:threespaces}

We start with a well-known chain of isomorphisms from classical cyclic coding theory:
\be
\bbF_q[x]/\langle x^\ell - 1\rangle \underset{\text{ring}}{\cong}  \mathrm{circ}(\bbF_q^{\ell\times \ell}) \underset{\text{vec}}{\cong}  \bbF_q^\ell
\ee
where $q$ is a prime power representing the size of the alphabet, $\bbF_q$ is the Galois field containing $q$ elements, $\bbF_q[x]$ is the ring of polynomials in the formal variable $x$ with coefficients chosen from $\bbF_q$, $\ell$ is a positive integer representing the block length of the cyclic code, $\bbF_q[x]/\langle x^\ell - 1\rangle$ is the quotient ring of polynomials modulo $x^\ell -1$, and 
$\mathrm{circ}(\bbF_q^{\ell\times \ell})$ denotes the space of circulant matrices of dimension $\ell \times \ell$ with coefficients over $\bbF_q$. We note that in this work, we adopt the convention to define circulant matrices via row shifts, as is standard in classical coding theory, as opposed to column shifts, as is standard in prior works on GB codes \cite{Panteleev_2021, wang2022distanceboundsgeneralizedbicycle}. It should be noted that circulant matrices are simultaneously circulant with respect to either row or column shifts via the $x \rightarrow x^{-1}$ involution. Our decision to canonicalize the row shift perspective is well motivated, and will become clear in Sec.~\ref{sec:gbreview}.

A specific example of these isomorphisms is demonstrated in Fig.~\ref{fig:3spaceexample}. It is important to note that the isomorphism between the polynomial ring and the space of circulant matrices is an isomorphism of rings, while the second is an isomorphism of vector spaces\footnote{Indeed, there is an isomorphism of vector spaces between $\bbF_q[x]/\langle x^\ell - 1\rangle$ and $\bbF_q^\ell$ as well}. In particular, as the isomorphism to $\bbF_q^\ell$ is an isomorphism of vector spaces, $\bbF_q^\ell$ is blind to the ring structure of $\bbF_q[x]/\langle x^\ell-1 \rangle $. Though the space of circulant matrices inherits this ring structure, much of the algebraic structure of the code is obfuscated by large arrays of field elements.

This chain of isomorphisms plays a central role in classical coding theory, the cornerstone of which is the following fact: All cyclic codes are isomorphic to ideals of $R_{q, \ell} = \bbF_q[x]/\langle x^\ell - 1\rangle$. As such, the study of cyclic codes and their properties reduces to the study of ideals of polynomial rings, an incredibly rich algebraic field of study.  Thus, the polynomial ring space is absolutely crucial to understanding the algebraic properties of a classical cyclic code, and the vector space isomorphism back to $\bbF_q^\ell$ provides a map for understanding how the ring structure governs and acts on the codespace, elements of $\bbF_q^\ell$. 

As we will show, the algebraic properties of $R_{q,\ell}$ and $R_{q,\ell}^2 = R_{q, \ell} \oplus R_{q,\ell}$ play just as important a role in the study of GB codes as they do in the study of classical cyclic codes. 

\begin{figure}[htb]
\centering
\begin{tikzpicture}[
    >=Latex,
    spacebox/.style={
        rectangle, draw=black, thick, rounded corners,
        align=center, inner sep=8pt
    },
    iso/.style={
        <->, double, thick, draw=blue!70
    },
    labeltext/.style={
        align=center, font=\small
    }
]


\node[spacebox, fill=green!5, text width=4.2cm] (Poly) at (0, 5.2) {
    \textbf{Polynomial Ring Space} \\[0.6ex]
    $R_7 = \mathbb{F}_2[x]/\langle x^7-1\rangle$ \\[0.8ex]
    $x + x^4 + x^5$
};

\node[spacebox, fill=blue!5, text width=5.0cm] (Matrix) at (-4.2, 0) {
    \textbf{Matrix Space} \\[0.5ex]
    $\mathrm{circ}(\mathbb{F}_2^{7\times 7})$ \\[0.5ex]
    $\begin{pmatrix}
    0 & 1 & 0 & 0 & 1 & 1 & 0 \\
    0 & 0 & 1 & 0 & 0 & 1 & 1 \\
    1 & 0 & 0 & 1 & 0 & 0 & 1 \\
    1 & 1 & 0 & 0 & 1 & 0 & 0 \\
    0 & 1 & 1 & 0 & 0 & 1 & 0 \\
    0 & 0 & 1 & 1 & 0 & 0 & 1 \\
    1 & 0 & 0 & 1 & 1 & 0 & 0
    \end{pmatrix}$
};

\node[spacebox, fill=red!5, text width=4.0cm] (Bit) at (4.2, 0) {
    \textbf{Classical Bit Space} \\[0.6ex]
    $\mathbb{F}_2^7$ \\[0.8ex]
    $(0,1,0,0,1,1,0)$
};


\draw[iso] (Matrix.north) --
    node[labeltext, pos=0.5, left=0.3cm, text width=2.8cm] {
        $\mathrm{circ}(\mathbb{F}_2^{7\times 7}) \cong R_7$
    }
    (Poly.south west);

\draw[iso] (Poly.south east) --
    node[labeltext, pos=0.5, right=0.3cm, text width=2.8cm] {
        Coefficient\\vector
    }
    (Bit.north);

\draw[iso] (Matrix.south)
    to[out=-90, in=-90, looseness=0.45]
    node[labeltext, midway, below=0.3cm] {
        row 0 of circulant
    }
    (Bit.south);

\end{tikzpicture}
\caption{The classical three-space dependency at $\ell = 7$: the element $x + x^4 + x^5 \in R_7$ and its equivalent representations as a binary circulant matrix in $\mathrm{circ}(\mathbb{F}_2^{7\times 7})$ and a vector in $\mathbb{F}_2^7$. Each row of the circulant is the coefficient vector of $x^i \cdot a(x) \bmod (x^7+1)$. }
\label{fig:3spaceexample}
\end{figure}

We recommend that readers who are not familiar or comfortable with ring theory review the content of Appendix~\ref{appen:codingtheory}. At many points throughout this work we use facts about polynomial rings from a standard course in abstract algebra; however, we do try to note when a conclusion may not be obvious if one is not familiar with ring theory. Many abridged lemmas are contained in Appendix~\ref{appen:codingtheory} as well.

Going forward and throughout this work we assume $\bbF_q = \bbF_2$ and note that we expect the central results of Sec.~\ref{sec:3SpaceIso} and \ref{sec:AutomorphismStructure} to extend trivially when $q > 2$; however, we leave an explicit analysis for future work. We assign $R_\ell = \bbF_2[x]/\langle x^\ell - 1\rangle$. 

\subsection{Generalized Bicycle Codes as Cyclic Submodules}
\label{sec:gbreview}

This subsection has two purposes. First, we recall the standard Generalized Bicycle code construction \cite{kovalev2013quantumkronecker, Panteleev_2021}, which builds a quantum CSS code from a pair of cyclic codes defined by polynomials $f_1, f_2 \in R_\ell$.  Second, we motivate the algebraic decomposition that drives the rest of this paper: every interesting pair $(f_1, f_2)$ admits a factorization $f_1 = pf$, $f_2 = qf$, where $f = \gcd(f_1, f_2, x^\ell - 1)$ and we call $p, q$ \emph{transfer polynomials}. This separates the part of $(f_1, f_2)$ that determines how many logical qubits are encoded ($f$) from the part that determines how the two halves of the code relate to each other ($p$ and $q$).  It also rewrites every row of the parity-check matrix $H_X$ in the suggestive form $(pa, qa)$ with $a \in \langle f \rangle$, foreshadowing the cyclic submodule description of the rowspace that we formalize in Theorem~\ref{thm:cyclic_submodule}.

We begin with the standard construction: a Generalized Bicycle code is a quantum code constructed from two classical cyclic codes defined by polynomials $f_1, f_2 \in R_\ell$ for some fixed block length $\ell$. The quantum code then has $H_X, H_Z$ check matrices built from the binary circulant representations of $f_1, f_2$, mapping $f_1$ to $A$, and $f_2$ to $B$:
\bea
H_X = [A | B] \quad \quad H_Z = [B^T |A^T]
\eea
Note that by the isomorphism above for classical codes, circulant matrices necessarily commute as $R_\ell$ is a commutative ring, and as such $H_XH_Z^T = 0$ is guaranteed \footnote{Choosing to define the binary circulant matrix via row shifts as opposed to column shifts does not separate or distinguish these results on GB codes from prior work, as codes defined by row shifts of $f_1, f_2$ as opposed to column shifts are the same up to swapping the roles of $H_X, H_Z$. See Appendix~\ref{subsec:cycliccodes}.}

A standard fact about GB codes is that the number of encoded qubits $k$ is $2\deg(\gcd(f_1, f_2, x^\ell-1))$ \cite{kovalev2013quantumkronecker, Panteleev_2021}. In particular, when $f_1$ and $f_2$ share no common divisors of $x^\ell-1$, $\gcd(f_1, f_2, x^\ell-1) = 1$ and $k=0$. Thus, any interesting GB code must be built from $f_1, f_2$ that share some irreducible factor of $x^\ell-1$. As $\bbF_2[x]$ is a unique factorization domain, let
\bea
x^\ell-1 = \prod_i g_{i}^{e_i}
\eea
denote the factorization of $x^\ell-1$ over $\bbF_2[x]$, where each $g_i$ is irreducible. As we previously established, $f_1, f_2$ each on their own define a cyclic code, and as such must each generate some ideal of $R_\ell$. All ideals of $R_\ell$ are principal (generated by a single element), and each ideal of $R_\ell$ uniquely corresponds to a divisor of $x^\ell-1$, i.e., products of irreducible factors with the multiplicity of any $g_i$ at most $e_i$ \cite{huffmanpless}. Denote the ideal generated by a polynomial $p \in R_\ell$ as $\langle p \rangle$. We will routinely make use of the following fact (Lemma~\ref{lem:equivalent_ideals}): given $r \in R_\ell$, $g$ a divisor of $x^\ell-1$ 
\[
\langle r \rangle = \langle g \rangle \iff \gcd(r, x^\ell-1) = g \iff r = ug
\]
for $u$ in $R_\ell^\times$. Elements of $R_\ell^\times$ are referred to either as units, or invertible elements.

Thus, even if $f_1, f_2$ are not identically equivalent to some product of irreducible factors of $x^\ell-1$, we have that the ideal generated by $f_1$, i.e., $\langle f_1 \rangle$, is equivalent to $\langle g_{f_1}\rangle$, where $g_{f_1}$ is some product of divisors of $x^\ell-1$. This too holds for $f_2$, and we have that 
\bea
f_1 = u_{f_1}\cdot g_{f_1} \quad \quad f_2 = u_{f_2} \cdot g_{f_2}
\eea
where $u_{f_1}, u_{f_2}$ are elements in $R_\ell^\times$.

While $g_{f_1}$ and $g_{f_2}$ need not be equal, the assumption $k > 0$ forces them to share at least one irreducible factor of $x^\ell - 1$.  Let $f = \gcd(f_1, f_2, x^\ell - 1)$ denote this shared part — the product of irreducible factors of $x^\ell - 1$ that divide both $f_1$ and $f_2$.  Removing these common factors from each half, write $f_1 = pf$ and $f_2 = qf$, where $p, q \in R_\ell$, with\footnote{Be careful however to note that any pair of $p, q, x^\ell-1$ need not have gcd 1.} $\gcd(p, q, x^\ell-1) = 1$.

To see how $p$ and $q$ act on the ideals $\langle f_1 \rangle, \langle f_2 \rangle$, take any $\alpha \in \langle f_1 \rangle$ and write $\alpha = rf_1 = rpf = p(rf)$ by commutativity, with $r \in R_\ell$ and $rf \in \langle f \rangle$ by the definition of an ideal.  Thus every element of $\langle f_1 \rangle$ is the product of $p$ and an element of $\langle f \rangle$, and the analogous statement holds\footnote{Note however this does \emph{not} imply that $\langle f_1 \rangle = \langle f \rangle$.} for $\langle f_2 \rangle$.  By Lemma~\ref{lem:subsetideals},
\bea
\langle f_1 \rangle = \{ pa | a \in \langle f \rangle\} \quad \quad \langle f_2 \rangle = \{ qa | a \in \langle f \rangle \}
\eea
This description associates to each row of $H_X = [A | B]$ a pair $(\alpha, \beta)$ with $\alpha \in \langle f_1 \rangle$ and $\beta \in \langle f_2 \rangle$, linked by a common multiplier $a \in \langle f \rangle$ scaled by $p$ on the left and $q$ on the right.  Concretely, the first row of $H_X$ is the $\bbF_2^{2\ell}$ element $(pf, qf)$, the pair of classical cyclic code polynomials; the second row is $(xpf, xqf) = (p(xf), q(xf)) = (pf', qf')$ with $f' = xf \in \langle f \rangle$, a cyclic shift of each half by one position; and inductively every row of $H_X$ has the form $(pa, qa)$ for some cyclic shift $a = x^if\in \langle f \rangle$. One may similarly associate polynomial pairs to $H_Z$. This is made more concrete in Theorem~\ref{thm:cyclic_submodule}.

Given an element $g \in R_\ell$ and a pair $(\alpha, \beta) \in R_\ell^2$, the cyclic submodule of $R_\ell^2$ generated by $(\alpha, \beta)$ over $\langle g \rangle$ is the subset
\bea
    M_g(\alpha, \beta) \;=\; \{(\alpha a,\, \beta a) : a \in \langle g \rangle\} \;\subseteq\; R_\ell^2.
\eea
In words, one fixes a generator $(\alpha, \beta)$ and scales both coordinates simultaneously by elements $a$ drawn from the ideal $\langle g \rangle \subseteq R_\ell$. The set $M_g(\alpha, \beta)$ is closed under addition and under scaling both coordinates by any element of $R_\ell$, and is generated, with respect to this scaling, by the single pair $(\alpha, \beta)$, hence the term ``cyclic.''  A more general treatment of module theory can be found in, for example, \cite{dummit2003abstract}.

With these ideas in mind, we formalize the above discussion into a central theorem that frames GB codes as cyclic submodules, allowing us to invoke the vast literature on $R_\ell$ and cyclic coding theory to algebraically assess the automorphism and logical operator structure of a GB code. 

\begin{theorem}
\label{thm:cyclic_submodule}
Let $C$ be a Generalized Bicycle code of block length $\ell$ defined by polynomials $f_1, f_2 \in R_\ell = \mathbb{F}_2[x]/(x^\ell - 1)$, with parity-check matrices
\bea
H_X = [\mathrm{circ}(f_1) \mid \mathrm{circ}(f_2)], \qquad H_Z = [\mathrm{circ}(f_2)^T \mid \mathrm{circ}(f_1)^T].
\eea
and $k = 2\deg(\gcd(f_1, f_2, x^\ell - 1)) > 0$. Let $\overleftarrow{\alpha} = \alpha(x^{-1})$, that is, the polynomial $\alpha$ evaluated at $x^{-1}$.
Then there exist polynomials $p$ and $q \in R_\ell$ such that the rowspaces of $H_X, H_Z$ can be represented as a pair of cyclic submodules of $R_\ell^2$:
\begin{equation}
\begin{aligned}\label{eq:rowspaces}
    \operatorname{rs}(H_X) &= \{(pa,\, qa) \mid a \in \langle f \rangle\} = M_f(p,q)\\
    \operatorname{rs}(H_Z) &= \{(\overleftarrow{qa},\, \overleftarrow{pa}) \mid a \in \langle f \rangle\} = M_{\overleftarrow{f}}(\overleftarrow{q}, \overleftarrow{p})
\end{aligned}
\end{equation}
where
\bea
f = \gcd(f_1, f_2, x^\ell - 1), \qquad  \overleftarrow{qa} = q(x^{-1})a(x^{-1})
\,,
\eea
$f_1 = pf$, $f_2 = qf$ and $\gcd(p, q, x^\ell - 1) = 1$.  Furthermore, define
\bea
\hat{f} = \frac{x^\ell - 1}{f}, \qquad S = R_\ell / \langle \hat{f} \rangle
\eea
The transfer polynomials $p$ and $q$ are unique modulo the annihilator $\langle \hat{f} \rangle$: if $f_1 = p'f$ for some $p' \in R_\ell$, then $p' \equiv p \pmod{\hat{f}}$, and the rowspaces in \eqref{eq:rowspaces} depend on $p, q$ only through their images in $S$.
\end{theorem}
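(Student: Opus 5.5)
The proof splits into three parts: the $H_X$ rowspace, the $H_Z$ rowspace, and uniqueness modulo $\hat f$.

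\textbf{The $H_X$ rowspace.} Since $f=\gcd(f_1,f_2,x^\ell-1)$ is a divisor of both $f_1$ and $f_2$ in $\bbF_2[x]$, we may write $f_1=pf$ and $f_2=qf$ as polynomials. Then $\gcd(p,q,x^\ell-1)=1$: any common irreducible factor $g$ of $p$, $q$ and $\hat f$ would make $gf$ a common divisor of $f_1,f_2,x^\ell-1$, contradicting maximality of $f$. (One must check that a factor of $x^\ell-1$ dividing $p,q$ divides $\hat f$, since $gf\mid x^\ell-1$ may require the multiplicity of $g$ to be accounted for; this is a routine check.)

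Under the row-shift convention, row $i$ of $H_X$ is the coefficient vector of $(x^if_1,x^if_2)=(p\,x^if,\;q\,x^if)$. The rowspace is the $\bbF_2$-span of these vectors. That span equals $\{(rf_1,rf_2): r\in R_\ell\}$, because $R_\ell$ is spanned over $\bbF_2$ by $1,x,\dots,x^{\ell-1}$. Writing $a=rf$, which ranges over all of $\langle f\rangle$ as $r$ ranges over $R_\ell$, this is exactly $M_f(p,q)$.

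\textbf{The $H_Z$ rowspace.} The key fact is that $\mathrm{circ}(\alpha)^T=\mathrm{circ}(\overleftarrow{\alpha})$. With row shifts, the $(i,j)$ entry of $\mathrm{circ}(\alpha)$ is $\alpha_{j-i}$, so the transpose has entries $\alpha_{i-j}$, which are the coefficients of $\alpha(x^{-1})$. Hence $H_Z=[\mathrm{circ}(\overleftarrow{f_2})\mid\mathrm{circ}(\overleftarrow{f_1})]$.

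The involution $x\mapsto x^{-1}$ is a ring automorphism of $R_\ell$, so $\overleftarrow{f_2}=\overleftarrow q\,\overleftarrow f$ and $\overleftarrow{f_1}=\overleftarrow p\,\overleftarrow f$. The same argument as for $H_X$ then gives $M_{\overleftarrow f}(\overleftarrow q,\overleftarrow p)$. This equals $\{(\overleftarrow{qa},\overleftarrow{pa}):a\in\langle f\rangle\}$, because the involution maps $\langle f\rangle$ bijectively onto $\langle\overleftarrow f\rangle$.

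\textbf{Uniqueness modulo $\hat f$.} Suppose $p'f=pf$ in $R_\ell$. Then $(p'-p)f\equiv 0$ modulo $x^\ell-1=f\hat f$. This gives $f\hat f\mid (p'-p)f$ in $\bbF_2[x]$, and cancelling $f$ in the integral domain $\bbF_2[x]$ yields $\hat f\mid p'-p$, i.e.\ $p'\equiv p\pmod{\hat f}$.

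For the dependence only on the images in $S$, take $a=rf\in\langle f\rangle$ and $p'=p+s\hat f$. Then $p'a=pa+sr\,x^\ell{-}1\cdot\ldots$, more precisely $p'a=pa+sr f\hat f=pa$ in $R_\ell$. So $M_f(p,q)$ depends only on $p,q \bmod \hat f$. Applying the involution gives the same conclusion for $H_Z$, since $\overleftarrow{\hat f}$ annihilates $\langle\overleftarrow f\rangle$.

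\textbf{Main obstacle.} The only genuinely delicate point is the transpose/involution identity together with the indexing convention. One must verify that the row-shift circulant of $\alpha$ transposes to the row-shift circulant of $\overleftarrow\alpha$, rather than to the column-shift circulant. I would check this directly on entries $(i,j)$ using the formula $\alpha_{j-i}$ with indices taken mod $\ell$. The multiplicity subtlety in the gcd claim is minor by comparison.
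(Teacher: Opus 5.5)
Your route is the paper's. The rowspace descriptions come from the rows of $H_X$ being $(p\,x^i f,\, q\,x^i f)$ together with $\mathrm{circ}(\alpha)^T=\mathrm{circ}(\overleftarrow{\alpha})$; you supply the span argument and the entrywise check that the paper leaves to the discussion before the theorem. Uniqueness is exactly the paper's cancellation of $f$ in $\mathbb{F}_2[x]$. Those parts are correct.

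\textbf{The gap.} It is in the normalization $\gcd(p,q,x^\ell-1)=1$, at the step you set aside as routine. It is false that an irreducible factor of $x^\ell-1$ dividing both polynomial quotients $p=f_1/f$ and $q=f_2/f$ must divide $\hat f$. Take $\ell=7$, where $x^7-1=(x+1)(x^3+x+1)(x^3+x^2+1)$ is squarefree, with $f_1=(x+1)^2$ and $f_2=(x+1)^2(x^3+x+1)$. Then $f=\gcd(f_1,f_2,x^7-1)=x+1$ and $k=2$. The quotients are $p=x+1$ and $q=(x+1)(x^3+x+1)$, so $\gcd(p,q,x^7-1)=x+1\neq 1$. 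Yet $x+1\nmid \hat f=(x^3+x+1)(x^3+x^2+1)$, since $\hat f(1)=1$.

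The mechanism is general. Whenever an irreducible $g$ occurs in both $f_1$ and $f_2$ to a higher power than in $x^\ell-1$, $f$ absorbs the full power of $g$, so $g\nmid\hat f$, yet $g$ survives in both quotients. This happens for odd $\ell$ too, not only because of repeated factors of $x^\ell-1$. What your maximality argument actually proves is the weaker statement $\gcd(p,q,\hat f)=1$.

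\textbf{The repair.} Use the freedom your own uniqueness computation exhibits: replacing $p$ by $p+s\hat f$ does not change $pf$ in $R_\ell$.
\begin{itemize}
\item For each of the finitely many irreducible $g\mid x^\ell-1$ with $g\nmid\hat f$, $\hat f$ is invertible modulo $g$. By the Chinese Remainder Theorem over these pairwise coprime $g$, choose $s$ with $p+s\hat f\equiv 1 \pmod{g}$ for all of them.
\item For irreducible $g\mid\hat f$, the residues of $p$ and $q$ modulo $g$ are unchanged. They are not both zero, by $\gcd(p,q,\hat f)=1$.
\end{itemize}
The adjusted pair then satisfies $f_1=pf$ and $f_2=qf$ in $R_\ell$ together with $\gcd(p,q,x^\ell-1)=1$. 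In the example, $p=x+1+\hat f$ works, since it is nonzero at $x=1$ and congruent to $x+1$ modulo both cubic factors.

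So the existence claim holds only after choosing a suitable lift in $R_\ell$, not for the naive polynomial quotients. The paper also merely asserts this normalization in the discussion preceding the theorem. This, rather than the transpose identity, is the genuinely delicate point of the proof.
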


\begin{figure}[htbp]
\centering
\begin{tikzpicture}[
    font=\small,
    >=Latex,
    sbox/.style={rectangle, draw=black, thick, rounded corners,
                 align=center, inner sep=8pt},
    polybox/.style={sbox, fill=green!5},
    matbox/.style={sbox, fill=blue!5},
    subbox/.style={sbox, fill=white},
    arr/.style={->, thick},
    every node/.style={align=center}
]

\node[align=center] (GBlabel) at (0, 0.3) {
    {\normalsize\bfseries Generalized Bicycle Code}\\[3pt]
    defined by $(\ell,\; f_1,\; f_2)$ with
    $R_\ell = \F_2[x]/\langle x^\ell - 1\rangle$
};

\node[subbox, minimum width=3.8cm] (f1) at (-3.8, -1.2) {
    \textbf{Left defining polynomial}\\[3pt]
    $f_1 = p\,f$
};
\node[subbox, minimum width=3.8cm] (f2) at (3.8, -1.2) {
    \textbf{Right defining polynomial}\\[3pt]
    $f_2 = q\,f$
};

\begin{pgfonlayer}{background}
    \node[draw=black, very thick, rounded corners=6pt, fill=green!5,
          fit=(GBlabel)(f1)(f2), inner sep=9pt] (GBbox) {};
\end{pgfonlayer}

\node[polybox, minimum width=6.2cm] (f) at (0, -4.0) {
    \textbf{Shared factor}\\[3pt]
    $f = \gcd(f_1,\; f_2,\; x^\ell - 1)$
};
\draw[arr] (f1.south) to[out=-70, in=150] (f.north west);
\draw[arr] (f2.south) to[out=-110, in=30]  (f.north east);

\node[polybox, minimum width=6.2cm] (ideal) at (0, -6.3) {
    \textbf{Shared cyclic ideal}\\[3pt]
    $\langle f\rangle = \{a : a = rf,\; r\in R_\ell\}$
};
\draw[arr] (f) -- (ideal);

\node[polybox, minimum width=3.9cm] (pI) at (-3.8, -8.9) {
    \textbf{Left image}\\[3pt]
    $p\langle f\rangle = \{pa : a\in\langle f\rangle\}$
};
\node[polybox, minimum width=3.9cm] (qI) at (3.8, -8.9) {
    \textbf{Right image}\\[3pt]
    $q\langle f\rangle = \{qa : a\in\langle f\rangle\}$
};
\draw[arr] (ideal.south west) to[out=-100, in=80]
    node[left, font=\footnotesize, pos=0.5] {$\times\,p$} (pI.north);
\draw[arr] (ideal.south east) to[out=-80, in=100]
    node[right, font=\footnotesize, pos=0.5] {$\times\,q$} (qI.north);

\node[matbox, minimum width=9cm, minimum height=2.3cm] (M) at (0, -11.7) {
    \textbf{Cyclic submodule generating the $X$-rowspace}\\[5pt]
    $M_f(p,q) \;=\; \{(pa,\; qa)\mid a\in \langle f\rangle\}\;\subseteq\; R_\ell^2$\\[4pt]
    \textit{same element $a\in\langle f\rangle$ feeds both coordinates simultaneously}
};
\draw[arr] (pI.south) -- (M.north -| pI);
\draw[arr] (qI.south) -- (M.north -| qI);

\node[draw=black!60, dashed, thick, rounded corners=3pt, inner sep=7pt,
      fill=blue!3, minimum width=9cm] (ex) at (0, -14.4) {
    \textbf{Typical elements of $M_f(p,q)$}\\[3pt]
    $(pf,\;qf),\quad (pxf,\;qxf),\quad (px^2f,\;qx^2f),\quad\ldots$
};
\draw[arr] (M) -- (ex);

\node[draw=black!70, thick, rounded corners=4pt, fill=yellow!15,
      text width=4.2cm, inner sep=8pt, align=left] (note) at (7.9, -11.7) {
    \textbf{Transfer polynomials}\\[5pt]
    $p$ and $q$ do not act on 
    $\langle f \rangle$ independently;\\[4pt]
    they transfer the \emph{same}
    $a\in\langle f\rangle$ into
    both coordinates at once.
};
\draw[arr] (note.west) -- (M.east);

\end{tikzpicture}
\caption{%
    \textbf{Algebraic decomposition of a Generalized Bicycle (GB) code into
    its cyclic submodule structure.}
    A GB code is defined by parameters $(\ell, f_1, f_2)$ with $f_1, f_2\in R_\ell$
    (green, Polynomial Ring Space).
    Writing $f = \gcd(f_1,f_2,x^\ell-1)$ extracts the shared factor and yields
    transfer polynomials $p,q$ satisfying $f_1 = pf$, $f_2 = qf$.
    The shared ideal $\langle f\rangle$ is then split by $p$ and $q$ into
    left and right images, which combine into the cyclic submodule
    $M_f(p,q) = \{(pa,qa)\mid a\in\langle f\rangle\}\subseteq R_\ell^2$
    (blue, Matrix Space) that generates the $X$-rowspace of the code.
    A second cyclic submodule $M_{\overleftarrow{f}}(\overleftarrow{q}, \overleftarrow{p})$ for the $Z$-rowspace exists but is not shown here.
    Crucially, the \emph{same} element $a\in\langle f\rangle$ scales both
    coordinates simultaneously, making $M_f(p,q)$ a genuine cyclic submodule
    rather than a Cartesian product of independent ideals.}
\label{fig:cyclic_submodule}
\end{figure}

\begin{proof}
Existence and rowspace decomposition follow from the preceding discussion in Sec.~\ref{sec:gbreview}. We address uniqueness of $p, q$ in $S$.

Observe that in $R_\ell, f_1 = pf$ does not determine $p$ uniquely. Let $p' = p + r\hat{f}$. We then have that $p'f = (pf + r\hat{f}f) = pf \in R_\ell$ as $\hat{f}$ annihilates $f$. However, observe that working over $S$, $p$ defines the rowspaces uniquely.

Suppose $f_1 = p'f$ in $R_\ell$ for some $p' \in R_\ell$. Then $(p - p')f = 0$ in $R_\ell$, which means $(x^\ell - 1) \mid (p - p')f$ in $\mathbb{F}_2[x]$. Since $x^\ell - 1 = f\hat{f}$ and $\mathbb{F}_2[x]$ is an integral domain, we may cancel $f$ to obtain $\hat{f} \mid (p - p')$, i.e., $p \equiv p' \pmod{\hat{f}}$.

Conversely, suppose $p' = p + r\hat{f}$ for some $r \in R_\ell$. For any $a \in \langle f \rangle$, write $a = fb$; then
\begin{equation*}
    p'a \;=\; pa + r\hat{f} \cdot fb \;=\; pa + r(x^\ell - 1)b \;=\; pa
\end{equation*}
in $R_\ell$. Hence the generator $(pa, qa)$ is unchanged, and the rowspaces in \eqref{eq:rowspaces} depend on $p$ only through its image in $S$. The same argument applies to $q$.
\end{proof}
This theorem is illustrated in Fig.~\ref{fig:cyclic_submodule}. Note that $\psi_{x^{-1}}: x \rightarrow x^{-1}$ is always a ring automorphism of $R_\ell$ irrespective of $\ell$ as $\gcd(\ell, \ell-1)$ is always 1 (See Appendix~\ref{appen:codingtheory} for more on ring automorphisms of $R_\ell$). As such, it should be made clear that $q(x^{-1})a(x^{-1}) = (qa)(x^{-1})$, that is, the product of $qa$, evaluated at $x^{-1}$. As we will need at various points across this work to denote actual inverse polynomials, i.e., $q^{-1}(x)$, we denote the evaluation of any polynomial (or product of polynomials) at $x^{-1}$ with a backwards arrow\footnote{Be careful not to assume that $\overleftarrow{a}$ denotes the complete vector reversal of $a$. e.g., let $a = (0,1,1,0,0) = x  + x^2$. Evaluating a polynomial at $x^{-1}$ leaves the 1's coefficient unchanged, and $\overleftarrow{a} = a(x^{-1}) = x^3 + x^4 = (0,0,0,1,1) \neq (0,0,1,1,0)$.}, as in $\overleftarrow{qa}$. We choose to use $\leftarrow$ as opposed to $*$ as used in previous GB works \cite{panteleev2022quantumldpc} to denote evaluation at $x^{-1}$ to remove ambiguity about whether $qa^*$ refers to $q(x) \cdot a(x^{-1})$ or $q(x^{-1}) \cdot a(x^{-1})$ --- the notation $\overleftarrow{qa}$ makes it clear the ring automorphism is applied to the product of both elements. 

Note that $\psi_{x^{-1}}$, as a ring automorphism, is a map that preserves all algebraic structure of $R_\ell$, a fact we will use over and over in Sec.~\ref{sec:AutomorphismStructure}. Further observe that $\overleftarrow{\alpha} = \alpha(x^{-1})$ converts transpose operations in matrix space into ring operations in polynomial space; specifically, it encodes the reciprocal/reversal involution $x\mapsto x^{-1}$ on the cyclic polynomial ring, which corresponds to transposition of circulant matrices. Thus, $M_f(p,q), M_{\overleftarrow{f}}(\overleftarrow{q}, \overleftarrow{p})$ are related by a ring automorphism (and $p/q$ swapping), a fact that greatly simplifies the analysis of Sec.~\ref{sec:AutomorphismStructure}.

\begin{remark}
    We note that GB codes can be identified with index-2 quasi-cyclic codes \cite{kovalev2013quantumkronecker, Panteleev_2021}, and the study of index-2 quasi-cyclic codes in classical coding theory has previously identified such codes with module theory and as generated by a single element \cite{QCLally, Abdukhalikov_2026, dastbasteh2023polynomialrepresentationadditivecyclic}. While a few of these works have included quantum coding theory as a motivation for their work, none of these works have drawn an equivalence between cyclic submodules and Generalized Bicycle codes, or explored any of the consequences thereof. To the best of the authors' knowledge, this work marks the first time Generalized Bicycle codes have been classified as cyclic submodules or explored the resulting automorphism and logical operator structure inherited from this perspective.
\end{remark}

\subsection{Three-Space Dependency Framework for GB Codes}
\label{sec:3spaceGB}

Theorem~\ref{thm:cyclic_submodule} derives a correspondence between $H_X, H_Z$ and $M_f(p,q), M_{\overleftarrow{f}}(\overleftarrow{q}, \overleftarrow{p})$, which immediately defines a three-space perspective similar to that of cyclic codes. We may associate to every GB code of block length $\ell$ the following three spaces (illustrated by an example shown in Fig.~\ref{fig:3spacequantumexample}):
\begin{itemize}
\item \textbf{The Polynomial Ring Space.}
  The algebraic space $R_\ell^2$, where the GB code is realized as a pair of cyclic submodules.
  \begin{itemize}
  \item {\bf Operations:} Symmetries arise as permutations acting on each half, lifting to permutation automorphisms and permutation equivalences of $\langle pf \rangle, \langle qf \rangle$ from classical coding theory, such as cyclic shifts and substitution maps $x \mapsto x^j$ (where $\gcd(j, \ell) = 1$). These operations can preserve, swap, invert, and reverse the generators $pf$ and $qf$ yielding non-trivial module automorphisms, and are explored more in Sec.~\ref{subsec:bsep_auto}.
  \end{itemize}
\item \textbf{The Matrix Space.}  The space of $1 \times 2$ block matrices built from binary circulant matrices. Specifically, the stabilizer parity-check matrices ($H_X$ and $H_Z$) are formed by horizontally concatenating the circulant matrix for $pf$ with the circulant matrix for $qf$. The CSS condition ($H_X H_Z^T = 0$) is guaranteed by the commutativity of these circulant matrices.
  \begin{itemize}
  \item {\bf Operations:} Symmetries in this space manifest as physical column permutations of the matrices.
  \end{itemize}
\item \textbf{The Physical Qubit Space.}
  The vector space $\mathbb{F}_2^{2\ell} \cong \mathbb{F}_2^\ell \oplus \mathbb{F}_2^\ell$. The coordinates of this space label the $2\ell$ physical qubits, which are partitioned into a Left and Right block, each of length $\ell$.
  \begin{itemize}
  \item {\bf Operations:} Symmetries here consist of direct physical coordinate permutations, denoted by $\phi \in S_{2\ell}$.
    \end{itemize}

\end{itemize}

\begin{figure}[htbp]
\centering
\begin{tikzpicture}[
    >=Latex,
    spacebox/.style={
        rectangle, draw=black, thick, rounded corners,
        align=center, inner sep=8pt
    },
    iso/.style={
        <->, double, thick, draw=blue!70
    },
    action/.style={
        ->, thick, draw=black!70
    },
    labeltext/.style={
        align=center, font=\small
    }
]
\node[spacebox, fill=green!5, text width=6.5cm] (Poly) at (0, 5.4) {
    \textbf{Polynomial Ring Space} \\[0.6ex]
    $R_5^2$, \quad $f = 1+x,\ p = x^3,\ q = 1+x+x^3$ \\[0.8ex]
    \footnotesize
    $\bigl(\underbrace{x^3+x^4}_{pf},\;
           \underbrace{1+x^2+x^3+x^4}_{qf}\bigr)$\\[0.4ex]
    $= \bigl(x^3(x+1),\;(1+x+x^3)(x+1)\bigr)$ \\
    $\in \left\{(a \cdot x^3, \ a \cdot (1 + x + x^3) )\mid a \in \langle 1 + x \rangle\right\}$
};
\node[spacebox, fill=blue!5, text width=7.2cm] (Matrix) at (-4.0, 0) {
    \textbf{Matrix Space} \\[0.5ex]
    $H_X = \bigl[\mathrm{circ}(pf)\mid\mathrm{circ}(qf)\bigr]$ \\[0.5ex]
    \footnotesize
    \renewcommand{\arraystretch}{0.95}%
    $\left(\begin{array}{ccccc|ccccc}
    0 & 0 & 0 & 1 & 1 & 1 & 0 & 1 & 1 & 1 \\
    1 & 0 & 0 & 0 & 1 & 1 & 1 & 0 & 1 & 1 \\
    1 & 1 & 0 & 0 & 0 & 1 & 1 & 1 & 0 & 1 \\
    0 & 1 & 1 & 0 & 0 & 1 & 1 & 1 & 1 & 0 \\
    0 & 0 & 1 & 1 & 0 & 0 & 1 & 1 & 1 & 1
    \end{array}\right)$
};
\node[spacebox, fill=red!5, text width=6.3cm] (Qubit) at (4.2, 0) {
    \textbf{Physical Qubit Space} \\[0.4ex]
    \footnotesize
    \textit{Pauli layer:}\\[0.2ex]
    $v \mapsto X^v \sim \underbrace{IIIXX}_{\text{Left}} \;\big|\; \underbrace{XIXXX}_{\text{Right}}$\\[0.5ex]
    \normalsize $\Big\downarrow$ \footnotesize stabilizes\\[0.4ex]
    \rule{\linewidth}{0.3pt}\\[0.3ex]
    \textit{Code layer:}\\[0.2ex]
    \footnotesize
    $\mathcal{Q} = \bigl\{\,|\psi\rangle \in (\mathbb{C}^2)^{\otimes 10}  : S\,|\psi\rangle = |\psi\rangle\;\forall S\in\mathcal{S}\,\bigr\}$
};
\draw[iso] (Matrix.north) --
    node[labeltext, pos=0.5, left=0.3cm, text width=3.0cm] {
        $\mathrm{circ}(pf), \mathrm{circ}(qf)$\\
        $\leftrightarrow (pf, qf) \in R_5^2$
    }
    (Poly.south west);
\draw[iso] (Poly.south east) --
    node[labeltext, pos=0.4, right=-0.3cm, text width=4.2cm] {
        coeff.\ vector $\leftrightarrow$ \\ Pauli-$X$ generator
    }
    (Qubit.north);
\draw[iso] (Matrix.south)
    to[out=-90, in=-90, looseness=0.6]
    node[labeltext, midway, below=0.1cm, text width=4.5cm] {
        rows of $H_X$ $\leftrightarrow$ \\ Pauli-$X$ generators
    }
    (Qubit.south);
\end{tikzpicture}
\caption{The quantum three-space equivalence for a GB code with $\ell = 5$, $f = 1 + x$, $p = x^3$, $q = 1 + x + x^3$. A single stabilizer generator $(pf, qf) \in R_5^2$ maps to the first row of $H_X$ in the matrix space. In the physical qubit space $(\mathbb{C}^2)^{\otimes 10}$, that row's coefficient vector $(0,0,0,1,1\,|\,1,0,1,1,1)$ encodes a Pauli-$X$ generator $S_X = IIIXX\,|\,XIXXX$ (Pauli layer); the full stabilizer group $\mathcal{S}$ generated by all such operators defines the code space $\mathcal{Q}$ as their joint $+1$ eigenspace (Code layer). Each subsequent row of $H_X$ is a cyclic shift within each block, corresponding to multiplication by $x$ in $R_5^2$.  $H_Z$ is similarly involved, but not shown here.}
\label{fig:3spacequantumexample}
\end{figure}

This three-space perspective immediately opens up avenues to assess the automorphism structure algebraically. It is immediately obvious (and well known) that all GB codes have cyclic shift automorphisms that act on each length $\ell$ block of $H_X, H_Z$. However, the three-space dependency makes a further symmetry of the code immediately clear: all generalized bicycle codes have an $H$-type fold-transversal gate, achievable via composing full Left/Right block-swaps along the $i, i+\ell$ fold with the $\psi_{x^{-1}}$ automorphism, discussed in much greater depth in Sec.~\ref{sec:AutomorphismStructure}. As we shall see, given the three-space picture, much more will be able to be said about the prevalence of code automorphisms, and in particular, how to design codes around ensuring a rich automorphism structure.

\section{The Automorphism Structure of GB codes}
\label{sec:AutomorphismStructure}

The cyclic submodule structure established in Sec.~\ref{sec:threespaces} frames Generalized Bicycle codes as pairs of cyclic submodules of $R_\ell^2$ defined by $\ell, f_1 = pf, f_2 = qf$. In this section we exploit this algebraic structure to obtain coordinate permutations $\phi$ that map the submodule pair back to itself or exchange its two components --- i.e., code automorphisms and ZX dualities, each of which yields a potential fault-tolerant gate. This turns the search for code automorphisms, naively a brute-force search over $S_{2\ell}$, into a short list of checkable algebraic conditions on $f, p, q$ with respect to the permutation $\phi$, leveraging foundations previously laid in classical cyclic coding theory.

The section is organized as follows. Sec.~\ref{subsec:rowspace_auto_types} introduces the criterion for $\phi$ to be either a rowspace-preserving or swapping code automorphism (Corollary~\ref{cor:automorphism_criterion}), and follows with three simplifications we adopt throughout the rest of the work: block-separability, affine $\psi_L, \psi_R$, and $\psi_L = \psi_R$. These restrictions exclude genuinely interesting cases that we flag as future work, but they capture the core algebraic phenomenology of GB code automorphisms and allow us to lean on classical cyclic coding theory.

Sec.~\ref{subsec:bsep_auto} is the technical core, explicitly demonstrating non-trivial GB code automorphisms. Prop.~\ref{prop:block_automorphisms} shows that under the simplifications of Sec.~\ref{subsec:rowspace_auto_types}, the search for block-separable code automorphisms reduces to checking when a single permutation $\psi$ acts compatibly on the classical cyclic codes $\langle pf\rangle, \langle qf\rangle, \langle \overleftarrow{pf}\rangle, \langle \overleftarrow{qf}\rangle$, subject to a synchronization condition that connects the halves of each block. We then work through the three ring-level ingredients that generate all such automorphisms: cyclic shifts $\psi_i$ (Sec.~\ref{subsubsec:bsa_cyclic}), the universal block-swap involution $\Sigma = \sigma \circ (\psi_{x^{-1}} \oplus \psi_{x^{-1}})$ (Sec.~\ref{subsubsec:bsa_block}), and substitution multipliers $\psi_{x^j}$ with $\gcd(j,\ell)=1$, and the full catalog built from these is collected in Table~\ref{tab:block_sep_auts}.

Sec.~\ref{subsec:foldgates} feeds this catalog into the fold-transversal gate framework of~\cite{Breuckmann_foldtransversal}. $H$-type gates follow from rowspace-swapping automorphisms, so the new content concerns $S$- and $CX$-type gates. For $S$-type gates we apply a slight modification of \cite[Theorem~7]{Breuckmann_foldtransversal} (stated as Theorem~\ref{thm:fold_s_gate}) to the qualifying automorphisms of Table~\ref{tab:block_sep_auts}. For $CX$-type gates, where no prior framework currently exists, we derive sufficient algebraic conditions (Theorem~\ref{thm:fold_cx}) under which a physical CNOT along the $(i, i+\ell)$ fold, composed with a multiplier $\psi \oplus \psi$, extends to a code automorphism. The full set of $S$- and $CX$-type conditions is collected in Table~\ref{tab:fold_gates}.

The block-separable catalog of Sec.~\ref{subsec:bsep_auto} together with the fold-transversal gates of Sec.~\ref{subsec:foldgates} captures the primary automorphism structure of a GB code. The complementary question of logical operators and the action of any of these automorphisms is the subject of Sec.~\ref{sec:LogicalOps}.

\subsection{Categorizing Automorphisms and Simplifying Assumptions}
\label{subsec:rowspace_auto_types}

Code automorphisms come in two general flavors. \emph{Rowspace-preserving} automorphisms, which are cyclic submodule automorphisms, map $H_X$ back to $H_X$ and $H_Z$ back to $H_Z$. \emph{Rowspace-swapping} automorphisms, notably \textit{not} cyclic submodule automorphisms, exchange $H_X \leftrightarrow H_Z$, and are identified as the ZX dualities of Ref.~\cite{Breuckmann_foldtransversal}.

For our purposes, the algebraic machinery for analyzing the two cases is the same. What we are tracking, in either case, is whether the automorphism maps the cyclic submodules $M_f(p, q)$ and $M_{\overleftarrow{f}}(\overleftarrow{q}, \overleftarrow{p})$ back to themselves or exchanges them — a question entirely about the polynomial ring space, with no reference to which submodule carries $X$-type checks and which carries $Z$-type checks. This perspective matches the underlying classical code viewpoint of Section~2.2 of Ref.~\cite{Breuckmann_foldtransversal}, which treats $H_X$ and $H_Z$ symmetrically and only later assigns Pauli type to each block.

We accordingly adopt the labels $M_\sim$ for rowspace-preserving maps and $M_\leftrightarrow$ for rowspace-swapping maps. Once Pauli type is assigned in the matrix space, $M_\sim$ maps realize quantum code automorphisms and $M_\leftrightarrow$ maps realize fold-transversal gates via ZX duality. In a slight abuse of notation, we will refer to any map $\phi$ satisfying $M_\sim$ or $M_\leftrightarrow$ as a code automorphism, as the maps preserve the cyclic submodule pairs, even if one type induces a quantum code automorphism while the other is a ZX duality.

Theorem \ref{thm:cyclic_submodule} yields the following immediate corollary:
\begin{cor}\label{cor:automorphism_criterion}
A permutation $\phi \in S_{2\ell}$ induces a code automorphism of a 
Generalized Bicycle code with stabilizer rowspaces as in 
\eqref{eq:rowspaces} if and only if one of the following holds:
\begin{description}
    \item[Rowspace-preserving ($M_\sim$):] $\phi$ maps each of
    $M_f(p,q)$ and $M_{\overleftarrow{f}}(\overleftarrow{q}, \overleftarrow{p})$ back to themselves:
    \bea
    \phi\left((pa, qa)\right) = (pa', qa') \qquad \phi\left((\overleftarrow{qa}, \overleftarrow{pa})\right) = (\overleftarrow{qa''}, \overleftarrow{pa''}) \qquad a', a'' \in \langle f \rangle
    \eea
    \item[Rowspace-swapping ($M_\leftrightarrow$):] $\phi$ maps 
    $M_f(p,q)$ to 
    $M_{\overleftarrow{f}}(\overleftarrow{q}, \overleftarrow{p})$ and $M_{\overleftarrow{f}}(\overleftarrow{q}, \overleftarrow{p})$ to $M_f(p,q)$:
    \bea
    \phi\left((pa, qa)\right) = (\overleftarrow{qa'}, \overleftarrow{pa'}) \qquad \phi\left((\overleftarrow{qa}, \overleftarrow{pa})\right) = (pa'', qa'') \qquad a', a'' \in \langle f \rangle
    \eea
\end{description}
In each case, as $\phi$ is a bijection between finite sets of equal cardinality, it suffices to verify a single inclusion direction.
\end{cor}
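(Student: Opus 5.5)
The plan is to unwind the definition of a code automorphism in the physical qubit space and translate it, through the vector-space isomorphism $R_\ell^2 \cong \bbF_2^{2\ell}$, into the polynomial ring space via Theorem~\ref{thm:cyclic_submodule}. Concretely, I take as definition that a permutation $\phi \in S_{2\ell}$, acting on coordinates, is a code automorphism when it maps the stabilizer group to itself. For a CSS code this means one of two things. Either $\phi$ maps $\operatorname{rs}(H_X)$ onto $\operatorname{rs}(H_X)$ and $\operatorname{rs}(H_Z)$ onto $\operatorname{rs}(H_Z)$ (the automorphism case), or it exchanges them (the ZX-duality case, in the sense of Ref.~\cite{Breuckmann_foldtransversal}). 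Permutations act on Pauli $X$- and $Z$-supports identically and preserve Pauli type. So only these two possibilities arise once we also allow the $X\leftrightarrow Z$ relabeling, and the paper treats $H_X, H_Z$ symmetrically.

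The first step is to note that the coefficient-vector map $R_\ell^2 \to \bbF_2^{2\ell}$ is a linear bijection. A coordinate permutation $\phi$ therefore transports to a bijective linear map on $R_\ell^2$, which I still denote $\phi$. Substituting the descriptions \eqref{eq:rowspaces}, $\operatorname{rs}(H_X) = M_f(p,q)$ and $\operatorname{rs}(H_Z) = M_{\overleftarrow{f}}(\overleftarrow{q},\overleftarrow{p})$, into the two cases yields exactly the conditions $M_\sim$ and $M_\leftrightarrow$. Here I write a generic element of each submodule as $(pa,qa)$ or $(\overleftarrow{qa},\overleftarrow{pa})$ with $a \in \langle f\rangle$. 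One detail must be checked: that $M_{\overleftarrow f}(\overleftarrow q,\overleftarrow p)$ equals $\{(\overleftarrow{qa},\overleftarrow{pa}) : a\in\langle f\rangle\}$. This holds because $\psi_{x^{-1}}$ is a ring automorphism, so it sends $\langle f\rangle$ onto $\langle\overleftarrow f\rangle$. Both directions of the ``if and only if'' are then immediate, since the conditions are literal restatements of set equalities.

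For the final sentence, about needing only one inclusion, I argue as follows. A permutation is injective on $\bbF_2^{2\ell}$, so $\phi(M)$ has the same cardinality as $M$. If $\phi(M_f(p,q)) \subseteq M_f(p,q)$, then equal finite cardinalities force equality. In the swapping case I also need $|M_f(p,q)| = |M_{\overleftarrow f}(\overleftarrow q,\overleftarrow p)|$. This follows because $\psi_{x^{-1}}\oplus\psi_{x^{-1}}$ composed with the block swap is a bijection of $R_\ell^2$ carrying one submodule onto the other. Equivalently, $\operatorname{rank} H_X = \operatorname{rank} H_Z$, since $H_Z$ is a column permutation of $H_X$ up to transposition of circulants.

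The main obstacle, mild as it is, is being careful about what ``one inclusion direction'' covers. Each submodule is handled separately: inclusion of the image in the target for each of the two submodules gives equality. Moreover, since $\phi$ is linear, it suffices to check inclusion on generators. I would also make explicit that $\phi$ need not respect the ring structure, since only linearity is used here. This is why the corollary is stated elementwise rather than as a module map.
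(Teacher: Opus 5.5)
Your proposal is correct and follows the paper's approach. The paper's proof consists of the single remark that the corollary is immediate from Theorem~\ref{thm:cyclic_submodule} and the definition of a CSS automorphism. Your argument unpacks that remark carefully, including checking that $M_{\overleftarrow f}(\overleftarrow q,\overleftarrow p)=\{(\overleftarrow{qa},\overleftarrow{pa}) : a\in\langle f\rangle\}$ and noting that $\Sigma$ gives the equal-cardinality fact needed for the one-inclusion claim in the $M_\leftrightarrow$ case.
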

\begin{proof}
Immediate from Theorem~\ref{thm:cyclic_submodule} and the definition of a CSS code automorphism as a coordinate permutation preserving the stabilizer group.
\end{proof}

Note that it is not required that each individual $(pa, qa)$ element contained in $M_f(p,q)$ is mapped exactly back to itself, just that $(pa, qa)$ maps to $(pa', qa')$ for $a' \in \langle f \rangle$. We use $a', a''$ above to stress that $\phi$ does not need to send $(pa, qa), (\overleftarrow{qa}, \overleftarrow{pa})$ for the same $a$ to outputs dependent on the same $a'$ - all that is required is that the same $a', a''$ appear on the LHS and RHS of a single submodule rowspace element.

Within this framework, there are three key simplifications this work takes, with permutations lying outside of these simplifications left as future work:
\begin{enumerate}
    \item \textbf{Block-separability}: All $\phi$ we consider are \emph{block-separable} with canonical form $\phi = \sigma^\epsilon \circ (\psi_L \oplus \psi_R)$. $\epsilon \in \{0, 1\}$ and $\psi_L, \psi_R \in S_{\ell}$ act on coordinate indices $L = \{1, \dots, \ell\}, R = \{1 + \ell, \dots, 2\ell\}$ respectively. We call $\sigma$ the ``block-swap'' map, swapping qubits along the $(i, i+\ell)$ fold for $i \in L$. $\phi$ acts on the two-block decomposition $\bbF_2^{2\ell} = \bbF_2^\ell \oplus \bbF_2^\ell$ either by preserving each block or by swapping the two blocks as wholes; coordinates are never mixed between blocks.  
    
    As we will see in Prop. \ref{prop:block_automorphisms}, such a canonical form transforms a question about cyclic submodule preservation into a question about classical cyclic code preservation: permutation automorphisms and equivalences. This allows us to lean on decades of cyclic coding theory research to identify and categorize GB code automorphisms. 

    We note that the \emph{block mixing} case, permutations that swap some non-empty subset of $L, R$ but not all elements, have a trivial case where block mixing permutations are guaranteed to exist: when $p = x^iq$, or vice versa. In such a case, the circulant matrices $\mathrm{circ}(pf), \mathrm{circ}(qf)$ that define $H_X, H_Z$ have the same columns up to an indexing shift by $i$, and each $j, j + i$ column pair for $0 \leq j \leq \ell-1$ are the same. Swapping columns between the left and right sides that are identical yields a trivial code automorphism with a potential non-trivial logical action. However, observe that every such code must have distance 2: a weight 2 error located exactly on positions $j$ and $j + i$ produces the exact same syndrome and cancels out at every location. Thus, these codes are not particularly interesting. 

    Block mixing automorphisms beyond the $p = x^iq$ case are not addressed in this work, and we invite a more thorough investigation of the block-separable case as future work.

    \item \textbf{$\bm \psi_L, \bm \psi_R$ affine}: As we will see, the assumption of block-separability implies $\psi_L, \psi_R$ must induce permutation automorphisms or equivalences on the classical cyclic codes defined by $\langle pf \rangle, \langle qf\rangle, \langle \overleftarrow{pf}\rangle, \langle \overleftarrow{qf}\rangle$. As Appendix~\ref{appen:codingtheory} discusses, the classical literature has studied this extensively, and cyclic codes yielding non-affine permutation structure are sporadic, and in some cases, not fully characterized. In the case that $R_\ell$ admits non-affine permutation equivalences or automorphisms, a block-separable GB code automorphism imposes further simultaneous constraints across all four defining classical cyclic codes, and the cyclic submodule requirement that $a'$ appears on both sides of a submodule element must still hold. This synchronized condition is much more restrictive than admitting a non-affine equivalence on a single ideal, making non-affine GB automorphisms exceedingly unlikely even in regimes where individual non-affine equivalences are known to exist, however not impossible. We leave such an analysis as a secondary interesting direction for future work.

    \item \textbf{$\bm \psi_L \bm = \bm \psi_R$}: Restricting $\psi_L, \psi_R$ to affine permutations, the action of each map on a monomial $x^k$ can be represented as $\psi_L: k \mapsto  kj_L + i_L, \ \psi_R: k \mapsto kj_R + i_R$. In this work, we restrict our attention to the fully symmetric case --- when $j_L = j_R, i_L = i_R$, and so $\psi_L = \psi_R$. This greatly simplifies the analysis for the $H_Z$ rowspace, and avoids a guaranteed degeneracy condition\footnote{It is standard in cyclic coding theory to treat degenerate and non-degenerate analysis separately. When $j_L \neq j_R$, the equivalence $q\psi(pa) = p\psi(qa)$ can be used to derive a joint degeneracy condition on $p, q, f$. Degenerate cyclic codes are discussed in Appendix~\ref{subsec:cycliccodes}.} when $j_L \neq j_R$.  The remaining 3 cases, when $j_L \neq j_R, \ i_L \neq i_R$, or both are interesting and potentially fruitful, and we leave these as a potential third line of future work. 
\end{enumerate}
 
Recall the canonical form of a block-separable permutation from above:
\[
\phi = \sigma^\epsilon \circ (\psi_L \oplus \psi_R)
\]
Observe that $\sigma$ and $\psi$ commute up to swapping the labels on $\psi_L, \psi_R: \ \sigma \circ (\psi_L \oplus \psi_R) = (\psi_R \oplus \psi_L) \circ \sigma $. As $\sigma$ simply swaps qubits on the $(i, i + \ell)$ fold, $\sigma$ does not change the algebraic structure contained on the $LHS$ or $RHS$ of a given cyclic submodule element. For this reason we can think of $\sigma$ as a rowspace ``fixing'' operation, and $(\psi_L \oplus \psi_R)$ as the potentially non-trivial ring automorphism operations we want to find.

We have now reduced our problem to the following: Find $\psi_L \oplus \psi_R$ such that any of the four sets of equations hold:
\bea
\psi_L(pa) = pa', \ \ \psi_R(qa) = qa', \quad \quad  \psi_L(\overleftarrow{qa}) = \overleftarrow{qa''}, \ \ \psi_R(\overleftarrow{pa}) = \overleftarrow{pa''} \label{eq:sameideals}\\
\psi_L(pa) = qa', \ \ \psi_R(qa) = pa', \quad \quad  \psi_L(\overleftarrow{qa}) = \overleftarrow{pa''}, \ \ \psi_R(\overleftarrow{pa}) = \overleftarrow{qa''} \label{eq:diffideals1}\\
\psi_L(pa) = \overleftarrow{qa'}, \ \ \psi_R(qa) = \overleftarrow{pa'}, \quad \quad  \psi_L(\overleftarrow{qa}) = pa'', \ \ \psi_R(\overleftarrow{pa}) = qa'' \label{eq:diffideals2}\\
\psi_L(pa) = \overleftarrow{pa'}, \ \ \psi_R(qa) = \overleftarrow{qa'}, \quad \quad  \psi_L(\overleftarrow{qa}) = qa'', \ \ \psi_R(\overleftarrow{pa}) = pa'' \label{eq:diffideals3}
\eea
for $a', a'' \in \langle f \rangle$. Each of these conditions is now asking about permutations $\psi_L, \psi_R$ that either preserve a cyclic code (permutation automorphisms, in the case of \eqref{eq:sameideals}), or map a given cyclic code to a different cyclic code (permutation equivalence, in the case of \eqref{eq:diffideals1}-\eqref{eq:diffideals3}) over $R_\ell$. The only time we must consider the $R_\ell^2$ structure is in the requirement that $a'$ be the same across both maps. 

Permutation automorphisms and equivalences are well studied concepts in the classical cyclic coding theory literature, and decomposing the search for GB code automorphisms into a question about permutations of classical cyclic codes makes this problem tractable. This decomposition gives rise to Prop.~\ref{prop:block_automorphisms} characterizing block-separable GB code automorphisms, illustrated in Fig.~\ref{fig:prop32_three_space}. 

\begin{prop}\label{prop:block_automorphisms}
Let $C$ be a GB code defined by $(\ell, f, p, q)$ as in Theorem~\ref{thm:cyclic_submodule}, let $\psi_L, \psi_R$ be $S_{\ell}$ permutations of $\{1, \dots, \ell\}$, and let $\sigma \in S_{2\ell}$ denote the full block-swap, sending $(i, i+\ell) \mapsto (i+\ell, i)$ for $1 \le i \le \ell$. Consider block-separable permutations of the form $\phi = \sigma^\epsilon \circ (\psi_L \oplus \psi_R)$ with $\epsilon \in \{0, 1\}$, where $\circ$ denotes function composition with the right-hand factor applied first: $(f \circ g)(v) = f(g(v))$.

The four cases below are organized along two independent axes: whether the resulting $\phi$ satisfies $M_\sim$ or $M_\leftrightarrow$, and whether the construction is \emph{block-preserving} ($\epsilon = 0$, i.e., $\phi = \psi \oplus \psi$) or \emph{block-swapping} ($\epsilon = 1$, i.e., $\phi = \sigma \circ (\psi \oplus \psi)$).  In each case the stated condition on $\psi$ is a permutation automorphism or equivalence on each of $\langle pf \rangle, \langle qf \rangle, \langle \overleftarrow{pf} \rangle, \langle \overleftarrow{qf} \rangle$, satisfying a synchronization condition on $a$.
\begin{enumerate}
    \item \textbf{$M_\sim$ block-preserving} (Eq.~\eqref{eq:sameideals}):  If $\psi_L(pa) = pa'$, $\psi_R(qa) = qa'$, $\psi_L(\overleftarrow{qa}) = \overleftarrow{qa''}$, and $\psi_R(\overleftarrow{pa}) = \overleftarrow{pa''}$ for $a, a', a'' \in \langle f \rangle$, then $\phi = \psi_L \oplus \psi_R$ is a code automorphism preserving each rowspace:
    \bea
        \phi(pa, qa) = (pa', qa') \in M_f(p,q) \qquad \phi(\overleftarrow{qa}, \overleftarrow{pa}) = (\overleftarrow{qa''}, \overleftarrow{pa''}) \in M_{\overleftarrow{f}}(\overleftarrow{q}, \overleftarrow{p})
    \eea
    
    \item \textbf{$M_\sim$ block-swapping} (Eq.~\eqref{eq:diffideals1}):  If $\psi_L(pa) = qa'$, $\psi_R(qa) = pa'$, $\psi_L(\overleftarrow{qa}) = \overleftarrow{pa''}$, and $\psi_R(\overleftarrow{pa}) = \overleftarrow{qa''}$, then $\phi = \sigma \circ (\psi_L \oplus \psi_R)$ is a code automorphism preserving each rowspace:
    \bea
        \phi(pa, qa) = (pa', qa') \in M_f(p,q) \qquad \phi(\overleftarrow{qa}, \overleftarrow{pa}) = (\overleftarrow{qa''}, \overleftarrow{pa''}) \in M_{\overleftarrow{f}}(\overleftarrow{q}, \overleftarrow{p})
    \eea
    
    \item \textbf{$M_\leftrightarrow$ block-preserving} (Eq.~\eqref{eq:diffideals2}):  If $\psi_L(pa) = \overleftarrow{qa'}$, $\psi_R(qa) = \overleftarrow{pa'}$, $\psi_L(\overleftarrow{qa}) = pa''$, and $\psi_R(\overleftarrow{pa}) = qa''$, then $\phi = \psi_L \oplus \psi_R$ is a code automorphism exchanging the two rowspaces:
    \bea
        \phi(pa, qa) = (\overleftarrow{qa'}, \overleftarrow{pa'}) \in M_{\overleftarrow{f}}(\overleftarrow{q}, \overleftarrow{p}) \qquad \phi(\overleftarrow{qa}, \overleftarrow{pa}) = (pa'', qa'') \in M_f(p,q)
    \eea
    \item \textbf{$M_\leftrightarrow$ block-swapping} (Eq.~\eqref{eq:diffideals3}):  If $\psi_L(pa) = \overleftarrow{pa'}$, $\psi_R(qa) = \overleftarrow{qa'}$, $\psi_L(\overleftarrow{qa}) = qa''$, and $\psi_R(\overleftarrow{pa}) = pa''$, then $\phi = \sigma \circ (\psi_L \oplus \psi_R)$ is a code automorphism exchanging the two rowspaces:
    \bea
        \phi(pa, qa) = (\overleftarrow{qa'}, \overleftarrow{pa'}) \in M_{\overleftarrow{f}}(\overleftarrow{q}, \overleftarrow{p}) \qquad \phi(\overleftarrow{qa}, \overleftarrow{pa}) = (pa'', qa'') \in M_f(p,q)
    \eea
\end{enumerate}
In all four cases $\psi_L, \psi_R$ preserve the stabilizer group of the GB code and may act non-trivially on the encoded logical qubits.
\end{prop}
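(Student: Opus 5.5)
The plan is to reduce each of the four cases to Corollary~\ref{cor:automorphism_criterion} by computing how a block-separable map acts on a generic element of each submodule. I would first fix the conventions. Write $\phi = \sigma^\epsilon \circ (\psi_L \oplus \psi_R)$ acting on a pair $(u,v) \in R_\ell^2 \cong \bbF_2^{2\ell}$ by $(\psi_L \oplus \psi_R)(u,v) = (\psi_L(u), \psi_R(v))$. Then $\sigma(u,v) = (v,u)$, so $\sigma \circ (\psi_L\oplus\psi_R)(u,v) = (\psi_R(v), \psi_L(u))$. Here $\psi_L, \psi_R$ act on $R_\ell$ through the vector-space identification of $R_\ell$ with $\bbF_2^\ell$. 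Since $\phi$ is a coordinate permutation, it is $\bbF_2$-linear. It therefore suffices to check the image of each element $(pa,qa)$ with $a\in\langle f\rangle$, together with each element $(\overleftarrow{qa},\overleftarrow{pa})$. These elements exhaust the two submodules by Theorem~\ref{thm:cyclic_submodule}.

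Next I would run the four computations, each of which substitutes the hypotheses directly.
\begin{itemize}
\item Case 1 ($\epsilon=0$): $\phi(pa,qa) = (\psi_L(pa),\psi_R(qa)) = (pa',qa')$ and $\phi(\overleftarrow{qa},\overleftarrow{pa}) = (\overleftarrow{qa''},\overleftarrow{pa''})$.
\item Case 2 ($\epsilon=1$): $\phi(pa,qa) = (\psi_R(qa),\psi_L(pa)) = (pa',qa')$, and similarly the $Z$-type element is sent to $(\overleftarrow{qa''},\overleftarrow{pa''})$.
\item Case 3: the same substitution gives $(\overleftarrow{qa'},\overleftarrow{pa'})$ and $(pa'',qa'')$.
\item Case 4: the block swap reorders $(\overleftarrow{pa'},\overleftarrow{qa'})$ into $(\overleftarrow{qa'},\overleftarrow{pa'})$, and reorders $(qa'',pa'')$ into $(pa'',qa'')$.
\end{itemize}
In every case the image is visibly of the form required by $M_\sim$ or $M_\leftrightarrow$, because the same $a'$, respectively $a''$, appears in both coordinates and lies in $\langle f\rangle$. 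That gives membership in $M_f(p,q)$ or in $M_{\overleftarrow f}(\overleftarrow q,\overleftarrow p)$.

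Finally I would invoke the last sentence of Corollary~\ref{cor:automorphism_criterion}. Because $\phi$ is a bijection and the target submodules have the same cardinality as the sources, the inclusions just shown are equalities. Hence $\phi$ preserves the two rowspaces or exchanges them, so it preserves the stabilizer group, possibly up to ZX duality. The remark on nontrivial logical action needs no proof.

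The main obstacle is a point of interpretation rather than computation: the hypotheses must be read as holding for \emph{every} $a\in\langle f\rangle$, with some $a',a''\in\langle f\rangle$ depending on $a$. The synchronization requirement must also be used, namely that the same $a'$ serves both $\psi_L$ and $\psi_R$. Without synchronization one only obtains $\phi(M)\subseteq p\langle f\rangle\oplus q\langle f\rangle$, which is strictly larger than $M_f(p,q)$ in general. I would state this quantifier convention explicitly at the outset.

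The cardinality claim in the reverse step also needs a check. It holds for the $X$ and $Z$ submodules because $\psi_{x^{-1}}$ is a ring automorphism carrying $M_f(p,q)$ bijectively onto $M_{\overleftarrow f}(\overleftarrow q,\overleftarrow p)$ after the coordinate swap. The two submodules therefore have equal size, which makes the single-inclusion argument valid in the exchanging cases too.
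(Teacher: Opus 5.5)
Your proposal is correct and takes the same route as the paper, whose proof only says the result ``follows from the preceding discussion'': substitute the hypotheses of Eqs.~\eqref{eq:sameideals}--\eqref{eq:diffideals3} into Corollary~\ref{cor:automorphism_criterion}, and use that $\phi$ is a bijection between sets of equal size. Your explicit case-by-case computations, the stated quantifier convention (every $a\in\langle f\rangle$, with synchronized $a',a''\in\langle f\rangle$ depending on $a$), and your remark that the $M_\leftrightarrow$ cases preserve the stabilizer group only up to ZX duality make that one-line argument precise.
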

\begin{proof}
    Follows from the preceding discussion. 
\end{proof}
A figure depicting Prop.~\ref{prop:block_automorphisms} is contained in Figure~\ref{fig:prop32_three_space}.

\begin{figure}[htbp]
\centering

\begin{tikzpicture}[
    >=Latex,
    spacebox/.style={
        rectangle, draw=black, thick, rounded corners,
        align=center, inner sep=8pt, minimum height=2.5cm
    },
    action/.style={
        <->, thick, draw=black!70
    },
    iso/.style={
        <->, double, thick, draw=blue!70
    },
    labeltext/.style={
        align=center, font=\small
    },
]


\node[spacebox, fill=green!5, text width=6.0cm] (Poly) at (0, 7.0) {
    \textbf{Polynomial Ring Space} \\[0.8ex]
    $R_\ell \oplus R_\ell$ \quad ($R_\ell = \mathbb{F}_2[x]/\langle x^\ell-1 \rangle$) \\[0.8ex]
    \textit{Ring-level operation:}\\
    $\psi_L, \psi_R : R_\ell \to R_\ell$ \\[0.6ex]
    $(pa,qa)\mapsto (\psi_L(pa),\psi_R(qa))$ \\[0.6ex]
    optional full block swap:\\
    $\sigma : (u,v)\mapsto (v,u)$
};

\node[spacebox, fill=blue!5, text width=6.2cm, minimum height=4.7cm] (Matrix) at (-4.35, 0) {
    \textbf{Matrix Space} \\[0.8ex]
    $\mathrm{Mat}_{\ell \times 2\ell}(\mathbb{F}_2)$ \\[0.8ex]
    \textit{Stabilizer Rowspaces:}\\
    $\operatorname{rs}(H_X) = \{(pa,qa)\mid a\in \langle f\rangle\}$\\
    $\operatorname{rs}(H_Z) = \{(\overleftarrow{qa},\overleftarrow{pa})\mid a\in \langle f\rangle\}$ \\[0.8ex]
    \textit{Question of Proposition~\ref{prop:block_automorphisms}:}\\
    where does a given $(pa,qa)$ go?
};

\node[spacebox, fill=red!5, text width=6.0cm, minimum height=4.7cm] (Physical) at (4.35, 0) {
    \textbf{Physical Qubit Space} \\[0.8ex]
    $(\mathbb{C}^2)^{\otimes 2\ell}\cong \mathbb{C}^{2\ell} \otimes \mathbb{C}^{2\ell}$ \\[0.8ex]
    \textit{Block structure:}\\
    Left block $\oplus$ Right block \\[0.8ex]
    \textit{Induced permutations:}\\
    block-preserving:\quad $\phi=\psi_L\oplus\psi_R$\\
    block-swapping:\quad $\phi=\sigma\circ(\psi_L\oplus\psi_R)$
};


\draw[iso] (Matrix.north) --
    node[labeltext, pos=0.76, left=0.45cm, text width=3.05cm] {
        \textbf{Theorem~\ref{thm:cyclic_submodule}}\\[0.3ex]
        identifies the matrix rowspaces\\
        with cyclic submodules in $R_\ell^2$
    }
    (Poly.south west);

\draw[action, draw=red, line width=1.5pt] (Poly.south east) --
    node[labeltext, pos=0.28, right=0.45cm, text width=3.7cm] {
        \textbf{Lift to qubit permutations}\\[0.3ex]
        $\psi_L, \psi_R$ on each half,\\
        optionally followed by $\sigma$
    }
    (Physical.north);



\node[draw=black, thick, rounded corners, fill=gray!6, inner sep=6pt] (PropTable) at (0,-5.2) {%
\renewcommand{\arraystretch}{1.55}%
\setlength{\tabcolsep}{7pt}%
\begin{tabular}{>{\centering\arraybackslash}m{3.5cm}||>{\centering\arraybackslash}m{3.4cm}|>{\centering\arraybackslash}m{3.4cm}}
\multicolumn{3}{c}{{\normalsize\bfseries Proposition~\ref{prop:block_automorphisms}: four block-separable cases}} \\[0.5ex]
\hline\hline
 & \cellcolor{red!10}\textbf{block-preserving}\newline $\phi = \psi_L\oplus\psi_R$
 & \cellcolor{red!10}\textbf{block-swapping}\newline $\phi = \sigma\circ(\psi_L\oplus\psi_R)$ \\[0.2ex]
\hline
\cellcolor{blue!10}\rule{0pt}{2.8ex}%
\textbf{$M_\sim$:} preserve $\operatorname{rs}(H_X),\operatorname{rs}(H_Z)$\newline
$(pa',qa') \in \operatorname{rs}(H_X)$
& \cellcolor{green!5} $\psi_L(pa)=pa'$\newline $\psi_L(qa)=qa'$
& \cellcolor{green!5} $\psi_R(pa)=qa'$\newline $\psi_R(qa)=pa'$ \\[0.6ex]
\hline
\cellcolor{blue!10}\rule{0pt}{2.8ex}%
\textbf{$M_\leftrightarrow$:} exchange $\operatorname{rs}(H_X)\!\leftrightarrow\!\operatorname{rs}(H_Z)$\newline
$(\overleftarrow{qa'},\overleftarrow{pa'}) \in \operatorname{rs}(H_Z)$
& \cellcolor{green!5} $\psi_L(pa)=\overleftarrow{qa'}$\newline $\psi_L(qa)=\overleftarrow{pa'}$
& \cellcolor{green!5} $\psi_R(pa)=\overleftarrow{pa'}$\newline $\psi_R(qa)=\overleftarrow{qa'}$ \\[0.4ex]
\end{tabular}%
};

%
\draw[<->, thick, draw=blue!60]
    ([xshift=10pt, yshift=0pt]Matrix.south west) to[out=270, in=180] ([xshift=10pt, yshift=-10pt]PropTable.west);
%
\draw[<->, thick, draw=red!60]
    ([xshift=-10pt, yshift=0pt]Physical.south east) to[out=-90, in=0] ([xshift=-10pt, yshift=-50pt]PropTable.north east);

\end{tikzpicture}
\caption{\textbf{Proposition~\ref{prop:block_automorphisms} as a three-space classification of block-separable automorphisms.}
The rowspaces $\operatorname{rs}(H_X)$ and $\operatorname{rs}(H_Z)$ belong to the \emph{Matrix Space}, while the left/right block decomposition belongs to the \emph{Physical Qubit Space}. Two ring-level maps $\psi_L, \psi_R$ on $R_\ell$, applied to each half and optionally composed with the full block swap $\sigma$, produces four cases organized by two independent binary choices: whether $\phi$ is block-preserving ($\phi=\psi_L\oplus\psi_R$) or block-swapping ($\phi=\sigma\circ(\psi_L\oplus\psi_R)$) in the \emph{Physical Qubit Space} (columns), and whether the transformed pairs remain in $\operatorname{rs}(H_X)$ (Type~I) or cross into $\operatorname{rs}(H_Z)$ (Type~II) in the \emph{Matrix Space} (rows).  Given these choices, the required conditions on $\psi_L, \psi_R$ is determined as shown in the table.}
\label{fig:prop32_three_space}
\end{figure}

Prop.~\ref{prop:block_automorphisms} and Theorem~\ref{thm:cyclic_submodule} yield the following immediate corollary:
\begin{cor}\label{cor:Hz_simplification}
    Under the conditions of Prop.~\ref{prop:block_automorphisms} and Theorem~\ref{thm:cyclic_submodule}, if $\psi = \psi_L = \psi_R$ is an affine map, then only the $M_f(p,q)$ case needs to be verified explicitly.
\end{cor}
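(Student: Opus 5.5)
The plan is to show that once $\psi=\psi_L=\psi_R$ is affine, the two conditions of each case of Prop.~\ref{prop:block_automorphisms} involving $M_{\overleftarrow{f}}(\overleftarrow{q},\overleftarrow{p})$ follow from the two conditions involving $M_f(p,q)$. We do this by conjugating with the ring automorphism $\psi_{x^{-1}}$ and using the symmetry of the setup. Write the affine map as $\psi(x^k)=x^{jk+i}$ with $\gcd(j,\ell)=1$, i.e.\ $\psi(g)=x^i\,g(x^j)$ extended linearly. The key identity is
\[
\psi_{x^{-1}}\circ\psi\circ\psi_{x^{-1}}=\psi' ,\qquad \psi'(g)=x^{-i}g(x^j),
\]
which is again affine with the same multiplier $j$ and shift $-i$. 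This holds because substitutions $x\mapsto x^j$ and $x\mapsto x^{-1}$ commute and $\psi_{x^{-1}}(x^i g)=x^{-i}\overleftarrow{g}$.

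\textbf{Case 1 (block-preserving $M_\sim$).} Suppose $\psi(pa)=pa'$ and $\psi(qa)=qa'$ for every $a\in\langle f\rangle$. Write $\psi=\mu_{x^i}\circ\psi_{x^j}$, where $\mu_{x^i}$ is multiplication by $x^i$ and $\psi_{x^j}$ is a ring automorphism.

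\emph{Step 1: the shift part.} Multiplication by $x^i$ preserves $M_f(p,q)$ and $M_{\overleftarrow f}(\overleftarrow q,\overleftarrow p)$ unconditionally, since both are $R_\ell$-submodules. So we may assume $i=0$, reducing to $\psi=\psi_{x^j}$ a ring automorphism.

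\emph{Step 2: transfer to the $H_Z$ side.} Given $\psi_{x^j}(pa)=pa'$ and $\psi_{x^j}(qa)=qa'$, apply $\psi_{x^{-1}}$ to both equations. Since $\psi_{x^{-1}}$ commutes with $\psi_{x^j}$,
\[
\psi_{x^j}(\overleftarrow{pa})=\overleftarrow{pa'},\qquad \psi_{x^j}(\overleftarrow{qa})=\overleftarrow{qa'} .
\]
Hence $(\psi\oplus\psi)(\overleftarrow{qa},\overleftarrow{pa})=(\overleftarrow{qa'},\overleftarrow{pa'})\in M_{\overleftarrow f}(\overleftarrow q,\overleftarrow p)$, with $a''=a'$.

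\emph{Step 3: the other cases.} Cases 2–4 go the same way. Case 2 only relabels which of $p,q$ appears, and its $\sigma$ factor does not interact with $\psi_{x^{-1}}$. For Cases 3 and 4, applying $\psi_{x^{-1}}$ to $\psi(pa)=\overleftarrow{qa'}$ gives $\psi(\overleftarrow{pa})=qa'$, and similarly for $q$. These are exactly the required second pair of conditions.

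\textbf{Main obstacle: the shift when $\psi$ is not a ring map.} Step~1 reduces to $i=0$ only if the hypothesis is stated at the level of the whole submodule. The paper's conditions are of the form ``$\psi(pa)=pa'$ for some $a'$'', as set containments $\psi(M)\subseteq M$, so I treat them as set containments. Then $x^{i}$ times an element of $M_f(p,q)$ stays in $M_f(p,q)$ with $a'\mapsto x^i a'\in\langle f\rangle$. The same holds on the $\overleftarrow{f}$ side, because $\overleftarrow{x^{\pm i}a'}$ still lies in $\overleftarrow{\langle f\rangle}$. By Corollary~\ref{cor:automorphism_criterion}, one inclusion per submodule suffices, which completes the argument.
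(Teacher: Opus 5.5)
Your proposal is correct and follows the paper's route: you transport the $M_f(p,q)$ conditions through the involution $\psi_{x^{-1}}$, using that it commutes with the multiplier $\psi_{x^j}$, and shifts are harmless because they preserve both submodules. The only difference is cosmetic: you strip the shift off first rather than using $\psi_i\circ\psi_{x^{-1}}=\psi_{x^{-1}}\circ\psi_{-i}$ as the paper does. Your stated ``obstacle'' is not really one, since the shift can also be removed elementwise: $x^{-i}a'\in\langle f\rangle$ serves both coordinates simultaneously.
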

\begin{proof}
    From Theorem~\ref{thm:cyclic_submodule}, the $H_X, H_Z$ rowspaces are related via the $\psi_{x^{-1}}$ automorphism. Both cyclic shifts and multipliers commute with the involution $\psi_{x^{-1}}$ in the relevant sense: for multipliers, $\psi_{x^j} \circ \psi_{x^{-1}} = \psi_{x^{-1}} \circ \psi_{x^j}$ as ring automorphisms of $R_\ell$; for cyclic shifts, $\psi_i \circ \psi_{x^{-1}} = \psi_{x^{-1}} \circ \psi_{-i}$, exchanging forward and backward shifts. In either case, the action of $\psi$ on the $H_X$ rowspace generators $(pa, qa)$ can be pushed through $\psi_{x^{-1}}$ to a corresponding action on the $H_Z$ rowspace generators $(\overleftarrow{qa}, \overleftarrow{pa})$. Thus, when $\psi_L = \psi_R$, the four cases of Prop.~\ref{prop:block_automorphisms} then reduce to verification on $M_f(p,q)$ alone. 
\end{proof}

As discussed previously, this work explores the case $\psi_L = \psi_R$. We accordingly write $\psi := \psi_L = \psi_R$ and consider only block-separable permutations of the form $\phi = \sigma^\epsilon \circ (\psi \oplus \psi)$ throughout.

\subsection{Block-separable $\psi_L = \psi_R$ automorphisms}
\label{subsec:bsep_auto}

The remainder of this subsection works through the three families of block-separable automorphisms admitted by the affine restriction. We denote $\psi_i$ as the cyclic shift map and $\psi_{x^j}$ as the multiplier ring automorphism map, both over $R_\ell$, with the corresponding $R_\ell^2$ operations obtained via $\phi_i = \psi_i \oplus \psi_i$ and $\phi_{x^j} = \psi_{x^j} \oplus \psi_{x^j}$. We treat cyclic shifts first; these are always $M_\sim$ and exist for every GB code.  We then identify a single combination of multiplier and full block-swap, $\Sigma = \sigma \circ \phi_{x^{-1}}$, which is always $M_\leftrightarrow$, the only multiplier-based block-swapping automorphism guaranteed to exist for every GB code, and the source of its universal $H$-type fold-transversal gate.  Finally, the remaining substitution multipliers $\phi_{x^j}$ are not guaranteed to exist; Theorem~\ref{thm:substitution_automorphisms} characterizes when they do via four explicit polynomial identities on $(p, q)$ in the quotient $S = R_\ell / \langle \hat{f} \rangle$, which categorize the relevant exponents into the sets $\mathrm{Stab}(p,q),\ \mathrm{Swap}(p,q),\ \mathrm{Inv}(p,q),$ and $\mathrm{SwapInv}(p,q)$.  The $M_\leftrightarrow$ type partner of each multiplier-based $M_\sim$ type automorphism is obtained by composition with $\Sigma$, and the full catalog is collected in Table~\ref{tab:block_sep_auts}.
 
\subsubsection{Cyclic shifts}
\label{subsubsec:bsa_cyclic}
 
The cyclic shift map $\phi_i = \psi_i \oplus \psi_i$ is the simplest automorphism class of any bicycle code.  Since each half of $H_X$ and $H_Z$ is a classical cyclic code, multiplication by any power of $x$ is absorbed into the ideal:
\begin{align}
    (pa, qa) \xrightarrow{\psi_i \oplus \psi_i} (x^i pa, x^i qa) &= (pa', qa') \in M_f(p,q) = \operatorname{rs}(H_X) \\
    (\overleftarrow{qb}, \overleftarrow{pb}) \xrightarrow{\psi_i \oplus \psi_i} (x^i \overleftarrow{qb}, x^i \overleftarrow{pb}) &= (\overleftarrow{qb'}, \overleftarrow{pb'}) \in M_{\overleftarrow{f}}(\overleftarrow{q}, \overleftarrow{p}) = \operatorname{rs}(H_Z)
\end{align}
where $a' = x^i a \in \langle f \rangle$ and $b' = x^{-i}b \in \langle f \rangle$.  These are always $M_\sim$ type.

\subsubsection{Block-swap}
\label{subsubsec:bsa_block}
 
The substitution multiplier $\psi_{x^{-1}} : f(x) \mapsto f(x^{-1})$, combined with a block-swap, always yields a GB code automorphism. As $\psi_{x^{-1}}$ is an involution ($\psi_{x^{-1}}^2 = \mathrm{id}$), applying it twice sends an element back to itself. In the notation we have defined, we have
\bea
\overleftarrow{\overleftarrow{a}} = a
\eea
Composing the block-swap $\sigma$ with blockwise application of $\psi_{x^{-1}}$ (in either order) yields a $M_\leftrightarrow$ automorphism:
\begin{align}
    (pa, qa) \xrightarrow{\sigma} (qa, pa) \xrightarrow{\psi_{x^{-1}} \oplus \psi_{x^{-1}}} \bigl(\overleftarrow{qa}, \overleftarrow{pa}\bigr) &\in  M_{\overleftarrow{f}}(\overleftarrow{q}, \overleftarrow{p}) = \operatorname{rs}(H_Z), \\[4pt]
    \bigl(\overleftarrow{qb}, \overleftarrow{pb}\bigr) \xrightarrow{\sigma} \bigl(\overleftarrow{pb}, \overleftarrow{qb}\bigr) \xrightarrow{\psi_{x^{-1}} \oplus \psi_{x^{-1}}} \bigl(pb, qb\bigr) &\in M_f(p,q) = \operatorname{rs}(H_X)
\end{align}
When considering $H_X, H_Z$ with their associated Paulis, this becomes a ZX duality and pairing the map with transversal Hadamard gates on each coordinate gives a fold-transversal H-type gate. Thus, $\Sigma = (\psi_{x^{-1}} \oplus \psi_{x^{-1}}) \circ \sigma = \sigma \circ (\psi_{x^{-1}} \oplus \psi_{x^{-1}})$ is always a $M_\leftrightarrow$ automorphism of any GB code. This code symmetry realizes the standard ZX-duality for 2BGA codes $\tau_0$ from~\cite{automorphisms_BBcodes}.

\medskip
Note that in the case of cyclic shifts and block-swaps we did not need to explicitly prove the $H_Z$ case, however it is simple and in the case of the block-swap, somewhat insightful to do so. In the case of the substitution multiplier we invoke Corollary~\ref{cor:Hz_simplification} to greatly simplify the proofs.

\subsubsection{Substitution multipliers}
 
We now turn to the first genuinely code-dependent class of block-separable automorphisms. 
Unlike the cyclic shifts and the universal block-swap, which exist for every GB code, the automorphisms in this subsection arise only when the defining polynomials of the code are compatible with non-trivial substitution symmetries of the polynomial ring $R_\ell$. We study the ring automorphisms
\bea
\psi_{x^j} : f(x) \mapsto f(x^j), \qquad j \in (\Z/\ell\Z)^\times,
\eea
which are well-defined exactly when $\gcd(j,\ell)=1$. On the qubit space, these act as coordinate 
permutations within each $\ell$-qubit block, and so are natural candidates for block-separable GB code automorphisms.

Importantly, not every such substitution preserves the GB code, nor does every permutation automorphism or permutation equivalence of $pf, qf, \overleftarrow{pf}, \overleftarrow{qf}$ yield a GB code automorphism. Recall from Theorem~\ref{thm:cyclic_submodule} that the $X$-stabilizer rowspace is the cyclic submodule
\bea
M_f(p,q)=\{(pa,qa)\mid a\in \langle f\rangle\}
\,,
\eea
where $f$ is the shared factor of the two defining polynomials and $p,q$ are the transfer polynomials. For $\psi_{x^j}$ to induce a code automorphism, two separate algebraic requirements must be met. First, the substitution must preserve the underlying ideal $\langle f\rangle$ itself --- i.e., the substitution must be a permutation automorphism of $\langle f \rangle$. This motivates the definition
\def\Pres{\mathrm{Pres}}
\bea
\Pres(f)=\{\, j\in (\Z/\ell\Z)^\times : \langle f(x^j)\rangle = \langle f\rangle \,\}
\,,
\eea
which records exactly those multipliers $j$ for which the common cyclic backbone of the GB code is unchanged (i.e., preserved). 
As established in Appendix~\ref{appen:codingtheory} (specifically Lemma~\ref{lem:equivalent_ideals}, which states that two principal ideals in $R_\ell$ are equal if and only if their generators are related by a unit), this preservation occurs if and only if $f(x^j) = uf$ for some unit $u \in R_\ell^\times$.
If $j\notin \Pres(f)$, then the substitution sends the parameter $a\in \langle f\rangle$ into a different ideal,
so the cyclic submodule structure of the rowspace is not preserved.

Second, once $j$ preserves $\langle f\rangle$, one must determine how the same substitution acts on the transfer
polynomials $p$ and $q$. Since the rowspace depends on $p$ and $q$ only through their action on $\langle f\rangle$,
this question is naturally asked in the polynomial quotient ring
\bea
S = R_\ell/\langle \hat f\rangle, \qquad \hat f = \frac{x^\ell-1}{f}.
\eea
The content of Theorem~\ref{thm:substitution_automorphisms}, presented below (and illustrated in Fig.~\ref{fig:thm33_cases}), is that there are only four algebraically relevant possibilities: under $x\mapsto x^j$, the pair $(p,q)$ may be fixed, swapped, inverted, or swapped after inversion. Each of these behaviors yields both a $M_\sim$ and a $M_\leftrightarrow$ code automorphism, composing with the full block-swap~$\sigma$ or the inversion multiplier $\psi_{x^{-1}}$.

In this way, Theorem~\ref{thm:substitution_automorphisms} converts the search for substitution-based code automorphisms when $\psi_L = \psi_R$ into a finite algebraic test. Rather than searching over coordinate permutations directly, one first determines $\Pres(f)$, and then checks which elements $j\in \Pres(f)$ act on $(p,q)$ in one of the four admissible ways in~$S$. This provides the basic mechanism by which non-trivial multiplier symmetries of the polynomial ring become explicit automorphisms of the GB code.

\begin{theorem}
\label{thm:substitution_automorphisms}
Let $C$ be a GB code defined by $(\ell, f, p, q)$ as in Theorem~\ref{thm:cyclic_submodule}, and let $j \in (\mathbb{Z}/\ell\mathbb{Z})^\times$.  Define
\bea
\operatorname{Pres}(f) = \big\{j \in (\mathbb{Z}/\ell\mathbb{Z})^\times 
: \langle f(x^j) \rangle = \langle f \rangle\big\}.
\eea
If $j \in \operatorname{Pres}(f)$ and the images of $p, q$ under $\psi_{x^j}$ satisfy either of the following conditions in $S = R_\ell / \langle \hat{f} \rangle$, with $u \in S^\times$, then $j$ induces a code automorphism:
\begin{enumerate}[label=(\arabic*)]
    \item $p(x^j) = up$ and $q(x^j) = uq$ in $S$. 
    Then $\phi = \psi_{x^j} \oplus \psi_{x^j}$ is $M_\sim$.
    \item $p(x^j) = uq$ and $q(x^j) = up$ in $S$. 
    Then $\phi = \sigma \circ (\psi_{x^j} \oplus \psi_{x^j})$ is $M_\sim$.
\end{enumerate}
Additionally, if $\gcd(p, \hat{f}) = \gcd(q, \hat{f}) = 1$ so that $p^{-1}$ and $q^{-1}$ exist in $S$, and the images of $p, q$ under $\psi_{x^j}$ satisfy either of the following conditions, then $j$ induces a code automorphism:
\begin{enumerate}[label=(\arabic*), resume]
    \item $p(x^j) = up^{-1}$ and $q(x^j) = uq^{-1}$ in $S$. 
    Then $\phi = \sigma \circ (\psi_{x^j} \oplus \psi_{x^j})$ is 
    $M_\sim$.
    \item $p(x^j) = uq^{-1}$ and $q(x^j) = up^{-1}$ in $S$. 
    Then $\phi = \psi_{x^j} \oplus \psi_{x^j}$ is $M_\sim$.
\end{enumerate}
\end{theorem}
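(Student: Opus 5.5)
The plan is to work entirely in the polynomial ring space. I will use two facts. First, $\psi_{x^j}$ is a ring automorphism of $R_\ell$ whenever $\gcd(j,\ell)=1$. Second, $\langle f\rangle$ is naturally an $S$-module. I will then exhibit, for each of the four cases, an explicit $a'\in\langle f\rangle$ witnessing the inclusion required by Corollary~\ref{cor:automorphism_criterion}. By the bijection remark there, a single inclusion suffices. The $H_Z$ half is then handled by Corollary~\ref{cor:Hz_simplification}, since $\psi_L=\psi_R=\psi_{x^j}$ is affine.

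\textbf{Step 1 (the $S$-module structure).} First I will record that the product $s\cdot a$ for $s\in S$ and $a\in\langle f\rangle$ is well defined in $R_\ell$. If $s$ is changed by $r\hat f$, then $r\hat f\cdot fb = r(x^\ell-1)b = 0$; this is the same computation as in the uniqueness part of Theorem~\ref{thm:cyclic_submodule}. The product also stays in $\langle f\rangle$, since ideals absorb multiplication. Hence $u$, $u^{-1}$, and (under the extra hypothesis) $p^{-1},q^{-1}$ act on $\langle f\rangle$ unambiguously. In particular, the identities in the hypotheses, which hold only in $S$, may be used inside products with elements of $\langle f\rangle$.

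\textbf{Step 2 (apply the substitution).} Fix $a\in\langle f\rangle$ and set $b=a(x^j)$. Since $j\in\operatorname{Pres}(f)$, we have $f(x^j)=vf$ with $v$ a unit by Lemma~\ref{lem:equivalent_ideals}. Writing $a=rf$ gives $b=r(x^j)vf\in\langle f\rangle$. Because $\psi_{x^j}$ is multiplicative,
\[
\psi_{x^j}\oplus\psi_{x^j}\colon (pa,qa)\mapsto \bigl(p(x^j)\,b,\; q(x^j)\,b\bigr).
\]
By Step 1, the factors $p(x^j)$ and $q(x^j)$ can now be replaced by their $S$-images from the hypotheses.

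\textbf{Step 3 (case checks).} For each case I substitute the hypothesis and choose $a'$:
\begin{itemize}
\item Case (1): the image is $(p\,ub,\,q\,ub)$, so $a'=ub$.
\item Case (2): the image is $(q\,ub,\,p\,ub)$. Applying $\sigma$ gives $(pa',qa')$ with $a'=ub$.
\item Case (3): the image is $(up^{-1}b,\,uq^{-1}b)$. Set $a'=u\,p^{-1}q^{-1}b\in\langle f\rangle$. Then $up^{-1}b=qa'$ and $uq^{-1}b=pa'$, so the image is $(qa',pa')$, and $\sigma$ returns $(pa',qa')$.
\item Case (4): the image is $(uq^{-1}b,\,up^{-1}b)=(pa',qa')$ with the same $a'$.
\end{itemize}
In every case $a'\in\langle f\rangle$ by Step 1, so $M_f(p,q)$ is mapped into itself.

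\textbf{Step 4 ($H_Z$ side) and main obstacle.} The step needing the most care is the $H_Z$ side. Here I will note that $\psi_{x^j}$ commutes with $\psi_{x^{-1}}$. Therefore
\[
\psi_{x^j}\bigl(\overleftarrow{qa}\bigr)=\overleftarrow{\psi_{x^j}(qa)},
\]
and applying $\psi_{x^{-1}}$ to the identities of Step 3 gives $\bigl(\overleftarrow{qa'},\overleftarrow{pa'}\bigr)$, with the same swap pattern in each case. This is exactly the content of Corollary~\ref{cor:Hz_simplification}. I expect the main obstacle to be Step 1 rather than this step: the whole argument rests on justifying that relations valid only modulo $\hat f$ may legitimately be used inside products with elements of $\langle f\rangle$, including the inverses $p^{-1},q^{-1}$, which exist in $S$ but not necessarily in $R_\ell$.
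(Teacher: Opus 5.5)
Your proposal is correct and follows essentially the same route as the paper: apply $\psi_{x^j}\oplus\psi_{x^j}$ to $(pa,qa)$, use $j\in\operatorname{Pres}(f)$ to get $a(x^j)\in\langle f\rangle$, absorb the unit $u$ (and, in cases 3 and 4, the factor $p^{-1}q^{-1}$) into a new element $a'\in\langle f\rangle$, compose with $\sigma$ where needed, and defer the $H_Z$ side to Corollary~\ref{cor:Hz_simplification}. Your Step~1 is the content of Lemma~\ref{lem:annihilator_sufficiency} together with the paper's footnote on the well-definedness of $q^{-1}p^{-1}\tilde a$. Your explicit transport of the resulting $R_\ell$-level identities through $\psi_{x^{-1}}$ is a slightly more concrete version of the paper's appeal to that corollary.
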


\input{figures/substitution_mult_fig.tex}
\begin{proof}
First, observe that by Lemma \ref{lem:annihilator_sufficiency}, any of the four algebraic conditions on the transfer polynomials $p, q$ need only hold in $S$, not in $R_\ell$ itself.  The rowspace elements $(pa, qa)$ and $(\overleftarrow{qa}, \overleftarrow{pa})$ depend on $p, q, \overleftarrow{p}, \overleftarrow{q}$ only through their action on $\langle f \rangle$ and $\langle \overleftarrow{f} \rangle$, and this action is insensitive to any component of $p, q, \overleftarrow{p}, \overleftarrow{q}$ lying in $\langle \hat{f} \rangle$ or $\langle \hat{f}(x^{-1})\rangle$. Separately, since $u \in S^\times$, common-unit rescaling preserves $M_f(p,q)$ (Lemma~\ref{lem:equivalent_modules}), so we may absorb such unit factors $u$ into the free ideal element\footnote{$\overleftarrow{\hat{f}}$ is cumbersome and beginning to overload notation, and we denote this explicitly by $\hat{f}(x^{-1})$.}.

We verify cases 1 and 3 explicitly by applying $\psi_{x^j} \oplus \psi_{x^j}$ (possibly composed with $\sigma$) to a generic element $(pa, qa) \in \operatorname{rs}(H_X)$ and checking that the image lies in the appropriate rowspace. Cases 2 and 4 follow analogously. We prove the $H_X$ condition, as by Cor.~\ref{cor:Hz_simplification}, the corresponding $H_Z$ condition follows automatically. Note that any condition of the form  $p(x^j) = up$ in $S$ implies $\overleftarrow{p}(x^j) = \overleftarrow{up}$ in $\overleftarrow{S} = R_\ell/\langle \hat{f}(x^{-1}) \rangle$.

\textbf{Case 1:} $q(x^j) = uq, \ p(x^j) = up$ in $S$.
If $j$ stabilizes $p, q$ up to the common unit $u$, then $\psi_{x^j}$ applied blockwise is immediately $M_\sim$:
\begin{align}
    (pa, qa) \xrightarrow{\psi_{x^j} \oplus \psi_{x^j}} \bigl(p(x^j)a(x^j), \ q(x^j)\,a(x^j)\bigr) &= (up\,a', \, uq\,a') = (pa'', qa'') \in M_f(p,q) =  \operatorname{rs}(H_X)
\end{align}
where $a' = a(x^j) \in \langle f(x^j) \rangle = \langle f \rangle$ and $a'' = u a' \in \langle f \rangle$. Thus, $\phi_{x^{j}} = \psi_{x^j} \oplus \psi_{x^j}$ is a $M_\sim$ code automorphism.

Case 2 follows in the same way, invoking the block-swap $\sigma$ after application of $\psi_{x^j} \oplus \psi_{x^j}$ yields $(uq\,a', up\,a') = (qa'', pa'')$ to arrive back at $(pa'', qa'')$.

In cases 3 and 4 we may assume $q, p$ are invertible in $S$.  By Lemma~\ref{lem:annihilator_invertible}, this holds if and only if $\gcd(q, \hat{f}) = \gcd(p, \hat{f}) = 1$. If not, these cases do not yield code automorphisms.

\paragraph{Case 3: $q(x^j) = uq^{-1}(x)$ and $p(x^j) = up^{-1}(x)$ in $S$.}
If $j$ sends $p, q$ to their inverses up to the common unit $u$, then $\psi_{x^j}$ applied blockwise alone does not preserve either rowspace. However, composing with the block-swap $\sigma$ yields a $M_\sim$ automorphism:
\begin{align}
    (pa, qa) \xrightarrow{\psi_{x^j} \oplus \psi_{x^j}} \bigl(p(x^j)a(x^j), \ q(x^j)\,a(x^j)\bigr) &= (up^{-1}a', uq^{-1}a')
\end{align}
where $a' = a(x^j) \in \langle f(x^j) \rangle = \langle f \rangle$. Setting $\tilde{a} = u a' \in \langle f \rangle$, this is $(p^{-1}\tilde{a}, q^{-1}\tilde{a})$. Observe the following\footnote{
Since $\hat{f} \cdot \tilde{a} = 0$ in $R_\ell$ for any $\tilde{a} \in \langle f \rangle$ (as $\hat{f} \cdot f = x^\ell - 1 = 0$ in $R_\ell$), the action of $q^{-1}p^{-1}$ on elements of $\langle f \rangle$ is well-defined in $R_\ell$ itself: any two lifts of $q^{-1}p^{-1}$ from $R_\ell/\langle \hat{f}\rangle$ to $R_\ell$ differ by a multiple of $\hat{f}$, which annihilates $\langle f \rangle$. Thus $a'' = q^{-1}p^{-1}\tilde{a} \in \langle f \rangle$ since $\langle f \rangle$ is an ideal.}:
\bea
a'' = q^{-1}p^{-1}\tilde{a} \in \langle f \rangle \quad \quad \tilde{a} = a''qp \in \langle f \rangle
\eea
as we are working in a commutative ring, we have
\bea
p^{-1}\tilde{a} = p^{-1}qpa'' = qa'' \quad \quad q^{-1}\tilde{a} = q^{-1}qpa'' = pa'' 
\eea
and so
\begin{align}
     (p^{-1}\tilde{a}, q^{-1}\tilde{a}) &= (qa'', pa'')
\end{align}
for $a'' \in \langle f \rangle$. Once again, applying $\sigma$, we have:
\begin{align}
    (qa'', pa'') \xrightarrow{\sigma} &(pa'', qa'') \in M_f(p,q) = \operatorname{rs}(H_X)
\end{align}
Thus, $\phi_{x^j} = \sigma \circ (\psi_{x^j} \oplus \psi_{x^j})$ is a $M_\sim$ automorphism.

Case 4 follows in the same way, setting $\tilde{a} = u a(x^j) \in \langle f \rangle$ and $a'' = q^{-1}p^{-1}\tilde{a} \in \langle f \rangle$.

\end{proof}

Thus, we have shown conditions on $f, \ell, p, q$ that yield a much richer set of code automorphisms than cyclic shifts and block-swaps alone. Observe that the four cases from Theorem \ref{thm:substitution_automorphisms} categorize the useful substitution multipliers into the sets
\begin{align}
\mathrm{Stab}(p,q) &= \{j \in \mathrm{Pres}(f) : p(x^j) = up,\  q(x^j) = uq \}\label{eq:stab_set}\\
\mathrm{Swap}(p,q) &= \{j \in \mathrm{Pres}(f) : p(x^j) = uq, \ q(x^j) = up \}\label{eq:swap_set} \\
\mathrm{Inv}(p,q) &= \{j \in \mathrm{Pres}(f)  : p(x^j) = up^{-1}, \ q(x^j) = uq^{-1} \}\label{eq:inv_set} \\
\mathrm{SwapInv}(p,q) &= \{j \in \mathrm{Pres}(f)  : p(x^j) = uq^{-1}, \ q(x^j) = up^{-1} \}\label{eq:swapinv_set}
\end{align}
Before we catalog the full set of block-separable automorphisms, we end with a corollary:
\begin{cor}
    Composing any automorphism with the block-swap involution $\Sigma =(\psi_{x^{-1}} \oplus \psi_{x^{-1}}) \circ \sigma$ yields a code automorphism of the opposite type.
\end{cor}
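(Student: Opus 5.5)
The plan is to argue purely at the level of how $\phi$ and $\Sigma$ act on the pair of cyclic submodules $M_f(p,q)$ and $M_{\overleftarrow{f}}(\overleftarrow{q},\overleftarrow{p})$, using the criterion of Corollary~\ref{cor:automorphism_criterion}. In Sec.~\ref{subsubsec:bsa_block} we showed that $\Sigma$ sends $M_f(p,q)$ onto $M_{\overleftarrow{f}}(\overleftarrow{q},\overleftarrow{p})$ and vice versa. In other words, $\Sigma$ is an $M_\leftrightarrow$ map, and it is an involution: $\sigma^2=\mathrm{id}$, $\psi_{x^{-1}}^2=\mathrm{id}$, and $\sigma$ commutes with $\psi_{x^{-1}}\oplus\psi_{x^{-1}}$. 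So I would first record that $\Sigma$ acts as the transposition of the two-element set $\{M_f(p,q), M_{\overleftarrow{f}}(\overleftarrow{q},\overleftarrow{p})\}$.

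Next, let $\phi\in S_{2\ell}$ be any code automorphism in the sense of Corollary~\ref{cor:automorphism_criterion}. Then $\phi$ induces a permutation of the same two-element set: the identity if $\phi$ is $M_\sim$, the transposition if $\phi$ is $M_\leftrightarrow$. Permutations of coordinates compose, and the induced action on subsets is functorial, since $(\phi\circ\Sigma)(M)=\phi(\Sigma(M))$. Hence both $\phi\circ\Sigma$ and $\Sigma\circ\phi$ induce the composite permutation. That composite is the transposition if $\phi$ is $M_\sim$ and the identity if $\phi$ is $M_\leftrightarrow$. By Corollary~\ref{cor:automorphism_criterion}, the composite is therefore a code automorphism of the opposite type, in either order of composition. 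To be concrete, I would write out the two chains. In the first, $(pa,qa)\mapsto(pa',qa')\mapsto(\overleftarrow{qa'},\overleftarrow{pa'})$. In the second, $(pa,qa)\mapsto(\overleftarrow{qa'},\overleftarrow{pa'})\mapsto(pa',qa')$, and symmetrically for the $H_Z$ generators.

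The only point I expect to need care is making sure the statement is read at the right level of generality, namely for arbitrary $\phi$ satisfying $M_\sim$ or $M_\leftrightarrow$. It should not be restricted to block-separable maps. If one also wants the composite to remain in the block-separable $\psi_L=\psi_R$ catalog, I would add a remark. For $\phi=\sigma^\epsilon\circ(\psi\oplus\psi)$, the composite $\Sigma\circ\phi$ equals $\sigma^{\epsilon+1}\circ((\psi_{x^{-1}}\circ\psi)\oplus(\psi_{x^{-1}}\circ\psi))$, because $\sigma$ commutes with diagonal maps. Since $\psi_{x^{-1}}\circ\psi$ is again affine, with multiplier $-j$ and the shift negated, the composite stays in the same class. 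This is what populates the $M_\leftrightarrow$ column of the forthcoming catalog from the four sets $\mathrm{Stab},\mathrm{Swap},\mathrm{Inv},\mathrm{SwapInv}$.
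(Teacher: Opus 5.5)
Your proof is correct and is essentially the paper's argument: $\Sigma$ is $M_\leftrightarrow$ by Sec.~\ref{subsubsec:bsa_block}, compositions of code automorphisms are code automorphisms, and preserving versus exchanging the two rowspaces composes like the identity versus the transposition, so composing with $\Sigma$ in either order flips the type. Your two-element-set framing, the explicit chains, and the remark that $\Sigma\circ(\sigma^\epsilon\circ(\psi\oplus\psi))$ stays in the block-separable affine $\psi_L=\psi_R$ class are elaborations of the same idea rather than a different route.
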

\begin{proof}
    Composition of two automorphisms yields an automorphism. A $M_\sim$ automorphism preserves each rowspace; a $M_\leftrightarrow$ automorphism exchanges them. Since $\Sigma$ is $M_\leftrightarrow$, composing it with a $M_\sim$ automorphism exchanges the rowspaces ($M_\leftrightarrow$), while composing it with a $M_\leftrightarrow$ automorphism preserves them ($M_\leftrightarrow \circ M_\leftrightarrow = M_\sim$).
\end{proof}

We catalog the full set of block-separable automorphisms generated by the affine group in Table \ref{tab:block_sep_auts}. Writing $\Lambda = \mathrm{Stab}(p,q) \cup \mathrm{Inv}(p,q) \cup \mathrm{Swap}(p,q) \cup \mathrm{SwapInv}(p,q)$, the total number of block-separable automorphisms (including shifts) obtained from $\psi_L = \psi_R$ is
\bea
|\mathcal{G}| = 2\ell \cdot |\Lambda|.
\eea

\begin{table*}[tb]
\centering
\renewcommand{\arraystretch}{1.6}
\small
\begin{tabular}{c p{3.8cm} c p{6.2cm}}
\hline
Source & Automorphism & Type & Conditions \\
\hline
$i \in \mathbb{Z}/\ell\mathbb{Z}$ & $\phi_i$ & $M_\sim$ & None (Cyclic shifts) \\[4pt]
$j \in \mathrm{Stab}(p,q)$ & $\Phi_j^{\mathrm{stab}} = \phi_{x^j}$ & $M_\sim$ & $\langle f(x^j)\rangle = \langle f \rangle$; \newline $p(x^j) = up$, $q(x^j) = uq$ in $S$ \\[4pt]
$j \in \mathrm{Swap}(p,q)$ & $\Phi_j^{\mathrm{swap}} = \sigma \circ \phi_{x^j}$ & $M_\sim$ & $\langle f(x^j)\rangle = \langle f \rangle$; \newline $p(x^j) = uq$, $q(x^j) = up$ in $S$ \\[4pt]
$j \in \mathrm{Inv}(p,q)$ & $\Phi_j^{\mathrm{inv}} = \sigma \circ \phi_{x^j}$ & $M_\sim$ & $\langle f(x^j)\rangle = \langle f \rangle$; \newline $\gcd(p, \hat{f}) = \gcd(q, \hat{f}) = 1$; \newline $p(x^j) = up^{-1}$, $q(x^j) = uq^{-1}$ in $S$ \\[4pt]
$j \in \mathrm{SwapInv}(p,q)$ & $\Phi_j^{\mathrm{si}} = \phi_{x^j}$ & $M_\sim$ & $\langle f(x^j)\rangle = \langle f \rangle$; \newline $\gcd(p, \hat{f}) = \gcd(q, \hat{f}) = 1$; \newline $p(x^j) = uq^{-1}$, $q(x^j) = up^{-1}$ in $S$ \\[4pt]
\hline
--- & $\Sigma = \sigma \circ \phi_{x^{-1}}$ & $M_\leftrightarrow$ & None \\[4pt]
$j \in \mathrm{Stab}(p,q)$ & $\Sigma\circ \Phi_j^{\mathrm{stab}} = \sigma \circ \phi_{x^{-j}}$ & $M_\leftrightarrow$ & Same as Stab \\[4pt]
$j \in \mathrm{Swap}(p,q)$ & $\Sigma \circ \Phi_j^{\mathrm{swap}} = \phi_{x^{-j}}$ & $M_\leftrightarrow$ & Same as Swap \\[4pt]
$j \in \mathrm{Inv}(p,q)$ & $\Sigma \circ \Phi_j^{\mathrm{inv}} = \phi_{x^{-j}}$ & $M_\leftrightarrow$ & Same as Inv \\[4pt]
$j \in \mathrm{SwapInv}(p,q)$ & $\Sigma \circ \Phi_j^{\mathrm{si}} = \sigma \circ \phi_{x^{-j}}$ & $M_\leftrightarrow$ & Same as SwapInv \\
\hline
\end{tabular}
\caption{Block-separable automorphisms for general GB codes. $\phi_{x^j} = \psi_{x^j} \oplus \psi_{x^j}$ denotes the multiplier ring automorphism $x^j$ and $\phi_i = \psi_i \oplus \psi_i$ denote the cyclic shift by $i$-indices automorphism, each acting on each block separately. Additionally, $\phi_{x^{a}} \circ \phi_{x^b} = \phi_{x^{ab}}$, and in particular, we combine sequential $\phi_{x^{-1}}, \phi_{x^j}$ automorphisms into $\phi_{x^{-j}}$. $u$ denotes an arbitrary element of $S^\times$. Note that $\sigma$ commutes with $\phi_{x^j}, \phi_i$, yielding the $M_\leftrightarrow$ automorphisms that exchange the $H_X$ and $H_Z$ rowspaces. Here $\hat{f} = (x^\ell - 1)/f$ denotes the annihilator of $f$, and all equalities involving $p, q$ are in $S = R_\ell / \langle \hat{f} \rangle$. The $M_\sim$ code automorphisms yield quantum code automorphisms when assigning $X, Z$ Paulis to $M_f(p,q)$ and $M_{\overleftarrow{f}}(\overleftarrow{q}, \overleftarrow{p})$, while $M_\leftrightarrow$ yields ZX dualities.}
\label{tab:block_sep_auts}
\end{table*}
 
\subsection{Fold-Transversal Gates}
\label{subsec:foldgates}

The block-separable catalog of Sec.~\ref{subsec:bsep_auto} produces an explicit list of $M_\sim$ and $M_\leftrightarrow$ code automorphisms of any GB code when $\psi_L = \psi_R$.  This subsection converts that list into fault-tolerant logical Clifford gates via the \emph{fold-transversal} gate framework of~\cite{Breuckmann_foldtransversal}.

A fold-transversal gate is a logical gate implemented by combining a geometric pairing (''folding'' of qubits) with a semi-uniform local physical operation. In this work we extend the fold-transversal gate framework by considering logical gates ``up to automorphism'' --- that is, we allow some permutation of physical qubits after implementing the physical transversal gates in order to preserve the rowspaces. We present the new fold-transversal gates found, supplemented with conditions for fold-transversal CNOTs. 

Note that the $H$-type case is subsumed by Sec.~\ref{subsec:bsep_auto}: pairing any $M_\leftrightarrow$ automorphism with transversal Hadamards realizes an $H$-type fold-transversal gate.  The remaining Clifford generators, $S$ and $\mathrm{CX}$, require additional algebraic conditions, and are the focus of this subsection.

For \textbf{$S$-type gates} (Sec.~\ref{sec:phase_type}), we apply a slightly modified version of Theorem~7 of~\cite{Breuckmann_foldtransversal}, which lifts a ZX-duality $\tau$ to a logical $S$-type Clifford whenever $\tau$ is an involution and satisfies an even-overlap condition on its fixed-point set. For \textbf{$\mathrm{CX}$-type gates} (Sec.~\ref{sec:cx_type}), no prior framework is presently known, and we derive sufficient algebraic conditions under which a fold-transversal CNOT along the $(i, i+\ell)$ fold composed with a ring automorphism $\psi \oplus \psi$ extends to a code automorphism (Theorem~\ref{thm:fold_cx}); the construction has four variants, indexed by the CNOT direction (block~1 $\to$ block~2 or vice versa) and by the rowspace behavior of the resulting automorphism ($M_\sim$ or $M_\leftrightarrow$).

All resulting conditions, both $S$-type (necessary and sufficient) and $\mathrm{CX}$-type (sufficient), are collected in Table~\ref{tab:fold_gates}.

\subsubsection{Phase-type gates}
\label{sec:phase_type}

The full set of $M_\leftrightarrow$ code automorphisms yielding ZX dualities in Table~\ref{tab:block_sep_auts} naturally splits into two families along the block-action axis of Prop.~\ref{prop:block_automorphisms}: the \emph{block-swapping} family (those involving $\sigma$) and the \emph{block-preserving} family (those not involving $\sigma$). As the criterion of Theorem~7 of Ref.~\cite{Breuckmann_foldtransversal} is dependent on properties of the underlying fold, we find handling the two families separately to be the easiest choice.  We have the following:
\begin{enumerate}
    \item \textbf{$M_\leftrightarrow$, block-swapping:} $\mathrm{Stab}(p,q)\  \cup \ \mathrm{SwapInv}(p,q)$

    For each $j \in \mathrm{Stab}(p,q) \ \cup \ \mathrm{SwapInv}(p,q)$, the permutation
    \bea
    \tau_j^{(1)} \;=\; \Sigma\circ \Phi_j^{\mathrm{stab} \ \cup \ \mathrm{si}} = \sigma \circ \phi_{x^{-j}}
    \eea
    is a $M_\leftrightarrow$ automorphism/ZX-duality. Since $1 \in \mathrm{Stab}(p,q)$ always, this family is never empty. The $j = 1$ case recovers the $CZ_{\tau_0\omega}$ fold transversal gate of~\cite{automorphisms_BBcodes}\footnote{$\tau_0$ is notation from ~\cite{automorphisms_BBcodes} and should not be mistaken for $\tau_j^{(1)}$}.

    \item \textbf{$M_\leftrightarrow$, block-preserving:} $\mathrm{Swap}(p,q) \ \cup \ \mathrm{Inv}(p,q)$

    For each $j \in \mathrm{Swap}(p,q) \ \cup \ \mathrm{Inv}(p,q)$, the permutation
    \bea
    \tau_j^{(2)} = \Sigma \circ \Phi_j^{\mathrm{swap} \ \cup \ \mathrm{inv}} = \phi_{x^{-j}}
    \eea
    is a $M_\leftrightarrow$ automorphism/ZX-duality. This sends qubit $i$ to $(-ji \bmod \ell)$ within each block; no physical swap of the two halves occurs. This family exists only when at least one of Swap, Inv is nonempty. The conditions on these sets are non-trivial, and often this family is empty.
\end{enumerate}

To simplify the analysis, we opt to use an altered version of the framework introduced for $S$-type fold-transversal gates in \cite{Breuckmann_foldtransversal}. As our statement is slightly modified, we provide a proof of the theorem. 

\begin{theorem}[Slight modification of Theorem 7 of \cite{Breuckmann_foldtransversal}]\label{thm:fold_s_gate}
   Let $\tau$ be a $ZX$ duality with fixed point set $\mathrm{Fix}(\tau) = \{i \in \{1, \dots, n\} : \tau(i) = i\}$ such that
   \begin{enumerate}[label=({\bf S\arabic*})]
    \item $\tau^2 = \mathrm{id}$ 
    \item $|\mathrm{Fix}(\tau)| = 0 \bmod 2$, and for each $X$-check $X^{\otimes s}$ for $s \subset \{1, \dots, n\}$, $|s \cap \mathrm{Fix}(\tau)| = 0\bmod 2$.
\end{enumerate}
Then, there exists a Pauli operator $P$ such that 
\begin{equation}
    S_{\tau} = P \bigotimes_{i = \tau(i)}
S_i \bigotimes_{i < \tau(i)} \mathrm{CZ}_{i,\tau(i)}
\end{equation}
is a logical gate.
\end{theorem}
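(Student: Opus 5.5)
The plan is to conjugate Pauli operators through the Clifford $U_\tau = \bigotimes_{i=\tau(i)} S_i \bigotimes_{i<\tau(i)} \mathrm{CZ}_{i,\tau(i)}$, check that every stabilizer is sent to a stabilizer up to sign, and then fix the signs with a Pauli $P$. The map $\tau$ is an involution by (S1), so $U_\tau$ is well defined. Its orbits are fixed points and 2-cycles $\{i,\tau(i)\}$, which makes the $S$ and $\mathrm{CZ}$ factors commute and act on disjoint supports.

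First compute the action on single-qubit Paulis.
\begin{itemize}
\item $U_\tau$ fixes every $Z$-type Pauli.
\item For a fixed point $i$, $S X_i S^\dagger = Y_i = i X_i Z_i$.
\item For a pair $\{i,\tau(i)\}$, $\mathrm{CZ}$ sends $X_i \mapsto X_i Z_{\tau(i)}$ and $X_{\tau(i)} \mapsto Z_i X_{\tau(i)}$.
\end{itemize}
Hence, for an $X$-check $X^{\otimes s}$ one gets $U_\tau X^{\otimes s} U_\tau^\dagger = \pm\, i^{|s\cap \mathrm{Fix}(\tau)|}\, X^{\otimes s} Z^{\otimes \tau(s)}$, where the $Z$ part is on $\tau(s)$ as a set (fixed points contribute $Z_i$ at $i=\tau(i)$). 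The power of $i$ comes from the $Y$'s; the extra $\pm$ signs come from reordering $X$'s and $Z$'s within each 2-cycle.

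Because $\tau$ is a ZX duality, $Z^{\otimes \tau(s)}$ is (up to the identification of Pauli types) a $Z$-stabilizer. So the image is a stabilizer times a phase. By (S2), $|s\cap \mathrm{Fix}(\tau)|$ is even, so the phase is real, $\pm1$, and the image is Hermitian. The $Z$-checks are fixed outright. Therefore $U_\tau$ normalizes the stabilizer group up to signs. Since $U_\tau$ is Clifford, these images automatically commute and generate a group isomorphic to the stabilizer group.

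The main obstacle is the sign bookkeeping, and it is handled as follows. The map $g \mapsto$ (sign of $U_\tau g U_\tau^\dagger$ relative to the corresponding positive stabilizer) is a character of the stabilizer group. To see this, note that conjugation is a homomorphism and the stabilizer group contains no $-I$. Every character of an abelian Pauli stabilizer group is realized by conjugation with some Pauli $P$, using the standard fact that Paulis realize any prescribed commutation pattern with independent generators. Choosing such a $P$, the product $P U_\tau$ maps each stabilizer exactly to a stabilizer, so it preserves the code space and is a logical gate.

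The condition $|\mathrm{Fix}(\tau)|$ even is not needed for the stabilizer argument. I would note that it ensures the induced logical operator is well behaved (e.g.\ $S_\tau$ squares to a logical Pauli consistently, matching \cite{Breuckmann_foldtransversal}), and otherwise defer to the original proof for that point.
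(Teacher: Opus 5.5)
Your proposal is correct and is essentially the paper's proof. The same conjugation computation gives $(-1)^{c}X^{\otimes s}Z^{\otimes\tau(s)}$ with $c=\tfrac12|s\cap\mathrm{Fix}(\tau)|+w$ ($w$ the number of 2-orbits inside $s$), and the sign is removed by a Pauli fixed by its commutation pattern with an independent generating set; the paper builds it concretely as $P=\prod_{c_j=1}D_j^Z$ from $Z$-type destabilizers, and, like you, it never uses that $|\mathrm{Fix}(\tau)|$ is even. The one step to make explicit in your version is that $P$ acts after $U_\tau$, so it must flip the sign of the image $X^{\otimes s}Z^{\otimes\tau(s)}$ rather than of $X^{\otimes s}$. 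This holds because your character is trivial on the $Z$-checks, which the $Z$-type choice of $P$ makes automatic in the paper.
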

We note that Theorem 7 of \cite{Breuckmann_foldtransversal} further requires that for each \(X\) check \(X^{s}\), \(s\) contains an even number of two element orbits, in which case \(P = I\). 

\begin{proof}
    We will show that for every element \(M\) of the stabilizer, \(S_{\tau}MS_{\tau}^{\dagger}\) is also an element of the stabilizer. First, observe that if \(M = Z^s\) for some \(s \subset \{1, \dots, n\}\), \(S_{\tau} M S_{\tau}^{\dagger} = M\). Let \(M^X_1, \dots, M^X_m\) be an independent, generating set of \(X\) check operators, with \(M^X_j = X^{\otimes s_j}\). We see that
    \begin{equation}
     \left ( \bigotimes_{i = \tau(i)}
S_i \bigotimes_{i < \tau(i)} \mathrm{CZ}_{i,\tau(i)} \right ) X^{\otimes s_j} \left( \bigotimes_{i = \tau(i)}
S_i \bigotimes_{i < \tau(i)} \mathrm{CZ}_{i,\tau(i)}\right )^{\dagger} = (-1)^{c_j} X^{\otimes s_j} Z^{\otimes \tau(s_j)}
    \end{equation}
    where \(c_j = (\frac{1}{2} |\mathrm{Fix}(\tau) \cap s_j | + w_j) \bmod 2\), with \(w_j\) being the number of two element orbits of \(\tau\) contained in \(s_j\). 

    Since \(M_1^X, \dots, M_m^X\) are independent stabilizer generators, we may extend them to a symplectic basis of the Pauli group containing the destabilizers \(D_1^Z, \dots, D_m^Z\), such that
    \[D_j^Z M_{j'}^X = (-1)^{\delta_{j, j'}} M_{j'}^X D_j^Z\]
    and such that each \(D_j^Z\) commutes with all of the \(Z\) stabilizer generators. Let
    \[P = \prod_{j: c_j = 1} D_j^Z.\]
     Then, since \(\tau\) is a \(ZX\) duality, we have that
    \[S_{\tau} X^{\otimes s_j} S_{\tau}^{\dagger} = X^{\otimes s_j} Z^{\otimes \tau(s_j)}\]
    which is in the stabilizer, as desired.
\end{proof}

We additionally include a lemma that will become useful in the block-preserving case:

\begin{lem}\label{lem:support_overlap}
Let $\alpha(x) \in R_\ell$ and let $F \subseteq \mathbb{Z}/\ell\mathbb{Z}$ be a subset closed under negation, with indicator polynomial
$F(x) = \sum_{i \in F} x^i$.  Then
\bea
[\alpha(x) \cdot F(x)]_0 \;=\; \sum_{i \in F} \alpha_i \;=\; |\mathrm{supp}(\alpha) \cap F| \pmod{2}.
\eea
\end{lem}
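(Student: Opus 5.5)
The plan is to compute the constant coefficient of the product $\alpha(x)F(x)$ directly in $R_\ell$, using the fact that multiplication in $R_\ell$ is cyclic convolution of coefficient vectors. Write $\alpha(x) = \sum_{k \in \mathbb{Z}/\ell\mathbb{Z}} \alpha_k x^k$ and $F(x) = \sum_{i\in F} x^i$. Since $x^\ell = 1$ in $R_\ell$, the product is
\[
\alpha(x)F(x) = \sum_{k}\sum_{i \in F} \alpha_k x^{k+i},
\]
with exponents read modulo $\ell$. Hence the coefficient of $x^0$ collects exactly those pairs $(k,i)$ with $k + i \equiv 0 \pmod{\ell}$, that is, $k = -i$. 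This gives
\[
[\alpha F]_0 = \sum_{i\in F}\alpha_{-i}.
\]

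Next I invoke the hypothesis that $F$ is closed under negation. The map $i \mapsto -i$ is then a bijection of $F$ onto itself, so reindexing gives
\[
\sum_{i\in F}\alpha_{-i} = \sum_{i \in F}\alpha_i.
\]
This establishes the first equality.

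For the second equality, recall that each $\alpha_i \in \mathbb{F}_2$. The sum $\sum_{i\in F}\alpha_i$ over $\mathbb{F}_2$ is therefore the number of indices $i \in F$ with $\alpha_i = 1$, reduced mod $2$. That number is exactly $|\mathrm{supp}(\alpha)\cap F|$.

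There is no real obstacle here. The only point requiring care is the convention: the paper's circulants are built from row shifts, and $\overleftarrow{\cdot}$ denotes evaluation at $x^{-1}$. So I would emphasize that the $x^0$ coefficient naturally pairs $\alpha_{-i}$ with $i$, and that the negation-closure of $F$ is precisely what removes the reflection. Without this hypothesis one would obtain $|\mathrm{supp}(\overleftarrow{\alpha})\cap F|$ instead.
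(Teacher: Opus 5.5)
Your proof is correct and matches the paper's argument: the paper also expands the product to get the constant coefficient $\sum_i \alpha_i \cdot \mathbf{1}[-i \in F]$, which is your $\sum_{i\in F}\alpha_{-i}$ with the index placed on $\alpha$ rather than on $F$. It then uses closure of $F$ under negation to remove the reflection and reads off $|\mathrm{supp}(\alpha)\cap F| \bmod 2$ over $\mathbb{F}_2$.
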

\begin{proof}
Expanding the product,
\bea
[\alpha(x) \cdot F(x)]_0 = \sum_i \alpha_i \cdot \mathbf{1}[-i \in F]
\eea
Since $F$ is closed under negation, $-i \in F$ if and only if $i \in F$, giving $\sum_{i \in F} \alpha_i$.  Over $\mathbb{F}_2$ this equals $|\mathrm{supp}(\alpha) \cap F| \bmod 2$.
\end{proof}

We now evaluate these conditions for each family of $M_\leftrightarrow$ automorphisms.

\begin{prop}\label{prop:S_type_swapping}
Let $C$ be a GB code defined by $(\ell, f, p, q)$, and let 
$j \in \operatorname{Stab}(p,q) \cup \operatorname{SwapInv}(p,q)$. Then, $\tau =  \sigma \circ \phi_{x^{-j}}$ admits an S-type fold-transversal gate
\bea
S_\tau = \bigotimes_{i < \tau(i)} \mathrm{CZ}_{i,\,\tau(i)}
\eea
if and only if $j^2 \equiv 1 \pmod{\ell}$.
\end{prop}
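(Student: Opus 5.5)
We apply Theorem~\ref{thm:fold_s_gate} to $\tau = \sigma \circ \phi_{x^{-j}}$, which is already a ZX-duality (an $M_\leftrightarrow$ automorphism) by Table~\ref{tab:block_sep_auts}. Two things must be checked: condition (S1), that $\tau$ is an involution, and condition (S2) on the fixed-point set. We also need that the Pauli correction $P$ can be taken trivial, so that $S_\tau$ consists of CZs alone with no single-qubit phases.

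\textbf{The fixed-point set is empty.} Write qubits as $(i,\mathrm{L})$ and $(i,\mathrm{R})$. Then $\tau$ sends $(i,\mathrm{L}) \mapsto (-ji,\mathrm{R})$ and $(i,\mathrm{R}) \mapsto (-ji,\mathrm{L})$. Since $\tau$ always changes the block, $\mathrm{Fix}(\tau)=\emptyset$. Hence (S2) holds trivially, and the product $\bigotimes_{i=\tau(i)} S_i$ is empty, so the gate has the stated form up to the Pauli $P$.

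\textbf{Condition (S1) is equivalent to $j^2 \equiv 1$.} Because $\sigma$ commutes with $\phi_{x^{-j}}$ and $\sigma^2 = \mathrm{id}$, we get $\tau^2 = \phi_{x^{-j}}^2 = \phi_{x^{j^2}}$. This is the identity on $\mathbb{Z}/\ell\mathbb{Z}$ exactly when $j^2 i \equiv i$ for all $i$, i.e.\ when $j^2 \equiv 1 \pmod \ell$ (take $i=1$). So (S1) holds if and only if $j^2\equiv 1$, which gives the ``if'' direction up to the Pauli $P$. For ``only if'': the displayed gate pairs $i$ with $\tau(i)$ through $\mathrm{CZ}_{i,\tau(i)}$, which is only a well-defined fold (a perfect matching of the qubits) when $\tau$ is an involution. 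If $j^2 \not\equiv 1$, the orbits of $\tau$ are not all of size two, so no such $S_\tau$ exists. I would state this explicitly as the interpretation of the statement, since Theorem~\ref{thm:fold_s_gate} itself only gives sufficiency.

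\textbf{Removing the Pauli correction (the main obstacle).} From the proof of Theorem~\ref{thm:fold_s_gate}, conjugating an $X$-check $X^{s}$ by the CZ layer gives $(-1)^{c} X^{s}Z^{\tau(s)}$ with $c = w \bmod 2$, where $w$ is the number of two-element orbits $\{i,\tau(i)\}$ contained in $s$. So $P=I$ requires $w$ even for every $X$-check. The plan is as follows.

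\begin{enumerate}
\item Take the generic check $(pa,qa)$ and count pairs $(i,\mathrm{L}),(-ji,\mathrm{R})$ both in its support. This count is the constant coefficient of $pa(x)\cdot(qa)(x^{j})$, i.e.\ $[\,pa \cdot \psi_{x^{j}}(qa)\,]_0$. One may instead use $\psi_{x^{-j}}$; these agree because $j^2\equiv1$.
\item Apply the defining relation of $j$. For $j\in\mathrm{Stab}$, $q(x^j)a(x^j) = u\,q\,a'$; for $j\in\mathrm{SwapInv}$, the analogous relation gives $u\,p^{-1}a'$. In each case the product becomes a symmetric expression of the form $p\,q\,u\,a\,a'$, where $a' = a(x^j)$.
\item Show this constant coefficient is even, i.e.\ the coefficient is $0$ over $\mathbb{F}_2$. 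Because $\tau$ is an involution, each orbit is counted symmetrically, so the terms should cancel in pairs. I would try to make this rigorous with Lemma~\ref{lem:support_overlap}, using the negation-closed structure coming from $\psi_{x^{-1}}$.
\end{enumerate}

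If this parity argument fails in general, I would fall back on the $P$-corrected form guaranteed by Theorem~\ref{thm:fold_s_gate}. In that case the equivalence with $j^2\equiv 1\pmod\ell$ still stands, with $S_\tau$ understood up to a Pauli.
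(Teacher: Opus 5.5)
Your first two steps are the paper's proof. Because $\tau$ always changes block, $\mathrm{Fix}(\tau)=\emptyset$, so (S2) is vacuous and no $S$ factors appear. Because $\tau^2=\phi_{x^{j^2}}$, (S1) holds if and only if $j^2\equiv 1 \pmod{\ell}$. The paper reads the displayed gate as ``CZ plus Pauli correction,'' relying on the Pauli $P$ of Theorem~\ref{thm:fold_s_gate}. That is your fallback, and it is the version you should state.

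Your third step is not an obstacle to be overcome. The claim it aims at is false, so drop it. The number of two-element orbits $\{(i,\mathrm{L}),(-ji,\mathrm{R})\}$ contained in an $X$-check $(pa,qa)$ is $\sum_i (pa)_i\,(qa)_{-ji}$, and this sum counts each orbit exactly once. Nothing makes the terms pair off; you would get an automatically even number only if each orbit were counted from both of its endpoints.

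Here is a concrete counterexample. Take the $[[18,2,5]]$ code of Table~\ref{tab:k2-mcr}, with $\ell=9$, $f=x+1$, $p=1$, $q=x+x^3+x^4+x^7$. Take $j=1\in\mathrm{Stab}(p,q)$, so $\tau=\Sigma$. The generator check is $(pf,qf)=(1+x,\;x+x^2+x^3+x^5+x^7+x^8)$. It contains exactly one orbit, namely $\{(1,\mathrm{L}),(8,\mathrm{R})\}$. Equivalently, in your own formula, $[pf\cdot qf]_0=[q(1+x^2)]_0=q_0+q_7=1$.

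So the bare CZ layer sends this stabilizer $X^s$ to $-X^sZ^{\tau(s)}$. Since $X^sZ^{\tau(s)}$ is a product of an $X$-check and a $Z$-check, this is minus a stabilizer and not in the stabilizer group. Hence $P\neq I$ is genuinely required. This is exactly why the paper's modified theorem drops Breuckmann--Burton's even-orbit condition.

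A minor point: for $j\in\mathrm{SwapInv}$ your product simplifies to $u\,a\,a'$, not $p\,q\,u\,a\,a'$.
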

\begin{proof}

Since $\tau = \sigma \circ \phi_{x^{-j}}$ maps every qubit in block~1 to a qubit in block~2 and vice versa, $\tau$ has no fixed points. This has two immediate consequences: condition (S2) is trivially satisfied, and the $S^{(\dagger)}$ factors in the gate $S_\tau$ are absent, leaving only CZ (plus Pauli correction) gates on two-element orbits. It remains to verify $j^2 \equiv 1 \pmod{\ell}$.

Applying $\tau$ twice cancels out $\sigma$, and we may assess the action of $\tau^2$ within a single block. In the absence of $\sigma$, $\tau$ acts as $i \mapsto -ji \bmod \ell$. Applying $\tau$ twice gives $i \mapsto -j(-ji) = j^2 i \bmod \ell$. Thus $\tau^2 = \mathrm{id}$ if and only if $j^2 \equiv 1 \pmod{\ell}$.
\end{proof}

\begin{prop}\label{prop:S_type_preserving}
Let $C$ be a GB code defined by $(\ell, f, p, q)$, and let $j \in \operatorname{Swap}(p,q) \cup \operatorname{Inv}(p,q)$. Then, $\tau = \phi_{x^{-j}}$ admits an S-type fold-transversal gate
\bea
S_\tau = \bigotimes_{i = \tau(i)} S^{(\dagger)}_i 
\;\;\bigotimes_{i < \tau(i)} \mathrm{CZ}_{i,\,\tau(i)}
\eea
if and only if the following conditions are satisfied:
\begin{enumerate}
    \item $j^2 \equiv 1 \pmod{\ell}$.
    \item $\gcd(\hat{f},\, x^{\ell/d} - 1)$ divides $(p + q)$, where $d = \gcd(1 + j,\, \ell)$
\end{enumerate}
\end{prop}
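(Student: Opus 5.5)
The plan is to check the two hypotheses (S1) and (S2) of Theorem~\ref{thm:fold_s_gate} for $\tau = \phi_{x^{-j}}$ and translate each into a condition on $j$ and on $(f,p,q)$. That $\tau$ is a ZX duality is already recorded in Table~\ref{tab:block_sep_auts} for $j \in \mathrm{Swap}(p,q)\cup\mathrm{Inv}(p,q)$. For the converse direction I will argue that (S1) and (S2) are exactly what is needed for an operator of the displayed form to be a logical gate. If $\tau^2 \neq \mathrm{id}$, the pairing $i \leftrightarrow \tau(i)$ is not well defined. If some $X$-check meets $\mathrm{Fix}(\tau)$ oddly, conjugation produces a factor $i$, so the image is not a Hermitian element of the stabilizer group, and no Pauli correction can repair this. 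So the argument is "iff" at the level of the conditions of Theorem~\ref{thm:fold_s_gate}.

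\textbf{Step 1 (involution).} Within each block $\tau$ acts by $i \mapsto -ji \bmod \ell$, and no block swap occurs. Exactly as in Prop.~\ref{prop:S_type_swapping}, $\tau^2$ acts by $i\mapsto j^2 i$, so (S1) holds iff $j^2\equiv 1 \pmod \ell$.

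\textbf{Step 2 (fixed points).} In each block, $i$ is fixed iff $(1+j)i\equiv 0 \pmod\ell$, i.e.\ iff $i$ is a multiple of $\ell/d$ with $d=\gcd(1+j,\ell)$. Hence the per-block fixed set is $F=(\ell/d)\mathbb{Z}/\ell\mathbb{Z}$. It has $d$ elements, so $|\mathrm{Fix}(\tau)|=2d$ is automatically even. It is closed under negation, and its indicator polynomial is $F(x)=\sum_{k=0}^{d-1}x^{k\ell/d}=(x^\ell-1)/(x^{\ell/d}-1)$.

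\textbf{Step 3 (even overlap).} A generic $X$-check is $(pa,qa)$ with $a\in\langle f\rangle$. Its overlap with $\mathrm{Fix}(\tau)$ is $|\mathrm{supp}(pa)\cap F|+|\mathrm{supp}(qa)\cap F|$. By Lemma~\ref{lem:support_overlap}, this equals $[(p+q)aF]_0 \pmod 2$. It suffices to work with a spanning set of checks, but since $\langle f\rangle$ is closed under multiplication by $x^{-i}$, requiring $[(p+q)x^{-i}aF]_0=0$ for all $i$ forces every coefficient of $(p+q)aF$ to vanish. Thus (S2) is equivalent to $(p+q)fF=0$ in $R_\ell$ (taking $a=f$ suffices, and the ideal then follows). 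Lifting to $\mathbb{F}_2[x]$, this says
\[
x^\ell-1 \mid (p+q)f\,\frac{x^\ell-1}{x^{\ell/d}-1},
\]
which is equivalent to $x^{\ell/d}-1\mid (p+q)f$.

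\textbf{Step 4 (remove $f$; the main obstacle).} Write $g=x^{\ell/d}-1$, which divides $x^\ell-1=f\hat f$. The condition $g\mid (p+q)f$ is equivalent to $g/\gcd(g,f)\mid p+q$. I then need to identify $g/\gcd(g,f)$ with $\gcd(\hat f,g)$.

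When $\ell$ is odd, $x^\ell-1$ is squarefree, so every irreducible factor of $g$ lies in exactly one of $f,\hat f$, and the identification is immediate. For general $\ell$ I will compare multiplicities of each irreducible $g_i$. If $g_i$ has multiplicities $e_i$ in $x^\ell-1$, $\epsilon_i$ in $g$, $\mu_i$ in $f$ and $e_i-\mu_i$ in $\hat f$, then $g/\gcd(g,f)$ has multiplicity $\max(\epsilon_i-\mu_i,0)$, while $\gcd(\hat f,g)$ has multiplicity $\min(\epsilon_i,e_i-\mu_i)$. These agree when $\epsilon_i\in\{0,e_i\}$. This is where I expect the real work. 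I will first try to exploit the structure $x^\ell-1=(x^{m}-1)^{2^s}$ with $m$ odd, together with $j^2\equiv1$, to control $\epsilon_i$ relative to $e_i$. If that fails in general, I will state the divisibility in the form $g/\gcd(g,f)\mid p+q$, which coincides with the stated gcd condition in the squarefree case.

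Finally, the condition is insensitive to replacing $p,q$ by representatives modulo $\hat f$, consistent with Theorem~\ref{thm:cyclic_submodule}. Combining Steps 1–4 with Theorem~\ref{thm:fold_s_gate} gives the proposition.
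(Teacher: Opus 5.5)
Your Steps 1--3 are the paper's proof: $\tau^2$ acts as $i\mapsto j^2 i$, the per-block fixed set is the multiples of $\ell/d$ with indicator $F(x)=(x^\ell-1)/(x^{\ell/d}-1)$, and Lemma~\ref{lem:support_overlap} plus shifting $a$ turns (S2) into a divisibility statement. Step 4 is where you part ways with the paper, and your route is the more careful one. The paper passes from $\hat f\mid(p+q)F$ to $\gcd(\hat f,x^{\ell/d}-1)\mid p+q$ by comparing root sets in $\overline{\mathbb{F}}_2$. That comparison tacitly assumes $x^\ell-1$ is squarefree: its claim that the roots of $\hat f$ are exactly the $\ell$-th roots of unity that are not roots of $f$ fails once $f$ and $\hat f$ share a factor, and root sets do not see multiplicities. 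Your cancellation of $F$ in $\mathbb{F}_2[x]$ gives $x^{\ell/d}-1\mid(p+q)f$, equivalently $g/\gcd(g,f)\mid p+q$. This is exact for every $\ell$, and your multiplicity comparison correctly shows it agrees with the stated condition when $\ell$ is odd.

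Do not pursue the planned repair via $j^2\equiv 1$. For $\ell=2^s m$ even, $j$ is odd, so $2\mid d$. Every irreducible factor of $x^{\ell/d}-1$ then has multiplicity $2^{s-v_2(d)}<2^s$, where $v_2$ is the $2$-adic valuation, so $\epsilon_i\notin\{0,e_i\}$ whenever $\epsilon_i\neq 0$.

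In that regime the ``only if'' half of the statement is actually false. Take $\ell=4$, $f=x^2+1=(x+1)^2=\hat f$, $p=x$, $q=1$. Since $x^2=1$ in $S$, $j=1\in\mathrm{Swap}(p,q)$ with $u=x$. Then $j^2=1$, $d=2$, $F=1+x^2$, and $(p+q)fF=(x+1)^5=0$ in $R_4$, so (S1) and (S2) both hold. Yet $\gcd(\hat f,x^2-1)=x^2+1$ does not divide $p+q=x+1$.

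The ``if'' half survives for all $\ell$: since $\max(\epsilon_i-\mu_i,0)\le\min(\epsilon_i,e_i-\mu_i)$, $g/\gcd(g,f)$ divides $\gcd(\hat f,g)$. So your fallback is the right conclusion. The statement as written holds for odd $\ell$, and in general the correct condition is $x^{\ell/d}-1\mid(p+q)f$.

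Even that condition is automatic. Let $\tau$ be an involutive ZX-duality and $s$ the support of an $X$-check. Then $s\cap\tau(s)$ consists of the fixed points in $s$ together with the two-element orbits of $\tau$ inside $s$. Hence $|s\cap\mathrm{Fix}(\tau)|\equiv|s\cap\tau(s)|\equiv 0 \pmod 2$, because $\tau(s)\in\operatorname{rs}(H_Z)$ is orthogonal to $s$. Consequently, given condition 1, condition 2 is redundant (for odd $\ell$), and the odd-overlap case in your necessity argument never arises.
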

\begin{proof}
Unlike the block-swapping case, $\tau = \phi_{x^{-j}}$ acts within each block, sending $i \mapsto -ji \bmod \ell$ in both blocks independently. In particular, $\tau$ can have fixed points, and the second condition from Theorem~\ref{thm:fold_s_gate} requires verification.

\medskip
\noindent\textbf{(S1).}\quad
Since $\tau$ acts blockwise, $\tau^2 = \phi_{x^{j^2}}$. This follows immediately from the proof of Prop.~\ref{prop:S_type_swapping}, and is the identity if and only if $j^2 \equiv 1 \pmod{\ell}$.

\medskip
\noindent\textbf{(S2).}  The fixed points of $\tau$ within a single block satisfy $-ji \equiv i \pmod{\ell}$, i.e., $(1 + j)i \equiv 0 \pmod{\ell}$.  Let
\bea
F = \{i \in \mathbb{Z}/\ell\mathbb{Z} : (1 + j)i \equiv 0 \pmod{\ell}\}
\eea
$F$ is a subgroup of size $d = \gcd(1 + j, \ell)$ with elements all multiples of $\ell/d \pmod{\ell}$.  Since $\tau$ acts identically on
both blocks, $\mathrm{Fix}(\tau) = F \cup (F + \ell)$ and
$|\mathrm{Fix}(\tau)| = 2d$.

Condition (S2) requires that each X-check has even overlap with $\mathrm{Fix}(\tau)$.  An X-check $(pa, qa)$ has support $\mathrm{supp}(pa)$ in block~1 and $\mathrm{supp}(qa)$ in block~2, so the overlap is
\bea
|\mathrm{supp}(pa) \cap F| \;+\; |\mathrm{supp}(qa) \cap F|.
\eea
Let $F(x) = \sum_{i \in F} x^i \in R_\ell$ be the polynomial representation of $F$. Since $F$ is a subgroup of $\mathbb{Z}/\ell\mathbb{Z}$ it is closed under negation, so $F(x^{-1}) = F(x)$.  Using Lemma \ref{lem:support_overlap}, the overlap becomes
\bea
[p(x)a(x) \cdot F(x)]_0 + [q(x)\,a(x) \cdot F(x)]_0 = [(p(x) + q(x))\, a(x) \cdot F(x)]_0.
\eea
As we are working over $\bbF_2$, for (S2) to be satisfied this must be 0 for every $a \in \langle f \rangle$. As $\langle f \rangle = \{rf:r \in R_\ell \}$, we have
\bea
[rf(p+q)F]_0 = 0 \in \bbF_2
\eea
Taking $r = x^k$ to cyclically shift $f$, we have that every coefficient of $f(p+q)F$ must be 0, and so
\bea
(p + q) \cdot F(x) \in \langle \hat{f} \rangle
\eea
which is satisfied exactly when
\bea
\hat{f} \mid (p + q)F(x)
\eea
We have that 
\bea
F(x) = \sum_{k=0}^{d-1} x^{k\ell/d} = \frac{x^\ell - 1}{x^{\ell/d} - 1}.
\eea
where recall $d = \gcd(j+1, \ell)$. Both $F(x)$ and $\hat{f}$ divide $x^\ell - 1$, so we may analyze the divisibility condition in $\overline{\mathbb{F}}_2$, the algebraic closure of $\bbF_2$. The roots of $\hat{f}$ are the $\ell$-th roots of unity that are not roots of $f$, while the roots of $F(x)$ are the $\ell$-th roots of unity that are not roots of $x^{\ell/d} - 1$, i.e., those whose order does not divide $\ell/d$. Thus, a root $\zeta$ of $\hat{f}$ fails to be a root of $F$ precisely when $\zeta^{\ell/d} = 1$, i.e., when its order divides $\ell/d$.

For $\hat{f} \mid (p+q)F(x)$, every root $\zeta$ of $\hat{f}$ must be a root of either $F$ or $p+q$. Therefore, for every root $\zeta$ of $\hat{f}$ such that $\zeta^{\ell/d} = 1$, we must have
\bea
(p+q)(\zeta) = 0,
\eea
i.e., $q(\zeta) = p(\zeta)$. These are precisely the roots of
\bea
g = \gcd(\hat{f},\, x^{\ell/d} - 1),
\eea
and the condition that $p+q$ vanishes on all such roots is equivalent to
\bea
g \mid (p+q).
\eea
Thus, (S2) holds for a given $j$ when $\gcd(\hat{f}, x^{\ell/d}-1)$ divides $(p + q)$, where $d = \gcd(1+j, \ell)$.

\end{proof}

\subsubsection{CX-type gates}
\label{sec:cx_type}

The $S$-type analysis of Sec.~\ref{sec:phase_type} relied on an existing framework. No analogous framework currently exists for fold-transversal CNOTs, and we do not attempt to develop a comprehensive framework here.  Instead, we ask a more focused question: \emph{when can we guarantee that a fold-transversal CNOT, composed with a multiplier ring automorphism $\psi \oplus \psi$, extends to a code automorphism?}  Theorem~\ref{thm:fold_cx} below answers this with \emph{sufficient} (but not necessary) algebraic conditions on $(\ell, f, p, q, \psi)$.  GB codes that fail these conditions may still admit fold-transversal CNOTs by some other construction; we make no claim of completeness, only of constructive utility, and the conditions are concrete enough to be searched over directly when designing codes.

The underlying idea is the following.  When accounting for the Pauli type assigned to each rowspace, a physical CNOT along the $(i, i+\ell)$ fold acts on, for example, a rowspace element $(pa, qa) \in \operatorname{rs}(H_X)$ by adding one block into the other,
\bea\label{eq:cnot_action}
(pa, qa) \xrightarrow{\mathrm{CNOT}_{2 \to 1}} ((p + q)a,\, qa), \qquad (pa, qa) \xrightarrow{\mathrm{CNOT}_{1 \to 2}} (pa,\, (p + q)a),
\eea
and the image is generally \emph{not} itself a rowspace element: the modified coordinate $(p+q)a$ need not lie in $p \cdot \langle f \rangle$ (or $q \cdot \langle f \rangle$), so the cyclic submodule structure of Theorem~\ref{thm:cyclic_submodule} is destroyed.  Following the CNOT by a multiplier ring automorphism $\psi \oplus \psi$ gives us one degree of algebraic freedom with which to repair the structure.  Concretely, we ask $\psi$ to be chosen so that the composed image lands back in either $M_f(p,q)$ (a $M_\sim$ outcome) or $M_{\overleftarrow{f}}(\overleftarrow{q}, \overleftarrow{p})$ (a $M_\leftrightarrow$ outcome), i.e., so that the CNOT $+ \psi$ composition is itself a code automorphism.  The required identity on $\psi, p, q$ takes a M\"obius-like form
\bea
\psi(p) \;\equiv\; \frac{p+q}{q}\,\psi(q) \quad\text{(or one of three variants)}
\eea
in the appropriate quotient ring, with $\psi(f) = u\cdot f$ (or $\psi(f) = u \cdot \overleftarrow{f}$) with $u \in R_\ell^\times$  so $\langle \psi(f) \rangle = \langle f \rangle$ (or $\langle \psi(f) \rangle = \langle \overleftarrow{f} \rangle)$ and $\psi(a) \in \langle f \rangle$ (or $\langle \overleftarrow{f}\rangle$) and the cyclic submodule structure over $f$ (or $\overleftarrow{f}$) is preserved. 

This produces the four variants of Theorem~\ref{thm:fold_cx} (illustrated in Fig.~\ref{fig:thm38_mobius_fixed}), indexed independently by the CNOT direction and the rowspace behavior of the resulting automorphism.  The two CNOT directions --- block~$2 \to$ block~$1$ (controlled on $i + \ell$, targeting $i$) and block~$1 \to$ block~$2$ (controlled on $i$, targeting $i + \ell$) --- differ by which block absorbs the addition in~\eqref{eq:cnot_action}, and consequently swap the roles of $p$ and $q$ in the M\"obius identity.  Within each direction, $M_\sim$ outcomes yield conditions in $S = R_\ell / \langle \hat{f} \rangle$, while $M_\leftrightarrow$ outcomes yield conditions in $\overleftarrow{S} = R_\ell / \langle \hat{f}(x^{-1}) \rangle$ involving transfer polynomials $\overleftarrow{p}, \overleftarrow{q}$.

Two technical points are worth flagging in advance.  First, each variant requires one of $p$ or $q$ to be invertible modulo the relevant annihilator (the $\gcd(p,\hat{f}) = 1$ or $\gcd(q,\hat{f}) = 1$ hypotheses in Theorem~\ref{thm:fold_cx} below); without this invertibility, the M\"obius-like solution for $a'$ does not exist, and the whole construction fails to be well-defined.  This is the central reason the conditions are merely sufficient: the construction is built around an explicit ansatz $a' = q^{-1}\psi(qa)$ (or its analog), and codes whose $(p, q)$ make this ansatz unavailable may still admit fold-transversal CNOTs through entirely different constructions.  Second, as in the substitution-multiplier setting (Theorem~\ref{thm:substitution_automorphisms}) and the $S$-type analysis, as we restrict to $\psi_L = \psi_R$ the $H_Z$ verification is the image of the $H_X$ verification under the ring automorphism $x \mapsto x^{-1}$, so each of the four cases reduces to checking a single algebraic identity.

\begin{theorem}\label{thm:fold_cx}
Let $C$ be a GB code defined by $(\ell, f, p, q)$ as in Theorem~\ref{thm:cyclic_submodule}. Let $\psi_{x^j}$ be any multiplier ring automorphism of $R_\ell$, and $u \in R_\ell^\times$ an arbitrary invertible element. Let $S = R_\ell/\langle \hat{f} \rangle, \ \overleftarrow{S} = R_\ell/\langle \hat{f}(x^{-1})\rangle$. Then, if $p, q, \psi_{x^j}$ satisfy either of the following conditions in the indicated quotient ring, transversal CNOTs along the $(i, i +\ell)$ fold controlled on block 2 ($i + \ell$), targeting block 1 ($i$) followed by $\psi_{x^j} \oplus \psi_{x^j}$ yield a code automorphism.
\begin{enumerate}[label=(\arabic*)]
    \item $\gcd(q, \hat{f}) = 1,\  \psi_{x^j}(f) =  uf,\ $ and $\ \psi_{x^j}(p) \equiv \frac{p+q}{q}\psi_{x^j}(q)$ in $S$ ($M_\sim$)
    
    \item $\gcd(p, \hat{f})= 1, \ \psi_{x^j}(f) = u\overleftarrow{f} $ and $ \ \psi_{x^j}(p) \equiv \frac{\overleftarrow{p} + \overleftarrow{q}}{\overleftarrow{p}}\psi_{x^j}(q) $ in $\overleftarrow{S}$ ($M_\leftrightarrow$)
\end{enumerate}
Alternatively, if $p, q, \psi_{x^j}$ satisfy either of the following conditions in the indicated quotient ring, then transversal CNOTs along the $(i, i +\ell)$ fold controlled on block 1 ($i$), targeting block 2 ($i + \ell$) followed by $\psi_{x^j} \oplus \psi_{x^j}$ yields a code automorphism.
\begin{enumerate}[label=(\arabic*), resume]
    \item $\gcd(p, \hat{f}) = 1, \ \psi_{x^j}(f) = uf$ and $ \ \psi_{x^j}(q) \equiv \frac{p+q}{p}\psi_{x^j}(p) $ in $S$. ($M_\sim$)
    
    \item $\gcd(q, \hat{f}) = 1, \ \psi_{x^j}(f) = u\overleftarrow{f} $ and $ \ \psi_{x^j}(q) \equiv \frac{\overleftarrow{p}+\overleftarrow{q}}{\overleftarrow{q}}\psi_{x^j}(p) $ in $\overleftarrow{S}$. ($M_\leftrightarrow$)
\end{enumerate}
The conditions above are stated as ratio identities in $S$ (resp. $\overleftarrow{S}$) and are therefore invariant under the rescaling $(p,q) \mapsto v(p,q), \ v \in S^\times$ under which $M_f(p,q)$ is unchanged (Lemma~\ref{lem:equivalent_modules}).
\end{theorem}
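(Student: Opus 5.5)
The plan is to verify each case directly on a generic element of each rowspace, via Corollary~\ref{cor:automorphism_criterion}. Write $\psi=\psi_{x^j}$ and let $\mathrm{CNOT}$ denote the transversal layer along the $(i,i+\ell)$ fold.

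\textbf{Step 0 (ordering).} First observe that $\psi\oplus\psi$ sends the fold pair $(i,i+\ell)$ to the fold pair $(\psi(i),\psi(i)+\ell)$. Hence the transversal CNOT layer commutes with $\psi\oplus\psi$ as a Clifford circuit, and I may apply $\psi\oplus\psi$ first and the CNOT second.

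\textbf{Step 1 (rowspace action).} Track the action on both rowspaces, using the Pauli propagation rules. For CNOT controlled on block~2 and targeting block~1:
\begin{itemize}
\item $X$-type vectors: block~2 is added into block~1, $(u,v)\mapsto(u+v,v)$.
\item $Z$-type vectors: block~1 is added into block~2, $(u,v)\mapsto(u,u+v)$.
\end{itemize}
For the opposite direction the roles are exchanged.

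\textbf{Step 2 (case (1), $X$ side).} Take $(pa,qa)$ with $a\in\langle f\rangle$. After $\psi\oplus\psi$ and the CNOT it becomes
\[
\bigl((\psi(p)+\psi(q))\psi(a),\ \psi(q)\psi(a)\bigr).
\]
Since $\psi(f)=uf$, we have $\psi(a)\in\langle f\rangle$. Because $\gcd(q,\hat f)=1$, $q^{-1}$ exists in $S$, and by the footnote argument in the proof of Theorem~\ref{thm:substitution_automorphisms} its action on $\langle f\rangle$ is well defined. So I make the ansatz
\[
a'=q^{-1}\psi(q)\psi(a)\in\langle f\rangle .
\]
Then $qa'=\psi(q)\psi(a)$ holds automatically. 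The remaining requirement $pa'=(\psi(p)+\psi(q))\psi(a)$ reads $\psi(p)+\psi(q)\equiv pq^{-1}\psi(q)$ on $\langle f\rangle$. Over $\mathbb F_2$ this is exactly $\psi(p)\equiv\frac{p+q}{q}\psi(q)$ in $S$, using Lemma~\ref{lem:annihilator_sufficiency} to pass to $S$.

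\textbf{Step 3 (case (1), $Z$ side).} Take $(\overleftarrow{qb},\overleftarrow{pb})$. It maps to
\[
\bigl(\psi(\overleftarrow q)b',\ (\psi(\overleftarrow q)+\psi(\overleftarrow p))b'\bigr),\qquad b'=\psi(\overleftarrow b)\in\langle\overleftarrow f\rangle .
\]
Applying the ring automorphism $\psi_{x^{-1}}$, which commutes with $\psi$, to the hypothesis gives $\psi(\overleftarrow p)+\psi(\overleftarrow q)\equiv\overleftarrow p\,\overleftarrow q^{-1}\psi(\overleftarrow q)$ in $\overleftarrow S$. Setting $b''=\overleftarrow q^{-1}\psi(\overleftarrow q)b'$ then lands the image in $M_{\overleftarrow f}(\overleftarrow q,\overleftarrow p)$; this is the Corollary~\ref{cor:Hz_simplification} mechanism. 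By the single-inclusion remark of Corollary~\ref{cor:automorphism_criterion}, the composition is therefore $M_\sim$.

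\textbf{Step 4 (remaining cases).}
\begin{itemize}
\item Case (3) is the mirror of case (1): exchange the blocks and hence the roles of $p$ and $q$, with ansatz $a'=p^{-1}\psi(p)\psi(a)$.
\item Cases (2) and (4) follow the same template, with the target rowspace now $M_{\overleftarrow f}(\overleftarrow q,\overleftarrow p)$. Since $\psi(f)=u\overleftarrow f$, we have $\psi(a)\in\langle\overleftarrow f\rangle$. In case (2) we need $\psi(q)\psi(a)=\overleftarrow p\,a'$ and $(\psi(p)+\psi(q))\psi(a)=\overleftarrow q\,a'$, so the ansatz is $a'=\overleftarrow p^{-1}\psi(q)\psi(a)$. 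This requires $\overleftarrow p$ to be invertible in $\overleftarrow S$, which is equivalent to $\gcd(p,\hat f)=1$ after applying $\psi_{x^{-1}}$. The resulting identity is the stated one in $\overleftarrow S$.
\item The $Z$ side of cases (2) and (4) is again the $x\mapsto x^{-1}$ image and lands in $M_f(p,q)$.
\end{itemize}

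\textbf{Step 5 (rescaling invariance).} This follows because every condition is a ratio identity: replacing $(p,q)$ by $(vp,vq)$ with $v\in S^\times$ cancels the factor $v$ on both sides, up to a unit absorbed as in Lemma~\ref{lem:equivalent_modules}.

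\textbf{Main obstacle.} I expect the delicate point to be bookkeeping rather than algebra, for two reasons:
\begin{itemize}
\item Getting the CNOT propagation directions and the commutation with $\psi\oplus\psi$ right is exactly what turns ``$\psi(p+q)$'' into ``$\psi(p)+\psi(q)=pq^{-1}\psi(q)$''.
\item One must check that $q^{-1}$ (respectively $\overleftarrow p^{-1}$) acts well-definedly on the relevant ideal $\langle f\rangle$ versus $\langle\overleftarrow f\rangle$ in the swapped cases.
\end{itemize}
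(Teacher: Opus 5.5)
Your proposal is correct and follows essentially the same route as the paper. Both verify directly on a generic element $(pa,qa)$ with the ansatz $a'=q^{-1}\psi(qa)$ (and its analogues in the other three cases), reduce the requirement to the M\"obius-like identity in $S$ or $\overleftarrow{S}$, and obtain the $H_Z$ side as the $x\mapsto x^{-1}$ image of the $H_X$ condition. Your Step~0 observation that the fold CNOT layer commutes with $\psi\oplus\psi$, and your explicit $b''\in\langle\overleftarrow{f}\rangle$ parameterization on the $Z$ side, are presentational refinements rather than a different argument.
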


\begin{figure}[htbp]
\centering
\begin{tikzpicture}[
    >=Latex,
    labeltext/.style={align=center, font=\small},
    casebox/.style={
        rectangle, draw=black!70, rounded corners,
        align=center, inner sep=10pt, text width=5.85cm, fill=white
    },
    titlebox/.style={
        rectangle, draw=black, thick, rounded corners, fill=black!5,
        align=center, inner sep=8pt, text width=16cm
    },
    invertbox/.style={
        rectangle, draw=black, thick, rounded corners, fill=green!5,
        align=center, inner sep=8pt, text width=8.0cm
    },
    varnode/.style={
        circle, draw=gray!60, fill=blue!5, inner sep=3pt, font=\small
    }
]

\node[titlebox] (Title) at (0, 8.5) {
    \textbf{Theorem \ref{thm:fold_cx}: Möbius-Like Mechanism for Fold-Transversal CNOTs} \\[0.6ex]
    A physical CNOT adds one block to the other, breaking the cyclic submodule structure. \\
    Applying a multiplier $\psi_{x^j} \oplus \psi_{x^j}$ repairs this if $\psi_{x^j}$ satisfies a Möbius-like identity. \\
    Pre-requisites: $q^{-1}$ exists in $S = R_\ell/\langle \hat{f} \rangle$ and $\psi(a) \in \langle f \rangle$\\
};

\node[casebox, fill=blue!5] (Input) at (-5, 5) {
    \textbf{Damaged Rowspace Pair} \\[0.8ex]
    Post-CNOT (e.g., $2 \to 1$):
    \[ ((p+q)a, qa) \]
    The modified coordinate $(p+q)a$ \\
    destroys the cyclic submodule structure.
};

\node[casebox, fill=red!5] (Output) at (5, 5) {
    \textbf{Repaired Rowspace Pair} \\[0.8ex]
    Post-$\psi_{x^j}$ repair ($M_\sim$ outcome):
    \[ (\psi_{x^j}(p+q)\psi_{x^j}(a), \psi_{x^j}(q)\psi_{x^j}(a)) \]
    \[ = (p a', q a') \in \operatorname{rs}(H_X) \]
    Structure is restored via a new $a'$.
};

\draw[->, ultra thick, draw=black!60, line width=2.5pt] (Input.east) --
    node[labeltext, above=0.1cm] {\textbf{Apply $\psi_{x^j} \oplus \psi_{x^j}$}}
    (Output.west);

\node[invertbox] (Inversion) at (0, -0.8) {
    \textbf{Möbius-like Core:} \textbf{Algebraic Repair} \\[1ex]
    To restore the structure, $\psi_{x^j}$ must \\
    satisfy the Möbius-like identity:
    \[ \psi_{x^j}(p) \equiv \frac{p+q}{q}\psi_{x^j}(q) \]
    This ensures the mixed terms align, \\
    absorbing the addition: $a \mapsto a' = q^{-1}\psi(qa)$
};

\node[labeltext, text width=10cm, fill=white, inner sep=2pt] (MobiusLabel) at (0, 2.25) {
    \textbf{"Möbius-like" Feature}: 
    Transformation of the 
    submodule variable $a$ to $a'$
};

\node[varnode] (InVar) at (-6, 1.2) {$a$};
\node[varnode] (OutVar) at (6, 1.2) {$a'$};

\draw[->, thick, dashed, draw=black!50] (Input.south) -- (InVar.north);
\draw[->, thick, dashed, draw=black!50] (Output.south) -- (OutVar.north);

\draw[->, thick, draw=blue!80] (InVar.east) to[out=0, in=180] (Inversion.west);
\draw[->, thick, draw=red!80] (Inversion.east) to[out=0, in=180] (OutVar.west);

\draw[->, thick, dashed, draw=gray!80] (MobiusLabel.south) -- (Inversion.north);

\end{tikzpicture}
\caption{\textbf{Visualization of Theorem \ref{thm:fold_cx}: Fold-Transversal CNOTs and Möbius-like Repair} ($\mathrm{CNOT}_{2\rightarrow1}$, $M_\sim$ case) A physical CNOT disrupts the cyclic submodule structure by modifying one of the coordinates out of the target ideal. Following the CNOT with a ring multiplier permutation $\psi_{x^j} \oplus \psi_{x^j}$ can restore the structure (mapping the pair back into $\operatorname{rs}(H_X)$ or $\operatorname{rs}(H_Z)$) if the transfer polynomials satisfy a specific Möbius-like identity. This identity allows the transformed mixed term to be absorbed into a new valid submodule variable $a'$. }
\label{fig:thm38_mobius_fixed}
\end{figure}

\begin{proof}

To assess the action of physical CNOTs along the $i, i +\ell$ fold, we now consider the Pauli operator assigned to the $H_X, H_Z$ rowspaces and in particular how an $H_X$ vs $H_Z$ rowspace element is conjugated under the action of a CNOT. First we observe the $2 \rightarrow 1$ ($i + \ell, i)$ control, target direction:
    \begin{align}\label{eq:foldcnot_21}
        \operatorname{rs}(H_X):(pa, qa) \xrightarrow{\mathrm{CNOT}_{2 \rightarrow 1}} (pa + qa, qa) \quad \quad \operatorname{rs}(H_Z): (\overleftarrow{qa}, \overleftarrow{pa}) \xrightarrow{\mathrm{CNOT}_{2 \rightarrow 1}}  (\overleftarrow{qa}, \overleftarrow{pa} + \overleftarrow{qa})
    \end{align}
    and second in the $1 \rightarrow 2$ $(i, i + \ell)$ control, target direction:
    \begin{align}\label{eq:foldcnot_12}
        \operatorname{rs}(H_X): (pa, qa) \xrightarrow{\mathrm{CNOT}_{1 \rightarrow 2}} (pa, qa + pa) \quad \quad \operatorname{rs}(H_Z):(\overleftarrow{qa}, \overleftarrow{pa}) \xrightarrow{\mathrm{CNOT}_{1 \rightarrow 2}} (\overleftarrow{qa} + \overleftarrow{pa}, \overleftarrow{pa})
    \end{align}
    Proving each of the 4 cases follows the exact same template.
    \begin{itemize}
        \item For the $H_X$ argument, using the appropriate form of \eqref{eq:foldcnot_21} or \eqref{eq:foldcnot_12}, isolate the single term $v \in \{\psi(qa), \psi(pa)\}$ and build an $a'$ as a function of the appropriate $ p^{-1}, q^{-1}, \psi(a), \psi(p), \psi(q)$ such that $v = wa'$ for $w \in \{p, q\}$.
        \item Derive the resulting constraint that $\psi(pa) + \psi(qa) = wa'$ for the appropriate $w$
        \item Observe that the $H_Z$ derives the exact same constraint as the $H_X$ case, with the addition of the application the $x^{-1}$ ring automorphism. Thus, if the $H_X$ case is satisfied, the $H_Z$ case follows.
    \end{itemize} 
    We prove the first case explicitly, and then identify the appropriate $a'$ choices for the remaining conditions. 
    
    For the $H_X$ rowspace to be mapped back to itself upon the application of $\mathrm{CNOT}_{2\rightarrow 1}$ followed by $\psi$, from \eqref{eq:foldcnot_21} we must have
    \begin{align}
    \psi(pa + qa) = pa' \quad \quad \psi(qa) = qa'
    \end{align}
    Observe that as we require $\psi(f) = uf$ for $u \in R_\ell^\times$, we have that $\psi(a) \in \langle f \rangle$, and we can assign $a' = q^{-1}\psi(qa) \in \langle f \rangle$ and obtain\footnote{Recall, as a ring automorphism, $\psi(ab) = \psi(a)\psi(b)$.}
    \bea
        \psi(qa) = qq^{-1}\psi(qa) = qa' 
    \eea
    It then follows that we must have
    \bea\label{eq:cx_mobius_eq}
        \psi(pa + qa) = (\psi(p) + \psi(q))\psi(a) = pq^{-1}\psi(qa) = pa'
    \eea
    For this to be true the following identity must hold:
    \bea
    \psi(p) + \psi(q) \equiv pq^{-1}\psi(q) \pmod{\langle \hat{f} \rangle} \implies \psi(p) \equiv \frac{p + q}{q}\psi(q) \pmod{\langle \hat{f} \rangle}
    \eea
    where we cancel $\psi(a)$ on each side of \ref{eq:cx_mobius_eq}, implying we now work mod $\hat{f}$, i.e., in $S$. For the $H_Z$ rowspace, we analogously must have
    \bea
    \psi(\overleftarrow{qa}) = \overleftarrow{qa'} \quad \quad \psi(\overleftarrow{pa} + \overleftarrow{qa}) = \overleftarrow{pa'}
    \eea
    Choosing the same $a'=q^{-1}\psi(qa)$ as above, we obtain $\psi(\overleftarrow{qa}) = \overleftarrow{qa'}$, yielding
    \begin{align}
    \left[\psi(\overleftarrow{p}) + \psi(\overleftarrow{q})\right]\psi(\overleftarrow{a}) = \overleftarrow{pq^{-1}\psi(qa)}
    \end{align}
    which is satisfied when
    \begin{align}
    \psi(\overleftarrow{p}) = \frac{\overleftarrow{p} + \overleftarrow{q}}{\overleftarrow{q}}\psi(\overleftarrow{q}) \pmod {\langle \hat{f}(x^{-1}) \rangle}
    \end{align}
    Note that similar to Theorem~\ref{thm:substitution_automorphisms}, $\psi(p), \psi(q)$ are invariant up to scaling both by the same unit $v \in S^\times$. Observe that the $H_Z$ condition is the image of the $H_X$ condition under the $x^{-1}$ ring automorphism, and so if the $H_X$ condition is satisfied, the $H_Z$ condition is automatically satisfied.

    The remaining cases are satisfied via the following assignments, under the assumptions that $\psi(a)$ lives in the appropriate ideal as in the theorem statement, and $p^{-1}$ or $q^{-1}$ exists in the appropriate quotient ring $S$ or $\overleftarrow{S}$:
    \begin{itemize}
        \item $CNOT_{2 \rightarrow 1}, \ M_\leftrightarrow$: Assign $a' = p^{-1}\overleftarrow{\psi(qa)} \in \langle f \rangle$, yields the requirement that 
        \bea
        \psi(p) \equiv \frac{\overleftarrow{p} + \overleftarrow{q}}{\overleftarrow{p}}\psi(q) \pmod{\langle \hat{f}(x^{-1}) \rangle}
        \eea

        \item $CNOT_{1 \rightarrow 2}, \ M_\sim$: Assign $a' = p^{-1}\psi(pa) \in \langle f \rangle$, yields the requirement that 
        \bea
        \psi(q) \equiv \frac{p + q}{p}\psi(p) \pmod{\langle \hat{f} \rangle}
        \eea

        \item $CNOT_{1 \rightarrow 2}, \ M_\leftrightarrow$: Assign $a' = q^{-1}\overleftarrow{\psi(pa)} \in \langle f \rangle$, yields the requirement that 
        \bea
        \psi(q) \equiv \frac{\overleftarrow{p} + \overleftarrow{q}}{\overleftarrow{q}}\psi(p) \pmod{\langle \hat{f}(x^{-1}) \rangle}
        \eea
    \end{itemize}

\end{proof}
Be careful to note the seemingly swapped inclusions of $\psi(a) \in \langle \overleftarrow{f} \rangle$ and $\overleftarrow{\psi(qa)}, \overleftarrow{\psi(pa)} \in \langle f \rangle$ in the $M_\leftrightarrow$ cases. This is intentional and required, and one has to take care in expanding these proofs to ensure the correct elements land in the correct spaces. Further, note that the condition $\gcd(q, \hat{f}) = 1$ implies that $\gcd(\overleftarrow{q}, \hat{f}(x^{-1})) = 1$ as $\psi_{x^{-1}}$ is a ring automorphism. The same holds for $p$, and the requisite conditions for (2) and (4) are met.

We summarize the conditions on $S$- and CX-type fold-transversal gates in Table~\ref{tab:fold_gates}.

\begin{table*}[tb]
\centering
\renewcommand{\arraystretch}{1.8}
\small
\begin{tabular}{p{1.8cm} p{2.2cm} p{2.8cm} p{6.2cm}}
\hline
Gate & Source & Construction & Conditions \\
\hline

S-Type
& $j \in \mathrm{Stab} \cup \mathrm{SwapInv}$
& $\bigotimes \mathrm{CZ}_{i,\tau(i)}$, \newline
  $\tau = \sigma \circ \phi_{x^{-j}}$
& $\bullet\ j^2 \equiv 1 \pmod{\ell}$ \\[4pt]

S-Type
& $j \in \mathrm{Swap} \cup \mathrm{Inv}$
& $S/S^\dagger$ on $\mathrm{Fix}(\tau)$, \newline
  $\mathrm{CZ}$ on 2-orbits, \newline
  $\tau = \phi_{x^{-j}}$
& $\bullet\ j^2 \equiv 1 \pmod{\ell}$ \newline
  $\bullet\ \gcd(\hat{f},\, x^{\ell/d}-1) \mid (p+q)$,\newline \phantom{$\bullet$} $d = \gcd(1+j, \ell)$ \\[4pt]
\hline

CX-Type \newline ($M_\sim$)
& $j \in \mathrm{Pres}(f)$
& $\phi_{x^j} \circ \mathrm{CX}_{i + \ell \to i}$
& $\gcd(q, \hat{f}) = 1$ \newline $p(x^j) \equiv \dfrac{p+q}{q}\, q(x^j)$ in $S$ \\[4pt]
CX-Type \newline ($M_\leftrightarrow$)
& $j \in \mathrm{Pres}(\overleftarrow{f})$
& $\phi_{x^j} \circ \mathrm{CX}_{i + \ell \to i}$
& $\gcd(p, \hat{f}) = 1$ \newline $p(x^j) \equiv \dfrac{\overleftarrow{p}+\overleftarrow{q}}{\overleftarrow{p}}\, q(x^j) $ in $\overleftarrow{S}$ \\[4pt]
CX-Type \newline ($M_\sim$)
& $j \in \mathrm{Pres}(f)$
& $\phi_{x^j} \circ \mathrm{CX}_{i \to i + \ell}$
& $\gcd(p, \hat{f}) = 1$ \newline $q(x^j) \equiv \dfrac{p+q}{p}\, p(x^j)$ in $S$ \\[4pt]
CX-Type \newline ($M_\leftrightarrow$)
& $j \in \mathrm{Pres}(\overleftarrow{f})$
& $\phi_{x^j} \circ \mathrm{CX}_{i \to i + \ell}$
& $\gcd(q, \hat{f}) = 1$ \newline $q(x^j) \equiv \dfrac{\overleftarrow{p}+\overleftarrow{q}}{\overleftarrow{q}}\, p(x^j)$ in $\overleftarrow{S}$ \\[4pt]

\hline
\end{tabular}
\caption{Conditions for S and CX type fold-transversal gates. All parities are mod 2 and all polynomial identities are up to units in the indicated quotient ring. Here $\hat{f} = (x^\ell - 1)/f$, and $\phi_{x^j} = \psi_{x^j} \oplus \psi_{x^j}$. Pres$(\overleftarrow{f})$ is the analogue of Pres$(f)$, all $j$ such that $\langle f(x^j) \rangle = \langle \overleftarrow{f} \rangle$. S-type conditions follow from Theorem 7 of \cite{Breuckmann_foldtransversal}, altered as in Theorem~\ref{thm:fold_s_gate}. CX-type conditions are sufficient (Theorem~\ref{thm:fold_cx}) but not necessary. All results hold over $\mathbb{F}_2$ only --- analysis for $\bbF_q$ is left as future work.}
\label{tab:fold_gates}
\end{table*}

\noindent
The conditions presented in Tables~\ref{tab:block_sep_auts} and \ref{tab:fold_gates} enable the main operational result of this paper, which is an efficient search algorithm to find GB codes with desired automorphism-based gates, as summarized in Fig.~\ref{fig:search_alg}.

\input{figures/search_alg_fig.tex}

\section{Logical Operators and Actions of Automorphisms}
\label{sec:LogicalOps}

Sec.~\ref{sec:3SpaceIso} and \ref{sec:AutomorphismStructure} developed the algebraic framework that underlies the main result of this work: a systematic methodology for determining whether or not a given GB code has automorphisms beyond the trivial cyclic shifts, with a full treatment of the $\psi_L = \psi_R$ case. However, Sec.~\ref{sec:AutomorphismStructure} remains completely independent of the logical action that any code automorphism implements. What remains is to pass from the existence of these symmetries to their logical action: when a given automorphism or fold-transversal gate preserves the cyclic submodule structure, how can we determine what operation it induces on the encoded qubits?

Assessing this question requires a concrete set of logical operator representatives. For a CSS code, the logical $X$- and $Z$-operators are defined as kernel elements modulo the appropriate stabilizer rowspaces. In the present setting however, those kernels inherit additional structure from the cyclic-submodule description of Theorem~\ref{thm:cyclic_submodule}, and this makes it possible to analyze them algebraically rather than by direct linear algebra on large matrices. The central goal of this section is therefore twofold: first, to construct canonical and computationally useful representatives for the logical operators of a GB code; and second, to use those representatives to assess how automorphisms map the canonical representative set to determine the logical action of the automorphisms identified in Sec.~\ref{sec:AutomorphismStructure}.

The section is organized as follows. Following a characterization of the $H_X, H_Z$ kernels in Sec.~\ref{subsec:lop_prelim}, in Sec.~\ref{subsec:lop_crt} we describe the logical operator spaces using the Chinese Remainder Theorem decomposition of $R_\ell$ and the induced decomposition of the kernel and stabilizer quotients into local components. This yields a componentwise picture of the logical space that we then use to construct explicit logical representatives. Sec.~\ref{subsubsec:odd_ell} discusses a particularly useful simplification of the logical operators when $\gcd(f, \hat{f}) = 1$. With these representatives in hand, Sec.~\ref{subsec:lop_automorph} returns to the automorphisms catalogued in Sec.~\ref{sec:AutomorphismStructure} and discusses their induced action on the logical space. The main point is that, once a basis of logical representatives has been fixed, each code automorphism determines an $\bbF_2$-linear transformation on the encoded qubits, and the CRT description provides a natural setting in which that action can be computed explicitly. This turns the algebraic classification of automorphisms from Sec.~\ref{sec:AutomorphismStructure} into a practical method for determining the logical gate implemented by a given permutation symmetry of the code.

The supplementary materials of \url{https://github.com/ajdav136/GBAutomorphisms} provide a script for building the logical operators of a GB code given a valid $(\ell, f, p, q)$ set, determining automorphisms and fold-transversal gates, and assessing the logical action of each automorphism. We note that the methods and results of this section are consistent with the algebraic scaffolding set up in \cite{automorphisms_BBcodes}

\subsection{GB code kernel characterization}
\label{subsec:lop_prelim}

As a CSS stabilizer code, the logical $X$-operators are elements of $\ker H_Z$ that do not themselves lie in $\operatorname{rs}(H_X)$, and analogously for logical $Z$. We therefore begin by characterizing each kernel within the framework of the cyclic submodule characterization in Theorem \ref{thm:cyclic_submodule}. We start with a helpful lemma:
\begin{lem}\label{lem:dotproduct_zerocoeff}
    The inner product of two vectors $\langle \alpha, \beta \rangle$ over $\bbF_2$ is equal to $[\overleftarrow{\alpha}\beta]_0$, i.e., the zeroth coefficient of the polynomial product of the $R_\ell$ representations where one polynomial is evaluated at $x^{-1}$.
\end{lem}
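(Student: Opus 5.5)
The plan is a direct coefficient computation in $R_\ell$. Write $\alpha = \sum_{i=0}^{\ell-1}\alpha_i x^i$ and $\beta = \sum_{k=0}^{\ell-1}\beta_k x^k$, where $\alpha_i,\beta_k\in\bbF_2$ are the entries of the coefficient vectors under the vector-space isomorphism $R_\ell \cong \bbF_2^\ell$ of Sec.~\ref{sec:threespaces}. Then
\[
\overleftarrow{\alpha} = \alpha(x^{-1}) = \sum_i \alpha_i x^{-i} = \sum_i \alpha_i x^{\ell - i}.
\]
Here I use that $x^\ell = 1$ in $R_\ell$, so $x^{-i}$ is the monomial $x^{(\ell-i) \bmod \ell}$. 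This is well defined because $\psi_{x^{-1}}$ is a ring automorphism of $R_\ell$, as noted after Theorem~\ref{thm:cyclic_submodule}.

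Next I expand the product using bilinearity:
\[
\overleftarrow{\alpha}\beta = \sum_{i,k} \alpha_i\beta_k\, x^{k-i},
\]
with exponents read modulo $\ell$. The coefficient of $x^0$ collects exactly the pairs $(i,k)$ with $k - i \equiv 0 \pmod \ell$. Since $0 \le i,k \le \ell-1$, this forces $k = i$. Therefore $[\overleftarrow{\alpha}\beta]_0 = \sum_i \alpha_i\beta_i$, which is the standard $\bbF_2$ inner product $\langle \alpha,\beta\rangle$.

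The same bookkeeping appears in the proof of Lemma~\ref{lem:support_overlap}, where the $0$th coefficient of a product pairs index $i$ with index $-i$. The present statement is that computation with the reflection absorbed into $\overleftarrow{\alpha}$. No step is a genuine obstacle. The only care needed is to state explicitly that reduction of exponents modulo $\ell$ is injective on $\{0,\dots,\ell-1\}$, so no extra terms wrap around into the constant coefficient.

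For later use in characterizing $\ker H_X$ and $\ker H_Z$, I would also record the shifted version $[x^{-m}\overleftarrow{\alpha}\beta]_0 = \langle x^m\alpha, \beta\rangle$. It follows by applying the lemma to $x^m\alpha$, since $\overleftarrow{x^m\alpha} = x^{-m}\overleftarrow{\alpha}$. As $m$ varies, the coefficients of $\overleftarrow{\alpha}\beta$ then give the inner products of $\beta$ with every cyclic shift of $\alpha$.
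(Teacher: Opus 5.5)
Your proposal is correct and follows essentially the same route as the paper. You expand $\overleftarrow{\alpha} = \sum_i \alpha_i x^{\ell-i}$, multiply by $\beta$, and note that the constant coefficient collects only the pairs with $k \equiv i \pmod{\ell}$, which forces $k = i$ because both indices lie in $\{0,\dots,\ell-1\}$; the shifted identity $[x^{-m}\overleftarrow{\alpha}\beta]_0 = \langle x^m\alpha,\beta\rangle$ you add is also correct, and it is exactly how the paper later uses the lemma in Lemma~\ref{lem:kernels-characterization}.
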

\begin{proof}
Let $\alpha = \sum_{i=0}^{\ell-1} \alpha_i x^i$ and $\beta = \sum_{j=0}^{\ell-1} \beta_j x^j$ in $R_\ell$. In $R_\ell$, we have $x^{-i} = x^{\ell - i}$, so $\overleftarrow{\alpha} = \alpha(x^{-1}) = \sum_{i=0}^{\ell-1} \alpha_i x^{\ell - i}$. Then
\bea
\overleftarrow{\alpha} \cdot \beta = \sum_{i,j} \alpha_i \beta_j \, x^{\ell - i + j}
\eea
and the $0$-th coefficient collects all terms with $\ell - i + j \equiv 0 \pmod{\ell}$, i.e., $j \equiv i \pmod{\ell}$. Since $0 \leq i, j \leq \ell - 1$, this forces $j = i$, giving
\bea
[\overleftarrow{\alpha}\,\beta]_0 = \sum_{i=0}^{\ell-1} \alpha_i \beta_i = \langle \alpha, \beta \rangle.
\eea
\end{proof}
Note that as we are working over a commutative ring, we have $[\overleftarrow{\alpha}\beta]_0 = [\overleftarrow{\beta}\alpha]_0 = [\beta\overleftarrow{\alpha}]_0 = [\alpha\overleftarrow{\beta}]_0$.

\begin{lem}\label{lem:kernels-characterization}
With notation as above,
\begin{equation}\label{eq:kernels}
\begin{aligned}
\ker H_Z &= \{(u, w) \in R_\ell^2 \mid q u + p w \in \langle \hat{f} \rangle\},\\
\ker H_X &= \{(u, w) \in R_\ell^2 \mid p\overleftarrow{u} + q\overleftarrow{w} \in \langle \hat{f} \rangle\}
\end{aligned}
\end{equation}

where products are taken in $R_\ell$.
\end{lem}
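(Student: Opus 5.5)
The plan is to compute both kernels directly from the rowspace description of Theorem~\ref{thm:cyclic_submodule}, using Lemma~\ref{lem:dotproduct_zerocoeff} to turn inner products into zeroth coefficients of products in $R_\ell$. A vector $(u,w)\in R_\ell^2$ lies in $\ker H_Z$ if and only if it is orthogonal to every element of $\operatorname{rs}(H_Z)=\{(\overleftarrow{qa},\overleftarrow{pa}) \mid a\in\langle f\rangle\}$. The inner product splits across the two blocks.

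\textbf{Step 1: inner product as a zeroth coefficient.} By Lemma~\ref{lem:dotproduct_zerocoeff} and $\overleftarrow{\overleftarrow{\alpha}}=\alpha$, the inner product of $(\overleftarrow{qa},\overleftarrow{pa})$ with $(u,w)$ is
\[
[\,qa\,u\,]_0+[\,pa\,w\,]_0=[\,a\,(qu+pw)\,]_0 .
\]
Here I use that extracting the zeroth coefficient is $\mathbb{F}_2$-linear.

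\textbf{Step 2: from one coefficient to all coefficients.} Since $\langle f\rangle$ is closed under multiplication by $x^{-k}$, I may take $a=x^{-k}f$ for every $k$. Then $[x^{-k}f(qu+pw)]_0$ is exactly the $k$-th coefficient of $f(qu+pw)$. So orthogonality to all of $\operatorname{rs}(H_Z)$ forces $f(qu+pw)=0$ in $R_\ell$.

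Conversely, if $f(qu+pw)=0$, then every $a=rf\in\langle f\rangle$ gives $a(qu+pw)=r\cdot 0=0$. Hence the condition is equivalent to $f(qu+pw)=0$ in $R_\ell$.

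\textbf{Step 3: the divisibility.} In $\mathbb{F}_2[x]$, the equation $f(qu+pw)=0$ in $R_\ell$ means $x^\ell-1=f\hat f$ divides $f\cdot(qu+pw)$. Cancelling $f$ in the integral domain $\mathbb{F}_2[x]$, exactly as in the uniqueness part of Theorem~\ref{thm:cyclic_submodule}, this is equivalent to $\hat f\mid qu+pw$. That is precisely $qu+pw\in\langle\hat f\rangle$. One should check this is independent of the chosen lifts, which holds because $\hat f\mid x^\ell-1$.

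\textbf{The $\ker H_X$ case.} Here the rows are $(pa,qa)$, and Lemma~\ref{lem:dotproduct_zerocoeff} gives
\[
[\,\overleftarrow{pa}\,u\,]_0+[\,\overleftarrow{qa}\,w\,]_0 .
\]
The one step requiring care is to apply the symmetry $[\overleftarrow{\alpha}\beta]_0=[\alpha\overleftarrow{\beta}]_0$ noted after Lemma~\ref{lem:dotproduct_zerocoeff}. This moves the reversal onto $u$ and $w$ and gives
\[
[\,a\,(p\overleftarrow{u}+q\overleftarrow{w})\,]_0 ,
\]
so that the free variable $a$ again ranges over $\langle f\rangle$ rather than $\langle\overleftarrow f\rangle$.

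Steps 2 and 3 then apply verbatim and yield $p\overleftarrow{u}+q\overleftarrow{w}\in\langle\hat f\rangle$. No real obstacle is expected beyond this bookkeeping of which factor carries the $x^{-1}$.
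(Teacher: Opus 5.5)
Your proposal is correct and follows essentially the same route as the paper. Lemma~\ref{lem:dotproduct_zerocoeff} turns orthogonality to the rows $(\overleftarrow{qa},\overleftarrow{pa})$ into $[a(qu+pw)]_0=0$, shifting by powers of $x$ inside $\langle f\rangle$ kills every coefficient, and one concludes $qu+pw\in\langle\hat f\rangle$. Your cancellation of $f$ in $\mathbb{F}_2[x]$ spells out the annihilator step the paper leaves implicit, and your use of $[\overleftarrow{\alpha}\beta]_0=[\alpha\overleftarrow{\beta}]_0$ supplies the $\ker H_X$ details the paper only calls ``analogous.''
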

 Be very careful to note that $p\overleftarrow{u}$ (and $q\overleftarrow{w}$) is \textit{not} a typographical error: $p$ here is not evaluated at $x^{-1}$ while $u$ is, and as such the correct expression is $p(x) \cdot u(x^{-1}) = p\overleftarrow{u}$.
\begin{proof}
We show the $H_Z$ kernel case --- the argument for $H_X$ is analogous. 

A vector $v\in \bbF_2^{2\ell}$ lies in $\ker H_Z$ if and only if it is orthogonal to every row of $H_Z$. That is, $\langle v , (\overleftarrow{qa}, \overleftarrow{pa})\rangle = 0 \ \forall a \in \langle f \rangle$. Let $v = (u, w)$, and we have $\langle u , \overleftarrow{qa}\rangle + \langle w, \overleftarrow{pa}\rangle$. Using Lemma \ref{lem:dotproduct_zerocoeff}, we have the following:
\begin{align}
    \langle (u,w), (\overleftarrow{qa}, \overleftarrow{pa}) \rangle &= \langle u, \overleftarrow{qa}\rangle +     \langle w, \overleftarrow{pa} \rangle \\
    &= [u\cdot q\cdot a]_0 + [w \cdot p\cdot a]_0 \\
    &= [a(qu + pw)]_0 = 0
\end{align}
where $\psi_{x^{-1}}(\overleftarrow{qa}) = qa, \ \psi_{x^{-1}}(\overleftarrow{pa}) = pa$ under 
the $x \rightarrow x^{-1}$ ring automorphism, as $\psi_{x^{-1}}$ is an involution. 

Let $s = qu + pw$. We need $[as]_0 = 0$ for all $a \in \langle f \rangle$. 
Since $\langle f \rangle$ is an ideal of $R_\ell$, for any $a \in \langle f \rangle$ 
we also have $x^i a \in \langle f \rangle$ for all $i$, and so
\bea
[x^i a \cdot s]_0 = [as]_{\ell - i \bmod \ell} = 0 \quad \forall\, i \in \{0, \dots, \ell-1\}.
\eea
Ranging $i$ over $0, \dots, \ell-1$, every coefficient of $as$ vanishes, 
hence $as = 0$ in $R_\ell$ for all $a \in \langle f \rangle$. That is, $s = qu + pw \in \langle \hat{f} \rangle$.
\end{proof}

Note that the definition of $\ker H_X $ can equivalently be expressed as 
\bea
\ker H_X = \{(u, w) \in R_\ell^2 \mid \overleftarrow{p}u + \overleftarrow{q}w \in \langle \hat{f}(x^{-1}) \rangle\}
\eea
by applying the $x \rightarrow x^{-1}$ map to $(pa, qa)$ as opposed to $(u, w)$ when extracting the $0$-th coefficient. We stick with the expression in equation $\eqref{eq:kernels}$ for consistency.

With the kernels in hand, we can identify explicit logical representatives. 

\subsection{Logical Operators via the CRT decomposition}
\label{subsec:lop_crt}

The kernel description in Lemma~\ref{lem:kernels-characterization} expresses $\ker H_Z$ and $\ker H_X$ globally, as conditions on pairs $(u,w) \in R_\ell^2$. To obtain an algebraically tractable picture of the logical operator space, we now decompose these kernels, and the corresponding stabilizer images, component-by-component using the Chinese Remainder Theorem. This achieves three simplifications at once: the kernel and the stabilizer image split as direct products of local pieces at each component; within each local component $i$, divisibility, ideals, and dimensions are all governed by the $f_i$-valuations $\zeta_i = v_i(f),\ \rho_i = v_i(p),\ \sigma_i = v_i(q)$ of the defining polynomials; and the global logical quotient $\mathcal{L}_X$ is reassembled from these local quotients by lifting through canonical CRT indicators $h_i$. The end result is a closed-form basis for $\mathcal{L}_X$, presented componentwise in terms of $\ell, f, p, q$.

The principal output of this subsection is Theorem~\ref{thm:explicit-basis}, which exhibits, at each CRT component~$i$, a basis of $2\zeta_i d_i$ logical representatives organized into two families: a \emph{single-slot} family, supported (non-zero) in only one of the two blocks of $R_\ell^2$ with no dependence on $p$ or $q$, and a \emph{two-slot} family, in which the kernel relation forces a coupling between the two blocks and the transfer polynomials $p, q$ enter explicitly. Summing the contributions over all components recovers the dimension count $k = 2\deg(f)$ established in Sec.~\ref{sec:3SpaceIso}, confirming that this construction is complete.

We work throughout with $X$-type logicals, as they admit the cleanest algebraic description; the $Z$-logical space is isomorphic via the coordinate change $u \mapsto u(x^{-1}) = \overleftarrow{u}$, $w \mapsto w(x^{-1}) = \overleftarrow{w}$ and is recovered in Sec.~\ref{subsubsec:Ztype} by transporting the $X$-type result through this isomorphism.

\subsubsection{The CRT decomposition and local-ring setup}
\label{subsubsec:crt_setup}

Write $\ell = 2^s m$ with $m$ odd and $s \geq 0$. When $\ell$ is odd, $s = 0$. By the Frobenius automorphism over $\bbF_2$,
\bea
    x^\ell - 1 = (x^m - 1)^{2^s} = \prod_{i=1}^{r} f_i^{2^s}
\eea
where $x^m - 1 = \prod_{i=1}^r f_i$ is the squarefree factorization into distinct irreducibles\footnote{Note that this is saying that when $\ell$ is even, every irreducible factor of $x^m-1$ is raised to the same power, $2^s$, to recover $x^\ell-1$ --- there are no ``asymmetric'' powers in the decomposition of $x^\ell-1$.}. By the Chinese Remainder Theorem,
\bea
    R_\ell \;\cong\; \prod_{i=1}^{r} L_i,
    \qquad
    L_i = \mathbb{F}_2[x]\big/\!\langle f_i^{2^s}\rangle
\eea
Let $d_i = \deg(f_i)$. Each $L_i$ is a vector space over $\bbF_2$ of dimension $\deg(f_i^{2^s}) = d_i \cdot 2^s$, and has a maximal ideal $\langle f_i \rangle$, where we are working $\pmod{f_i^{2^s}}$ and so $f_i^{2^s} \equiv 0$. $L_i$ is a chain ring, as its ideals form a totally ordered chain:
\bea
    0= \langle f_i^{2^s}\rangle \subset \langle f_i^{2^s - 1}\rangle \subset \cdots \subset \langle f_i^1\rangle \subset \langle f_i^0\rangle = L_i .
\eea
By the third isomorphism theorem, we have
\bea
L_i / \langle f_i \rangle = \left(\bbF_2[x]/\langle f_i^{2^s} \rangle \right)/ \left( \langle f_i \rangle / \langle f_i^{2^s} \rangle \right) \cong \bbF_2[x]/\langle f_i  \rangle \cong  \bbF_{2^{d_i}}
\eea

The study of cyclic codes over finite chain rings is well studied in the classical literature \cite{martinez2017codesaffinealgebrasfinite, norton2000structure, cyclic_negacyclic_chainrings}. In particular, while the standard monomial basis for a degree $d_i2^s$ polynomial space would be $\{1, x, x^2, \dots, x^{d_i\cdot 2^s - 1} \}$, it is customary and advantageous in the case of finite chain rings to use an alternative basis for $L_i$ via $\{x^zf_i^t \}$ ranging over $z, t$ as follows:
\begin{equation}\label{eq:local-basis}
    \mathcal{B}_i
    = \left\{ x^z \cdot f_i^t \;\middle|\; 0 \leq z < d_i,\;\; 0 \leq t < 2^s \right\},
\end{equation}
This basis can be visualized in Table~\ref{tab:basis_visualization}, where the exponent~$z$ on $x$ is the horizontal direction and the exponent $t$ on $f_i$ is the vertical direction.
\begin{table}[]
    \centering
    $\begin{array}{c|cccc}
        t & z=0 & z=1 & \cdots & z=d_i{-}1 \\
        \hline
        0       & 1           & x           & \cdots & x^{d_i-1} \\
        1       & f_i       & xf_i      & \cdots & x^{d_i-1}f_i \\
        \vdots  & \vdots      & \vdots      & \ddots & \vdots \\
        2^s - 1 & f_i^{2^s-1} & xf_i^{2^s-1} & \cdots
                                             & x^{d_i-1}f_i^{2^s-1}
    \end{array}$
    \caption{A visualization of the $L_i$ basis at component $i$ generated by elements $x^zf_i^t$ for $0 \leq z \leq d_i-1$ where $d_i = \deg(f_i)$, $0 \leq t \leq 2^s-1 $. This table makes assessing the dimension of expressions like $\langle f_i^\alpha \rangle$ or $\langle f_i^\alpha \rangle/ \langle f_i ^\beta \rangle$ easy by turning the calculation into a counting argument on the number of rows and columns of this table included based on the values of $\alpha, \beta$.}
    \label{tab:basis_visualization}
\end{table}
Table~\ref{tab:basis_visualization} and its structure will become useful when assessing the dimension of quotient structures that appear in the description of logical operator spaces. The following two examples demonstrate how to use this table to assess the dimension of an ideal $\langle f^{\alpha}\rangle$ for some expression $\alpha$:
\begin{itemize}
    \item Consider the ideal $\langle f_i^a\rangle$, consisting of everything divisible by $f_i^a$, which has a basis $\{x^z f_i^t : 0 \le z < d_i,\; a \le t < 2^s\}$. A basis for this space corresponds to the rows of Table~\ref{tab:basis_visualization} starting from the row $t = a$, omitting everything above that row (rows $a-1, \dots, 0$), and including everything below it (rows $a, \dots, 2^s-1$). $\langle f_i^a \rangle $ has $\mathbb{F}_2$-dimension $d_i(2^s - a)$.
    \item  For $a < b$, consider the quotient $\langle f_i^a\rangle /\langle f_i^b\rangle $. Expanding, we have $\{x^z f_i^t + \langle f_i^b\rangle  : 0 \le z < d_i,\; a \le t < b\}$. From Table~\ref{tab:basis_visualization}, a basis for this space starts at row $a$, as in the example above, however does not extend all the way down to row $2^s-1$. Instead, we include elements of all rows through and including row $b$, and we stop there. This space has dimension $d_i(b - a)$.
\end{itemize}
Assessing the dimension of these expressions becomes central to the argument of Sec.~\ref{subsec:lop_crt_comp} in determining the number of linearly independent logical operators each component $i$ has, and in proving the set of logical operators has size $k = 2\deg(f)$, as required for a GB code.

Finally, while the analysis to follow is all done at the component level, we eventually need elements of $R_\ell$ and their $\bbF_2^{2\ell}$ representations to actually use as logical operators in a GB code. To lift elements from a single local component $L_i$ back to the ring $R_\ell$, we use $h_i = (x^\ell - 1)/f_i^{2^s}$. $h_i$ is divisible by $f_j^{2^s}$ for every $j \neq i$, and hence vanishes in each $L_j$. However at component $i$, all factors of $f_i$ have been divided out, so $h_i$ is coprime to $f_i$ and is therefore a unit in $L_i$. Thus, $h_i$ ``selects'' component $i$ and annihilates all others. Let $u_i = h_i \bmod f_i^{2^s}$ denote the (unit) image of $h_i$ in $L_i$. Then
\bea
    a\cdot u_i^{-1} \cdot h_i\bmod (x^\ell - 1)
\eea
is an element of $R_\ell$ that projects to $a$ at component $i$ and to zero at every other component. 

\subsubsection{Logical Representatives via a CRT component-level analysis}
\label{subsec:lop_crt_comp}

Denote the set of $X$-logical operators as $\mathcal{L}_X$. From Lemma~\ref{lem:kernels-characterization}, we have
\begin{equation}\label{eq:LX-def}
    \mathcal{L}_X
    \;=\;
    \frac{\ker(H_Z)}{\operatorname{rs}(H_X)}
    \;=\;
    \frac{\{(u, w) \in R_\ell^2 \mid q u + p w \in \langle \hat{f} \rangle\}}
         {\{(p a,\; q a) \mid a \in \langle f \rangle\}}.
\end{equation}

We analyze $\mathcal{L}_X$ at a single CRT component~$i$. Let $v(\cdot )$ denote the $f_i$-adic valuation, i.e., the number of times $f_i$ divides a given polynomial\footnote{i.e. $v(x) = j$ means that $x = y\cdot f_i^j$ where $f_i \nmid y$.}. At each CRT component~$i$, define the valuations:
\begin{align}
    \zeta_i &= v_i(f), &
    \rho_i &= v_i(p), &
    \sigma_i &= v_i(q).
\end{align}
Write $p = f_i^{\rho_i} p'_i$ and $q = f_i^{\sigma_i} q'_i$, where $p'_i, q'_i$ are units in~$L_i$\footnote{In a local ring, every element is either a unit, or belongs to the maximal ideal. Removing all factors of $f_i$, we know that $f_i\nmid p'$, and so $p' \notin \langle f_i\rangle$ and therefore must be a unit.}. The annihilator $\langle \hat{f}\rangle _i$ of $\langle f \rangle_i = \langle f_i^{\zeta_i} \rangle$ in $L_i$ is $\langle f_i^{2^s - \zeta_i} \rangle$. Any $\alpha \in \langle f_i^{2^s - \zeta_i} \rangle$ satisfies $\alpha \cdot f_i^{\zeta_i} = \alpha'f_i^{2^s-\zeta_i}\cdot f_i^{\zeta_i} \in \langle f_i^{2^s} \rangle$ for $\alpha'$ a unit, so $\alpha$ annihilates $\langle f_i^{\zeta_i}\rangle$. For the reverse, suppose $\beta \cdot f_i^{\zeta_i} = 0$ in $L_i$. Since $L_i$ is a chain ring, $\beta = f_i^a \cdot u$ for some $a \geq 0$ and unit $u$, and the condition $f_i^{a + \zeta_i} = 0$ forces $a + \zeta_i \geq 2^s$, i.e., $a \geq 2^s - \zeta_i$, so $\beta \in \langle f_i^{2^s - \zeta_i} \rangle$.

The condition that $\gcd(p, q, x^\ell - 1) = 1$, i.e., that $p, q$ do not share any common divisors of $x^\ell-1$ from the code construction guarantees that at each CRT component~$i$, at least one of $p, q$ is not divisible by any non-zero power of $f_i$, and therefore must be a unit in $L_i$. That is, at every component $i$,
\begin{equation}\label{eq:min-val-zero}
    \min(\rho_i, \sigma_i) = 0,
\end{equation}

Let $K_i$, $S_i$ denote the kernel of $H_Z$ and rowspace of $H_X$ at the $i$-th CRT component, respectively. We assess the case $\sigma_i = 0$ (i.e., $q_i$ is a unit); the case $\rho_i = 0$ is analogous.

The $H_Z$ kernel condition from Lemma \ref{lem:kernels-characterization} at component~$i$ is $q_i u_i + p_i w_i \in \langle \hat{f} \rangle_i = \langle f_i^{2^s - \zeta_i}\rangle$. Since $q_i$ is a unit, we can solve for $u_i$:
\begin{equation}\label{eq:CRTkernelcondition}
    u_i = q_i^{-1}p_i w_i + g_i,
\end{equation}
over $\bbF_2$ where $w_i \in L_i$ is a free parameter and $g_i \in \langle f_i^{2^s - \zeta_i}\rangle$ captures the annihilator contribution. The kernel is thus parametrized by pairs $(w_i, g_i)$:
\bea
K_i \cong \{(w_i, g_i)  \mid  w_i \in L_i, g_i \in \langle f_i^{2^s - \zeta_i} \rangle\}
\eea
where there exists an $\bbF_2$-linear isomorphism from the $(w_i, g_i)$ to the $(u_i, w_i)$ parameterization:
\bea\label{eq:kernel_wg_param}
(w_i, g_i) \mapsto (q_i^{-1}p_iw_i + g_i, \ w_i)
\eea
Observe that the logical operators at component $i$ can be described by their ``slot'' structure, i.e., their left-slot and right-slot structure. For example, a $(w_i, g_i)$ pair with $w_i = 0$ yields the logical operator $(g_i, \ 0)$, which has first slot $g_i$ and second slot empty. On the other hand, $(w_i, g_i)$ with $w_i$ non-zero yields a logical operator with both slots non-zero. 

While $K_i$ itself does not have a direct product structure, its parameterization via $(w_i, g_i)$ pairs does, and as such we can assess the dimension of $K_i$ via the direct product $L_i \times \langle f_i^{2^s - \zeta_i} \rangle$.
The component $i$ kernel then has $\dim(K_i) = \dim(L_i) + \dim(\langle f_i^{2^s-\zeta_i}\rangle) = 2^sd_i + \zeta_id_i = (2^s + \zeta_i)\, d_i$ by Table~\ref{tab:basis_visualization}. 

$S_i$ while being fully contained in $K_i$, is additionally constrained at component $i$ by the stabilizer condition $(p_i\, a,\; q_i\, a)$ with $a \in \langle f \rangle_i = \langle f_i^{\zeta_i}\rangle$. From the kernel discussion above and equation \eqref{eq:kernel_wg_param}, we have $w_i = q_i a$ with $q_i$ a unit, and so $w_i$ ranges over $\langle f_i^{\zeta_i}\rangle$. To identify the corresponding~$g_i$, satisfying the stabilizer condition, substitute $u_i = p_i a$ and
$w_i = q_i a$ into~\eqref{eq:CRTkernelcondition}:
\bea
    p_i a = q_i^{-1} p_i (q_i a) + g_i = p_i a + g_i \implies 2p_ia = g_i \equiv 0
\eea
over $\bbF_2$, and so we have that $g_i = 0$, i.e., $g_i \in \langle f_i^{2^s}\rangle = \{0\}$. Therefore the stabilizer has a $(w_i, g_i)$ parametrization as
\bea
    S_i \cong \bigl\{(w_i, g_i) \mid w_i \in \langle f_i^{\zeta_i}\rangle,\;\; g_i \in \langle f_i^{2^s}\rangle\bigr\},
\eea
which decomposes into the direct product $\langle f_i^{\zeta_i}\rangle \times \langle f_i^{2^s}\rangle$, with $\dim(S_i) = \dim(\langle f_i^{\zeta_i}\rangle) = (2^s - \zeta_i)\, d_i$.

Each $(w_i, g_i)$ pair represents a logical operator via \eqref{eq:kernel_wg_param}, and the logical space at component~$i$ is then parameterized by
\bea
    K_i / S_i = \frac{\{(w_i, g_i)  \mid  w_i \in L_i, g_i \in \langle f_i^{2^s - \zeta_i} \rangle\}}{\bigl\{(w_i, g_i) \mid w_i \in \langle f_i^{\zeta_i}\rangle,\;\; g_i \in \langle f_i^{2^s}\rangle\bigr\}}
\eea
The quotient therefore decomposes coordinate-wise:
\bea
    K_i / S_i  = \frac{L_i \times \langle f_i^{2^s - \zeta_i}\rangle}{\langle f_i^{\zeta_i} \rangle \times \langle f_i^{2^s} \rangle}\;\cong\; \frac{L_i}{\langle f_i^{\zeta_i} \rangle} \times \frac{\langle f_i^{2^s - \zeta_i} \rangle}{\langle f_i^{2^s} \rangle} \;=\; L_i / \langle f_i^{\zeta_i} \rangle \times \langle f_i^{2^s - \zeta_i} \rangle
\eea
as $\langle f_i^{2^s} \rangle = \{ 0 \}$. By Table \ref{tab:basis_visualization}, this direct product has dimension
\bea
    \zeta_i \cdot d_i \;+\; \zeta_i \cdot d_i \;=\; 2\zeta_i\, d_i.
\eea
Having established the component level structure of $K_i/S_i$, we can now write down explicit logical operator representatives.
\begin{theorem}\label{thm:explicit-basis}
Let $C$ be a GB code as in Theorem~\ref{thm:cyclic_submodule}. Let $r$ be the number of unique irreducible polynomials in the factorization of $f$, and let $\ell = 2^s m$ for $m$ odd. Let $\zeta_i = v_i(f), \ \rho_i = v_i(p), \ \sigma_i = v_i(q)$ denote the $f_i$-adic valuation of $f, p$, and $q$. Let $ q_i^{-1} \ (p_i^{-1})$ denote the inverse of $q \ (p)$ in the $i$-th CRT component $L_i$ when $\sigma_i \ (\rho_i)$ is 0 in $L_i$. Let $h_i = (x^\ell-1)/f_i^{2^s}$. For each irreducible factor $f_i$ of $x^m - 1$ with $\zeta_i > 0$, a basis for the $X$-type logical operators from component~$i$ consists of $2\zeta_i d_i$ representatives, divided into two families, with each family contributing $\zeta_id_i$ instances. Across all components $i$, these logical operators form a canonical basis of logical operator representatives of $C$.

\medskip
\noindent Case 1: $\sigma_i = 0$ (i.e., $q_i$ is a unit at component~$i$).

\smallskip
\noindent\emph{$(0, g_i)$: Single-slot representatives $(u_i, w_i)$ built via the $(w_i, g_i)$ parameterization with $w_i = 0, g_i$ free: $(u_i, w_i) = (q_i^{-1}p_iw_i + g_i, w_i) = (g_i, 0)$}:
\begin{equation}\label{eq:typeB-q-unit}
    \left\{ \Bigl(\, h_i f_i^tx^z ,\;\; 0 \,\Bigr) \;\middle|\; 2^s - \zeta_i \leq t < 2^s,\;\; 0 \leq z < d_i \right\}.
\end{equation}
These are $\zeta_id_i$ elements living in $\langle \hat{f} \rangle_i =\langle f_i^{2^s - \zeta_i}\rangle$.

\smallskip
\noindent\emph{$(w_i, 0)$: Two-slot representatives $(u_i, w_i)$ built via the $(w_i, g_i)$ parameterization with $w_i$ free, $g_i = 0$: $(u_i, w_i) = (q_i^{-1}p_iw_i + g_i, w_i) = (q_i^{-1}p_iw_i, w_i)$}:
\begin{equation}\label{eq:typeA-q-unit}
    \left\{ \Bigl(\, h_i q_i^{-1} p_i  f_i^t  x^z ,\;\; h_i  f_i^t x^z \,\Bigr)
    \;\middle|\; 0 \leq t < \zeta_i,\;\;0 \leq z < d_i \right\}.
\end{equation}
These are $\zeta_i d_i$ elements whose second slot ranges over a basis for $L_i / \langle f_i^{\zeta_i}\rangle$, with the first slot forced by the kernel relation~\eqref{eq:CRTkernelcondition}.

Case 2, when $\rho_i = 0$ (i.e., $p_i$ is a unit at component $i$), is symmetric, and the roles of the two slots swap. Concretely, swapping the left and right coordinates of the pairs above obtains the remaining logical operator representatives. Note that at each component $i$, at least one of $\rho_i, \sigma_i$ is zero, and $2\zeta_id_i$ logical operators are contributed per $p, q$ unit at component $i$.

\noindent Combining each case, the total number of $X$-type logical operators is
\begin{equation}\label{eq:k-formula-even}
    \sum_{i=1}^{r} 2\zeta_i\, d_i \;=\;  2\sum_{i=1}^{r} \zeta_i\deg(f_i) \;=\; 2\deg(f) = k
\end{equation}
\end{theorem}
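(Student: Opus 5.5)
The plan is to assemble the global basis from the component-level analysis just carried out. By the Chinese Remainder Theorem, $R_\ell^2 \cong \prod_i L_i^2$. Both the kernel condition of Lemma~\ref{lem:kernels-characterization} ($qu + pw \in \langle \hat f\rangle$) and the stabilizer description $M_f(p,q)$ are defined by ideal membership and multiplication, so they split componentwise. Concretely, $\ker H_Z \cong \prod_i K_i$ and $\operatorname{rs}(H_X)\cong \prod_i S_i$, which gives
\[
\mathcal{L}_X \cong \prod_i K_i/S_i .
\]
Components with $\zeta_i = 0$ contribute nothing, since then $\langle f\rangle_i = L_i$ and the annihilator component is $0$, so $K_i = S_i$. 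It therefore suffices, for each $i$ with $\zeta_i>0$, to exhibit elements of $R_\ell^2$ that vanish at every other component and whose $i$-th projections form a basis of $K_i/S_i$.

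Take Case 1 ($\sigma_i = 0$). We already have the coordinatewise isomorphism
\[
K_i/S_i \cong L_i/\langle f_i^{\zeta_i}\rangle \times \langle f_i^{2^s-\zeta_i}\rangle,
\]
realized through the parametrization $(w_i,g_i)\mapsto (q_i^{-1}p_iw_i + g_i,\, w_i)$. Reading off Table~\ref{tab:basis_visualization}:
\begin{itemize}
\item a basis of $L_i/\langle f_i^{\zeta_i}\rangle$ is $\{x^z f_i^t : 0\le t<\zeta_i,\ 0\le z<d_i\}$;
\item a basis of $\langle f_i^{2^s-\zeta_i}\rangle$ is $\{x^zf_i^t : 2^s-\zeta_i\le t<2^s,\ 0 \le z < d_i\}$.
\end{itemize}
Pushing the first family through with $g_i=0$ gives the two-slot pairs $(q_i^{-1}p_i f_i^tx^z,\ f_i^tx^z)$. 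Pushing the second through with $w_i=0$ gives the single-slot pairs $(f_i^tx^z, 0)$.

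To lift these to $R_\ell^2$, multiply by $h_i$. Since $h_i$ vanishes in every $L_j$ with $j\neq i$, each lift is automatically zero away from component $i$, so the kernel condition holds there and nothing is added to the stabilizer there. At component $i$, $h_i$ is a unit $u_i$, so the lifted pair projects to $u_i$ times the intended local element. Because $K_i$ and $S_i$ are $L_i$-submodules, and multiplication by a unit is a module automorphism of $K_i/S_i$, multiplying an entire basis by $u_i$ yields another basis. This is why the statement can use $h_i$ rather than $u_i^{-1}h_i$.

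One must also read $q_i^{-1}$ as a lift to $R_\ell$ of the inverse in $L_i$. Any ambiguity in that lift is killed by the factor $h_i$ at other components and does not matter at component $i$. Case 2 ($\rho_i=0$) follows by the symmetric argument, solving the kernel relation for $w_i$ instead of $u_i$, which swaps the slots. When both $\rho_i$ and $\sigma_i$ are zero, either family works, and one picks one of them.

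Independence across components follows from the CRT product decomposition, and independence within a component follows from the isomorphism above. The count is $\sum_i 2\zeta_i d_i = 2\sum_i \zeta_i\deg f_i = 2\deg f = k$, which equals $\dim\mathcal{L}_X$, so the family is a basis.

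The main obstacle is justifying rigorously that $\ker H_Z$ and $\operatorname{rs}(H_X)$ really are products of their local pieces. For the kernel this needs $\langle\hat f\rangle$ to correspond to $\prod_i\langle f_i^{2^s-\zeta_i}\rangle$. For the rowspace it needs $\langle f\rangle \cong \prod_i \langle f_i^{\zeta_i}\rangle$ and the fact that $a\mapsto(pa,qa)$ commutes with projection. I would verify both using the fact that CRT is a ring isomorphism carrying ideals to products of ideals.
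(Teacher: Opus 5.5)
Your proposal is correct and follows essentially the same route as the paper. The paper also works at each CRT component through the $(w_i,g_i)$ parametrization $K_i/S_i \cong L_i/\langle f_i^{\zeta_i}\rangle \times \langle f_i^{2^s-\zeta_i}\rangle$, reads the two families off Table~\ref{tab:basis_visualization}, uses that $h_i$ is a unit in $L_i$ and vanishes at the other components to justify the lift, and finishes with CRT independence across components and the count $\sum_i 2\zeta_i d_i = 2\deg f = k$. The differences are minor: the paper checks kernel membership and non-membership in the stabilizer of each representative by hand, which your quotient isomorphism makes automatic, while you are more explicit about points it glosses over (the product decomposition of $\ker H_Z$ and $\operatorname{rs}(H_X)$ under CRT, the lift of $q_i^{-1}$, and choosing only one case when both $p_i$ and $q_i$ are units).
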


The conceptual flow of this theorem is illustrated in Fig.~\ref{fig:explicit-basis-thm-concepts}.

\input{figures/theorem_4_3_fig.tex}

\begin{proof}
We verify the kernel condition, non-membership in the stabilizer, independence, and coset uniqueness for each family, treating Case~1 ($\sigma_i = 0$). Case~2 follows by symmetry.

\medskip
\noindent\textbf{Kernel/Stabilizer condition for $(0, g_i)$.}
A $(0, g_i)$ representative has $u_i = h_if_i^t  x^z$ and $w_i = 0$ for $t \geq 2^s - \zeta_i$. Let $t' = t - (2^s - \zeta_i) \geq 0$. The kernel requires $q_i u_i + p_i \cdot 0 \in \langle \hat{f}_i\rangle = \langle f_i^{2^s-\zeta_i}\rangle$. Observe,
\bea
q_i \cdot h_if_i^t  x^z = q_i \cdot h_if_i^{t'}f_i^{2^s-\zeta_i} x^z = (q_i \cdot h_if_i^{t'} x^z)f_i^{2^s-\zeta_i} \in \langle f_i^{2^s - \zeta_i}\rangle
\eea
Thus, this element is in the kernel, and at all other components the representative is zero (via the CRT lifting by $h_i$). 

A $(0, g_i)$ element is a stabilizer element if $(u_i, w_i) = (h_if_i^t  x^z, 0)= (p_i c,\; q_i c)$ for $c \in \langle f \rangle_i = \langle f_i^{\zeta_i}\rangle$. But note that $q_i$ is a unit in the $i$-th component, and $w_i = q_ic = 0$ and so it must be that $c = 0$. For this to be true, it must be that 
\bea
h_if_i^tx^z =  p_i \cdot 0 = 0 \
\eea
However, observe that $h_i$ and $x^z$ are units, and $f_i \neq 0$ for $t < 2^s$, so this cannot be satisfied for any non-zero $(0, g_i)$ element. Thus, all non-zero type $(0, g_i)$ elements are logical operators of $C$. Note that single-slot operators have no dependence on $p$ or $q$.

\medskip
\noindent\textbf{Kernel/Stabilizer condition for $(w_i, 0)$.}
A $(w_i, 0)$ representative has $u_i =  h_i q_i^{-1} p_i  f_i^t x^z , \ w_i =  h_i f_i^t x^z$. The kernel requires $q_i u_i + p_i w_i \in \langle f_i^{2^s - \zeta_i}\rangle$. Observe,
\begin{align}
q_i \cdot h_i q_i^{-1} p_i  f_i^t x^z \ + p_i \cdot h_i f_i^t x^z = h_i p_i f_i^t x^z +  p_i h_i f_i^t x^z = 0 \in \langle f_i^{2^s - \zeta_i}\rangle
\end{align}
As we are working mod 2. Thus, this element is in the kernel, and at all other components the representative is zero (via the CRT indicator~$h_i$). 

A $(w_i, 0)$ element is a stabilizer element if $(u_i, w_i) = (h_i q_i^{-1} p_i  f_i^t x^z, \ h_i f_i^t x^z) = (p_i c,\; q_i c)$ for $c \in \langle f \rangle_i = \langle f_i^{\zeta_i}\rangle$. We must then have
\begin{align}
    h_i q_i^{-1} p_i  f_i^t x^z =  p_i \left( h_i q_i^{-1} f_i^t x^z\right) &\implies h_i q_i^{-1} f_i^t x^z \in \langle f_i^{\zeta_i} \rangle\\
    h_i f_i^t x^z = q_ic &\implies q_i^{-1}h_if_i^tx^z \in \langle f_i^{\zeta_i} \rangle 
\end{align}
However, observe that in the $(w_i,0)$ case, since $t < \zeta_i$ and $h_i, q_i$, and $x^z$ are each units, we have $h_i q_i^{-1} f_i^t x^z \notin \langle f_i^{\zeta_i} \rangle$, so this element cannot possibly be a stabilizer. Thus, all $(w_i, 0)$ operators are logical operators of $C$. Note that the $(w_i, 0)$ operators do have a dependence on $p$ and $q$.

\medskip
\noindent\textbf{Independence and unique cosets.}\quad
We now show that the $2\zeta_i d_i$ representatives at each component are linearly independent modulo $S_i$, i.e., that no non-trivial $\mathbb{F}_2$-linear combination lies in the stabilizer. Recall that the stabilizer $S_i$ consists of pairs $(w_i, g_i)$ with $w_i \in \langle f_i^{\zeta_i} \rangle$ and $g_i \in \langle f_i^{2^s} \rangle = \{0\}$.

Consider a linear combination of $(0, g_i), (w_i, 0)$ elements:
\bea
\sum_\alpha c_\alpha (0, g_i)_\alpha + \sum_\beta c_\beta (w_i, 0)_\beta \in S_i.
\eea
This sum has $g_i$-component $\sum_\alpha c_\alpha g_{i,\alpha}$ and $w_i$-component $\sum_\beta c_\beta w_{i,\beta}$. Membership in $S_i$ requires $g_i = 0$ and $w_i \in \langle f_i^{\zeta_i} \rangle$. We handle each constraint in turn.

The condition $g_i = 0$ gives $\sum_\alpha c_\alpha g_{i,\alpha} = 0$. The elements 
$\{h_i f_i^t x^z : 2^s - \zeta_i \leq t < 2^s,\; 0 \leq z < d_i\}$ are linearly independent in $\langle f_i^{2^s - \zeta_i} \rangle$: 
since $h_i$ is a unit in $L_i$, the set 
$\{h_i f_i^t x^z\}$ spans the same space as the standard basis $\{f_i^t x^z\}$ of $\langle f_i^{2^s - \zeta_i} \rangle$ 
(cf.\ Table~\ref{tab:basis_visualization}, rows $t \geq 2^s - \zeta_i$). 
Thus no linear combination can obtain $0$ in the $g_i$ slot, and the only solution that satisfies the stabilizer conditions is all $c_\alpha = 0$.

The condition $w_i \in \langle f_i^{\zeta_i} \rangle$ gives $\sum_\beta c_\beta w_{i,\beta} \in \langle f_i^{\zeta_i} \rangle$, 
or equivalently, the image of $\sum_\beta c_\beta w_{i,\beta}$ in $L_i / \langle f_i^{\zeta_i} \rangle$ is zero. The elements 
$\{h_i f_i^t x^z : 0 \leq t < \zeta_i,\; 0 \leq z < d_i\}$ project to linearly independent elements of $L_i / \langle f_i^{\zeta_i} \rangle$: again, $h_i$ is a unit 
in $L_i$, so $\{h_i f_i^t x^z + \langle f_i^{\zeta_i} \rangle\}$ is a scalar multiple of the standard basis $\{f_i^t x^z + \langle f_i^{\zeta_i} \rangle\}$ for the quotient 
(cf.\ Table~\ref{tab:basis_visualization}, rows $0 \leq t < \zeta_i$). Their images being independent in the quotient forces 
all $c_\beta = 0$.

Since the only solution is the trivial one, the $2\zeta_i d_i$ 
representatives define distinct cosets in $K_i / S_i$. Representatives from distinct components $i \neq i'$ cannot be equivalent modulo the stabilizer, since the CRT lifting by $h_i$ ensures that a component $i$ representative is zero at all other components; any linear relation would therefore decompose into independent relations at each component, which we have 
already ruled out. Thus, together with the dimension count~\eqref{eq:k-formula-even}, this set 
forms a complete basis.
\end{proof}

\begin{remark}
    The CRT decomposition of $R_\ell$ can equivalently be seen through the application of Schur's lemma (specifically Artin-Wedderburn) to the group algebra $\mathbb{F}_2[\mathbb{Z}/\ell\mathbb{Z}]$. 
\end{remark}

\subsubsection{$Z$-type logical operators}
\label{subsubsec:Ztype}

For the $Z$- type logical operators, observe that $Z$ logicals live in $\ker(H_X)/\operatorname{rs}(H_Z)$.  From Lemma~\ref{lem:kernels-characterization}:
\bea
    \mathcal{L}_Z \;=\; 
    \frac{\{(u, w) \mid p\overleftarrow{u} + q\overleftarrow{w} 
    \in \langle \hat{f} \rangle\}}
    {\{(\overleftarrow{qa},\; \overleftarrow{pa}) \mid 
    a \in \langle f \rangle\}}.
\eea
The bijection $\Psi: (u, w) \mapsto (\overleftarrow{u}, \overleftarrow{w})$ on $R_\ell^2$ maps the kernel of $\mathcal{L}_Z$ to $\{(u, w) \mid pu + qw \in 
\langle \hat{f} \rangle\}$ and the rowspace to $\{(qa, pa) \mid a \in \langle f \rangle\}$, inducing an isomorphism
\bea
    \mathcal{L}_Z \;\cong\;
    \frac{\{(u, w) \mid pu + 
    qw \in \langle \hat{f} \rangle\}}
    {\{(qa, pa) \mid 
    a \in \langle f \rangle\}}.
\eea
This is precisely the $X$-logical quotient~\eqref{eq:LX-def} with $p,q$ exchanged. The CRT analysis of Theorem~\ref{thm:explicit-basis} therefore applies with the roles of $\rho_i$ and $\sigma_i$ exchanged, producing the same dimension $2\zeta_i d_i$ at each component and the same total $k = 2\deg(f)$. 

\subsubsection{Simplification when $\langle f \rangle \cap \langle \hat{f} \rangle = \{ 0 \}$}
\label{subsubsec:odd_ell}

Theorem \ref{thm:explicit-basis} constructs a canonical logical operator set in full generality for GB codes; however, we observe here that if $\langle f \rangle \cap \langle \hat{f} \rangle = \{ 0 \}$ the analysis simplifies greatly, and a much simpler set of canonical logical operators exists. 
\begin{prop}\label{prop:disjoint_f_fhat}
    When $\langle f \rangle \cap \langle \hat{f}\rangle = \{0\}$, the following is a convenient choice of logical operator representatives, where $d_f = \deg(f)$
    \begin{align}
        X \ \mathrm{logicals: } \{ (x^i\hat{f}, 0)\} \cup \{(0, x^i\hat{f}) \} \qquad 0 \leq i < d_f
        \end{align}
        \begin{align}
        Z \ \mathrm{logicals: } \{ (x^i\hat{f}(x^{-1}), 0)\} \cup \{(0, x^i\hat{f}(x^{-1})) \} \qquad 0 \leq i < d_f
    \end{align}
    which has no dependence on $p, q$. 
\end{prop}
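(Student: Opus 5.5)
The plan is to verify three things for the proposed $X$ representatives: they lie in $\ker H_Z$, they are independent modulo $\operatorname{rs}(H_X)$, and there are $2d_f = k$ of them. The $Z$ case will then follow by the isomorphism $\Psi$ of Sec.~\ref{subsubsec:Ztype}.

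\textbf{Setup.} The hypothesis $\langle f\rangle\cap\langle\hat f\rangle=\{0\}$ is equivalent to $\gcd(f,\hat f)=1$. In CRT language, at every component $i$ either $\zeta_i=0$ or $\zeta_i=2^s$. By the CRT, $R_\ell = \langle f\rangle\oplus\langle\hat f\rangle$ as an internal direct sum of ideals. Each summand is a ring with its own identity idempotent, so multiplication by $f$ is injective on $\langle\hat f\rangle$, and the annihilator of $\langle f\rangle$ is exactly $\langle \hat f\rangle$. Also $\dim_{\mathbb F_2}\langle\hat f\rangle=\ell-\deg\hat f=\deg f=d_f$, with basis $\{x^i\hat f : 0\le i<d_f\}$. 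This is the standard fact that a cyclic code with generator $g$ has basis $g, xg, \dots, x^{\ell-\deg g-1}g$.

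\textbf{Kernel membership.} By Lemma~\ref{lem:kernels-characterization}, $(u,w)\in\ker H_Z$ iff $qu+pw\in\langle\hat f\rangle$. For $(x^i\hat f,0)$ this gives $qx^i\hat f\in\langle\hat f\rangle$, and for $(0,x^i\hat f)$ it gives $px^i\hat f\in\langle\hat f\rangle$. Both hold trivially, and neither uses $p$ or $q$ in any essential way, which explains the independence from $p,q$.

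\textbf{Independence modulo the stabilizer.} This is the main step. Suppose
\[
(\alpha,\beta)\in\langle\hat f\rangle^2 \quad\text{and}\quad (\alpha,\beta)=(pa,qa) \text{ for some } a\in\langle f\rangle.
\]
Then $pa\in\langle f\rangle\cap\langle\hat f\rangle=\{0\}$, and likewise $qa=0$, so $\alpha=\beta=0$. Hence the span of the $2d_f$ vectors meets $\operatorname{rs}(H_X)$ only in $0$. Since the $x^i\hat f$ are independent and the two families occupy different slots, all $2d_f$ vectors are linearly independent. Their classes are therefore independent in $\mathcal L_X$, and since $\dim\mathcal L_X=k=2d_f$, they form a basis.

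Note that at the CRT level, components with $\zeta_i=2^s$ have $\hat f_i$ a unit multiple of $h_i$. Here the two-slot family of Theorem~\ref{thm:explicit-basis} is equivalent modulo single-slot ones. This gives an alternative check via that theorem, but the direct argument above is shorter, so I would use it.

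\textbf{$Z$ logicals.} The representatives are the images of the $X$ representatives under $\Psi$. Concretely, I would check $(x^i\hat f(x^{-1}),0)\in\ker H_X$ using the form $p\overleftarrow u+q\overleftarrow w\in\langle\hat f\rangle$. Here $\overleftarrow u = x^{-i}\hat f$, which lies in $\langle \hat f\rangle$, so membership is immediate. Independence modulo $\operatorname{rs}(H_Z)$ follows by applying $\psi_{x^{-1}}$, which preserves the hypothesis $\langle \overleftarrow f\rangle\cap\langle\overleftarrow{\hat f}\rangle=\{0\}$.

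The only delicate point is justifying $\dim\langle\hat f\rangle=d_f$ and using the trivial-intersection hypothesis at the right place; both are routine.
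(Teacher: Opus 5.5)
Your proof is correct. It rests on the same two facts as the paper's: the trivial intersection $\langle f\rangle\cap\langle\hat f\rangle=\{0\}$ rules out overlap with the stabilizer, and a dimension count finishes. The difference is that your route is more self-contained.

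The paper gets the first-slot family by specializing Theorem~\ref{thm:explicit-basis} to $\zeta_i\in\{0,2^s\}$. In that case the $(0,g_i)$ representatives fill out $\langle\hat f\rangle$ in the first slot. It then trades the $p,q$-dependent two-slot family for the vectors $(0,x^z\hat f)$. For these it checks kernel membership and non-membership in $\operatorname{rs}(H_X)$ one vector at a time, notes they are linearly independent as vectors, and appeals to $\dim\mathcal L_X=2d_f$.

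You bypass Theorem~\ref{thm:explicit-basis} entirely. Because your $2d_f$ vectors span a subspace of $\langle\hat f\rangle\oplus\langle\hat f\rangle$, one application of the trivial intersection to an arbitrary $(pa,qa)$ shows that the whole span meets $\operatorname{rs}(H_X)$ only in $0$. That is exactly what independence in $\mathcal L_X$ requires. It also makes explicit a step the paper leaves implicit: the paper's per-vector check does not by itself exclude a mixed combination lying in the stabilizer, although the same intersection argument fixes this at once.

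The two routes buy different things. The paper's version ties the new basis to the canonical CRT basis, explaining why the two-slot family becomes redundant in this regime; your side remark about $\zeta_i=2^s$ components captures exactly this. Yours needs only $\dim\langle\hat f\rangle=d_f$ and $k=2d_f$.

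One small detail on the $Z$ side. Applying $\psi_{x^{-1}}$ turns your templates into $\{x^{-i}\hat f\}_{0\le i<d_f}$. This set is $x^{-(d_f-1)}$ times the standard basis $\{x^i\hat f\}_{0\le i<d_f}$ of $\langle\hat f\rangle$, so it is still linearly independent, and the intersection argument then applies verbatim. It is worth stating this explicitly.
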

\begin{proof}
When $\langle f \rangle \cap \langle \hat{f} \rangle = \{0\}$, each irreducible factor $f_i^{2^s}$ of $x^\ell - 1$ goes wholly to either $f$ or $\hat{f}$, so $\zeta_i \in \{0, 2^s\}$ at each CRT component. Applying Theorem~\ref{thm:explicit-basis}, the $(0, g_i)$ family contributes $d_f$ pure first-slot elements that live in $\langle \hat{f} \rangle$, and since $\{x^z \hat{f} : 0 \leq z < d_f\}$ is a basis for $\langle \hat{f} \rangle$ (Lemma~\ref{lem:annihilatordimension}), we may take $\{(x^z \hat{f}, 0)\}$ as the $(0, g_i)$ representatives.

We claim the $(w_i, 0)$ elements, mixed two-slot elements, can be replaced by the pure second-slot elements $(0, x^z \hat{f})$. Observe that $(0, x^z\hat{f})$ satisfies the kernel condition: $q \cdot 0 + p \cdot x^z\hat{f} = px^z\hat{f} \in \langle \hat{f} \rangle$, and it is not a stabilizer element: membership in $\operatorname{rs}(H_X) = \{(pa, qa) \mid a \in \langle f \rangle\}$ would require $pa = 0$ and $qa = x^z\hat{f}$, forcing $qa \in \langle f \rangle \cap \langle \hat{f} \rangle = \{0\}$; but $x^z \hat{f} \neq 0$, a contradiction. The $d_f$ elements $\{(0, x^z\hat{f})\}$ are independent by Lemma~\ref{lem:annihilatordimension}, and together with the $d_f$ first-slot representatives they give $k = 2d_f$ independent logical operators, matching the dimension of $\mathcal{L}_X$. Since none depend on $p, q$, neither does the basis.

The $Z$-type representatives follow via the $\mathcal{L}_Z \cong \mathcal{L}_X$ via $p \leftrightarrow q$ isomorphism from above: the $X$-type basis is independent of $p, q$, so swapping $p \leftrightarrow q$ leaves it unchanged, and applying $\psi_{x^{-1}}$ to return to the original coordinates sends $\hat{f} \mapsto \hat{f}(x^{-1})$.
\end{proof}
\begin{cor}
    Prop.~\ref{prop:disjoint_f_fhat} yields a canonical logical operator set for all GB codes with $\ell$ odd. When $\ell$ is even, Prop.~\ref{prop:disjoint_f_fhat} yields a canonical logical operator set if $\gcd(f, \hat{f}) = 1$.
\end{cor}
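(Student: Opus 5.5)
The plan is to show that $\langle f\rangle\cap\langle \hat f\rangle=\{0\}$ holds for every odd $\ell$, and that for even $\ell$ it is equivalent to $\gcd(f,\hat f)=1$. The corollary then follows by invoking Prop.~\ref{prop:disjoint_f_fhat} directly.

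For odd $\ell$, write $\ell=2^s m$ with $s=0$. Then $x^\ell-1=\prod_i f_i$ is squarefree, since $\gcd(x^\ell-1,\ell x^{\ell-1})=1$ in characteristic $2$; this is the $s=0$ case of the factorization in Sec.~\ref{subsubsec:crt_setup}. Because $f$ is a divisor of $x^\ell-1$, each $f_i$ divides exactly one of $f$ and $\hat f=(x^\ell-1)/f$, so $\gcd(f,\hat f)=1$. It therefore suffices to treat the general statement: $\gcd(f,\hat f)=1$ implies $\langle f\rangle\cap\langle\hat f\rangle=\{0\}$.

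For that implication, take $\alpha\in\langle f\rangle\cap\langle\hat f\rangle$. I would argue with the CRT decomposition $R_\ell\cong\prod_i L_i$. Coprimality means that at each component $\zeta_i=v_i(f)\in\{0,2^s\}$. If $\zeta_i=2^s$, then $\langle f\rangle_i=0$. If $\zeta_i=0$, then $\hat f$ carries the full power $f_i^{2^s}$, so $\langle\hat f\rangle_i=0$. Either way $\alpha$ vanishes at every component, hence $\alpha=0$.

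An alternative route avoids the CRT. Write $\alpha=af=b\hat f$. Then $f\hat f=x^\ell-1$ divides $\alpha\hat f$ in $\mathbb F_2[x]$, which gives $f\mid \alpha$ and $\hat f\mid\alpha$ as polynomials. Since the two are coprime, $f\hat f=x^\ell-1$ divides $\alpha$, so $\alpha=0$ in $R_\ell$. I would present the CRT version because it matches the valuation language the proof of Prop.~\ref{prop:disjoint_f_fhat} already uses.

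The only subtlety is the identification of divisors of $x^\ell-1$ with their images in $R_\ell$, and making sure that $\gcd(f,\hat f)$ is taken in $\mathbb F_2[x]$ with $f$ normalized to a true divisor of $x^\ell-1$, as Theorem~\ref{thm:cyclic_submodule} guarantees. The converse direction, that the intersection can be nonzero when $\gcd(f,\hat f)\neq 1$, is not needed, since the corollary only claims sufficiency for even $\ell$.
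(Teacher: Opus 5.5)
Your proposal is correct and follows the same route as the paper: odd $\ell$ makes $x^\ell-1$ squarefree, so $\gcd(f,\hat f)=1$; coprimality forces $\langle f\rangle\cap\langle\hat f\rangle=\{0\}$; and Prop.~\ref{prop:disjoint_f_fhat} then applies. The paper only asserts the coprimality-to-disjointness step, which your CRT argument ($\zeta_i\in\{0,2^s\}$ at each component) actually proves. One small fix in your alternative divisibility route: $f\hat f\mid\alpha\hat f$ yields only $f\mid\alpha$, and $\hat f\mid\alpha$ comes instead from $\alpha\equiv b\hat f \pmod{x^\ell-1}$ together with $\hat f\mid x^\ell-1$ (or symmetrically from $f\hat f\mid\alpha f$).
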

\begin{proof}
    If $\gcd(f, \hat{f}) = 1$, $f, \hat{f}$ share no common divisors of $x^\ell-1$ and as such generate disjoint ideals: $\langle f \rangle \cap \langle \hat{f} \rangle = \{0\}$. When $\ell$ is odd, $x^\ell-1$ is squarefree and $\gcd(f, \hat{f}) = 1$. When $\ell$ is even, $x^\ell-1$ is not squarefree, however if $\gcd(f, \hat{f}) = 1$ the ideals must still be disjoint and the statement of the proposition applies.
\end{proof}

Note that when $\gcd(f, \hat{f}) \neq 1$, the single slot candidates $(x^z\hat{f}, 0)$ and 
$(0, x^z\hat{f})$ are still valid logical operators, but they are no longer linearly independent modulo the stabilizer: non-trivial relations in $\mathcal{L}_X$ arise from elements of $\langle f \rangle \cap \langle \hat{f} \rangle$. The full CRT-based basis of Theorem~\ref{thm:explicit-basis}, which couples the two slots via $p$ and $q$, is required in this case.

Under the framework of Prop.~\ref{prop:disjoint_f_fhat}, as we have a set of logical $\overline{X}, \overline{Z}$ operators that span the logical space, an algorithm for a symplectic pairing is particularly simple. Recall that $k = 2\deg(f) = 2d_f$. Denote the ``trivial assignment'' of the logical operators listed above to be $\overline{X}_1 = (\hat{f}, 0),\ \overline{X}_2 = (x\hat{f}, 0),\ \dots$, with the support switching slots at $\overline{X}_{d_f+1}$, and similarly for the $Z$-logicals, using $\hat{f}(x^{-1})$. It will become useful to define the alternative notation $\overline{X}_i^{(L)}, \overline{X}_i^{(R)}$ to refer to the left and right slot logical operators with the same $x^i$ offset. 

Each logical operator is supported on a single slot, so the symplectic form
\bea
    \Omega\bigl((u_1, w_1),\, (u_2, w_2)\bigr) \;=\; 
    \langle u_1, w_2 \rangle + \langle w_1, u_2 \rangle
\eea
yields a block-diagonal commutation matrix
\bea
    \mathcal{C} \;=\; \begin{pmatrix} \mathcal{C}_L & 0 \\ 0 & \mathcal{C}_R \end{pmatrix},
    \qquad
    \mathcal{C}_{ij} \;=\; \Omega(\overline{X}_i,\, \overline{Z}_j),
\eea
where $\mathcal{C}_L$ and $\mathcal{C}_R$ are the blocks corresponding to logical operators supported on the left and right slots, respectively. Concretely, for $\overline{X}_i = (u_i, 0)$ and $\overline{Z}_j = (w_j, 0)$ both supported in the left slot, $\mathcal{C}_{ij} = \langle u_i, w_j \rangle$ reduces to the ordinary $\mathbb{F}_2^\ell$ inner product, computable via Lemma~\ref{lem:dotproduct_zerocoeff}; $\mathcal{C}_L = \mathcal{C}_R$, and cross-slot pairings vanish since neither operator has support in both slots. We will refer to $\mathcal{C}_\alpha$ when discussing a block matrix of $\mathcal{C}$ where the left vs right block distinction is not important.

The matrix $\mathcal{C}$ is invertible: the stabilizer $S$ is an isotropic subspace of $\mathbb{F}_2^{2\ell}$ under $\Omega$, so the quotient $N(S)/S$ inherits a non-degenerate symplectic form~\cite{gottesman_thesis}. Any matrix of $\Omega$ on a basis of a non-degenerate symplectic space is therefore invertible. A symplectically-paired basis is obtained by inverting $\mathcal{C}$ to obtain $\mathcal{C}^{-1}\mathcal{C} = I$, and setting $\overline{X}_i' = \overline{X}_i$ and replacing the $\overline{Z}_j$ representatives with
\bea\label{eq:Zprime}
    \overline{Z}_j^{(\alpha)\prime}
    \;=\; \sum_{h=0}^{d_f-1} (\mathcal{C}_\alpha^{-1})_{hj}\, \overline{Z}_h^{(\alpha)},
    \qquad 0 \le j < k/2,\;\; \alpha \in \{L, R\}.
\eea
Then
\bea
    \langle\overline{X}_i^{(\alpha)}, \overline{Z}_j^{(\beta)\prime}\rangle
    \;=\; \delta_{\alpha\beta} \sum_h (\mathcal{C}_\alpha^{-1})_{hj}\,[\mathcal{C}_\alpha]_{ih}
    \;=\; \delta_{\alpha\beta}\,[I]_{ij}
    \;=\; \delta_{\alpha\beta}\,\delta_{ij},
\eea

Note further that, by Lemma~\ref{lem:dotproduct_zerocoeff} and the Frobenius automorphism for $\bbF_2$, we have
\bea\label{eq:ij_symplecticpairing}
    \langle x^i \hat{f}, x^j \hat{f}(x^{-1}) \rangle = \left[ x^i \hat{f} x^{-j} \hat{f}\right]_0 = \left[ x^{i-j \pmod{\ell}} \hat{f}(x^2)\right]_0 = \left[ \hat{f}(x^2)\right]_{j-i \pmod{\ell}}
\eea
$[\mathcal{C}_\alpha]_{ij} \;=\; [\hat{f}(x^2)]_{j-i\,\bmod\,\ell}$ and the entries depend only on the difference $j - i \pmod{\ell}$. Geometrically, $\mathcal{C}_\alpha$ is the $d_f \times d_f$ leading principal submatrix of the $\ell \times \ell$ circulant with first row $\hat{f}(x^2)$. The entire commutation structure is determined by a single polynomial $\hat{f}(x^2) \in R_\ell$ rather than by $k^2$ independent entries.

This yields the following deterministic procedure for the canonical logical operator
assignment:
\begin{enumerate}
    \item Compute $\hat{f}(x^2) \in R_\ell$ from the defining polynomial $\hat{f}$.
    \item Form $\mathcal{C}_\alpha$ from $\mathrm{circ}(\hat{f}(x^2))$, the $d_f \times d_f$ leading principal submatrix (upper left hand corner)
    \item Compute $\mathcal{C}_\alpha^{-1}$ over $\bbF_2$ by Gaussian elimination.
    \item Define the canonical $X$- and $Z$-logicals by \eqref{eq:Zprime} together with the trivial $X$-side
          assignment.
\end{enumerate}
The $X$-representatives remain individually polynomial --- single-monomial multiples of $\hat{f}$ --- while the $Z$-representatives are $\bbF_2$-linear combinations of the templates $\{x^k\hat{f}(x^{-1})\}_{k=0}^{d_f-1}$, supported entirely within a single slot.

\subsection{Actions of Automorphisms and Fold-Transversal Gates on Logical Operators}
\label{subsec:lop_automorph}
 
We now connect the logical operators derived in this section to the automorphisms cataloged in Sec.~\ref{sec:AutomorphismStructure}. Given a code automorphism $\phi$, its action on the logical space is determined by how it maps logical operator representatives modulo the stabilizer rowspaces. The specific logical gate assigned to a given automorphism is dependent on parameters $\ell, f, p, q$, however, the underlying structure of a logical operator yields well-defined rules for how it evolves under any of the automorphisms discussed in this work. 

Note that in the case where $\ell$ is even with $\langle f \rangle \cap \langle \hat{f} \rangle \neq \{0\}$ case, the representatives of Theorem~\ref{thm:explicit-basis} form an $\mathbb{F}_2$-basis for the logical space but are not in general symplectically paired, i.e., the commutation relations need not hold. This does not diminish their utility for computing logical actions: since automorphisms act $\mathbb{F}_2$-linearly on the logical space, knowing how logical operators are mapped on any basis determines the action on every other basis. The CRT-derived basis provides the algebraically natural setting for deriving logical operator mappings in closed form; the symplectic adjustment is linear algebra that can be performed for any specific code instance. 

Generally, the action of an automorphism on a logical operator can do one of two things: send it back to a logical operator in the original representative set, or send it to some $\bbF_2^{2\ell}$ vector outside of that set. In the case that a logical operator is sent to something outside the set, we wish to re-express the logical operator as a linear combination of logical operators from the representative set, up to stabilizers, to determine the logical action. Re-expressing logical operators follows a few simple rules:

\begin{itemize}
    \item $x$-exponent out of bounds

    Both the logical operators of Theorem~\ref{thm:explicit-basis} and the simplification in Subsec.~\ref{subsubsec:odd_ell} are valid when the contents of a single slot include $x^k$ with $0 \leq k < d_i$ with $d_i = \deg{f_i}$ in the CRT picture, and $0 \leq k < d_f$ in the simplification. If an automorphism applies $\psi$ mapping an exponent $k$ to $kj + i$ produces a slot with $x^{kj + i}$ where $kj + i = d_i + \epsilon$ with $\epsilon \geq 0$, this logical operator is no longer represented in the canonical basis. To determine the action of the automorphism on this logical operator, we can re-express it as a linear combination of our original canonical representatives. As $f_i$ (resp $f$) has degree $d_i$ ($d_f$), we can write $f_i = x^{d_i} + \sum_{z=0}^{d_i - 1} a_z x^z$ for $a_z \in \bbF_2$. We then have
    \bea\label{eq:xd_rewrite}
        x^{d_i} = f_i + \sum_z a_z x^z
    \eea
    Substituting this expression in for $x^{d_i}$, obtain $(f_i + \sum_z a_z x^z)x^\epsilon$. Applying this substitution recursively, we eventually arrive at a linear combination of canonical logical operator representatives.

    \item $f_i$-exponent out of bounds

    This case only applies to the logical operators of Theorem~\ref{thm:explicit-basis}. If $t$ is bounded above by $2^s-1$, $t$ moving out of bounds implies the contribution of $f_i^t$ vanishes as $h_if_i^{2^s} = x^\ell - 1 = 0$ in $R_\ell$. When $t$ is bounded above by $\zeta_i-1$, $t$ moving out of bounds implies the contribution lies in $\langle f_i^{\zeta_i}\rangle = \langle f \rangle_i$, exactly the stabilizer at component $i$. There exists a stabilizer element that is exactly the contribution of $f_i^{t}$ and the remaining logical operator components can be assessed up to stabilizers.

    \item Introducing a unit under ring automorphism

    A predicate for a ring automorphism to be a valid GB code automorphism is $f(x^j) = uf$ for $u$ a unit, and so $\hat{f}(x^j) = v\hat{f}$ for $v$ a unit. Logical operators are then sent to linear combinations of valid representatives, determined by the 1's in the coefficient representation of $v$, with the $x$-exponent reduction applied as needed.
\end{itemize}

Any logical action can be determined from applying the above rules to the logical operators of Theorem~\ref{thm:explicit-basis} or Sec.~\ref{subsubsec:odd_ell}.

\section{The Maximal Cube Root Construction}\label{sec:MaxCubeRoot}

As a proof of concept, we introduce the \emph{Maximal Cube Root} (MCR) family of GB codes, whose defining polynomials are chosen by algebraic design to guarantee a large automorphism group together with a rich set of fold-transversal CX gates. The construction is governed by a single structural requirement: the existence of the transfer \emph{ratio} $r = pq^{-1}$ satisfying the identity $r^2 + r + 1 = 0$ in the quotient ring $S$, which as we will see, has roots at exactly the non-trivial cube roots of unity. We show that this single identity simultaneously collapses the four CX-type gate conditions into a single binary condition on multipliers, and that the resulting collapse determines the automorphism group and the fold-CX set in one stroke.

The construction is best read as a deliberate choice of $(\ell, f, p, q)$ that simultaneously activates the algebraic machinery developed in Sec.~\ref{sec:3SpaceIso}--\ref{sec:LogicalOps}. Theorem~\ref{thm:cyclic_submodule} expresses the GB stabilizer rowspaces as cyclic submodules of $R_\ell^2$ generated by $(p,q)$ over $\langle f\rangle$, so that every code-preservation condition becomes an algebraic identity on $(f, p, q)$ in the quotient $S = R_\ell / \langle \hat f \rangle$. Theorem~\ref{thm:substitution_automorphisms} characterizes the block-separable substitution-multiplier automorphisms by four conditions on $(p, q)$ in $S$ --- the families $\mathrm{Stab}, \mathrm{Swap}, \mathrm{Inv}, \mathrm{SwapInv}$, indexed by $j \in \mathrm{Pres}(f)$. Finally, Theorem~\ref{thm:fold_cx} characterizes the fold-transversal CX gates compatible with a multiplier $\psi_{x^j}$ by four M\"obius-type identities on $(p, q)$. The MCR design replaces all of these polynomial conditions on $(p, q)$ with a single scalar identity on the ratio $r = pq^{-1}\in S$, namely $r^2 + r + 1 = 0$. The requirement that $\ell$ be odd places the construction in the simplified regime of Sec.~\ref{subsubsec:odd_ell}, in which the canonical logical operators of Theorem~\ref{thm:explicit-basis} take a particularly clean, $p, q$-independent form---making the logical action of any candidate symmetry computable without re-solving the logical operator kernel for each new choice of $(p, q)$. The MCR construction is therefore the natural place where the cyclic-submodule description, the substitution and fold-CX classifications, and the simplified logical operator basis of the previous sections all meet on a single algebraic identity.

\begin{definition}[Maximal Cube Root code]\label{def:MCR}
A \emph{Maximal Cube Root (MCR) code} is a Generalized Bicycle code (Theorem~\ref{thm:cyclic_submodule}) specified by a tuple $(\ell, f, p, q)$ with $f \mid x^\ell - 1$ and $\gcd(p, q, x^\ell - 1) = 1$, subject to the following four conditions:
\begin{enumerate}
    \item $\ell$ is odd
    \item every irreducible factor of $\hat{f} = (x^\ell - 1)/f$ has even degree over $\mathbb{F}_2$
    \item $q$ is invertible in $S = R_\ell / \langle \hat f \rangle$
    \item the transfer ratio $r = p q^{-1} \in S$ satisfies $r^2 + r + 1 = 0$ in $S$
\end{enumerate}
\null\hfill\ensuremath{\square}
\end{definition}

Condition~(1) makes $x^\ell - 1$ squarefree, so $R_\ell$ and $S$ decompose under the CRT as a product of finite fields. Condition~(3) ensures the ratio $r$ is well-defined, and condition~(4) is the central algebraic constraint whose consequences occupy the rest of this section. Condition~(2) is an existence requirement for~(4), which we address first. Note that $x + 1$ always divides $x^\ell - 1$ and has degree one, so condition~(2) forces $x + 1 \mid f$ (rather than $x + 1 \mid \hat{f}$).

Writing $S \cong \prod_i \mathbb{F}_{2^{d_i}}$ (as $\ell$ is odd), the condition $r^2 + r + 1 = 0$ in $S$ is equivalent to $r_i^2 + r_i + 1 = 0$ in $\mathbb{F}_{2^{d_i}}$ for every $i$. We therefore need $r^2 + r + 1 \in \mathbb{F}_2[r]$ to have a root in every component field.

Roots of $r^2 + r + 1$ are exactly the primitive cube roots of unity, since $r^3 + 1 = (r + 1)(r^2 + r + 1)$ over $\mathbb{F}_2$. The multiplicative group $\mathbb{F}_{2^d}^\times$ is cyclic of order $2^d - 1$ and contains an element of order~3 if and only if $3 \mid 2^d - 1$. Computing modulo~3,
\bea
    2^d \equiv \begin{cases} 1 \pmod 3 & \text{if $d$ is even,} \\ 2 \pmod 3 & \text{if $d$ is odd,} \end{cases}
\eea
so $3 \mid 2^d - 1$ precisely when $d$ is even. Hence $r^2 + r + 1$ has a root in every CRT component of $S$ if and only if every $d_i$ is even, which is condition~(2). When this holds, each component contributes exactly two roots---some $\omega_i \in \mathbb{F}_{2^{d_i}}$ and its conjugate $\omega_i^2 = \omega_i + 1$---and roots can be chosen independently across components, yielding $2^t$ valid ratios $r \in S$ where $t$ is the number of CRT components.

Finally, condition~(4) yields $r^2 + r + 1 = 0$ in $S$, from which three identities immediately follow:
\begin{equation}\label{eq:mcr-identities}
    r^{-1} = r^2 = r + 1, \qquad (r+1)^{-1} = r, \qquad \frac{r}{r+1} = r^2 = r + 1.
\end{equation}
The first follows directly from $r^2 + r + 1 = 0$, the second from $r \cdot (r + 1) = r^2 + r = 1$, and the third from combining the previous two. As $\psi_{x^{-1}}$ is a ring automorphism, the same identities hold for $\overleftarrow{r}$ over $\overleftarrow{S} = R_\ell/\langle \hat{f}(x^{-1}) \rangle$. Two structural consequences are immediate:
\begin{itemize}
    \item \emph{Invertibility of $p, q, r$ is automatic.} $q$ is invertible by design, $r$ is invertible as $r\cdot r^2 = 1$, and $p$ is invertible as $p = qr$ where both $q, r$ are invertible. All invertibility hypotheses required by the CX-type gate conditions of Theorem~\ref{thm:fold_cx} (the $\gcd(p, \hat f) = 1$ and $\gcd(q, \hat f) = 1$ premises) are therefore automatically satisfied.

    \item \emph{Four conditions collapse to two.} The identities~\eqref{eq:mcr-identities} force
    \bea
        r^{-1} \;=\; r + 1 \;=\; \tfrac{r}{r+1} \qquad \text{in } S,
    \eea
    so the three non-trivial conditions coincide. Under the cube root identity, the global image $r(x^j)$ can equal $r$, $r + 1$, or neither. 
\end{itemize}

The preceding frameworks yield a substantial collapse of the substitution-multiplier conditions of Theorem~\ref{thm:substitution_automorphisms}, and the fold-CX conditions of Theorem~\ref{thm:fold_cx} collapse into a single equation on the images of $r(x^j) \in S$ and $\overleftarrow{S}$. Rather than checking each of these conditions on $p, q$ separately under $\phi_{x^j}$, one need only check a single scalar equation on the transfer ratio $r = p q^{-1} \in S$ (or, symmetrically, on $p^{-1}q$). We formalize this in Prop.~\ref{prop:ratio-reduction}.
 
Throughout this argument we assume $q \in S^\times$, so that $r = p q^{-1} \in S$ is defined; the case where $p \in S^\times$ (using $r' = qp^{-1}$) is analogous. When neither $p, q$ is invertible in $S$, the ratio-based framework is not available and the code is not an MCR code; however, the polynomial conditions of $\mathrm{Stab}$ and $\mathrm{Swap}$ remain directly checkable, and interesting code automorphisms may remain.

\begin{prop}\label{prop:ratio-reduction}
Let $(\ell, f, p, q)$ define a GB code with $\ell$ odd, $\gcd(p, q, x^\ell - 1) = 1$ and $q \in S^\times$, and set $r = p q^{-1} \in S$. For any multiplier $j \in \mathrm{Pres}(f)$:
\begin{enumerate}[label=\textup{(\alph*)}]
    \item $\phi_{x^j}$ is a $M_\sim$ GB code automorphism if and only if $r(x^j) = r$ in $S$.
    \item $\sigma \circ \phi_{x^j}$ is a $M_\sim$ GB code automorphism if and only if $r(x^j) =  r^{-1}$ in $S$.
\end{enumerate}
\end{prop}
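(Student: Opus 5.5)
The plan is to prove both parts by reducing the module-preservation condition to a statement about the pair $(p,q)$ up to a common unit in $S$, and then dividing through by $q$. Since $\ell$ is odd, $x^\ell-1$ is squarefree, so $S\cong\prod_i\mathbb{F}_{2^{d_i}}$ is a product of fields and $\langle f\rangle\cong S$ as $R_\ell$-modules. Concretely, multiplication on $\langle f\rangle$ factors through $S$, and $\langle f\rangle$ is a free $S$-module of rank one. Fix $j\in\mathrm{Pres}(f)$, so that $\psi_{x^j}$ maps $\langle f\rangle$ bijectively onto itself and descends to a ring automorphism of $S$; we still write this as $x\mapsto x^j$. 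Note that $\hat f(x^j)$ generates $\langle\hat f\rangle$ because $f(x^j)$ generates $\langle f\rangle$.

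For the forward direction of (a), assume $r(x^j)=r$ in $S$. Then $p(x^j)q(x^j)^{-1}=pq^{-1}$. Set $u=q(x^j)q^{-1}\in S^\times$; this is a unit because $q(x^j)$ is the image of a unit under a ring automorphism. Then $q(x^j)=uq$ and $p(x^j)=r(x^j)q(x^j)=ruq=up$. This is exactly Case~1 of Theorem~\ref{thm:substitution_automorphisms}, so $\phi_{x^j}$ is $M_\sim$. Similarly for (b), assume $r(x^j)=r^{-1}=qp^{-1}$. Here we need $p\in S^\times$, which follows because $r^{-1}$ exists: $r(x^j)$ is a unit, and the hypothesis identifies it with $r^{-1}$. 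Invertibility of $r$, together with $q\in S^\times$, gives $p=rq\in S^\times$. Now put $u=q(x^j)p^{-1}$. Then $q(x^j)=up$ and $p(x^j)=r^{-1}q(x^j)=qp^{-1}\cdot up=uq$. This is Case~2, so $\sigma\circ\phi_{x^j}$ is $M_\sim$. I will need to take care in interpreting "$r(x^j)=r^{-1}$" as including the statement that $r$ is invertible.

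The converse direction is the main step. Suppose $\phi_{x^j}$ maps $M_f(p,q)$ into itself. Choose a generator $e\in\langle f\rangle$ of $\langle f\rangle$ as an $S$-module, for example the idempotent generator. Then $(p(x^j)e(x^j),\,q(x^j)e(x^j))=(pa',qa')$ for some $a'\in\langle f\rangle$. Because $e(x^j)$ also generates $\langle f\rangle$, we can write $a'=c\,e(x^j)$ with $c\in S$. Freeness of $\langle f\rangle$ over $S$ then allows cancellation of $e(x^j)$, which gives $p(x^j)=cp$ and $q(x^j)=cq$ in $S$. I expect this cancellation to be the delicate point, and I will justify it componentwise through the CRT: on each field component where $e(x^j)$ is nonzero, which is every component, one can divide. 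From $q(x^j)=cq$, and since $q(x^j)$ and $q$ are both units, $c$ is a unit. Dividing the two identities then gives $r(x^j)=cp/(cq)=r$.

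For (b) the argument is the same, except that the block swap yields $q(x^j)=cp$ and $p(x^j)=cq$. The unit $q(x^j)$ then forces both $c$ and $p$ to be units, and we obtain $r(x^j)=cq/(cp)=r^{-1}$. Finally, I would remark that by Cor.~\ref{cor:Hz_simplification} only the $M_f(p,q)$ inclusion needs to be checked, since the $H_Z$ side follows, and that $\ell$ odd is used only to make $S$ a product of fields, so that the cancellation is valid.
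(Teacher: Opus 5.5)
Your proof is correct and follows essentially the paper's argument. Both reduce preservation of $M_f(p,q)$ to an identity in $S$ by cancelling an element of $\langle f\rangle$ whose annihilator is exactly $\langle \hat f\rangle$, then divide by the unit $q(x^j)$; the reverse implication is the common-unit rescaling of Cases~1--2 of Theorem~\ref{thm:substitution_automorphisms}, which the paper writes out directly as $a = q^{-1}q(x^j)a'$.

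Your version improves on the paper in two small ways. In part (b) you make the invertibility of $p$ explicit, forced by $q(x^j) = cp$, where the paper just says ``analogous.'' Also, your cancellation via freeness of $\langle f\rangle \cong S$ needs neither the CRT nor $\ell$ odd, because any generator of $\langle f\rangle$ has annihilator $\langle\hat f\rangle$ for every $\ell$; the paper's appeal to Lemma~\ref{lem:f_unit_in_S} does use $\ell$ odd.
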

 
\begin{proof}
We prove~(a); part~(b) is analogous.

As $j$ preserves $\langle f \rangle$, the map $\phi_{x^j}$ is a permutation that sends $(pa, qa)$ to $(p(x^j)a',q(x^j)a')$ for some $a' \in \langle f \rangle$ in $S$. Let $M(\alpha, \beta)$ denote the cyclic submodule generated by $\alpha, \beta$. We have
\bea
    \phi_{x^j}\bigl(M(p, q)\bigr) \;=\; M\bigl(p(x^j),\, q(x^j)\bigr).
\eea
As $\phi_{x^j}$ is a ring automorphism applied blockwise, it is a $\mathbb{F}_2$-linear bijection of $R_\ell^2$ and preserves $\bbF_2$-dimension. As these two submodules have equal $\mathbb{F}_2$-dimension, their equality is equivalent to the single containment
\bea
    M\bigl(p(x^j), q(x^j)\bigr) \;\subseteq\; M(p, q).
\eea
This containment holds if and only if, for every $a' \in \langle f \rangle$, there exists $a \in \langle f \rangle$ with
\bea
    p a \;=\; p(x^j)\, a' \qquad \text{and} \qquad q a \;=\; q(x^j)\, a' \qquad \text{in } R_\ell.
\eea
Since $q \in S^\times$, the second equation determines $a$ uniquely in $S$: $a \equiv q^{-1} q(x^j) a' \pmod{\hat f}$. Substituting into the first gives
\bea
    r \cdot q(x^j) - p(x^j)\;=\; 0 \qquad \text{in } S.
\eea
As $a'$ ranges over $\langle f \rangle$ and the image of $\langle f \rangle$ in $S$ is all of $S$ (Lemma \ref{lem:f_unit_in_S}), the identity holds for every $a' \in S$, forcing $r \cdot q(x^j) = p(x^j)$ in $S$. Dividing by $q(x^j) \in S^\times$ then yields $r(x^j) = r \in S^\times$.
 
Conversely, if $r(x^j) = r$, then $p(x^j) = r \cdot q(x^j)$, and setting $a = q^{-1} q(x^j) a'$ for each $a'$ produces the required element.
\end{proof}

The fold-CX preserving conditions of Theorem~\ref{thm:fold_cx} admit analogous reductions by condition 4, which we formalize in Prop.~\ref{prop:mcr_cx_reduction}.
\begin{prop}\label{prop:mcr_cx_reduction}
    Let the conditions of Prop.~\ref{prop:ratio-reduction} hold, and additionally, restrict to $r$ such that $r^2 + r + 1 = 0$ in $S$. Then, the $M_\sim$ fold-CX gates of Theorem~\ref{thm:fold_cx} hold if $r(x^j) = r^{-1}$ in $S$ and the $M_\leftrightarrow$ gates hold if $r(x^{-j}) = r$ in $S$.
\end{prop}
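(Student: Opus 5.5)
Starting from the four Möbius-type identities of Theorem~\ref{thm:fold_cx}, I would divide each one through by the relevant image of $q$ or $p$ and rewrite it in terms of $r = pq^{-1}$ and its image $r(x^j)$. The identities \eqref{eq:mcr-identities} should then collapse all four to the two stated conditions. Multipliers commute with inversion and with ring operations, so $\psi_{x^j}(p)\psi_{x^j}(q)^{-1} = r(x^j)$. By the first bullet after Definition~\ref{def:MCR}, $p$, $q$ and $r$ are all units in $S$. Since $j\in\mathrm{Pres}(f)$, $\psi_{x^j}$ maps $\langle\hat f\rangle$ to itself, because $\hat f(x^j)$ generates the complementary ideal. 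Hence $\psi_{x^j}$ descends to an automorphism of $S$, and $q(x^j)$ and $p(x^j)$ are units in $S$. The hypothesis $\psi_{x^j}(f)=uf$ of cases (1) and (3) is exactly $j\in\mathrm{Pres}(f)$, and the gcd hypotheses hold automatically.

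For the $M_\sim$ cases I would compute directly. In case (1), dividing $\psi(p)\equiv \frac{p+q}{q}\psi(q)$ by $q(x^j)$ gives $r(x^j) = r+1$, and \eqref{eq:mcr-identities} gives $r+1 = r^{-1}$. In case (3), dividing $\psi(q)\equiv\frac{p+q}{p}\psi(p)$ by $p(x^j)$ gives $r(x^j)^{-1} = 1 + r^{-1} = 1 + (r+1) = r$, so again $r(x^j) = r^{-1}$. Both $M_\sim$ fold-CX gates therefore hold exactly when $r(x^j) = r^{-1}$ in $S$, which proves the first claim. In fact this gives an equivalence, which is stronger than the stated ``if''.

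For the $M_\leftrightarrow$ cases (2) and (4), the identities live in $\overleftarrow S$ and involve $\overleftarrow r$. The same division turns case (2) into $r(x^j) = \overleftarrow{r}^{-1}\cdot(\overleftarrow r + 1)$. Using the reversed identities, $(\overleftarrow r+1)/\overleftarrow r = \overleftarrow r^{\,2}\cdot\overleftarrow r^{\,-1}\cdots$, which simplifies to $\overleftarrow{r}+1\cdot$; concretely, $(\overleftarrow r+1)\overleftarrow r^{-1} = (\overleftarrow r+1)^2 = \overleftarrow r$. Case (2) thus becomes $r(x^j) = \overleftarrow r$ in $\overleftarrow S$. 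Similarly, case (4) becomes $r(x^j)^{-1} = (\overleftarrow r+1)\overleftarrow r\cdot\overleftarrow r^{-1}\cdots$, which also reduces to $r(x^j) = \overleftarrow r$. Applying the ring automorphism $\psi_{x^{-1}}$, which carries $\overleftarrow S$ isomorphically onto $S$, converts $r(x^j)=\overleftarrow r$ in $\overleftarrow S$ into $r(x^{-j}) = r$ in $S$, as claimed.

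The main obstacle is bookkeeping in the $M_\leftrightarrow$ cases. First, $j$ must lie in $\mathrm{Pres}(\overleftarrow f)$, so that $\psi_{x^j}$ maps $S$ onto $\overleftarrow S$ and the quotient $\psi(p)/\psi(q)$ lives in $\overleftarrow S$. Second, one must check that the inverses are taken in the correct ring. I would handle this by composing with $\psi_{x^{-1}}$ at the outset: set $\psi' = \psi_{x^{-1}}\circ\psi_{x^j} = \psi_{x^{-j}}$, which preserves $\langle f\rangle$. This reduces the $M_\leftrightarrow$ identities to identities in $S$ for the multiplier $-j$, where the $M_\sim$ algebra from the previous paragraph applies verbatim.
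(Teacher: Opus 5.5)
Your proposal is correct and follows essentially the same route as the paper's proof. You divide each M\"obius identity of Theorem~\ref{thm:fold_cx} by $q(x^j)$ or $p(x^j)$, rewrite it in terms of $r$ (resp.\ $\overleftarrow{r}$), collapse it with the cube-root identities, and transport the $\overleftarrow{S}$ conditions to $S$ via $\psi_{x^{-1}}$. Your extra bookkeeping is more careful than the paper, whose displayed implications actually run in the converse direction to the stated ``if''. That bookkeeping checks that $\psi_{x^j}$ descends to a map $S\to S$ or $S\to\overleftarrow{S}$, so the divisions are by units and hence reversible, and that the $M_\leftrightarrow$ multiplier must lie in $\mathrm{Pres}(\overleftarrow{f})$, i.e.\ $-j\in\mathrm{Pres}(f)$. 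Just clean up the garbled intermediate expressions in your $M_\leftrightarrow$ paragraph; for example, case (4) is simply $r(x^j)^{-1}=(\overleftarrow{p}+\overleftarrow{q})/\overleftarrow{q}=\overleftarrow{r}+1=\overleftarrow{r}^{-1}$.
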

\begin{proof}
    From the previous discussion we have that if $q$ is invertible and $r^2 + r + 1 =0$, $p$ must also be invertible. Using the identities on $r^2 + r + 1 = 0$ from above, observe the $M_\sim$ conditions for fold CNOTs collapse:
    \begin{gather}
    p(x^j) = \frac{p + q}{q}q(x^j) \implies r(x^j) = r + 1 = r^{-1} \textrm{ in } S \\
    q(x^j) = \frac{p + q}{p}p(x^j) \implies r^{-1}(x^j) = r^{-1} + 1 \implies r(x^j) = r + 1 = r^{-1}
    \textrm{ in } S
    \end{gather}
    Similarly, for the $M_\leftrightarrow$ conditions, over $\overleftarrow{S}$:
    \begin{gather}
    p(x^j) = \frac{\overleftarrow{p} + \overleftarrow{q}}{\overleftarrow{p}}q(x^j) \implies r(x^j) = \overleftarrow{r^{-1}} + 1 = \overleftarrow{r} \textrm{ in } \overleftarrow{S} \\
    q(x^j) = \frac{\overleftarrow{p} + \overleftarrow{q}}{\overleftarrow{q}}p(x^j) \implies r^{-1}(x^j) = \overleftarrow{r} + 1 \implies r(x^j) = \overleftarrow{r^{-1}} + 1 = \overleftarrow{r} \textrm{ in } \overleftarrow{S}
    \end{gather}
    As $\psi_{x^{-1}}$ is a ring automorphism, $r(x^j) = \overleftarrow{r}$ in $\overleftarrow{S}$ is the same condition as $r(x^{-j}) = r$ in $S$.
\end{proof}

Collecting Prop.~\ref{prop:ratio-reduction} and \ref{prop:mcr_cx_reduction} yields the following simplified conditions:
\medskip
\begin{center}
\renewcommand{\arraystretch}{1.2}
\begin{tabular}{l l}
\hline
Gate condition & Required action on $r$ \\
\hline
$\mathrm{Stab}(p, q),\; \mathrm{SwapInv}(p, q)$     & $r(x^j) = r$ in $S$ \\
$\mathrm{Swap}(p, q),\; \mathrm{Inv}(p, q)$         & $r(x^j) = r^{-1}$ in $S$ \\
Fold-CX ($2 \to 1$, $M_\sim$)                     & $r(x^j) = r^{-1}$ in $S$ \\
Fold-CX ($1 \to 2$, $M_\sim$)                     & $r(x^j) = r^{-1}$ in $S$\\
Fold-CX ($2 \to 1$, $M_\leftrightarrow$)          & $r(x^j) =r$ in $S$\\
Fold-CX ($1 \to 2$, $M_\leftrightarrow$)          & $r(x^j) = r$ in $S$ \\
\hline
\end{tabular}
\end{center}
\medskip

Note that the condition $r(x^j) = r$ in $S$ is used for the $M_\leftrightarrow$ fold-CX's, which on first glance, is not consistent with what was proved in Prop.~\ref{prop:mcr_cx_reduction}. Observe that if $r(x^j) = r$ in $S$, as exponents are taken $\pmod{\ell}$, every $j$ has a corresponding $k$ such that $-k = j \pmod{\ell}$. Thus, $r(x^j) = r(x^{-k}) = r$ for some $k$, yielding a $M_\leftrightarrow$ fold-CX solution. We phrase the condition this way to draw attention to the collapse of $\mathrm{Stab}(p, q),\; \mathrm{SwapInv}(p, q)$ and $M_\leftrightarrow$ Fold-CX into one algebraically checkable condition.

The collapse immediately yields the defining structural property of MCR codes:
\begin{itemize}
    \item If $r(x^j) = r$, then $j$ \emph{simultaneously}: 
    
    \begin{enumerate}[label=(\alph*)]
        \item induces a $M_\sim$ multiplier automorphism via $\phi_{x^j}$ for $j \in \mathrm{Stab} \cup \mathrm{SwapInv}$, and
        \item provides a fold-transversal CX gate in each of the two $M_\leftrightarrow$ directions of Theorem~\ref{thm:fold_cx} using $\phi_{x^{-j}}$.
    \end{enumerate}
    
    \item If $r(x^j) = r + 1$, then $j$ \emph{simultaneously}:
    \begin{enumerate}[label=(\alph*)]
        \item induces a $M_\sim$ multiplier automorphism via $\sigma \circ \phi_{x^j}$ for $j \in \mathrm{Swap} \cup \mathrm{Inv}$, and,
        \item provides a fold-transversal CX gate in each of the two $M_\sim$ directions of Theorem~\ref{thm:fold_cx} using $\phi_{x^j}$.
    \end{enumerate}
\end{itemize}

We refer to the collapse of $r(x^j)$ into conditions that simultaneously yield an automorphism and two Fold-CX gates as \emph{maximal coupling}: a single algebraic identity ensures that every multiplier whose action on $r$ is fixing or inverting simultaneously furnishes a block-transposing code automorphism and two fold-transversal CX gates. The maximality is precisely this coincidence — three a-priori-distinct ratio conditions forced to a common target by the cube-root identity.

Tables~\ref{tab:k2-mcr} and~\ref{tab:k-large-mcr} list representative MCR codes for a range of $\ell$, along with their parameters $[[n, k, d]]$, stabilizer weights~$w$, and automorphism group sizes. Tables~\ref{tab:k2-logical-gates} and~\ref{tab:k6-logical-gates} demonstrate the full logical gate set achievable for a selected $k = 2$ code and $k = 6$ code using the canonical logical assignment from Sec.~\ref{subsubsec:odd_ell}. Note that different logical assignments will yield different logical gate sets. 

The generating set of automorphisms for the $k=2$ case makes the following Proposition (Prop.~\ref{prop:addressable_hs}) and Corollary (Corollary~\ref{cor:2qubitcliffordgroup}) relevant:
\begin{prop}\label{prop:addressable_hs}
    Addressable $H$ and $S$ can be obtained from the set $\{H^{\otimes2},S^{\otimes2}, CZ, CX_{1\rightarrow2} \}$ on 2 qubits, up to global phase.
\end{prop}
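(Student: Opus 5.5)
The plan is to build everything from explicit conjugation identities among Clifford circuits, tracking global phases only loosely since the statement is up to global phase. First I produce the addressable phase gates $I\otimes S$ and $S\otimes I$. From those, conjugation by $H^{\otimes 2}$ gives an addressable $\sqrt{X}$, and a single-qubit Euler-type decomposition then gives addressable $H$.

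\textbf{Step 1 (addressable $S$).} I would verify the identity
\begin{equation*}
\mathrm{CX}_{1\to2}\,(I\otimes S)\,\mathrm{CX}_{1\to2} \;=\; (S\otimes S)\,\mathrm{CZ}^{\dagger} \;=\; (S\otimes S)\,\mathrm{CZ}.
\end{equation*}
It follows by acting on computational basis states $|a,b\rangle$. Conjugating $I\otimes S$ by $\mathrm{CX}_{1\to2}$ multiplies $|a,b\rangle$ by $i^{a\oplus b}$. Since $a\oplus b = a+b-2ab$, this phase equals $i^{a}\,i^{b}\,(-1)^{ab}$. That is exactly the diagonal action of $(S\otimes S)\,\mathrm{CZ}$; note $\mathrm{CZ}=\mathrm{CZ}^\dagger$. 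Because $\mathrm{CX}_{1\to2}$ is an involution, the identity rearranges to
\begin{equation*}
I\otimes S \;=\; \mathrm{CX}_{1\to2}\,(S\otimes S)\,\mathrm{CZ}\,\mathrm{CX}_{1\to2},
\end{equation*}
which is a word in the generating set. Then $S\otimes I = (S\otimes S)(I\otimes S)^{3}$, since $S^4=I$. Alternatively, one could conjugate by a SWAP, built as $\mathrm{CX}_{1\to2}\mathrm{CX}_{2\to1}\mathrm{CX}_{1\to2}$ with $\mathrm{CX}_{2\to1}=H^{\otimes2}\mathrm{CX}_{1\to2}H^{\otimes2}$, but the direct product is simpler.

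\textbf{Step 2 (addressable $H$).} With $S\otimes I$ in hand, conjugating by $H^{\otimes 2}$ gives
\begin{equation*}
H^{\otimes 2}(S\otimes I)H^{\otimes 2} = (HSH)\otimes I = \sqrt{X}\otimes I.
\end{equation*}
I would then use the single-qubit identity $H \propto S\,(HSH)\,S$, i.e.\ $H = e^{i\pi/4}\,S\sqrt{X}S$ up to a phase that I would fix by a $2\times2$ matrix check. This gives
\begin{equation*}
H\otimes I \;\propto\; (S\otimes I)\,H^{\otimes2}(S\otimes I)H^{\otimes2}\,(S\otimes I).
\end{equation*}
Finally, $I\otimes H = H^{\otimes2}(H\otimes I)$, since $H^2=I$.

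\textbf{Main obstacle.} The only nontrivial step is spotting and verifying the conjugation identity in Step 1; the proof hinges on that one line. Everything after it is routine bookkeeping. As a fallback, I would check the identity in the symplectic representation of $\mathrm{Sp}(4,\mathbb{F}_2)$ on Pauli images, confirming $X_2\mapsto Y_2$-type images match on both sides, and fix signs and global phase with the explicit diagonal computation above. The single-qubit $H$ decomposition is standard and needs only a direct matrix multiplication to pin the phase.
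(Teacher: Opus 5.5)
Your argument is correct. Step~1 uses the same idea as the paper: conjugate a phase layer by $\mathrm{CX}_{1\to2}$, cancel the cross term with $\mathrm{CZ}$, then recover the other qubit's phase gate using $S^{\otimes 2}$.

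Your verification is sharper, though. Computing the diagonal phase $i^{a\oplus b}=i^a i^b(-1)^{ab}$ on basis states shows that your word $\mathrm{CX}_{1\to2}(S\otimes S)\,\mathrm{CZ}\,\mathrm{CX}_{1\to2}$ is exactly $I\otimes S$. The paper checks its word $\mathrm{CX}_{1\to2}\,S^{\otimes 2}\,\mathrm{CX}_{1\to2}\,\mathrm{CZ}$ by tracking Pauli images with signs suppressed. With signs restored, that word equals $Z\otimes S$; the sign enters at $X_1Z_2\mapsto -Y_1Y_2$ under the $\mathrm{CX}$.

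Step~2 is a genuinely different route. The paper writes a single eight-gate word, $S^{\otimes2}\,\mathrm{CX}_{1\to2}\,H^{\otimes2}\,\mathrm{CZ}\,S^{\otimes2}\,H^{\otimes2}\,\mathrm{CX}_{1\to2}\,S^{\otimes2}$, directly in the original generators, and verifies it by Pauli tracking independently of the $S$ construction. You instead bootstrap from the derived $S\otimes I$, obtain $HSH\otimes I$ by $H^{\otimes 2}$-conjugation, and finish with the single-qubit Euler identity.

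Each route buys something different:
\begin{itemize}
\item The paper's word is much shorter. Yours uses three copies of $S\otimes I$, each already a nontrivial word, and length matters when every generator is a physical automorphism or fold layer.
\item Yours needs no search and is transparently correct. It is also exact up to global phase. Restoring signs in the paper's word shows it implements $H\otimes Z$ up to phase, i.e.\ $H_1$ only up to a Pauli correction. This is harmless, since those Paulis are themselves generated.
\end{itemize}

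One small slip in your phase bookkeeping: a direct matrix product gives $S\,(HSH)\,S=e^{i\pi/4}H$. So with $\sqrt{X}:=HSH$ you have $H=e^{-i\pi/4}S\sqrt{X}S$, not $e^{+i\pi/4}$. This does not matter for a statement up to global phase.
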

\begin{proof}
    First, we show that 
    \bea
        S_2 = CX_{1\rightarrow2} \cdot S^{\otimes 2} \cdot CX_{1\rightarrow2} \cdot CZ
    \eea
    This maps $X_1, X_2, Z_1, Z_2$ as follows:
    \begin{gather}
        X_1 \xrightarrow{CZ} X_1Z_2 \xrightarrow{CX_{1\rightarrow2}} Y_1Y_2 \xrightarrow{S^{\otimes 2}} X_1X_2 \xrightarrow{CX_{1\rightarrow2}} X_1 \\
        X_2 \xrightarrow{CZ} Z_1X_2 \xrightarrow{CX_{1\rightarrow2}} Z_1X_2  \xrightarrow{S^{\otimes 2}} Z_1Y_2  \xrightarrow{CX_{1\rightarrow2}} Y_2 \\
        Z_1 \xrightarrow{CZ} Z_1 \xrightarrow{CX_{1\rightarrow2}} Z_1 \xrightarrow{S^{\otimes 2}} Z_1 \xrightarrow{CX_{1\rightarrow2}} Z_1\\
        Z_2 \xrightarrow{CZ} Z_2 \xrightarrow{CX_{1\rightarrow2}} Z_1Z_2 \xrightarrow{S^{\otimes 2}} Z_1Z_2 \xrightarrow{CX_{1\rightarrow2}} Z_2
    \end{gather}
    yielding the action of $S_2$. Given access to $S_2$ and $S^{\otimes 2}$, $S_1$ is additionally accessible.

    Next, we show that
    \bea
        H_1 = S^{\otimes 2} \cdot CX_{1\rightarrow2} \cdot H^{\otimes 2} \cdot CZ \cdot S^{\otimes 2} \cdot H^{\otimes 2} \cdot CX_{1\rightarrow2} \cdot S^{\otimes 2}
    \eea
    This maps $X_1, X_2, Z_1, Z_2$ as follows:
    \begin{align}
    X_1 &\xrightarrow{S^{\otimes 2}} Y_1 \xrightarrow{CX_{1\rightarrow2}} Y_1X_2 \xrightarrow{H^{\otimes 2}} -Y_1Z_2 \xrightarrow{S^{\otimes 2}} X_1Z_2 \xrightarrow{CZ} X_1 \xrightarrow{H^{\otimes 2}} Z_1 \xrightarrow{CX_{1\rightarrow2}} Z_1 \xrightarrow{S^{\otimes 2}} Z_1 \\
    Z_1 &\xrightarrow{S^{\otimes 2}} Z_1 \xrightarrow{CX_{1\rightarrow2}} Z_1 \xrightarrow{H^{\otimes 2}} X_1 \xrightarrow{S^{\otimes 2}} Y_1 \xrightarrow{CZ} Y_1Z_2 \xrightarrow{H^{\otimes 2}} -Y_1X_2 \xrightarrow{CX_{1\rightarrow2}} -Y_1 \xrightarrow{S^{\otimes 2}} X_1 \\
    X_2 &\xrightarrow{S^{\otimes 2}} Y_2 \xrightarrow{CX_{1\rightarrow2}} Z_1Y_2 \xrightarrow{H^{\otimes 2}} -X_1Y_2 \xrightarrow{S^{\otimes 2}} Y_1X_2 \xrightarrow{CZ} X_1Y_2 \xrightarrow{H^{\otimes 2}} -Z_1Y_2 \xrightarrow{CX_{1\rightarrow2}} -Y_2 \xrightarrow{S^{\otimes 2}} X_2 \\
    Z_2 &\xrightarrow{S^{\otimes 2}} Z_2 \xrightarrow{CX_{1\rightarrow2}} Z_1Z_2 \xrightarrow{H^{\otimes 2}} X_1X_2 \xrightarrow{S^{\otimes 2}} Y_1Y_2 \xrightarrow{CZ} X_1X_2 \xrightarrow{H^{\otimes 2}} Z_1Z_2 \xrightarrow{CX_{1\rightarrow2}} Z_2 \xrightarrow{S^{\otimes 2}} Z_2
    \end{align}
    yielding the action of $H_1$. Given access to $H_1$ and $H^{\otimes 2}$, $H_2$ is additionally accessible.
\end{proof}
Note there is nothing special about choosing $CX_{1\rightarrow 2}$ as opposed to $CX_{2\rightarrow1}$, and $CX_{2\rightarrow1}$ can equivalently yield the full Clifford group.
\begin{cor}\label{cor:2qubitcliffordgroup}
    The set $\{H^{\otimes2},S^{\otimes2}, CZ, CX_{1\rightarrow 2} \}$ generates the 2-qubit Clifford group.
\end{cor}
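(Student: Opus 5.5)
By Proposition~\ref{prop:addressable_hs}, the set $\{H^{\otimes2},S^{\otimes2}, CZ, CX_{1\rightarrow 2}\}$ already produces the single-qubit gates $H_1, H_2, S_1, S_2$ up to global phase, and $CX_{1\rightarrow2}$ is in the set itself. The plan is to reduce to the classical fact that $\{H_1,H_2,S_1,S_2,CX_{1\rightarrow 2}\}$ generates the 2-qubit Clifford group $\mathcal{C}_2$ modulo phases (Gottesman). Every element of the set is Clifford, so the generated group lies in $\mathcal{C}_2$. It then suffices to show it contains a standard generating set.

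To make this self-contained rather than cite it, I will work with the symplectic representation. $\mathcal{C}_2$ modulo Paulis and phases is isomorphic to $\mathrm{Sp}(4,\bbF_2)$. Pauli corrections are available because $X = HS^2H$ and $Z = S^2$ are produced by the addressable gates. I would then show the images of $H_i$, $S_i$ and $CX_{1\rightarrow2}$ generate $\mathrm{Sp}(4,\bbF_2)\cong S_6$, which has order $720$. Concretely:
\begin{enumerate}
\item $H_i$ and $S_i$ generate $\mathrm{Sp}(2,\bbF_2)\cong S_3$ on each qubit.
\item Conjugating $CX_{1\rightarrow2}$ by $H_1H_2$ gives $CX_{2\rightarrow1}$.
\item The product $CX_{1\rightarrow2}CX_{2\rightarrow1}CX_{1\rightarrow2}$ gives SWAP.
\end{enumerate}
From these, a standard symplectic Gaussian-elimination argument reduces any $M\in\mathrm{Sp}(4,\bbF_2)$ to the identity: first map the image of $X_1$ to $X_1$ using local gates and CNOTs, then fix $Z_1$, and finish with the remaining one-qubit symplectic block.

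The main obstacle is only bookkeeping: keeping track of phases, so that the result holds "up to global phase," and making sure Pauli corrections are available. Both are covered by the remark above that $X$ and $Z$ are generated. As an alternative check, one could count group order. Once the set is known to contain local Cliffords, SWAP, and CNOT, it is a well-known generating set, so I would mostly cite it and keep the elimination argument brief.
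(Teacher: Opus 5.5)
Your proposal is correct and follows the paper's route. The paper's proof is simply ``immediate from Proposition~\ref{prop:addressable_hs}'': addressable $H_i, S_i$ together with $CX_{1\rightarrow2}$ form the standard Clifford generating set, and your $\mathrm{Sp}(4,\bbF_2)$ elimination just makes that standard fact explicit. One small robustness note is that the circuits in that proposition actually yield $S_2$ and $H_1$ only up to a Pauli (for instance $CX_{1\rightarrow2}\,S^{\otimes2}\,CX_{1\rightarrow2}\,CZ = Z_1 S_2$), so it is cleaner to obtain your Pauli corrections directly: take $Z_2 = (Z_1S_2)^2$, then $Z_1 = (S^{\otimes2})^2 Z_2$, and get $X_i$ by conjugating with $H^{\otimes2}$, rather than relying on $S_i^2 = Z_i$.
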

\begin{proof}
    Immediate from Prop.~\ref{prop:addressable_hs}.
\end{proof}

We end this section with a brief discussion of the contents of Tables~\ref{tab:k2-mcr}--\ref{tab:k6-logical-gates}. 
\begin{itemize}
    \item Entries in Table~\ref{tab:k2-mcr} with $|\mathrm{Gates}| = 10$ generate the 2-qubit Clifford group via automorphism and fold-transversal gate implementation. Entries in Table~\ref{tab:k2-mcr} with $|\mathrm{Gates}| = 9$ are all identically missing the $S^{\otimes 2}$ gate and do not generate the 2-qubit Clifford group. This is because all $S$-type automorphisms for the $|\mathrm{Gates}| = 9$ codes are block-swapping, whereas codes with $|\mathrm{Gates}| = 10$ that admit the $S^{\otimes 2}$ gate possess at least one block-preserving $S$-type automorphism.
    
    \item As $k$ scales, generating the full Clifford group becomes more difficult, and more addressable gates are required. Since we restrict to $\psi_L = \psi_R$, the same permutation is being carried out on both halves of the code, and it is reasonable to expect a ``symmetric'' logical action. Fold-CX gates, in which the logical action is not completely symmetric, are the exception to this rule. However, fold-CX-type gates only seem to break symmetry between the two halves of the codes, and when each half supports more than 1 qubit each, it remains unclear if this can be leveraged to obtain addressability for $k > 2$.
    
    An interesting follow-up question is to explore the asymmetric regime (where $\psi_L \neq \psi_R$) and/or block-mixing automorphisms to determine if relaxing the symmetry constraints allows one to access such addressable gates for larger $k$.

    \item In each case, the naive stabilizer weight resulting from $\text{wt}(pf) + \text{wt}(qf)$ can be, and often is, quite high. By Lemma~\ref{lem:equivalent_modules}, we may use unit scaling of $(pf, qf)$ to obtain a generator of $M_f(p,q)$ with potentially lower weight. Finding such a $u$ that minimizes the stabilizer weight is a variant of the minimum weight codeword search for linear codes and is, in general, NP-hard. Simple methods using lattices \cite{conway_sloane_1999} and information-set decoding (ISD) \cite{prange_1962} are sufficient in practice for the code parameters considered here: the former provides a fast feasible upper bound, and the latter certifies optimality when $\dim\langle f \rangle$ is small, and otherwise refines the bound for larger instances. Lattice and ISD-style methods are both available in the accompanying code. ILP based methods using gurobi \cite{gurobi} or other solvers are sufficient for exact minima.

    Note that one has to take care not to choose any arbitrary element of $M_f(p,q)$, as there often exist many low-weight, easily obtainable vectors that scale $(pf, qf)$ by non-unit elements of $R_\ell$ and do not generate the entirety of $M_f(p,q)$. As such, cyclic shifts of such a vector cannot be used as stabilizers for these codes.

    \item In the $k=2$ case, the generating set of Corollary~\ref{cor:2qubitcliffordgroup} is quite small, and MCR codes are ``over-automorphismed'', even at $\ell = 18$. That is, the size of the automorphism group is vastly larger than the number of obtainable logical actions, $|\mathrm{Gates}|$. Simultaneously, the stabilizer weight of these $k=2$ codes appears to grow with $\ell$. It would be interesting to see if relaxing any of the MCR conditions can yield a $k=2$ family that generates the Clifford group with weight-8 stabilizers.
\end{itemize}

\begin{sidewaystable}[ht]
\centering
\renewcommand{\arraystretch}{1.4}
\begin{tabular}{cccc>{\centering\arraybackslash$}p{8cm}<{$}ccc}
\hline
$[[n, k, d]]$ & $\ell$ & $f$ & $p$ & \multicolumn{1}{c}{$q$}  & $w$ & $|\mathrm{Aut}|$ & $|\mathrm{Gates}|$ \\
\hline
$[[18, 2, 5]]$  & 9 & $x + 1$ & $1$ & x^7 + x^4 + x^3 + x & 8 & 108 & 10 \\
$[[22, 2, 6]]$  & 11 & $x + 1$ & $1$ & x^9 + x^5 + x^4 + x^3 + x & 8 & 220 & 10 \\
$[[30, 2, 7]]$  & 15 & $x + 1$ & $1$ & x^{13} + x^9 + x^7 + x^6 + x^5 + x^4 + x + 1 & 8 & 120 & 9 \\
$[[50, 2, 9]]$  & 25 & $x + 1$ & $1$ & x^{23} + x^{22} + x^{18} + x^{17} + x^{15} + x^{13} + x^{12} + x^{10} + x^8 + x^7 + x^3 + x^2 + 1 &  12 & 1000 & 9 \\
$[[54, 2, 10]]$ & 27 & $x + 1$ & 1 & x^{25} + x^{22} + x^{21} + x^{19} + x^{18} + x^{16} + x^{13} + x^{12} + x^{10} + x^7 + x^4 + x^3 + x + 1 & 16 & 972 & 10 \\
$[[58, 2, 11]]$ & 29 & $x + 1$ & 1 & x^{27} + x^{26} + x^{21} + x^{19} + x^{18} + x^{17} + x^{15} + x^{14} + x^{12} + x^{11} + x^{10} + x^8 + x^3 + x^2 &  12  & 1624 & 9 \\
$[[66, 2, 13]]$ & 33 & $x + 1$ & 1 & x^{31} + x^{30} + x^{28} + x^{25} + x^{24} + x^{21} + x^{19} + x^{18} + x^{16} + x^{13} + x^{11} + x^{10} + x^7 + x^6 + x^4 + x &  12 & 660 & 10 \\
\hline
\end{tabular}
\caption{MCR codes with $k = 2$. All codes are defined by $f = x + 1$. A representative $(p, q)$ is used for concreteness, alternative choices exist for each row. With only 2 logicals, the set of obtainable Clifford gates via automorphism is limited. Entries with $|\mathrm{Gates}| = 10$ gates have the logical gates of \ref{tab:k2-logical-gates} and generate the 2-qubit Clifford group. Codes with $|\mathrm{Gates}| = 9$ are identically missing $S^{\otimes 2}$ and do not generate the 2-qubit Clifford group. $|\mathrm{Gates}|$ does not include gates achievable from composing automorphisms, and only lists gates achievable from Tables~\ref{tab:block_sep_auts} and~\ref{tab:fold_gates}. Distance was found using the AB reduction methods \cite{wang2022distanceboundsgeneralizedbicycle} and the Gurobi optimization package \cite{gurobi}. $w$ refers to the stabilizer weight that is achievable for a given code. Stabilizer weights are obtainable using the lattice/ISD-style methods found in \texttt{low\_weight\_generator.sage}, located in the associated code repo, and the Gurobi optimization package was used to confirm minimality\cite{gurobi}. The collection of codes displayed here was chosen to demonstrate increasing distances while keeping $k = 2$. Each instance included here is the smallest $n$ for which the respective $d$ was achieved. Codes here were collected via \texttt{MCR\_code\_search.sage}, and analyzed with \texttt{gb\_code\_analysis.sage} found in \url{https://github.com/ajdav136/GBAutomorphisms} }
\label{tab:k2-mcr}
\end{sidewaystable}

\begin{sidewaystable}[ht]
\centering
\renewcommand{\arraystretch}{1.4}
\begin{tabular}{cccc>{\centering\arraybackslash$}p{8cm}<{$}ccc}
\hline
$[[n, k, d]]$ & $\ell$ & $f$ & $p$ & \multicolumn{1}{c}{$q$}  & $w$ & $|\mathrm{Aut}|$ & $|\mathrm{Gates}|$ \\
\hline
$[[30, 6, 5]]$ & 15 & $x^3 + 1$ & $1$ & x^{10} + x^9 + x^6 + x^5 + 1 & 8 & 240 & 20 \\
$[[66, 6, 8]]$ & 33 & $x^3 + 1$ & $1$ & x^{27} + x^{22} + x^{15} + x^{12} + x^{11} + x^9 + x^3 & 12 & 1320 & 22 \\
$[[78, 6, 9]]$ & 39 & $x^3 + 1$ & 1 & x^{33} + x^{26} + x^{24} + x^{21} + x^{18} + x^{15} + x^{13} + x^6 + 1 & 12 & 1872 & 20\\
$[[90, 10, 10]]$ & 45 & $x^5 + x^3 + x  + 1$  & 1 & x^{38} + x^{36} + x^{35} + x^{32} + x^{27} + x^{23} + x^{20} + x^{18} + x^{17} + x^{15} + x^9 + x^8 + x^5 + x^2 + 1 & $\leq 18$ & 1080 & 30 \\
$[[102, 6, 11]]$ & 51 & $x^3 + 1$  & $1$  & x^{45} + x^{34} + x^{33} + x^{30} + x^{27} + x^{24} + x^{21} + x^{18} + x^{17} + x^6 + 1 & $\leq 12$ & 1632 & 20\\
$[[102, 18, \leq 12]]$ & 51 & $x^9 + x^4 + x^2 + 1$  & $1$ & x^{39} + x^{38} + x^{36} + x^{35} + x^{34} + x^{33} + x^{32} + x^{30} + x^{29} + x^{28} + x^{27} + x^{25} + x^{23} + x^{21} + x^{19} + x^{18} + x^{15} + x^{14} + x^{11} + x^{10} + x^8 + x^6 + x^4 + x^3 + x & $ \leq 22$ & 816 & 76\\ 
$[[110, 10, 10]]$ & 55 & $x^5 + 1$ & $1$ &x^{45} + x^{44} + x^{33} + x^{25} + x^{22} + x^{20} + x^{15} + x^{11} + x^5 + 1  & $\leq 16$ & 4400 & 40 \\
\hline
\end{tabular}
\caption{MCR codes with $k > 2$. Only codes with competitive distance for their size are listed; a complete list is available from the Sage code in the GitHub repository. These codes were selected to demonstrate a range of $k$ values yielding increasing distances, and a variety of stabilizer weights and automorphism gates achievable at the same $k, d$ values. Codes with $n < 90$ did not yield competitive distance $k = 10$ codes, and are omitted from this list. Observe that as $k$ grows, the number of possible $k$-qubit Clifford gates achievable via automorphism grows, and we no longer exhaust the possible gate collection. Codes here were collected via \texttt{MCR\_code\_search.sage}, and analyzed with \texttt{gb\_code\_analysis.sage} found in \url{https://github.com/ajdav136/GBAutomorphisms}. All $w$ entries listed are obtainable with the code found in \texttt{low\_weight\_generator.sage}. Entries without a $\leq$ symbol have been validated as minima by an ILP.}
\label{tab:k-large-mcr}
\end{sidewaystable}

\FloatBarrier
\begin{table}
\centering
\begingroup
\renewcommand{\arraystretch}{1.18}
\setlength{\tabcolsep}{5pt}
\adjustbox{max width=\linewidth,max totalheight=\dimexpr\textheight-2cm\relax}{%
\begin{tabular}{@{} >{\raggedright\arraybackslash}p{4.3cm}
                    c
                    >{\raggedright\arraybackslash}p{3.0cm}
                    >{\centering\arraybackslash}p{3.8cm}
                    >{\raggedright\arraybackslash}p{5.0cm} @{}}
\toprule
\textbf{$\psi$ \,/\, cycle on $S_\ell$} &
\textbf{Type} &
\textbf{Logical Action} &
\textbf{Logical Circuit} &
\textbf{Other sources (same logical gate)}\\
\midrule

$\phi_{1} = \psi_1 \oplus \psi_1$ \newline $\psi_1 = (0\,1\,2\,3\,4\,5\,6\,7\,8)$ &
\tAut &
Id &
\circwrap{\begin{quantikz}[lqstyle]
 \lstick{$1$} & \qw & \qw\\
 \lstick{$2$} & \qw & \qw
\end{quantikz}} &
All $9$ shifts $\phi_0,\dots,\phi_8$, $M_{\sim}$ multipliers $\phi_{x^{1}},\phi_{x^{4}},\phi_{x^{7}}$. $12$ sources total.\\
\midrule

$\sigma\!\cdot\!\phi_{x^{2}} = \sigma \circ (\psi_{x^2} \oplus \psi_{x^2})$\newline $\psi_{x^2} = (1\,2\,4\,8\,7\,5)(3\,6)$\; &
\tAut &
SWAP &
\circwrap{\begin{quantikz}[lqstyle]
 \lstick{$1$} & \swap{1} & \qw\\
 \lstick{$2$} & \targX{} & \qw
\end{quantikz}} &
$\sigma\phi_{x^{5}}, \ \ \sigma\phi_{x^{8}}$. $3$ sources total.\\
\midrule

$\phi_{x^{1}} = \psi_{x^1} \oplus \psi_{x^1}$\newline $\psi_{x^1} = \mathrm{id};\ +\,H^{\otimes n}$ &
\tH &
$H^{\otimes 2}$ &
\circwrap{\begin{quantikz}[lqstyle]
 \lstick{$1$} & \gate[style={inner sep=-1pt}]{H} & \qw\\
 \lstick{$2$} & \gate[style={inner sep=-1pt}]{H} & \qw
\end{quantikz}} &
$\phi_{x^{4}}, \ \ \phi_{x^{7}}$. $3$ sources total.\\
\midrule

$\Sigma = \sigma \circ (\psi_{x^{-1}} \oplus \psi_{x^{-1}})$\newline $\psi_{x^{-1}} = (1\,8)(2\,7)(3\,6)(4\,5);\ +\,H^{\otimes n}$ &
\tH &
SWAP $\circ\, H^{\otimes 2}$&
\circwrap{\begin{quantikz}[lqstyle]
 \lstick{$1$} & \gate[style={inner sep=-1pt}]{H} & \swap{1} & \qw\\
 \lstick{$2$} & \gate[style={inner sep=-1pt}]{H} & \targX{} & \qw
\end{quantikz}} &
$\sigma\phi_{x^{2}}, \ \sigma\phi_{x^{5}}, \ \sigma\phi_{x^{8}}$. $4$ sources total.\\
\midrule

$\tau = \sigma \circ (\psi_{x^{-1}} \oplus \psi_{x^{-1}})$\newline $\psi_{x^{-1}} = (1\,8)(2\,7)(3\,6)(4\,5);\ +\!\bigotimes\! CZ$ &
\tS &
CZ &
\circwrap{\begin{quantikz}[lqstyle]
 \lstick{$1$} & \ctrl{1} & \qw\\
 \lstick{$2$} & \control{} & \qw
\end{quantikz}} &
Unique ($j{=}1$). $1$ source total.\\
\midrule

$\tau = \psi_{x^{1}} \oplus \psi_{x^{1}}$\newline $\psi_{x^1} = \mathrm{id};\ +\,S^{\otimes n}$ &
\tS &
$S^{\otimes 2}$ &
\circwrap{\begin{quantikz}[lqstyle]
 \lstick{$1$} & \gate[style={inner sep=-1pt}]{S} & \qw\\
 \lstick{$2$} & \gate[style={inner sep=-1pt}]{S} & \qw
\end{quantikz}} &
Unique ($j{=}8$). $1$ source total.\\
\midrule

$\phi_{x^{2}}\!\circ CX_{i\to i+\ell} = (\psi_{x^2} \oplus \psi_{x^2})\circ CX_{i\to i+\ell}$\newline $\psi_{x^2} = (1\,2\,4\,8\,7\,5)(3\,6)$ &
\tCX &
CNOT$(1,2)$ &
\circwrap{\begin{quantikz}[lqstyle]
 \lstick{$1$} & \ctrl{1} & \qw\\
 \lstick{$2$} & \targ{} & \qw
\end{quantikz}} &
$\phi_{x^{5}}, \ \ \phi_{x^{8}}$. $3$ sources total.\\
\midrule

$\phi_{x^{2}}\!\circ CX_{i+\ell\to i} = (\psi_{x^2} \oplus \psi_{x^2})\circ CX_{i+\ell\to i}$\newline $\psi_{x^2} = (1\,2\,4\,8\,7\,5)(3\,6)$ &
\tCX &
CNOT$(2,1)$ &
\circwrap{\begin{quantikz}[lqstyle]
 \lstick{$1$} & \targ{} & \qw\\
 \lstick{$2$} & \ctrl{-1} & \qw
\end{quantikz}} &
$\phi_{x^{5}}, \ \ \phi_{x^{8}}$. $3$ sources total.\\
\midrule

$\sigma\!\cdot\!\phi_{x^{1}}\!\circ CX_{i\to i+\ell} = \sigma\circ(\psi_{x^1} \oplus \psi_{x^1})\circ CX_{i\to i+\ell}$\newline $\psi_{x^1} = \mathrm{id};\ +\,H^{\otimes n}$ &
\tCX &
CNOT$(2,1) \circ$ CNOT$(1,2) \circ H^{\otimes 2}$ &
\circwrap{\begin{quantikz}[lqstyle]
 \lstick{$1$} & \gate[style={inner sep=-1pt}]{H} & \ctrl{1} & \targ{} & \qw\\
 \lstick{$2$} & \gate[style={inner sep=-1pt}]{H} & \targ{} & \ctrl{-1} & \qw
\end{quantikz}} &
$\sigma\phi_{x^{4}}, \ \ \sigma\phi_{x^{7}}$. $3$ sources total.\\
\midrule

$\sigma\!\cdot\!\phi_{x^{1}}\!\circ CX_{i+\ell\to i} = \sigma\circ(\psi_{x^1} \oplus \psi_{x^1})\circ CX_{i+\ell\to i}$\newline $\psi_{x^1} = \mathrm{id};\ +\,H^{\otimes n}$ &
\tCX &
CNOT$(2,1) \circ$  SWAP $\circ H^{\otimes 2}$&
\circwrap{\begin{quantikz}[lqstyle]
 \lstick{$1$} & \gate[style={inner sep=-1pt}]{H} & \swap{1} & \targ{} & \qw\\
 \lstick{$2$} & \gate[style={inner sep=-1pt}]{H} & \targX{} & \ctrl{-1} & \qw
\end{quantikz}} &
$\sigma\phi_{x^{4}}, \ \ \sigma\phi_{x^{7}}$. $3$ sources total.\\
\bottomrule
\end{tabular}}
\endgroup

\caption{Unique logical Clifford gates of the $[[18,2,5]]$ Generalized Bicycle code from Table~\ref{tab:k2-mcr}. Permutations act on the coordinates $\{0,\dots,8\}$ of
each length-$\ell$ block; $\sigma$ is the full $(i,i{+}\ell)$ block swap, $\phi_i$ a cyclic
shift, and $\phi_{x^j}=\psi_{x^j}\!\oplus\!\psi_{x^j}$ the multiplier $x\mapsto x^{j}$; $n=2\ell$
is the physical-qubit count, so $H^{\otimes n}$, $S^{\otimes n}$ and $\bigotimes CZ$ denote the
transversal single-qubit / fold layers. Logical qubit~$1$ is the top wire, qubit~$2$ the bottom;
circuits read left$\to$right (left gate first). The $36$ enumerated physical operators collapse
to these $10$ distinct logical actions. By Corollay~\ref{cor:2qubitcliffordgroup}, these gates generate the 2-qubit Clifford group. \emph{Type:} \tAut\ bare permutation automorphism
($M_\sim$); \tH\ $H$-type fold-transversal ($M_\leftrightarrow$ ZX-duality $+$ transversal $H$);
\tS\ $S$-type ($S$/$CZ$ on the fold of an $M_\leftrightarrow$ involution); \tCX\ $CX$-type
(transversal CNOT along the $(i,i{+}\ell)$ fold $+$ a repairing multiplier). Codes with $|\mathrm{Gates}|= 10$ in Table~\ref{tab:k2-mcr} have the same logical action/circuit entries as here. Codes with $|\mathrm{Gates}| = 9$ are missing the $S^{\otimes2}$ logical action.}
\label{tab:k2-logical-gates}
\end{table}

\FloatBarrier
\begingroup
\footnotesize
\renewcommand{\arraystretch}{1.1}
\setlength{\tabcolsep}{3pt}
\newcommand{\cir}[1]{\adjustbox{max width=\linewidth,max totalheight=2.9cm}{%
  \begin{quantikz}[
  row sep={0.34cm,between origins},
  column sep=0.25cm,
  operator/.append style={
      font=\fontsize{2}{3}\selectfont,
      inner xsep=0.5pt,
      inner ysep=0.25pt,
      minimum width=0.2cm,
      minimum height=0.1cm
      }
  ]#1\end{quantikz}}}

\def\gH{\gate[style={inner sep=-1pt}]{H}}
\def\gHall{\gate[6]{H^{\otimes 6}}}

\centering
\begin{longtable}{@{} >{\raggedright\arraybackslash}p{3.75cm}
                       c
                       >{\raggedright\arraybackslash}p{3.0cm}
                       >{\centering\arraybackslash}p{4.6cm}
                       >{\raggedright\arraybackslash}p{2.4cm} @{}}
\caption{Unique logical Clifford gates of the $[[30,6,5]]$ Generalized Bicycle code from ~\ref{tab:k-large-mcr}.
Permutations act on coordinates $\{0,\dots,14\}$ of each length-$\ell$ block; $\sigma$ is the full $(i,i{+}\ell)$ block swap,
$\phi_i$ a cyclic shift, $\phi_{x^j}=\psi_{x^j}\!\oplus\!\psi_{x^j}$ the multiplier $x\mapsto x^{j}$; $n=2\ell=30$, so
$H^{\otimes n}$ and $\bigotimes CZ$ denote transversal/fold layers. There are $k=6$ logical qubits ($1$=top wire);
circuits read left$\to$right (left gate first). $H^{\otimes 6}$ denotes a logical Hadamard on each of the $k=6$ logical qubits, not to be confused with $H^{\oplus n}$, a physical Hadamard on each of the physicals. $52$ physical operators collapse to these $20$ distinct logical actions.
\emph{Type:} \tAut\ permutation automorphism; \tH\ $H$-type; \tS\ $S$-type; \tCX\ $CX$-type fold-transversal.}
\label{tab:k6-logical-gates}\\
\toprule
\textbf{$\psi$ \,/\, cycle on $S_\ell$} & \textbf{Type} & \textbf{Logical Action} & \textbf{Logical Circuit} & \textbf{Other sources}\\
\midrule
\endfirsthead
\caption[]{(continued)}\\
\toprule
\textbf{$\psi$ \,/\, cycle on $S_\ell$} & \textbf{Type} & \textbf{Logical Action} & \textbf{Logical Circuit} & \textbf{Other sources}\\
\midrule
\endhead
\bottomrule
\endfoot

$\phi_3 = \psi_3\oplus\psi_3$\newline $\psi_3=(0\,3\,6\,9\,12)$\allowbreak$(1\,4\,7\,10\,13)$\allowbreak$(2\,5\,8\,11\,14)$ &
\tAut & Id &
\cir{\lstick{$1$}&\qw\\ \lstick{$2$}&\qw\\ \lstick{$3$}&\qw\\ \lstick{$4$}&\qw\\ \lstick{$5$}&\qw\\ \lstick{$6$}&\qw} &
shifts $\phi_0,\phi_6,\phi_9,\phi_{12}$; $\phi_{x^1},\phi_{x^4}$. $7$ total.\\
\midrule

$\phi_1 = \psi_1\oplus\psi_1$\newline $\psi_1:i\mapsto i{+}1$ &
\tAut & SWAP$(5,6)$ SWAP$(4,5)$ SWAP$(2,3)$ SWAP$(1,2)$ &
\cir{\lstick{$1$}&\swap{1}&\qw&\qw&\qw&\qw\\ \lstick{$2$}&\targX{}&\swap{1}&\qw&\qw&\qw\\ \lstick{$3$}&\qw&\targX{}&\qw&\qw&\qw\\ \lstick{$4$}&\qw&\qw&\swap{1}&\qw&\qw\\ \lstick{$5$}&\qw&\qw&\targX{}&\swap{1}&\qw\\ \lstick{$6$}&\qw&\qw&\qw&\targX{}&\qw} &
$\phi_4,\phi_7,\phi_{10},\phi_{13}$. $5$ total.\\
\midrule

$\phi_2 = \psi_2\oplus\psi_2$\newline $\psi_2:i\mapsto i{+}2$ &
\tAut & SWAP$(5,6)$ SWAP$(4,6)$ SWAP$(2,3)$ SWAP$(1,3)$ &
\cir{\lstick{$1$}&\swap{2}&\qw&\qw&\qw&\qw\\ \lstick{$2$}&\qw&\swap{1}&\qw&\qw&\qw\\ \lstick{$3$}&\targX{}&\targX{}&\qw&\qw&\qw\\ \lstick{$4$}&\qw&\qw&\swap{2}&\qw&\qw\\ \lstick{$5$}&\qw&\qw&\qw&\swap{1}&\qw\\ \lstick{$6$}&\qw&\qw&\targX{}&\targX{}&\qw} &
$\phi_5,\phi_8,\phi_{11},\phi_{14}$. $5$ total.\\
\midrule

$\phi_{x^{11}} = \psi_{x^{11}}\oplus\psi_{x^{11}}$\newline $\psi_{x^{11}}=(1\,11)$\allowbreak$(2\,7)$\allowbreak$(4\,14)$\allowbreak$(5\,10)$\allowbreak$(8\,13)$ &
\tAut & SWAP$(5,6)$ SWAP$(2,3)$ &
\cir{\lstick{$1$}&\qw&\qw \\ \lstick{$2$}&\swap{1}&\qw \\ \lstick{$3$}&\targX{}&\qw \\ \lstick{$4$}&\qw&\qw \\ \lstick{$5$}&\swap{1}&\qw \\ \lstick{$6$}&\targX{}&\qw  } &
$\phi_{x^{14}}$. $2$ total.\\
\midrule

$\sigma\!\cdot\!\phi_{x^{7}} = \sigma\circ(\psi_{x^{7}}\oplus\psi_{x^{7}})$\newline $\psi_{x^{7}}=(1\,7\,4\,13)$\allowbreak$(2\,14\,8\,11)$\allowbreak$(3\,6\,12\,9)$ &
\tAut & SWAP$(3,6)$ SWAP$(2,5)$ SWAP$(1,4)$ &
\cir{\lstick{$1$}&\swap{3}&\qw&\qw&\qw\\ \lstick{$2$}&\qw&\swap{3}&\qw&\qw\\ \lstick{$3$}&\qw&\qw&\swap{3}&\qw\\ \lstick{$4$}&\targX{}&\qw&\qw&\qw\\ \lstick{$5$}&\qw&\targX{}&\qw&\qw\\ \lstick{$6$}&\qw&\qw&\targX{}&\qw} &
$\sigma\phi_{x^{13}}$. $2$ total.\\
\midrule

$\sigma\!\cdot\!\phi_{x^{2}} = \sigma\circ(\psi_{x^{2}}\oplus\psi_{x^{2}})$\newline $\psi_{x^{2}}=(1\,2\,4\,8)$\allowbreak$(3\,6\,12\,9)$\allowbreak$(5\,10)$\allowbreak$(7\,14\,13\,11)$ &
\tAut & SWAP$(3,5)$ SWAP$(2,6)$ SWAP$(1,4)$ &
\cir{\lstick{$1$}&\swap{3}&\qw&\qw&\qw\\ \lstick{$2$}&\qw&\swap{4}&\qw&\qw\\ \lstick{$3$}&\qw&\qw&\swap{2}&\qw\\ \lstick{$4$}&\targX{}&\qw&\qw&\qw\\ \lstick{$5$}&\qw&\qw&\targX{}&\qw\\ \lstick{$6$}&\qw&\targX{}&\qw&\qw} &
$\sigma\phi_{x^{8}}$. $2$ total.\\
\midrule

$\phi_{x^{7}} = \psi_{x^{7}}\oplus\psi_{x^{7}}$\newline $\psi_{x^{7}}=(1\,7\,4\,13)$\allowbreak$(2\,14\,8\,11)$\allowbreak$(3\,6\,12\,9)$\newline $+\,H^{\otimes n}$ &
\tH & $H^{\otimes 6}$ &
\cir{\lstick{$1$}&\gHall&\qw\\ \lstick{$2$}&&\qw\\ \lstick{$3$}&&\qw\\ \lstick{$4$}&&\qw\\ \lstick{$5$}&&\qw\\ \lstick{$6$}&&\qw} &
$\phi_{x^{13}}$. $2$ total.\\
\midrule

$\phi_{x^{2}} = \psi_{x^{2}}\oplus\psi_{x^{2}}$\newline $\psi_{x^{2}}=(1\,2\,4\,8)$\allowbreak$(3\,6\,12\,9)$\allowbreak$(5\,10)$\allowbreak$(7\,14\,13\,11)$\newline $+\,H^{\otimes n}$ &
\tH & SWAP$(5,6)$ SWAP$(2,3)\;H^{\otimes 6}$ &
\cir{\lstick{$1$}&\gHall&\qw&\qw\\ \lstick{$2$}&&\swap{1}&\qw\\ \lstick{$3$}&&\targX{}&\qw\\ \lstick{$4$}&&\qw&\qw\\ \lstick{$5$}&&\swap{1}&\qw\\ \lstick{$6$}&&\targX{}&\qw} &
$\phi_{x^{8}}$. $2$ total.\\
\midrule

$\Sigma = \sigma\circ(\psi_{x^{-1}}\oplus\psi_{x^{-1}})$\newline $\psi_{x^{-1}}=(1\,14)$\allowbreak$(2\,13)$\allowbreak$(3\,12)$\allowbreak$(4\,11)$\allowbreak$(5\,10)$\allowbreak$(6\,9)$\allowbreak$(7\,8)$\newline $+\,H^{\otimes n}$ &
\tH & SWAP$(3,5)$ SWAP$(2,6)$ SWAP$(1,4)\;H^{\otimes 6}$ &
\cir{\lstick{$1$}&\gHall&\swap{3}&\qw&\qw&\qw\\ \lstick{$2$}&&\qw&\swap{4}&\qw&\qw\\ \lstick{$3$}&&\qw&\qw&\swap{2}&\qw\\ \lstick{$4$}&&\targX{}&\qw&\qw&\qw\\ \lstick{$5$}&&\qw&\qw&\targX{}&\qw\\ \lstick{$6$}&&\qw&\targX{}&\qw&\qw} &
$\sigma\phi_{x^{11}},\sigma\phi_{x^{14}}$. $3$ total.\\
\midrule

$\sigma\!\cdot\!\phi_{x^{1}} = \sigma\circ(\psi_{x^{1}}\oplus\psi_{x^{1}})$\newline $\psi_{x^{1}}=\mathrm{id};\ +\,H^{\otimes n}$ &
\tH & SWAP$(3,6)$ SWAP$(2,5)$ SWAP$(1,4)\;H^{\otimes 6}$ &
\cir{\lstick{$1$}&\gHall&\swap{3}&\qw&\qw&\qw\\ \lstick{$2$}&&\qw&\swap{3}&\qw&\qw\\ \lstick{$3$}&&\qw&\qw&\swap{3}&\qw\\ \lstick{$4$}&&\targX{}&\qw&\qw&\qw\\ \lstick{$5$}&&\qw&\targX{}&\qw&\qw\\ \lstick{$6$}&&\qw&\qw&\targX{}&\qw} &
$\sigma\phi_{x^{4}}$. $2$ total.\\
\midrule

$\tau = \sigma\circ(\psi_{x^{-1}}\oplus\psi_{x^{-1}})$\;($j{=}1$)\newline $\psi_{x^{-1}}=(1\,14)$\allowbreak$(2\,13)$\allowbreak$(3\,12)$\allowbreak$(4\,11)$\allowbreak$(5\,10)$\allowbreak$(6\,9)$\allowbreak$(7\,8)$\newline $+\!\bigotimes\! CZ$ &
\tS & CZ$(3,5)$ CZ$(2,6)$ CZ$(1,4)$ &
\cir{\lstick{$1$}&\ctrl{3}&\qw&\qw&\qw\\ \lstick{$2$}&\qw&\ctrl{4}&\qw&\qw\\ \lstick{$3$}&\qw&\qw&\ctrl{2}&\qw\\ \lstick{$4$}&\control{}&\qw&\qw&\qw\\ \lstick{$5$}&\qw&\qw&\control{}&\qw\\ \lstick{$6$}&\qw&\control{}&\qw&\qw} &
$S$-fold $j{=}4$. $2$ total.\\
\midrule

$\tau = \sigma\circ(\psi_{x^{4}}\oplus\psi_{x^{4}})$\;($j{=}11$)\newline $\psi_{x^{4}}=(1\,4)$\allowbreak$(2\,8)$\allowbreak$(3\,12)$\allowbreak$(6\,9)$\allowbreak$(7\,13)$\allowbreak$(11\,14)$\newline $+\!\bigotimes\! CZ$ &
\tS & CZ$(3,6)$ CZ$(2,5)$ CZ$(1,4)$ &
\cir{\lstick{$1$}&\ctrl{3}&\qw&\qw&\qw\\ \lstick{$2$}&\qw&\ctrl{3}&\qw&\qw\\ \lstick{$3$}&\qw&\qw&\ctrl{3}&\qw\\ \lstick{$4$}&\control{}&\qw&\qw&\qw\\ \lstick{$5$}&\qw&\control{}&\qw&\qw\\ \lstick{$6$}&\qw&\qw&\control{}&\qw} &
$S$-fold $j{=}14$. $2$ total.\\
\midrule

$\phi_{x^{7}}\!\circ CX_{i\to i+\ell} = (\psi_{x^{7}}\oplus\psi_{x^{7}})\circ CX_{i\to i+\ell}$\newline $\psi_{x^{7}}=(1\,7\,4\,13)$\allowbreak$(2\,14\,8\,11)$\allowbreak$(3\,6\,12\,9)$ &
\tCX & CX$(3,6)$ CX$(2,5)$ CX$(1,4)$ &
\cir{\lstick{$1$}&\ctrl{3}&\qw&\qw&\qw\\ \lstick{$2$}&\qw&\ctrl{3}&\qw&\qw\\ \lstick{$3$}&\qw&\qw&\ctrl{3}&\qw\\ \lstick{$4$}&\targ{}&\qw&\qw&\qw\\ \lstick{$5$}&\qw&\targ{}&\qw&\qw\\ \lstick{$6$}&\qw&\qw&\targ{}&\qw} &
$\phi_{x^{13}}$. $2$ total.\\
\midrule

$\phi_{x^{2}}\!\circ CX_{i\to i+\ell} = (\psi_{x^{2}}\oplus\psi_{x^{2}})\circ CX_{i\to i+\ell}$\newline $\psi_{x^{2}}=(1\,2\,4\,8)$\allowbreak$(3\,6\,12\,9)$\allowbreak$(5\,10)$\allowbreak$(7\,14\,13\,11)$ &
\tCX & SWAP$(5,6)$ CX$(3,5)$ CX$(2,6)$ SWAP$(2,3)$ CX$(1,4)$ &
\cir{\lstick{$1$}&\ctrl{3}&\qw&\qw&\qw&\qw&\qw\\ \lstick{$2$}&\qw&\swap{1}&\ctrl{4}&\qw&\qw&\qw\\ \lstick{$3$}&\qw&\targX{}&\qw&\ctrl{2}&\qw&\qw\\ \lstick{$4$}&\targ{}&\qw&\qw&\qw&\qw&\qw\\ \lstick{$5$}&\qw&\qw&\qw&\targ{}&\swap{1}&\qw\\ \lstick{$6$}&\qw&\qw&\targ{}&\qw&\targX{}&\qw} &
$\phi_{x^{8}}$. $2$ total.\\
\midrule

$\phi_{x^{7}}\!\circ CX_{i+\ell\to i} = (\psi_{x^{7}}\oplus\psi_{x^{7}})\circ CX_{i+\ell\to i}$\newline $\psi_{x^{7}}=(1\,7\,4\,13)$\allowbreak$(2\,14\,8\,11)$\allowbreak$(3\,6\,12\,9)$ &
\tCX & CX$(6,3)$ CX$(5,2)$ CX$(4,1)$ &
\cir{\lstick{$1$}&\targ{}&\qw&\qw&\qw\\ \lstick{$2$}&\qw&\targ{}&\qw&\qw\\ \lstick{$3$}&\qw&\qw&\targ{}&\qw\\ \lstick{$4$}&\ctrl{-3}&\qw&\qw&\qw\\ \lstick{$5$}&\qw&\ctrl{-3}&\qw&\qw\\ \lstick{$6$}&\qw&\qw&\ctrl{-3}&\qw} &
$\phi_{x^{13}}$. $2$ total.\\
\midrule

$\phi_{x^{2}}\!\circ CX_{i+\ell\to i} = (\psi_{x^{2}}\oplus\psi_{x^{2}})\circ CX_{i+\ell\to i}$\newline $\psi_{x^{2}}=(1\,2\,4\,8)$\allowbreak$(3\,6\,12\,9)$\allowbreak$(5\,10)$\allowbreak$(7\,14\,13\,11)$ &
\tCX & CX$(6,3)$ CX$(5,2)$ SWAP$(5,6)$ CX$(4,1)$ SWAP$(2,3)$ &
\cir{\lstick{$1$}&\qw&\targ{}&\qw&\qw&\qw&\qw\\ \lstick{$2$}&\swap{1}&\qw&\qw&\targ{}&\qw&\qw\\ \lstick{$3$}&\targX{}&\qw&\qw&\qw&\targ{}&\qw\\ \lstick{$4$}&\qw&\ctrl{-3}&\qw&\qw&\qw&\qw\\ \lstick{$5$}&\qw&\qw&\swap{1}&\ctrl{-3}&\qw&\qw\\ \lstick{$6$}&\qw&\qw&\targX{}&\qw&\ctrl{-3}&\qw} &
$\phi_{x^{8}}$. $2$ total.\\
\midrule

$\sigma\!\cdot\!\phi_{x^{1}}\!\circ CX_{i\to i+\ell} = \sigma\circ(\psi_{x^{1}}\oplus\psi_{x^{1}})\circ CX_{i\to i+\ell}$\newline $\psi_{x^{1}}=\mathrm{id};\ +\,H^{\otimes n}$ &
\tCX & CX$(6,3)$ CX$(5,2)$ CX$(4,1)$ CX$(3,6)$ CX$(2,5)$ CX$(1,4)\;H^{\otimes 6}$ &
\cir{\lstick{$1$}&\gHall&\ctrl{3}&\qw&\qw&\targ{}&\qw&\qw&\qw\\ \lstick{$2$}&&\qw&\ctrl{3}&\qw&\qw&\targ{}&\qw&\qw\\ \lstick{$3$}&&\qw&\qw&\ctrl{3}&\qw&\qw&\targ{}&\qw\\ \lstick{$4$}&&\targ{}&\qw&\qw&\ctrl{-3}&\qw&\qw&\qw\\ \lstick{$5$}&&\qw&\targ{}&\qw&\qw&\ctrl{-3}&\qw&\qw\\ \lstick{$6$}&&\qw&\qw&\targ{}&\qw&\qw&\ctrl{-3}&\qw} &
$\sigma\phi_{x^{4}}$. $2$ total.\\
\midrule

$\sigma\!\cdot\!\phi_{x^{11}}\!\circ CX_{i\to i+\ell} = \sigma\circ(\psi_{x^{11}}\oplus\psi_{x^{11}})\circ CX_{i\to i+\ell}$\newline $\psi_{x^{11}}=(1\,11)$\allowbreak$(2\,7)$\allowbreak$(4\,14)$\allowbreak$(5\,10)$\allowbreak$(8\,13);\ +\,H^{\otimes n}$ &
\tCX & CX$(6,3)$ CX$(5,2)$ SWAP$(5,6)$ CX$(4,1)$ CX$(3,5)$ CX$(2,6)$ SWAP$(2,3)$ CX$(1,4)\;H^{\otimes 6}$ &
\cir{\lstick{$1$}&\gHall&\ctrl{3}&\qw&\qw&\qw&\targ{}&\qw&\qw&\qw&\qw\\ \lstick{$2$}&&\qw&\swap{1}&\ctrl{4}&\qw&\qw&\qw&\targ{}&\qw&\qw\\ \lstick{$3$}&&\qw&\targX{}&\qw&\ctrl{2}&\qw&\qw&\qw&\targ{}&\qw\\ \lstick{$4$}&&\targ{}&\qw&\qw&\qw&\ctrl{-3}&\qw&\qw&\qw&\qw\\ \lstick{$5$}&&\qw&\qw&\qw&\targ{}&\qw&\swap{1}&\ctrl{-3}&\qw&\qw\\ \lstick{$6$}&&\qw&\qw&\targ{}&\qw&\qw&\targX{}&\qw&\ctrl{-3}&\qw} &
$\sigma\phi_{x^{14}}$. $2$ total.\\
\midrule

$\sigma\!\cdot\!\phi_{x^{1}}\!\circ CX_{i+\ell\to i} = \sigma\circ(\psi_{x^{1}}\oplus\psi_{x^{1}})\circ CX_{i+\ell\to i}$\newline $\psi_{x^{1}}=\mathrm{id};\ +\,H^{\otimes n}$ &
\tCX & CX$(6,3)$ CX$(5,2)$ CX$(4,1)$ SWAP$(3,6)$ SWAP$(2,5)$ SWAP$(1,4)\;H^{\otimes 6}$ &
\cir{\lstick{$1$}&\gHall&\swap{3}&\qw&\qw&\targ{}&\qw&\qw&\qw\\ \lstick{$2$}&&\qw&\swap{3}&\qw&\qw&\targ{}&\qw&\qw\\ \lstick{$3$}&&\qw&\qw&\swap{3}&\qw&\qw&\targ{}&\qw\\ \lstick{$4$}&&\targX{}&\qw&\qw&\ctrl{-3}&\qw&\qw&\qw\\ \lstick{$5$}&&\qw&\targX{}&\qw&\qw&\ctrl{-3}&\qw&\qw\\ \lstick{$6$}&&\qw&\qw&\targX{}&\qw&\qw&\ctrl{-3}&\qw} &
$\sigma\phi_{x^{4}}$. $2$ total.\\
\midrule

$\sigma\!\cdot\!\phi_{x^{11}}\!\circ CX_{i+\ell\to i} = \sigma\circ(\psi_{x^{11}}\oplus\psi_{x^{11}})\circ CX_{i+\ell\to i}$\newline $\psi_{x^{11}}=(1\,11)$\allowbreak$(2\,7)$\allowbreak$(4\,14)$\allowbreak$(5\,10)$\allowbreak$(8\,13);\ +\,H^{\otimes n}$ &
\tCX & CX$(6,3)$ CX$(5,2)$ CX$(4,1)$ SWAP$(3,5)$ SWAP$(2,6)$ SWAP$(1,4)\;H^{\otimes 6}$ &
\cir{\lstick{$1$}&\gHall&\swap{3}&\qw&\qw&\targ{}&\qw&\qw&\qw\\ \lstick{$2$}&&\qw&\swap{4}&\qw&\qw&\targ{}&\qw&\qw\\ \lstick{$3$}&&\qw&\qw&\swap{2}&\qw&\qw&\targ{}&\qw\\ \lstick{$4$}&&\targX{}&\qw&\qw&\ctrl{-3}&\qw&\qw&\qw\\ \lstick{$5$}&&\qw&\qw&\targX{}&\qw&\ctrl{-3}&\qw&\qw\\ \lstick{$6$}&&\qw&\targX{}&\qw&\qw&\qw&\ctrl{-3}&\qw} &
$\sigma\phi_{x^{14}}$. $2$ total.\\
\end{longtable}
\endgroup
\FloatBarrier

\section{Conclusion and Future Directions}

This work demonstrates that the rich algebraic machinery developed over decades for classical cyclic codes tying together the polynomial ring, circulant matrix, and the coordinate spaces extends naturally to the quantum setting through Generalized Bicycle codes. By recognizing that the stabilizer rowspaces of a GB code are cyclic submodules of $R_\ell^2$, we inherit an analogous three-space dependency for quantum codes, one in which ring automorphisms of $R_\ell$ lift directly to code automorphisms on the physical qubit space. This shifts the search for automorphisms, fold-transversal gates, and logical actions from a brute-force combinatorial problem on $S_{2\ell}$ to a tractable algebraic problem on the defining polynomials $f, p, q$ and the block length $\ell$. The Maximal Cube Root construction of Sec.~\ref{sec:MaxCubeRoot} serves as a proof of concept of this framework; by imposing a single algebraic identity, $r^2 + r + 1 = 0$ with $r = pq^{-1}$, we can guarantee, by construction, a large and well-characterized automorphism group together with an accompanying set of fold-transversal gates.

We emphasize, however, that this paper is a first step rather than a complete program. The inverse design demonstrated here, building a code around a rich automorphism structure, prioritizes algebraic simplicity at the cost of distance and stabilizer weight, a tradeoff that is not useful in practice. The Maximal Cube Root codes we present should be viewed as illustrative rather than practically optimal, and designing a code around desired automorphisms while simultaneously controlling distance and locality is an important next step. We hope the framework here is useful for that effort.

A few directions strike us as particularly interesting:
\begin{itemize}

    \item \textbf{Beyond $\bm \psi_L \bm = \bm \psi_R$.} In the MCR family, addressability when $k$ gets large is prohibited by the symmetric nature of the action on each half of the code imposed by $\psi_L = \psi_R$. Can addressable gates be achieved from affine $\psi_L, \psi_R$ where $\psi_L \neq \psi_R$? Or, moving beyond affine permutations entirely, what automorphism structure is yielded from choosing classical cyclic codes that are known to have degenerate structure allowing non-affine permutations, such as quadratic residue and Golay codes, or simplex and Hamming codes? Finally, can we better characterize when block-mixing automorphisms are allowed?
    
    \item \textbf{Extending the three-space dependency to broader code families.} The central technical idea of this work yields a tangible way to assess the automorphism structure of GB codes beyond the guaranteed automorphisms of any 2BGA code as outlined in \cite{automorphisms_BBcodes}. This central idea is not inherently restricted to GB codes; it seems natural that a quantum code built from other ring-theoretic ingredients could inherit a three-space dependency in which ring automorphisms become code automorphisms. It would be interesting to investigate this possibility for broader families of 2GBA codes

    \item \textbf{Distance, stabilizer weight, automorphism co-design.} The analysis of this work strictly pertains to how to assess and design the automorphism structure of GB codes, and pays no attention to the resulting distance or stabilizer weight of a given code. It would be fascinating to see if viewing the structure of GB codes as cyclic submodules yields any insight into the distance or stabilizer weight of a given code, or if any results pertaining to distance from classical coding theory become applicable.

    \item \textbf{Targeted automorphism gate sets.} The MCR construction leverages properties of $f, p, q, \ell$ that yield a large number of multiplier and CX automorphisms; however, this construction says nothing about the actual logical action of any given automorphism. The extent to which this formalism and the canonical logical operators derived in Sec.~\ref{sec:LogicalOps} can be used to derive codes with automorphisms that can enact specific logical gates remains unexplored.
\end{itemize}

More broadly, the perspective advocated here is that classical cyclic coding theory is not merely a historical antecedent to quantum coding theory but an active algebraic toolkit whose ring-theoretic and module-theoretic machinery have direct quantum analogues. We view the three-space dependency as a unifying lens through which many existing and future constructions may be profitably re-examined, and we hope that the conditions, constructions, and conjectures presented here seed a broader effort to design quantum codes, with their full complement of logical gates, from the algebra outward.

\section*{Acknowledgments}

A.D. acknowledges support of the  MIT Jacobs Family Presidential Fellowship. J.B. acknowledges support from the MIT Center for Quantum Engineering / Laboratory for Physical Sciences Doc Bedard Fellowship. We thank Zhiyang He for helpful discussions and reviewer comments, and Noah Berthusen and Drew Potter for helpful discussions about fold CNOTs. This work made use of resources provided by SubMIT at MIT Physics \cite{submit_cluster}.

\bibliographystyle{unsrtnat}
\bibliography{bibliography}

@article{breuckmann2021balancedproduct,
	title        = {{Balanced Product Quantum Codes}},
	author       = {Breuckmann, Nikolas P. and Eberhardt, Jens N.},
	year         = 2021,
	month        = oct,
	journal      = {IEEE Transactions on Information Theory},
	volume       = 67,
	number       = 10,
	pages        = {6653--6674},
	url          = {https://ieeexplore.ieee.org/document/9490244/}
}

@article{kovalev2013quantumkronecker,
	title        = {{Quantum Kronecker sum-product low-density parity-check codes with finite rate}},
	author       = {Kovalev, Alexey A. and Pryadko, Leonid P.},
	year         = 2013,
	month        = jul,
	journal      = {Physical Review A},
	publisher    = {American Physical Society},
	volume       = 88,
	number       = 1,
	pages        = {012311},
	url          = {https://link.aps.org/doi/10.1103/PhysRevA.88.012311}
}

@article{panteleev2022quantumldpc,
	title        = {{Quantum {LDPC} Codes With Almost Linear Minimum Distance}},
	author       = {Panteleev, Pavel and Kalachev, Gleb},
	year         = 2022,
	month        = jan,
	journal      = {IEEE Transactions on Information Theory},
	volume       = 68,
	number       = 1,
	pages        = {213--229},
	url          = {https://ieeexplore.ieee.org/document/9567703},
	urldate      = {2025-12-17}
}

@misc{yoder2025tourgrossmodularquantum,
	title        = {{Tour de gross: A modular quantum computer based on bivariate bicycle codes}},
	author       = {Yoder, Theodore J. and Schoute, Eddie and Rall, Patrick and Pritchett, Emily and Gambetta, Jay M. and Cross, Andrew W. and Carroll, Malcolm and Beverland, Michael E.},
	year         = 2025,
	url          = {https://arxiv.org/abs/2506.03094},
	eprint       = {2506.03094},
	archivePrefix = {arXiv},
	primaryClass = {quant-ph}
}

@article{Breuckmann_foldtransversal,
	title        = {{Fold-Transversal Clifford Gates for Quantum Codes}},
	author       = {Breuckmann, Nikolas P. and Burton, Simon},
	year         = 2024,
	month        = jun,
	journal      = {Quantum},
	publisher    = {Verein zur Forderung des Open Access Publizierens in den Quantenwissenschaften},
	volume       = 8,
	pages        = 1372,
	url          = {https://quantum-journal.org/papers/q-2024-06-13-1372/}
}

@misc{wang2022distanceboundsgeneralizedbicycle,
	title        = {{Distance bounds for generalized bicycle codes}},
	author       = {Wang, Renyu and Pryadko, Leonid P.},
	year         = 2022,
	url          = {https://arxiv.org/abs/2203.17216},
	eprint       = {2203.17216},
	archivePrefix = {arXiv},
	primaryClass = {quant-ph}
}

@misc{parsimonioussurgery,
	title        = {{Parsimonious Quantum Low-Density Parity-Check Code Surgery}},
	author       = {Yuan, Andrew C. and Cowtan, Alexander and He, Zhiyang and Lin, Ting-Chun and Williamson, Dominic J.},
	year         = 2026,
    journal      = {arXiv preprint arXiv:2603.05082},
	url          = {https://arxiv.org/abs/2603.05082},
	archivePrefix = {arXiv},
	primaryClass = {quant-ph}
}

@article{Panteleev_2021,
	title        = {{Degenerate Quantum {LDPC} Codes With Good Finite Length Performance}},
	author       = {Panteleev, Pavel and Kalachev, Gleb},
	year         = 2021,
	month        = nov,
	journal      = {Quantum},
	publisher    = {Verein zur Forderung des Open Access Publizierens in den Quantenwissenschaften},
	volume       = 5,
	pages        = 585,
	url          = {https://quantum-journal.org/papers/q-2021-11-22-585/pdf/}
}

@article{sayginel2025faulttolerantlogicalcliffordgates,
  title          = {Fault-Tolerant Logical Clifford Gates from Code Automorphisms},
  author         = {Sayginel, Hasan and Koutsioumpas, Stergios and Webster, Mark and Rajput, Abhishek and Browne, Dan E.},
  journal        = {PRX Quantum},
  volume         = {6},
  issue          = {3},
  pages          = {030343},
  numpages       = {24},
  year           = {2025},
  month          = {Sep},
  publisher      = {American Physical Society},
  url            = {https://link.aps.org/doi/10.1103/vf7v-cpq9}
}

@misc{phantomcodes,
	title        = {{Entangling logical qubits without physical operations}},
	author       = {Koh, Jin Ming and Gong, Anqi and Diaconu, Andrei C. and Tan, Daniel Bochen and Geim, Alexandra A. and Gullans, Michael J. and Yao, Norman Y. and Lukin, Mikhail D. and Majidy, Shayan},
	year         = 2026,
	url          = {https://arxiv.org/abs/2601.20927},
	journal      = {arXiv preprint arXiv:2601.20927},
	primaryClass = {quant-ph}
}

@misc{ExhaustiveAtomorphisms,
	title        = {{Exhaustive Optimisation of Automorphism Groups for Stabiliser Codes}},
	author       = {Mac Aree, Aisling and Howard, Mark},
	year         = 2026,
	url          = {https://arxiv.org/abs/2604.01282},
	eprint       = {2604.01282},
	archivePrefix = {arXiv},
	primaryClass = {quant-ph}
}

@article{Original_BBpaper,
	title        = {{High-threshold and low-overhead fault-tolerant quantum memory}},
	author       = {Bravyi, Sergey and Cross, Andrew W. and Gambetta, Jay M. and Maslov, Dmitri and Rall, Patrick and Yoder, Theodore J.},
	year         = 2024,
	month        = mar,
	journal      = {Nature},
	publisher    = {Springer Science and Business Media LLC},
	volume       = 627,
	number       = 8005,
	pages        = {778--782},
	url          = {https://www.nature.com/articles/s41586-024-07107-7}
}

@misc{Zhao_highrate,
	title        = {{Towards Ultra-High-Rate Quantum Error Correction with Reconfigurable Atom Arrays}},
	author       = {Zhao, Chen and Duckering, Casey and Gu, Andi and Maskara, Nishad and Zhou, Hengyun},
	year         = 2026,
	url          = {https://arxiv.org/abs/2604.16209},
	journal      = {arXiv preprint arXiv:2604.16209},
	primaryClass = {quant-ph}
}

@misc{automorphisms_BBcodes,
	title        = {{Logical Operators and Fold-Transversal Gates of Bivariate Bicycle Codes}},
	author       = {Eberhardt, Jens Niklas and Steffan, Vincent},
	year         = 2024,
	url          = {https://arxiv.org/abs/2407.03973},
	eprint       = {2407.03973},
	archivePrefix = {arXiv},
	primaryClass = {quant-ph}
}

@misc{feng2026permutationautomorphismgroupsirreducible,
	title        = {{The permutation automorphism groups of irreducible cyclic codes}},
	author       = {Feng, Tao and Hollmann, Henk D. L. and Li, Weicong and Xiang, Qing},
	year         = 2026,
	url          = {https://arxiv.org/abs/2603.01904},
	eprint       = {arXiv preprint arXiv:2603.01904},
	primaryClass = {math.CO}
}

@article{BergerCharpin,
	title        = {{The Automorphism Groups of {BCH} Codes and of Some Affine-Invariant Codes Over Extension Fields}},
	author       = {Berger, Thierry and Charpin, Pascale},
	year         = 1999,
	month        = dec,
	journal      = {Designs, Codes and Cryptography},
	publisher    = {Springer Science and Business Media LLC},
	volume       = 18,
	pages        = {29--53},
	url          = {https://dl.acm.org/doi/abs/10.1023/A%3A1008372800005}
}

@book{dummit2003abstract,
	title        = {{Abstract Algebra}},
	author       = {Dummit, D.S. and Foote, R.M.},
	year         = 2003,
	publisher    = {Wiley},
	isbn         = {9780471433347},
	lccn         = {2003057652}
}

@article{Abdukhalikov_2026,
	title        = {{Quasi-cyclic codes of index 2}},
	author       = {Abdukhalikov, Kanat and Dzhumadil'daev, Askar S. and Ling, San},
	year         = 2026,
	month        = jun,
	journal      = {Discrete Mathematics},
	publisher    = {Elsevier BV},
	volume       = 349,
	number       = 6,
	pages        = 115004,
	url          = {https://arxiv.org/abs/2504.00568}
}

@article{dastbasteh2023polynomialrepresentationadditivecyclic,
  title          = {Polynomial representation of additive cyclic codes and new quantum codes},
  author         = {Reza Dastbasteh and Khalil Shivji},
  journal        = {Adv. Math. Commun.},
  year           = {2023},
  volume         = {19},
  pages          = {49-68},
  url            = {https://api.semanticscholar.org/CorpusID:255372576}
}

@article{QCLally,
	title        = {{Quasicyclic codes of index $\ell$ over $\bbF_q$ viewed as $\bbF_q[x]$-submodules of $\bbF_{q^\ell}[x]/\langle x^m-1\rangle$}},
	author       = {Lally, Kristine},
	year         = 2003,
	journal      = {LCNS: Applied algebra, algebraic algorithms and error-correcting codes},
	publisher    = {Springer Berlin Heidelberg},
	volume       = 2643,
	pages        = {244--253},
    url          = {https://link.springer.com/chapter/10.1007/3-540-44828-4_26},
}

@article{martinez2017codesaffinealgebrasfinite,
    author       = {Mart\'{i}nez-Moro, E. and Pi\~{n}era-Nicol\'{a}s, A. and R\'{u}a, I.F.},
    title        = {Codes over affine algebras with a finite commutative chain coefficient ring},
    year         = {2018},
    issue_date   = {January 2018},
    publisher    = {Elsevier Science Publishers B. V.},
    address      = {NLD},
    volume       = {49},
    number       = {C},
    url          = {https://dl.acm.org/doi/10.1016/j.ffa.2017.09.008},
    journal      = {Finite Fields Appl.},
    month        = jan,
    pages        = {94–107},
    numpages     = {14},

}

@article{cyclic_negacyclic_chainrings,
	title        = {{Cyclic and negacyclic codes over finite chain rings}},
	author       = {Dinh, Hai Quang and Lopez-Permouth, S.R.},
	year         = 2004,
	journal      = {IEEE Transactions on Information Theory},
	volume       = 50,
	number       = 8,
	pages        = {1728--1744},
    url          = {https://ieeexplore.ieee.org/document/1317117}
}

@article{norton2000structure,
	title        = {{On the Structure of Linear and Cyclic Codes over a Finite Chain Ring}},
	author       = {Norton, Graham H. and S{\u{a}}l{\u{a}}gean, Ana},
	year         = 2000,
	journal      = {Applicable Algebra in Engineering, Communication and Computing},
	publisher    = {Springer},
	volume       = 10,
	number       = 6,
	pages        = {489--506},
    url          = {https://link.springer.com/article/10.1007/PL00012382},
}

@article{Palfy1987,
	title        = {{Isomorphism problem for relational structures with a cyclic automorphism}},
	author       = {P\'alfy, P. P.},
	year         = 1987,
	journal      = {European Journal of Combinatorics},
	volume       = 8,
	number       = 1,
	pages        = {35--43},
    url          = {https://www.sciencedirect.com/science/article/pii/S0195669887800185},
}

@article{HuffmanJobPless1993,
	title        = {{Multipliers and generalized multipliers of cyclic objects and cyclic codes}},
	author       = {Huffman, W. C. and Job, V. and Pless, V.},
	year         = 1993,
	journal      = {Journal of Combinatorial Theory, Series A},
	volume       = 62,
	number       = 2,
	pages        = {183--215},
    url          = {https://dl.acm.org/doi/10.1016/0097-3165%2893%2990043-8}
}

@article{Guenda2010,
	title        = {{The permutation groups and the equivalence of cyclic and quasi-cyclic codes}},
	author       = {Guenda, Kenza},
	year         = 2010,
	journal      = {arXiv preprint},
	eprint       = {1002.2456},
    url          = {https://arxiv.org/abs/1002.2456}
}

@article{DastbastehLisonek2022,
    author       = {Dastbasteh, Reza and Lison\v{e}k, Petr},
    title        = {On the equivalence of linear cyclic and constacyclic codes},
    year         = {2023},
    issue_date   = {Sep 2023},
    publisher    = {Elsevier Science Publishers B. V.},
    address      = {NLD},
    volume       = {346},
    number       = {9},
    url          = {https://dl.acm.org/doi/10.1016/j.disc.2023.113489},
    journal      = {Discrete Math.},
    month        = sep,
    numpages     = {16},
}

@book{huffmanpless,
	title        = {{Fundamentals of Error-Correcting Codes}},
	author       = {Huffman, W. Cary and Pless, Vera},
	year         = 2003,
	publisher    = {Cambridge University Press},
	address      = {Cambridge, UK},
	isbn         = {9780521782807}
}

@phdthesis{gottesman_thesis,
	title        = {{Stabilizer Codes and Quantum Error Correction}},
	author       = {Gottesman, Daniel},
	year         = 1997,
	school       = {California Institute of Technology},
	address      = {Pasadena, California},
	url          = {https://arxiv.org/abs/quant-ph/9705052},
	eprint       = {quant-ph/9705052},
	archivePrefix = {arXiv}
}

@misc{gurobi,
	title        = {{Gurobi Optimizer Reference Manual}},
	author       = {{Gurobi Optimization, LLC}},
	year         = 2026,
	url          = {https://www.gurobi.com}
}

@article{og_HGP,
	title        = {{Quantum {LDPC} Codes With Positive Rate and Minimum Distance Proportional to the Square Root of the Blocklength}},
	author       = {Tillich, Jean-Pierre and Zemor, Gilles},
	year         = 2014,
	month        = feb,
	journal      = {IEEE Transactions on Information Theory},
	publisher    = {Institute of Electrical and Electronics Engineers (IEEE)},
	volume       = 60,
	number       = 2,
	pages        = {1193--1202},
	url          = {http://dx.doi.org/10.1109/TIT.2013.2292061}
}

@incollection{eczoo_hgp,
	title        = {{Hypergraph product ({HGP}) code}},
	author       = {},
	year         = 2024,
	booktitle    = {The Error Correction Zoo},
	url          = {https://errorcorrectionzoo.org/c/hypergraph_product},
	editor       = {Albert, Victor V. and Faist, Philippe}
}

@inproceedings{PK_asmyptoticallygood,
	title        = {{Asymptotically good Quantum and locally testable classical {LDPC} codes}},
	author       = {Panteleev, Pavel and Kalachev, Gleb},
	year         = 2022,
	booktitle    = {Proceedings of the 54th Annual ACM SIGACT Symposium on Theory of Computing},
	publisher    = {Association for Computing Machinery},
	address      = {New York, NY, USA},
	series       = {STOC 2022},
	pages        = {375--388},
	url          = {https://dl.acm.org/doi/10.1145/3519935.3520017},
	numpages     = 14,
	location     = {Rome, Italy}
}

@article{qianzhao_constantoverhead,
	title        = {{Constant-overhead fault-tolerant quantum computation with reconfigurable atom arrays}},
	author       = {Xu, Qian and Bonilla Ataides, J. Pablo and Pattison, Christopher A. and Raveendran, Nithin and Bluvstein, Dolev and Wurtz, Jonathan and Vasi\'{c}, Bane and Lukin, Mikhail D. and Jiang, Liang and Zhou, Hengyun},
	year         = 2024,
	journal      = {Nature Physics},
	volume       = 20,
	number       = 7,
	pages        = {1084--1090},
	url          = {https://www.nature.com/articles/s41567-024-02479-z}
}

@misc{kasai2026,
	title        = {{Breaking the Orthogonality Barrier in Quantum {LDPC} Codes}},
	author       = {Kasai, Kenta},
	year         = 2026,
	url          = {https://arxiv.org/abs/2601.08824},
	journal       = {arXiv preprint arXiv:2601.08824},
	archivePrefix = {arXiv},
	primaryClass = {quant-ph}
}

@article{cohen2022low-overhead,
	title        = {{Low-overhead fault-tolerant quantum computing using long-range connectivity}},
	author       = {Cohen,  Lawrence Z. and Kim,  Isaac H. and Bartlett,  Stephen D. and Brown,  Benjamin J.},
	year         = 2022,
	month        = {May},
	journal      = {Science Advances},
	publisher    = {American Association for the Advancement of Science (AAAS)},
	volume       = 8,
	number       = 20,
	issn         = {2375-2548},
	url          = {https://www.science.org/doi/10.1126/sciadv.abn1717}
}

@article{cross2024improved,
	title        = {{Improved {QLDPC} Surgery: Logical Measurements and Bridging Codes}},
	author       = {Andrew Cross and Zhiyang He and Patrick Rall and Theodore Yoder},
	year         = {2024},
	journal      = {arXiv preprint arXiv:2407.18393},
	pages        = {},
	url          = {https://arxiv.org/abs/2407.18393}
}

@article{williamson2024low-overhead,
	title        = {{Low-overhead fault-tolerant quantum computation by gauging logical operators}},
	author       = {Williamson, Dominic J. and Yoder, Theodore J.},
	year         = 2026,
	journal      = {Nature Physics},
	volume       = 22,
	number       = 4,
	pages        = {598--603},
	url          = {https://www.nature.com/articles/s41567-026-03220-8}
}

@article{Ide_2025,
	title        = {{Fault-Tolerant Logical Measurements via Homological Measurement}},
	author       = {Ide, Benjamin and Gowda, Manoj G. and Nadkarni, Priya J. and Dauphinais, Guillaume},
	year         = 2025,
	month        = jun,
	journal      = {Physical Review X},
	publisher    = {American Physical Society (APS)},
	volume       = 15,
	number       = 2,
	url          = {https://journals.aps.org/prx/abstract/10.1103/PhysRevX.15.021088}
}

@article{swaroop2024universal,
	title        = {{Universal Adapters between Quantum Low-Density Parity Check Codes}},
	author       = {Swaroop, Esha and Jochym-O'Connor, Tomas and Yoder, Theodore J.},
	year         = 2026,
	journal      = {PRX Quantum},
	volume       = 7,
	number       = 1,
	url          = {https://journals.aps.org/prxquantum/abstract/10.1103/1g44-jp62}
}

@article{he2025extractors,
	title        = {{Extractors: {QLDPC} Architectures for Efficient Pauli-Based Computation}},
	author       = {Zhiyang He and Alexander Cowtan and Dominic J Williamson and Theodore J Yoder},
	year         = {2025},
	journal      = {arXiv preprint arXiv:2503.10390},
	pages        = {},
	url          = {https://arxiv.org/abs/2503.10390}
}

@misc{submit_cluster,
    title        ={Sub{MIT}: A Physics Analysis Facility at {MIT}},
    author       ={Josh Bendavid and Mariarosaria D'Alfonso and Jan Eysermans and Chad Freer and Maxim Goncharov and Matthew Heine and Luca Lavezzo and Marianne Moore and Christoph Paus and Xuejian Shen and David Walter and Zhangqier Wang},
    year         ={2025},
    eprint       ={2506.01958},
    archivePrefix={arXiv},
    primaryClass ={cs.DC},
    url          ={https://arxiv.org/abs/2506.01958},
}

@book{conway_sloane_1999,
  author         = {Conway, John H. and Sloane, Neil J. A.},
  title          = {Sphere Packings, Lattices and Groups},
  edition        = {3},
  publisher      = {Springer-Verlag},
  address        = {New York},
  year           = {1999},
  series         = {Grundlehren der mathematischen Wissenschaften},
  volume         = {290},
  isbn           = {978-0-387-98585-5},
}

@article{prange_1962,
  author         = {Prange, Eugene},
  title          = {The Use of Information Sets in Decoding Cyclic Codes},
  journal        = {IRE Transactions on Information Theory},
  volume         = {8},
  number         = {5},
  pages          = {5--9},
  year           = {1962},
  url            = {https://ieeexplore.ieee.org/document/1057777}
}

@INPROCEEDINGS{grassl_autos,
  author         = {Grassl, Markus and Roetteler, Martin},
  booktitle      = {2013 IEEE International Symposium on Information Theory}, 
  title          = {Leveraging automorphisms of quantum codes for fault-tolerant quantum computation}, 
  year           = {2013},
  volume         = {},
  number         = {},
  pages          = {534-538},
  url            = {https://ieeexplore.ieee.org/document/6620283}
}

@misc{berthusen2025automorphismgadgetshomologicalproduct,
  title          ={Automorphism gadgets in homological product codes}, 
  author         ={Noah Berthusen and Michael J. Gullans and Yifan Hong and Maryam Mudassar and Shi Jie Samuel Tan},
  year           ={2025},
  eprint         ={2508.04794},
  archivePrefix  ={arXiv},
  primaryClass   ={quant-ph},
  url            ={https://arxiv.org/abs/2508.04794}, 
}

@misc{berthusen2025simplelogicalquantumcomputation,
  title          = {Simple logical quantum computation with concatenated symplectic double codes}, 
  author         = {Noah Berthusen and Elijah Durso-Sabina},
  year           = {2025},
  eprint         = {2510.18753},
  archivePrefix  = {arXiv},
  primaryClass   = {quant-ph},
  url            = {https://arxiv.org/abs/2510.18753}, 
}
\appendix

%

\section{List of Symbols}\label{appen:symbols}

The tables below collect the principal symbols used throughout the paper,
organized according to the three-space dependency framework introduced in
Sec.~\ref{sec:3SpaceIso}: the polynomial ring space ($R_\ell$ and its
quotients), the matrix space ($\bbF_2$ binary circulant blocks), and the
physical qubit space ($\bbF_2^{2\ell}$ and its automorphism group).  Each
table is tinted with the same color used for that space in
Fig.~\ref{fig:3spaceiso}.  The ``Main usage'' column points to the
principal theorem, definition, lemma, or proposition where the symbol is
introduced or plays its central role.  Indices and other purely auxiliary
scaffolding ($i$, $j$, $z$, $t$, $d_i$, $\zeta_i$, $\rho_i$, $\sigma_i$,
etc.) are omitted.

\renewcommand{\arraystretch}{1.3}

\begin{table*}[htbp]
\centering
\small
\begin{tabular}{@{} p{2cm} p{8.4cm} l @{}}
\arrayrulecolor{green!50!black}\hline
\rowcolor{green!20}
\multicolumn{3}{c}{\textbf{Polynomial Ring Space}} \\
\hline
\rowcolor{green!8}
\textbf{Symbol} & \textbf{Description} & \textbf{Main or first usage} \\
\hline
$\ell$
  & block length of the underlying classical cyclic code; $n=2\ell$ physical qubits
  & Sec.~\ref{sec:threespaces} \\
$R_\ell$
  & base ring $\bbF_2[x]/\langle x^\ell-1\rangle$ in which all defining polynomials live
  & Sec.~\ref{sec:threespaces} \\
$R_\ell^2$
  & pair module $R_\ell\oplus R_\ell$ whose cyclic submodules realize GB rowspaces
  & Thm.~\ref{thm:cyclic_submodule} \\
$\langle g\rangle$
  & principal ideal of $R_\ell$ generated by $g$
  & Thm.~\ref{thm:cyclic_submodule} \\
$f_1,\,f_2$
  & defining polynomials of the Generalized Bicycle code
  & Sec.~\ref{sec:gbreview} \\
$f$
  & shared factor $\gcd(f_1,f_2,x^\ell-1)$; controls $k=2\deg f$
  & Thm.~\ref{thm:cyclic_submodule} \\
$\hat f$
  & annihilator generator $(x^\ell-1)/f$ of $\langle f\rangle$ in $R_\ell$
  & Thm.~\ref{thm:cyclic_submodule} \\
$p,\,q$
  & transfer polynomials, $f_1=pf$ and $f_2=qf$, with $\gcd(p,q,x^\ell-1)=1$
  & Thm.~\ref{thm:cyclic_submodule} \\
$\overleftarrow{\alpha}$
  & reciprocal/reversal $\alpha(x^{-1})$; encodes circulant transposition
  & Thm.~\ref{thm:cyclic_submodule} \\
$M_g(\alpha,\beta)$
  & cyclic submodule $\{(\alpha a,\beta a):a\in\langle g\rangle\}\subseteq R_\ell^2$
  & Thm.~\ref{thm:cyclic_submodule} \\
$S$
  & quotient ring $R_\ell/\langle\hat f\rangle$; ambient space for transfer polynomial identities
  & Thm.~\ref{thm:cyclic_submodule}, Thm.~\ref{thm:substitution_automorphisms} \\
$\overleftarrow{S}$
  & reversed quotient $R_\ell/\langle\hat f(x^{-1})\rangle$; ambient space for $H_Z$ side identities
  & Thm.~\ref{thm:fold_cx} \\
$L_i$
  & local CRT chain ring $\bbF_2[x]/\langle f_i^{2^s}\rangle$ at component $i$
  & Thm.~\ref{thm:explicit-basis} \\
$f_i$
  & $i$-th distinct irreducible factor of $x^m-1$ over $\bbF_2$ ($\ell=2^s m$)
  & Thm.~\ref{thm:explicit-basis} \\
$h_i$
  & CRT lifting indicator $(x^\ell-1)/f_i^{2^s}$; selects component $i$
  & Thm.~\ref{thm:explicit-basis} \\
$\psi, \psi_L, \psi_R$
  & generic coordinate permutation / ring automorphism of $R_\ell$
  & Prop.~\ref{prop:block_automorphisms} \\
$\psi_i$
  & cyclic shift $f(x)\mapsto x^i f(x)$ on $R_\ell$
  & Sec.~\ref{subsubsec:bsa_cyclic} \\
$\psi_{x^j}$
  & substitution multiplier $f(x)\mapsto f(x^j)$ with $\gcd(j,\ell)=1$
  & Thm.~\ref{thm:substitution_automorphisms} \\
$\Pres(f)$
  & multipliers $j\in(\Z/\ell\Z)^\times$ with $\langle f(x^j)\rangle=\langle f\rangle$
  & Thm.~\ref{thm:substitution_automorphisms} \\
${\rm Stab}$, ${\rm Swap}$, ${\rm Inv}$, ${\rm SwapInv}$
  & four sets partitioning useful multipliers by their action on $(p,q)$ in $S$
  & Thm.~\ref{thm:substitution_automorphisms} \\
$r$
  & transfer ratio $r=pq^{-1}\in S$; defining MCR data $r^2+r+1=0$
  & Def.~\ref{def:MCR} \\
\hline
\end{tabular}
\end{table*}

\vspace{1em}

\begin{table*}
\centering
\small
\begin{tabular}{@{} l p{8.4cm} l @{}}
\arrayrulecolor{blue!50!black}\hline
\rowcolor{blue!20}
\multicolumn{3}{c}{\textbf{Matrix Space}} \\
\hline
\rowcolor{blue!8}
\textbf{Symbol} & \textbf{Description} & \textbf{Main or first usage} \\
\hline
$\mathrm{circ}(g)$
  & $\ell\times\ell$ binary circulant matrix associated to $g\in R_\ell$
  & Thm.~\ref{thm:cyclic_submodule} \\
$A,\,B$
  & binary circulant blocks $\mathrm{circ}(f_1),\mathrm{circ}(f_2)$
  & Sec.~\ref{sec:gbreview} \\
$H_X,\,H_Z$
  & GB stabilizer parity-check matrices $H_X=[A\,|\,B]$, $H_Z=[B^T\,|\,A^T]$
  & Thm.~\ref{thm:cyclic_submodule} \\
$\operatorname{rs}(\cdot)$
  & rowspace of a parity-check matrix; equals a cyclic submodule of $R_\ell^2$
  & Thm.~\ref{thm:cyclic_submodule} \\
\hline
\end{tabular}
\end{table*}

\vspace{1em}

\begin{table*}
\centering
\small
\begin{tabular}{@{} l p{8.4cm} l @{}}
\arrayrulecolor{red!50!black}\hline
\rowcolor{red!20}
\multicolumn{3}{c}{\textbf{Physical Qubit Space}} \\
\hline
\rowcolor{red!8}
\textbf{Symbol} & \textbf{Description} & \textbf{Main or first usage} \\
\hline
$\bbF_2^{2\ell}$
  & $2\ell$-dimensional physical qubit space, partitioned into Left/Right blocks
  & Sec.~\ref{sec:3spaceGB} \\
$S_{2\ell}$
  & symmetric group of coordinate permutations on the $2\ell$ qubits
  & Cor.~\ref{cor:automorphism_criterion} \\
$\phi$
  & code automorphism, an element of $S_{2\ell}$ preserving the stabilizer
  & Cor.~\ref{cor:automorphism_criterion} \\
$\sigma$
  & full block swap, $(i,i+\ell)\mapsto(i+\ell,i)$
  & Prop.~\ref{prop:block_automorphisms} \\
$\phi_i$
  & block-wise cyclic shift $\psi_i\oplus\psi_i$
  & Sec.~\ref{subsubsec:bsa_cyclic} \\
$\phi_{x^j}$
  & block-wise substitution multiplier $\psi_{x^j}\oplus\psi_{x^j}$
  & Thm.~\ref{thm:substitution_automorphisms} \\
$\Sigma$
  & universal block-swap involution $\sigma\circ\phi_{x^{-1}}$; gives the $H$-type fold gate
  & Sec.~\ref{subsubsec:bsa_block} \\
$M_\sim,\,M_\leftrightarrow$
  & rowspace-preserving and rowspace-swapping classes of code automorphisms
  & Cor.~\ref{cor:automorphism_criterion} \\
$\tau$
  & $M_\leftrightarrow$ involution / ZX-duality used as a fold for transversal gates
  & Prop.~\ref{prop:S_type_swapping}, \ref{prop:S_type_preserving} \\
$S_\tau$
  & S-type fold-transversal logical gate built from $\tau$
  & Prop.~\ref{prop:S_type_swapping} \\
  Fix($\tau$) & Fixed point set of the involution $\tau$ & Thm.~\ref{thm:fold_s_gate}, Prop.~\ref{prop:S_type_preserving}\\
$\mathcal{L}_X,\,\mathcal{L}_Z$
  & spaces of $X$- and $Z$-type logical operators of the GB code
  & Sec.~\ref{subsubsec:crt_setup} \\
$\bar X_i,\,\bar Z_i$
  & canonical logical operator representatives in $\mathcal{L}_X,\mathcal{L}_Z$
  & Thm.~\ref{thm:explicit-basis} \\
$k$
  & number of encoded logical qubits, $k=2\deg(f)$
  & Sec.~\ref{sec:gbreview} \\
\hline
\end{tabular}
\end{table*}

\arrayrulecolor{black}

\section{Ring Theory and Classical Cyclic Coding Theory}\label{appen:codingtheory}

The theory contained in this section is provided to establish notation for the main paper and is in no way novel. This appendix seeks not to re-derive existing results, but is presented for readers who may not be intimately familiar with the algebraic notions used in the paper, or arguments underlying key concepts.  A few explicit proofs are provided where the understanding gained serves a greater purpose for this work. For each Lemma, if a rigorous proof is not provided, a citation is provided to a standard text that the reader may reference.

\subsection{Ring Theory Review}
This section reviews known facts about the study of $R_\ell = \bbF_2[x]/\langle x^\ell-1\rangle$. $R_\ell$ is a polynomial quotient ring where all elements are polynomials of degree less than $\ell$ with coefficients in $\bbF_2$. ``Working in $R_\ell$'' is functionally equivalent to working $\bmod{\ x^\ell-1}$ when performing polynomial arithmetic. 

Let $\ell = 2^sm$ where $m$ is odd and $s \geq 0$. The Frobenius automorphism for $\bbF_q$ says that over $\bbF_q[x], \ g(x^q) = g(x)^q$. As such, over $\bbF_2$, we have that $h(x) = x^\ell-1 = x^{2^sm} - 1 = (x^{2^s})^m - 1 = (x^m-1)^{2^s}$. As $\bbF_2[x]$ is a unique factorization domain, $x^\ell-1$ admits a unique factorization into a product of irreducible polynomials:
\bea
x^\ell -1 = (x^m-1)^{2^s} =  \prod_{i = 1}^a g_i^{2^s}
\eea
where each $g_i$ is unique. Thus, when $\ell$ is odd, no factor is repeated, while when $\ell$ is even, each factor has the same multiplicity, $2^s$. The Chinese Remainder Theorem then gives
\bea
    R_\ell \;\cong\; \prod_{i=1}^a \mathbb{F}_2[x]/\langle g_i^{2^s} \rangle 
\eea
When $s = 0$ and $\ell$ is odd, each individual $\bbF_2[x]/\langle g_i \rangle$ is itself a field with $d = 2^{\deg(g_i)}$ elements. When $s>0$, each  $\bbF_2[x]/\langle g_i^{2^s} \rangle$ forms both a chain ring and an Artinian local ring. Chain rings are discussed in depth in Sec.~\ref{subsubsec:crt_setup}.

$R_\ell$ is a principal ideal ring, which means that every ideal of $R_\ell$ is principal --- i.e., generated by a single element, and all its ideals are generated by divisors of $x^\ell-1$. We investigate why this must be true in Lemma~\ref{lem:equivalent_ideals}.

\begin{lem}\label{lem:equivalent_ideals}
    Let $g$ be an irreducible factor of $x^\ell - 1$. For some polynomial $r \in R_\ell$, 
    \begin{enumerate}[label=(\roman*)]
        \item $\langle r\rangle = \langle g \rangle $ if and only if $\gcd(r, x^\ell-1) = g$
        \item $\langle r \rangle = \langle g \rangle$ if and only if $g = ur$ for $u \in R_\ell^\times$, i.e., $g, r$ are associates
        \item All ideals of $R_\ell$ are generated by divisors of $x^\ell-1$
    \end{enumerate}
\end{lem}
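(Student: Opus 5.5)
The plan is to pull everything back to $\bbF_2[x]$, which is a principal ideal domain, and to use the correspondence theorem. I will prove (iii) first, since it drives the rest, and I will treat $g$ as an arbitrary monic divisor of $x^\ell-1$; irreducibility is not needed.

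\textbf{Step 1: part (iii).} Let $\pi:\bbF_2[x]\to R_\ell$ be the quotient map. Ideals $I\subseteq R_\ell$ correspond bijectively to ideals $\pi^{-1}(I)\subseteq\bbF_2[x]$ containing $\langle x^\ell-1\rangle$. Each such ideal is $\langle d\rangle$ for a unique monic $d$, and containment of $x^\ell-1$ forces $d\mid x^\ell-1$. Hence $I=\langle \pi(d)\rangle$ with $d$ a divisor of $x^\ell-1$, and distinct monic divisors give distinct ideals.

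\textbf{Step 2: part (i).} For $r\in R_\ell$ (lifted to $\bbF_2[x]$), the preimage of $\langle r\rangle$ is $\langle r, x^\ell-1\rangle$. By B\'ezout this equals $\langle \gcd(r,x^\ell-1)\rangle$. Because the correspondence in Step 1 is injective on monic divisors, $\langle r\rangle=\langle g\rangle$ holds exactly when $\gcd(r,x^\ell-1)=g$.

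\textbf{Step 3: part (ii).} One direction is trivial: if $g=ur$ with $u\in R_\ell^\times$, then $\langle g\rangle\subseteq\langle r\rangle$ and $r=u^{-1}g$ gives the reverse inclusion.

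The converse is the main obstacle. From $\langle r\rangle=\langle g\rangle$ we only obtain $r=ag$ and $g=br$ with $a,b\in R_\ell$, and $b$ need not itself be a unit. My approach is to decompose via the CRT, $R_\ell\cong\prod_i \bbF_2[x]/\langle g_i^{2^s}\rangle$, into finite local rings $L_i$ and argue componentwise.

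In a component $L_i$, write $g_i',r_i',a_i,b_i$ for the images. Then $g_i'(1-a_ib_i)=0$.
\begin{itemize}
\item If $g_i'=0$, then $r_i'=a_ig_i'=0$ as well, and we set $u_i=1$.
\item If $g_i'\neq 0$, then $1-a_ib_i$ must be a non-unit. In a local ring this forces $a_ib_i$ to lie outside the maximal ideal, so $a_ib_i$ is a unit, hence $b_i$ is a unit; set $u_i=b_i$.
\end{itemize}
Gluing the $u_i$ by the CRT yields $u\in R_\ell^\times$ with $g=ur$.

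An alternative route avoids locality. Write $r=g r'$ with $\gcd(r',\hat g)=1$, where $\hat g=(x^\ell-1)/g$; this is available by Step 2. Then adjust $r'$ by multiples of $\hat g$ until it becomes a unit in $R_\ell$. That adjustment again requires a CRT argument, so I would present the local-ring version.
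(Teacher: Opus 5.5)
Your proposal is correct. Part (ii) follows the paper's own argument: you pass to the CRT components $L_i$, note that $g_i'(1-a_ib_i)=0$, use locality to force $b_i$ to be a unit whenever the component of $g$ is nonzero, and glue the $u_i$. Your notation $g_i'$ for the component of $g$ also avoids the paper's reuse of $g_i$ for both the irreducible factor and the image of $g$.

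The genuine difference is in (i) and (iii). The paper cites Huffman--Pless (Thm.~4.4.4) for (i) and then declares (iii) a consequence of (i) and (ii). You instead prove (iii) first, from the correspondence theorem for $\pi:\bbF_2[x]\to R_\ell$ and the fact that $\bbF_2[x]$ is a PID. You then obtain (i) by computing the preimage $\langle r, x^\ell-1\rangle=\langle \gcd(r,x^\ell-1)\rangle$ and using injectivity on monic divisors.

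Your ordering is the more complete one. Statements (i) and (ii) only concern principal ideals, so deriving (iii) from them silently assumes every ideal of $R_\ell$ is principal, and that is exactly what your Step~1 supplies. Your argument also shows that irreducibility of $g$ is never used. This matters because the paper later applies the lemma to reducible divisors, for example when it asserts $\langle f(x^j)\rangle=\langle f\rangle \iff f(x^j)=uf$ in defining $\mathrm{Pres}(f)$.

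The paper's route is shorter because it outsources (i) to a textbook; yours is self-contained at the cost of a few lines. Your alternative route for (ii) (adjusting $r'$ modulo $\hat g$ to a unit) is also valid, but as you note it still needs a CRT or unit-lifting argument, so the local-ring version is the right one to present.
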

\begin{proof}
    (i) is a standard fact from polynomial ring theory, and a proof can be found in Theorem 4.4.4, page 144 of \cite{huffmanpless}. 

    (ii) is slightly more involved. Adopting the notation in Sec.~\ref{subsubsec:crt_setup}, let $L_i = \mathbb{F}_2[x]/\langle g_i^{2^s} \rangle$ denote the chain ring at each CRT component. All $r \in R_\ell$ corresponds to a tuple $(r_1, r_2, \dots, r_a)$, with each $r_i \in L_i$, and $r \in R_\ell^\times$ if and only if each $r_i$ is a unit in $L_i$. 
    
    As each $g_i$ is irreducible, we have that each $L_i$ has maximal ideal $\langle g_i\rangle$. In a local ring $L_i$, any element that is not contained in the maximal ideal is a unit (\cite{dummit2003abstract}, section 7.4). Thus, $r_i$ is a unit in $L_i$ if $r_i \notin \langle g_i\rangle$. 

    If $\langle g \rangle = \langle r \rangle$ in $R_\ell$ we must then have $\langle g_i \rangle = \langle r_i \rangle \ \forall L_i$. If  $\langle g_i \rangle = \langle r_i \rangle = 0$ they are trivially separated by units, and so we consider the non-trivial case. If $\langle g_i \rangle = \langle r_i \rangle \neq 0$, we have that
    \bea
    g_i = \alpha_ir_i \quad r_i = \beta_ig_i \implies (1-\alpha_i\beta_i)g_i = 0
    \eea
    If $\alpha_i$ or $\beta_i \in \langle g_i\rangle$, then $\alpha_i\beta_i \in \langle g_i \rangle$, which implies that $(1-\alpha_i\beta_i)$ is a unit. However, $(1-\alpha_i\beta_i)g_i = 0$, and if $1-\alpha_i\beta_i$ is a unit, then $g_i = 0$, which is a contradiction and so $\alpha_i, \beta_i$ must themselves be units in $L_i$. Thus, in every local ring $L_i$, $g, r$ in the same ideals are associates. 
    
    By the CRT isomorphism of $R_\ell$, we can always construct a tuple $(u_1, \dots, u_a)$ that associates $g_i, r_i$ in each local ring, and as such, corresponds to a unit in $R_\ell$, and so $g, r$ are associates in $R_\ell.$ 

    (iii) is a consequence of (i) and (ii).
\end{proof}

Because each ideal is generated by a divisor of $x^\ell-1$, even when working with $\langle r \rangle$, it is typical to assess the properties of the ideal in terms of $\langle g \rangle$, the equivalent ideal generated by the divisor of $x^\ell-1$. 

We record some facts about ideals and their elements over the next two lemmas. 
\begin{lem}\label{lem:annihilator_invertible}
Let $\langle g \rangle$ be an ideal of $R_\ell$. Then $q \in R_\ell$ is invertible in $R_\ell / \langle g \rangle $ if and only if $\gcd(q, g) = 1$ in $\mathbb{F}_2[x]$.
\end{lem}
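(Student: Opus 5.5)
The plan is to reduce everything to the Euclidean domain $\mathbb{F}_2[x]$ and apply Bézout. First I will make precise what ``$q$ invertible in $R_\ell/\langle g\rangle$'' means. Since $\langle g\rangle \supseteq \langle x^\ell - 1\rangle$ (as $g \mid x^\ell-1$, which is the setting throughout, e.g.\ $g=\hat f$), the third isomorphism theorem gives
\[
R_\ell/\langle g \rangle \cong \mathbb{F}_2[x]/\langle g\rangle .
\]
So the statement is equivalent to: $q$ is a unit in $\mathbb{F}_2[x]/\langle g\rangle$ iff $\gcd(q,g)=1$. I will note that $\gcd(q,g)$ does not depend on the choice of lift of $q$ from $R_\ell$ to $\mathbb{F}_2[x]$. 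Two lifts differ by a multiple of $x^\ell-1$, which is itself a multiple of $g$, so $\gcd(q + t(x^\ell-1), g)=\gcd(q,g)$.

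For the ($\Leftarrow$) direction, I assume $\gcd(q,g)=1$. Since $\mathbb{F}_2[x]$ is a PID (Euclidean), Bézout gives $a,b\in\mathbb{F}_2[x]$ with $aq+bg=1$. Reducing modulo $g$ gives $aq\equiv 1$, so the class of $a$ is an inverse of $q$. This is constructive via the extended Euclidean algorithm, which is how the inverse would be computed in practice.

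For the ($\Rightarrow$) direction, I suppose $aq\equiv 1 \pmod g$, i.e.\ $aq + bg = 1$ for some $b$. Then any common divisor $c$ of $q$ and $g$ divides $1$, so $c$ is a unit of $\mathbb{F}_2$, i.e.\ $c=1$. Hence $\gcd(q,g)=1$. For an alternative consistent with the CRT language of the paper, I could argue componentwise: if $h=\gcd(q,g)\neq 1$ then $q\in\langle h\rangle$, and the ideal $\langle h \rangle/\langle g\rangle$ is proper. A unit cannot lie in a proper ideal.

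I expect no step to be a real obstacle. The only care needed is the first step: the passage between $R_\ell/\langle g\rangle$ and $\mathbb{F}_2[x]/\langle g\rangle$, and the well-definedness of the gcd condition on classes. Both rest on $g \mid x^\ell - 1$.
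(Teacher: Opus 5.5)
Your proof is correct and follows the paper's overall strategy. The ($\Leftarrow$) direction is exactly the paper's B\'ezout argument. For ($\Rightarrow$) the routes differ slightly. The paper argues by contradiction that a common factor $d=\gcd(q,g)$ of positive degree makes $q$ a zero divisor: writing $g=dh$, one has $qh\equiv 0$ while $h\not\equiv 0 \pmod g$. You instead run B\'ezout in reverse: $aq+bg=1$ forces every common divisor to divide $1$.

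Your version is a little cleaner because it needs no case analysis. The paper's claim $0<\deg h$ fails when $g\mid q$ (then $h=1$), although its conclusion survives there since $q\equiv 0$.

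Your preliminary step is a genuine gain in precision. Both the identification $R_\ell/\langle g\rangle\cong\mathbb{F}_2[x]/\langle g\rangle$ and the check that $\gcd(q,g)$ does not depend on the lift of $q$ use $g\mid x^\ell-1$. The paper leaves this implicit, relying on the convention that ideals are written with their divisor generator, but the hypothesis is really needed. For $\ell=3$, the unit $g=x$ gives the zero ring as quotient, so $q=x$ is invertible, yet $\gcd(x,x)=x\neq 1$. Moreover, the lift $x^3+x+1$ of the same class has $\gcd$ equal to $1$, so without the hypothesis the condition is not even well defined.
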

\begin{proof}

$(\Leftarrow)$\; If $\gcd(q, g) = 1$, then by B\'{e}zout's identity there exist $u, v \in \mathbb{F}_2[x]$ with
\bea
uq + vg = 1.
\eea
Reducing modulo $g$ gives $uq \equiv 1$, so $q$ is a unit.

\medskip\noindent
$(\Rightarrow)$\; Suppose $d = \gcd(q, g)$ has $\deg d \geq 1$.  Write $g = dh$ with $0 < \deg h < \deg g$.  Since $d \mid q$ we have $qh \equiv 0 \pmod{g}$, while $h \not\equiv 0 \pmod{g}$ (as $\deg h < \deg g$).  Thus $q$ is a zero divisor in $\mathbb{F}_2[x]/\langle g \rangle$ and cannot be a unit.
\end{proof}

\begin{lem}\label{lem:subsetideals}
    Let $g$ be an irreducible divisor of $x^\ell - 1$, and $a$ be any element in $R_\ell$. Then, $\langle ag\rangle = \{r\cdot ag | r \in R_\ell \} = \{ as| s \in \langle g \rangle\}$. 
\end{lem}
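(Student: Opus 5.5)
The plan is to prove the two set inclusions directly from the definition of a principal ideal in the commutative ring $R_\ell$. Irreducibility of $g$ plays no role, and I will not need it. The middle set $\{r\cdot ag \mid r \in R_\ell\}$ is by definition $\langle ag\rangle$. So the content of the lemma is the equality $\{r\cdot ag \mid r\in R_\ell\} = \{as \mid s\in\langle g\rangle\}$.

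For ``$\subseteq$'', I take an arbitrary element $r\cdot ag$ with $r \in R_\ell$. Commutativity and associativity of multiplication in $R_\ell$ let me rewrite it as $a\cdot(rg)$. Setting $s = rg$, we have $s \in \langle g\rangle$ because $\langle g\rangle$ is exactly the set of $R_\ell$-multiples of $g$. Hence $r\cdot ag = as$ lies in the right-hand set.

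For ``$\supseteq$'', I take $as$ with $s\in\langle g\rangle$. By definition of the principal ideal, $s = rg$ for some $r\in R_\ell$. Then $as = a(rg) = r\cdot(ag)$, again by commutativity, so $as \in \langle ag\rangle$. The two inclusions give equality.

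No step is genuinely hard. The only point needing care is to note that everything is computed in the quotient $R_\ell$, so that products are reduced modulo $x^\ell-1$. Since reduction is a ring homomorphism $\bbF_2[x]\to R_\ell$, the rewritings above are valid in $R_\ell$, and no divisibility or gcd argument is required. I would also remark afterwards that this is exactly the fact used in Sec.~\ref{sec:gbreview}, with $a=p$ (resp.\ $q$) and $g=f$, to write $\langle f_1\rangle = \{pa \mid a\in\langle f\rangle\}$. That application does not need $f$ irreducible, which is consistent with the argument never using irreducibility.
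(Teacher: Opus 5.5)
Your proof is correct and follows essentially the same route as the paper. The paper writes it as a single chain of set equalities $\{as \mid s\in\langle g\rangle\} = \{arg \mid r\in R_\ell\} = \{r\cdot(ag)\mid r\in R_\ell\}$, using commutativity of $R_\ell$, while you split it into two inclusions. Your observation that irreducibility of $g$ is never used is also correct, and it matters: the paper applies the lemma with $g=f$, which need not be irreducible.
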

\begin{proof}
    $\{as | s \in \langle g \rangle\} = \{as | s = r\cdot g, r \in R_\ell\} = \{arg | r \in R_\ell\} = \{r\cdot (ag) | r \in R_\ell \} = \langle ag \rangle$
\end{proof}

A polynomial of particular interest, given an ideal, is the corresponding polynomial that generates the \textit{annihilator}, denoted $\ann(g) = \langle \hat{g} \rangle$, and defined as follows:
\bea
\hat{g} = \frac{x^\ell - 1}{g}
\eea
The annihilator is the set of things that sends any element of $\langle g \rangle$ to 0 in $R_\ell$ under multiplication. If $p \in \langle g \rangle$, then $p = p'g$ for some $p' \in R_\ell$. Similarly $q = q'\hat{g}$, for $q \in \langle \hat{g}\rangle$. Observe then that
\[
pq = p'gq'\hat{g} = p'q'g\hat{g} = p'q'g\frac{x^\ell-1}{g} = p'q'(x^\ell-1) \equiv 0 \pmod{x^\ell-1}
\]
Thus, elements from $\langle \hat{g}\rangle$ annihilate elements of $\langle g \rangle$ under multiplication. We might be interested in a few properties of the annihilator, which we explore in the next three lemmas:

\begin{lem}\label{lem:annihilatordimension}
    $\dim(\langle\hat{g}\rangle) = \deg(g) = d$, and $\hat{g}, x\hat{g}, \dots, x^{d-1}\hat{g}$ are a linearly independent set of $\bbF_2^\ell$ vectors.  
\end{lem}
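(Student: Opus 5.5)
Here $g$ is a divisor of $x^\ell-1$ of degree $d$ and $\hat g=(x^\ell-1)/g$, so $\deg\hat g=\ell-d$. The plan is to identify $\langle\hat g\rangle$ with the set of polynomials of degree less than $\ell$ divisible by $\hat g$ in $\bbF_2[x]$. Then both the dimension count and the linear independence follow from degree arguments.

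First, show that every element of $\langle\hat g\rangle$ has the form $b\hat g \bmod (x^\ell-1)$ with $\deg b<d$. Take an arbitrary element $r\hat g$ with $r\in R_\ell$. Use the division algorithm in $\bbF_2[x]$ to write $r=cg+b$ with $\deg b<d$. Then
\[
r\hat g=cg\hat g+b\hat g=c(x^\ell-1)+b\hat g\equiv b\hat g,
\]
so the element equals $b\hat g$. Moreover $\deg(b\hat g)\le d-1+\ell-d=\ell-1<\ell$, so $b\hat g$ is already reduced modulo $x^\ell-1$ and no wraparound occurs. This shows that $\langle\hat g\rangle$ is spanned by $\hat g,x\hat g,\dots,x^{d-1}\hat g$, so $\dim\langle\hat g\rangle\le d$.

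Second, prove linear independence. Suppose $\sum_{i<d}c_ix^i\hat g=0$ in $R_\ell$, and set $b=\sum_{i<d} c_i x^i$. As noted above, $\deg(b\hat g)<\ell$, so $b\hat g=0$ in $R_\ell$ forces $b\hat g=0$ in $\bbF_2[x]$. Since $\bbF_2[x]$ is an integral domain and $\hat g\neq0$, this gives $b=0$, so every $c_i=0$. An alternative argument is that the coefficient vectors $x^i\hat g$ for $i<d$ have distinct leading degrees $\ell-d+i$, so they form a triangular, hence independent, set.

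Combining the two steps, $\{x^i\hat g\}_{i<d}$ is a basis of $\langle\hat g\rangle$ and $\dim\langle\hat g\rangle=d$. The main point needing care is the no-wraparound degree bound: it is what lets us transfer the statements from $R_\ell$ to $\bbF_2[x]$, where degree and cancellation arguments are valid.
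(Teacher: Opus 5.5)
Your proof is correct. The independence step matches the paper's. Both arguments use that $x^i\hat g$ for $i<d$ has degree $\ell-d+i\le \ell-1$, so nothing wraps around modulo $x^\ell-1$. Distinct degrees, or your cancellation in the integral domain $\mathbb{F}_2[x]$, then force independence.

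The difference is in the dimension count. The paper invokes the cyclic-code dimension formula $\dim\langle h\rangle=\ell-\deg h$ for a generator polynomial $h\mid x^\ell-1$ (Huffman--Pless, Thm.~4.2.1) with $h=\hat g$. It never states explicitly that the $x^i\hat g$ span $\langle\hat g\rangle$; that follows only implicitly by counting. You instead prove spanning directly, by reducing an arbitrary multiplier $r$ modulo $g$ and using $g\hat g=x^\ell-1$. This is essentially the standard proof of the cited theorem, specialized to this case.

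Your route is self-contained and produces the basis statement explicitly. The paper's is shorter but relies on the textbook result.
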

\begin{proof}
    (Theorem 4.2.1, page 125 of \cite{huffmanpless})
    $\hat{g}$ defines a cyclic code with generator polynomial $\hat{g}$. We know that
    \bea
    \hat{g} = \frac{x^\ell-1}{g}
    \eea
    and thus has degree $\ell-d$ for $d = \deg(g)$. Thus, by the dimension formula for cyclic codes, we have
    \bea
    \dim(\langle \hat{g} \rangle) = \ell - \deg(\hat{g}) = \ell - (\ell - d) = d = \deg(g)
    \eea
    It remains to show that  $\hat{g}, x\hat{g}, \dots, x^{d-1}\hat{g}$ are linearly independent over  $\mathbb{F}_2$. For $0 \leq i \leq d-1$, the element $x^i \hat{g}$ has degree  $(\ell - d) + i \leq \ell - 1$, so no reduction modulo $x^\ell - 1$ occurs. These are therefore genuine polynomials of strictly increasing degrees  $\ell - d,\; \ell - d + 1,\; \dots,\; \ell - 1$, and polynomials of  distinct degrees are linearly independent over any field.
\end{proof}

\begin{lem}\label{lem:annihilator_sufficiency}
Let $q, q' \in R_\ell$.  If $q \equiv q' \pmod{\hat{g}}$, i.e., $q \equiv q'$ in $S = R_\ell/\langle \hat{g} \rangle$, then $qa = q'a$ for every $a \in \langle g \rangle$.
\end{lem}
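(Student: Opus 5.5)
The plan is to reduce everything to the annihilation property $\hat{g}\cdot g = x^\ell - 1 = 0$ in $R_\ell$, which was verified in the discussion immediately preceding Lemma~\ref{lem:annihilatordimension}. First, I will unpack the hypothesis: $q \equiv q'$ in $S = R_\ell/\langle \hat{g}\rangle$ means precisely that $q - q' \in \langle \hat{g} \rangle$. So I write $q - q' = c\,\hat{g}$ for some $c \in R_\ell$. Equivalently, $\hat{g} \mid (q - q')$ in $\mathbb{F}_2[x]$ modulo $x^\ell - 1$; since $\hat{g} \mid x^\ell - 1$, the two descriptions agree.

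Next, I take an arbitrary $a \in \langle g\rangle$ and write $a = b\,g$ with $b \in R_\ell$, using the definition of the principal ideal. Then I compute in the commutative ring $R_\ell$:
\[
qa - q'a = (q - q')a = c\,\hat{g}\,b\,g = c\,b\,(\hat{g} g) = c\,b\,(x^\ell - 1) = 0 .
\]
This gives $qa = q'a$ for every $a \in \langle g \rangle$. This is exactly the computation used in the proof of Theorem~\ref{thm:cyclic_submodule} for the uniqueness of $p$ modulo $\hat{f}$, now stated for a general divisor $g$.

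There is essentially no obstacle. The only point needing care is that $\hat{g} = (x^\ell-1)/g$ is a genuine polynomial quotient, so $g$ must divide $x^\ell - 1$ in $\mathbb{F}_2[x]$. This is the standing convention in which $\hat{g}$ is defined. With that understood, the product $\hat{g}g$ equals $x^\ell - 1$ on the nose in $\mathbb{F}_2[x]$, and hence vanishes in $R_\ell$. I would add a one-line remark that the statement says the action of $S$ on $\langle g\rangle$ by multiplication is well defined, which is how the lemma is invoked in Theorem~\ref{thm:substitution_automorphisms}.
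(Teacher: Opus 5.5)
Your proof is correct and is essentially the paper's argument. Both write $a$ as a multiple of $g$ and use that $q-q' \in \langle \hat{g}\rangle$ annihilates $g$, since $\hat{g}\,g = x^\ell - 1 = 0$ in $R_\ell$. The paper just packages this as $(q-q')g = 0$ first and then multiplies by $r$.
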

\begin{proof}
The condition $q \equiv q' \pmod{\hat{g}}$ means $q - q' \in \langle \hat{g} \rangle$, i.e.\ $(q - q')g \equiv 0$ in $R_\ell$.  Since every $a \in \langle g \rangle$ is a multiple of $g$, say $a = rg$, we have
\bea
(q - q')a = (q - q')rg = r \cdot (q - q')g = 0,
\eea
so $qa = q'a$.
\end{proof}

\begin{lem}\label{lem:f_unit_in_S}
    Given $\ell$ odd with $g, \hat{g}, R_\ell$ defined as throughout this paper, $g \in S^\times = (R_\ell/\langle \hat{g} \rangle)^\times$
\end{lem}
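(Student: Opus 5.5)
The plan is to show that $g$ and $\hat{g}$ are coprime in $\mathbb{F}_2[x]$ and then apply Lemma~\ref{lem:annihilator_invertible} with the modulus $\hat{g}$. That lemma says an element $q \in R_\ell$ is invertible in $R_\ell/\langle \hat{g}\rangle$ if and only if $\gcd(q,\hat{g}) = 1$. So it suffices to prove $\gcd(g, \hat{g}) = 1$ when $\ell$ is odd.

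The coprimality comes from squarefreeness. When $\ell$ is odd, we have $s = 0$ in the decomposition $\ell = 2^s m$, so $x^\ell - 1 = \prod_i g_i$ with each irreducible $g_i$ appearing exactly once. An alternative justification is that the derivative $\ell x^{\ell-1} = x^{\ell-1}$ is coprime to $x^\ell - 1$.

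Since $g \mid x^\ell - 1$, $g$ is a product of some subset of the $g_i$, and $\hat{g} = (x^\ell-1)/g$ is the product of the complementary subset. Any irreducible common divisor of $g$ and $\hat{g}$ would then divide $x^\ell - 1$ at least twice, contradicting squarefreeness. Hence $\gcd(g,\hat{g}) = 1$.

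To finish, I would make the Bézout step explicit, since it is the content of Lemma~\ref{lem:annihilator_invertible}. Write $ug + v\hat{g} = 1$ in $\mathbb{F}_2[x]$ and reduce modulo $x^\ell - 1$ and then modulo $\hat{g}$. This gives $ug \equiv 1$ in $S$, so $g \in S^\times$.

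One compatibility point needs care. Here $S = R_\ell/\langle \hat{g}\rangle$, and because $\hat{g} \mid x^\ell - 1$ this is isomorphic to $\mathbb{F}_2[x]/\langle \hat{g}\rangle$. The quotient by $\langle x^\ell - 1\rangle$ taken first is therefore harmless.

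I expect no real obstacle. The only subtle step is the well-definedness of passing between $R_\ell$ and $\mathbb{F}_2[x]$ modulo $\hat{g}$, which follows from $\hat{g} \mid x^\ell - 1$.

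For context, the statement can fail for even $\ell$. A shared factor such as $(x+1)$ dividing both $g$ and $\hat{g}$ obstructs invertibility, and this is why the odd hypothesis is used. The same argument also gives the generalization that $g \in S^\times$ whenever $\gcd(g,\hat{g}) = 1$.
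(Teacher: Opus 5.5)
Your proof is correct and follows the same route as the paper: coprimality of $g$ and $\hat{g}$, a B\'ezout identity $ug + v\hat{g} = 1$, and reduction into $S$ to obtain $ug = 1$. The paper simply asserts that $g$ and $\hat{g}$ share no divisors, so your squarefreeness argument for odd $\ell$ and your check that $R_\ell/\langle \hat{g}\rangle \cong \mathbb{F}_2[x]/\langle \hat{g}\rangle$ supply details the paper leaves implicit.
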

\begin{proof}
    As $g, \hat{g}$ share no divisors, we have that, by B\'{e}zout's identity, $ag + b\hat{g} = 1$ in $R_\ell$. When we quotient into $S$, this becomes
    \bea
    ag + b\hat{g} = ag = 1 \qquad \mathrm{in} \ S
    \eea
    That is, there exists some $a$ such that $ag = 1$ and so $g$ is a unit.
\end{proof}

Given the content of this work, a particularly important class of $R_\ell$ properties are the ring automorphisms. Automorphisms as found in abstract algebra carry properties that are often not discussed in the context of quantum code automorphisms. An automorphism on $R_\ell$ is an isomorphism that sends $R_\ell$ back to $R_\ell$, and crucially, retains all algebraic structure on $R_\ell$. Thus, if $\psi$ is a ring automorphism, any statement that is true of $R_\ell$ or its ideals must also be true of $\psi(R_\ell)$.

In particular, we are interested in the multiplier automorphisms of $R_\ell$, characterized in the two lemmas below:
\begin{lem}\label{lem:ringautomorphism}
    For all $j$ with $\gcd(j, \ell) = 1$, the map $\phi:x \mapsto x^{j}$ is a ring automorphism of $R_\ell$
\end{lem}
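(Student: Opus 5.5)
The plan is to define the map at the level of $\mathbb{F}_2[x]$ and show that it descends to $R_\ell$. Let $\tilde\phi:\mathbb{F}_2[x]\to\mathbb{F}_2[x]$ be the substitution homomorphism $g(x)\mapsto g(x^j)$, which is a ring homomorphism because evaluation at any element is one. Composing with the quotient map gives $\mathbb{F}_2[x]\to R_\ell$, $g\mapsto g(x^j)\bmod (x^\ell-1)$.

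For this composite to factor through $R_\ell$, it must kill $x^\ell-1$. We have $\tilde\phi(x^\ell-1)=x^{j\ell}-1=(x^\ell)^j-1$. Since $x^\ell\equiv 1$ in $R_\ell$, this vanishes, and by the universal property of the quotient we obtain a well-defined ring endomorphism $\phi$ of $R_\ell$ with $\phi(x)=x^j$.

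To prove bijectivity, I will use $\gcd(j,\ell)=1$. This gives an inverse $j'$ with $jj'\equiv 1\pmod{\ell}$, which exists because $j$ is a unit in $\mathbb{Z}/\ell\mathbb{Z}$. Let $\phi'$ be the endomorphism constructed in the same way from $x\mapsto x^{j'}$. Then $\phi'\circ\phi$ and $\phi\circ\phi'$ are ring endomorphisms of $R_\ell$ sending $x$ to $x^{jj'}$. Because $x$ has multiplicative order dividing $\ell$ in $R_\ell$, this equals $x^{1}$. A ring endomorphism of $R_\ell$ is determined by the image of the generator $x$, so both composites are the identity, and $\phi$ is an automorphism.

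An alternative, more combinatorial check is that $\phi$ permutes the monomial basis $\{x^k\}_{0\le k<\ell}$ via $k\mapsto jk\bmod \ell$, a bijection of $\mathbb{Z}/\ell\mathbb{Z}$. This also shows that $\phi$ is a coordinate permutation on $\mathbb{F}_2^\ell$, which is the form the paper uses. The only delicate point is the well-definedness step, and it is immediate here, so I do not expect a real obstacle. The one thing to remark explicitly is that coprimality is needed only for invertibility; without it, $\phi$ is still a ring endomorphism but is not injective.
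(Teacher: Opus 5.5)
Your argument is correct. The paper gives no proof of its own here, only a pointer to the multiplier discussion in Huffman--Pless. That discussion treats $\psi_{x^j}$ as the coordinate permutation $k \mapsto jk \bmod \ell$, which is a bijection exactly because $\gcd(j,\ell)=1$. It then checks that this permutation acts by $f(x)\mapsto f(x^j) \bmod (x^\ell-1)$ and hence respects products. That is essentially your closing ``combinatorial'' remark with multiplicativity added. Its advantage is that it exhibits $\psi_{x^j}$ directly as the within-block qubit permutation the paper uses throughout.

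Your main route is more structural: you descend the substitution homomorphism through the quotient and invert it with the map built from $j' \equiv j^{-1} \pmod{\ell}$. This separates well-definedness, which needs no hypothesis, from bijectivity, which uses exactly the coprimality. Because an endomorphism of $R_\ell$ is determined by the image of $x$, it also gives for free the composition law $\psi_{x^a}\circ\psi_{x^b}=\psi_{x^{ab}}$, used blockwise in Table~\ref{tab:block_sep_auts}.

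One cosmetic point: for negative exponents such as $j=-1$, substitute a nonnegative representative of $j$ modulo $\ell$ before writing $x^{j\ell}-1=(x^\ell)^j-1$. This is harmless, since $x^{-1}=x^{\ell-1}$ in $R_\ell$.
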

\begin{proof}
    Page 138 of \cite{huffmanpless}, discussion on multipliers
\end{proof}
\begin{cor}
    By Lemma \ref{lem:ringautomorphism}, $\phi: x \mapsto x^{-1}$ is always a ring automorphism. As such, if $x^j$ is a ring automorphism of $R_\ell$, then $x^{-j}$ is always a ring automorphism.
\end{cor}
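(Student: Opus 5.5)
The corollary asserts that $\psi_{x^{-1}}$ is always a ring automorphism, and that whenever $\psi_{x^j}$ is one, so is $\psi_{x^{-j}}$. Both claims reduce to Lemma~\ref{lem:ringautomorphism} once we verify the gcd hypothesis. So the plan is simply to check that $\gcd(-1,\ell)=1$ and that $\gcd(j,\ell)=1$ implies $\gcd(-j,\ell)=1$, working with exponents as residues modulo $\ell$.

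For the first claim, take $-1$ to mean the residue $\ell-1 \in \mathbb{Z}/\ell\mathbb{Z}$. Any common divisor of $\ell$ and $\ell-1$ divides their difference, which is $1$. Hence $\gcd(\ell-1,\ell)=1$, and Lemma~\ref{lem:ringautomorphism} applies to $x \mapsto x^{\ell-1}$. In $R_\ell$ we have $x^{\ell-1}\cdot x = x^\ell = 1$, so $x^{\ell-1}=x^{-1}$, and this map is exactly $\psi_{x^{-1}}$. This is the same observation made after Theorem~\ref{thm:cyclic_submodule}.

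For the second claim, suppose $\gcd(j,\ell)=1$. A divisor $d$ of both $\ell$ and $\ell-j$ also divides $\ell-(\ell-j)=j$, so $d$ divides $\gcd(j,\ell)=1$. Therefore $\gcd(\ell-j,\ell)=1$, and Lemma~\ref{lem:ringautomorphism} makes $x\mapsto x^{\ell-j}=x^{-j}$ a ring automorphism.

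Alternatively, one can note $\psi_{x^{-j}}=\psi_{x^{-1}}\circ\psi_{x^j}$, since $x\mapsto x^j\mapsto (x^{-1})^j$. A composition of ring automorphisms is a ring automorphism, which gives the same conclusion.

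The only point needing care is interpreting negative exponents as residues modulo $\ell$, which is legitimate because $x^\ell=1$ in $R_\ell$. With that settled, there is no real obstacle; the argument is bookkeeping on top of Lemma~\ref{lem:ringautomorphism}.
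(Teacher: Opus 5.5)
Your proof is correct. For the second claim, the paper's own proof is the alternative you give: ring automorphisms form a group under composition, so $\psi_{x^{-j}} = \psi_{x^{-1}}\circ\psi_{x^j}$ is an automorphism. For the first claim, the paper relies on $\gcd(\ell-1,\ell)=1$ exactly as you do.

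Your primary route instead re-checks the gcd hypothesis of Lemma~\ref{lem:ringautomorphism} for $\ell-j$. It is just as short and keeps everything at the level of exponents. It does, however, read the hypothesis ``$x^j$ is a ring automorphism'' as ``$\gcd(j,\ell)=1$'', and that needs the converse of Lemma~\ref{lem:ringautomorphism}. The converse is true: if $d=\gcd(j,\ell)>1$, then $x^{\ell/d}\neq 1$ but $x^{\ell/d}\mapsto x^{j\ell/d}=1=\psi_{x^j}(1)$, so the map is not injective. The paper's lemma does not state this direction, though. The composition argument avoids the issue by using the hypothesis exactly as stated, which makes it the cleaner choice.
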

\begin{proof}
    Automorphisms form a closed group under composition.
\end{proof}

Combining the previous discussions, we have the following:
\begin{cor}\label{cor:ringautomorphism_idealequiv}
    Given a ring automorphism $\psi_{x^j}: x \rightarrow x^j$ of $R_\ell$ and an ideal $\langle g \rangle$ where $g$ is an irreducible factor in $R_\ell$, $\psi_{x^j}$ preserves the ideal $\langle g \rangle$ if and only if $\langle g(x^j)\rangle = \langle g \rangle$. By Lemma~\ref{lem:equivalent_ideals}, this holds if and only if $\gcd(g(x^j), x^\ell - 1) = g$.
\end{cor}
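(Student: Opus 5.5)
The statement to prove is the last one shown: Corollary~\ref{cor:ringautomorphism_idealequiv}, that $\psi_{x^j}$ preserves $\langle g\rangle$ if and only if $\langle g(x^j)\rangle = \langle g\rangle$, and that this holds if and only if $\gcd(g(x^j), x^\ell-1) = g$. The plan is to derive the first equivalence directly from the fact that $\psi_{x^j}$ is a ring automorphism (Lemma~\ref{lem:ringautomorphism}). The second equivalence then follows by invoking Lemma~\ref{lem:equivalent_ideals}(i) with $r = g(x^j)$.

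For the first equivalence, the key step is to show that a ring automorphism $\psi$ sends a principal ideal to the principal ideal generated by the image of the generator: $\psi(\langle g\rangle) = \langle \psi(g)\rangle$.

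For the inclusion $\subseteq$, I write an element of $\langle g\rangle$ as $rg$. Then $\psi(rg) = \psi(r)\psi(g)$, which lies in $\langle \psi(g)\rangle$.

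For the inclusion $\supseteq$, I take an element $s\psi(g)$. Because $\psi$ is surjective, $s = \psi(r)$ for some $r \in R_\ell$. Hence $s\psi(g) = \psi(rg)$, which lies in $\psi(\langle g\rangle)$.

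With $\psi = \psi_{x^j}$ and $\psi(g) = g(x^j)$, this gives $\psi_{x^j}(\langle g\rangle) = \langle g(x^j)\rangle$. "Preserves the ideal" means exactly $\psi_{x^j}(\langle g\rangle) = \langle g\rangle$, so this is equivalent to $\langle g(x^j)\rangle = \langle g\rangle$.

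A cardinality remark is worth recording here. Since $\psi_{x^j}$ is a bijection, $\psi_{x^j}(\langle g\rangle)$ has the same size as $\langle g\rangle$. So even the one-sided inclusion $\langle g(x^j)\rangle \subseteq \langle g\rangle$ suffices, consistent with the single-inclusion remark in Corollary~\ref{cor:automorphism_criterion}.

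For the second equivalence, I apply Lemma~\ref{lem:equivalent_ideals}(i) with $r = g(x^j)$. The lemma requires $g$ to be a divisor of $x^\ell-1$, and it is: here $g$ is an irreducible factor of $x^\ell - 1$. This yields $\langle g(x^j)\rangle = \langle g\rangle \iff \gcd(g(x^j), x^\ell-1) = g$.

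I expect no real obstacle. The only subtlety is interpretive: the gcd must be computed in $\mathbb{F}_2[x]$ on the lift of $g(x^j)$ reduced modulo $x^\ell-1$. This is harmless because $\gcd(a, x^\ell-1)$ is unchanged when $a$ is altered by a multiple of $x^\ell-1$.
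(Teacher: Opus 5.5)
Your proposal is correct and takes the same route as the paper, whose proof simply cites Lemma~\ref{lem:ringautomorphism} and Lemma~\ref{lem:equivalent_ideals}. You additionally prove $\psi_{x^j}(\langle g\rangle) = \langle g(x^j)\rangle$ from the surjectivity of the ring automorphism, which is the step the paper leaves implicit.
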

\begin{proof}
    Follows from Lemmas~\ref{lem:ringautomorphism} and~\ref{lem:equivalent_ideals}.
\end{proof}

Finally, we end the study of $R_\ell$ with a statement on cyclic submodules of $R_\ell \oplus R_\ell$.
\begin{lem}\label{lem:equivalent_modules}
Let $\ell$ be odd, $g \mid x^\ell - 1$, $\hat{g} = (x^\ell - 1)/g$, and $S = R_\ell/\langle \hat{g} \rangle$. For $(p, q) \in R_\ell^2$ with $\gcd(p, q, x^\ell - 1) = 1$, write
\bea
    M_g(p, q) \;=\; \{(pa, qa) : a \in \langle g \rangle\}.
\eea
Given two such pairs $(p_1, q_1)$ and $(p_2, q_2)$, both satisfying the gcd condition, the following are equivalent:
\begin{enumerate}[label=\textup{(\alph*)}]
    \item $M_g(p_1, q_1) = M_g(p_2, q_2)$
    \item $(p_1, q_1) = u \cdot (p_2, q_2)$ in $S^2$ for some unit $u \in S^\times$.
\end{enumerate}
\end{lem}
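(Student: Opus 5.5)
The plan is to prove both directions by working in the quotient $S$, using that for $\ell$ odd the image of $\langle g\rangle$ in $S$ is all of $S$ (Lemma~\ref{lem:f_unit_in_S}). Multiplication $R_\ell \to S$, $a \mapsto a \bmod \hat g$, restricted to $\langle g\rangle$ is injective, since $\langle g\rangle \cap \langle \hat g\rangle = \{0\}$ when $\gcd(g,\hat g)=1$. It is also surjective, since $g$ is a unit in $S$. So $\langle g\rangle \cong S$ as $R_\ell$-modules. Moreover, for $a\in\langle g\rangle$ the product $pa$ depends only on $p \bmod \hat g$ (Lemma~\ref{lem:annihilator_sufficiency}). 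Hence $M_g(p,q)$ is identified with the cyclic $S$-submodule $S\cdot(\bar p,\bar q)\subseteq S^2$, and the map from $S^2$ back to $\langle g\rangle^2\subseteq R_\ell^2$ is an injective module map. Both statements (a) and (b) therefore reduce to statements about cyclic submodules of $S^2$.

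For (b)$\Rightarrow$(a): if $(p_1,q_1)=u(p_2,q_2)$ in $S^2$ with $u\in S^\times$, then $S(\bar p_1,\bar q_1)=Su(\bar p_2,\bar q_2)=S(\bar p_2,\bar q_2)$, because $uS=S$. Concretely, for $a\in\langle g\rangle$ we have $p_1a=p_2(ua)$, with $ua\in\langle g\rangle$ taken via any lift of $u$. The same holds for $q$, and symmetrically with $u^{-1}$.

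For (a)$\Rightarrow$(b): from $S(\bar p_1,\bar q_1)=S(\bar p_2,\bar q_2)$ we get $(\bar p_1,\bar q_1)=\alpha(\bar p_2,\bar q_2)$ and $(\bar p_2,\bar q_2)=\beta(\bar p_1,\bar q_1)$ for some $\alpha,\beta\in S$. The main obstacle is upgrading $\alpha$ to a unit, since $S$ need not be a domain. Here I use the CRT decomposition $S\cong\prod_i \mathbb{F}_{2^{d_i}}$, valid because $\ell$ is odd and $\hat g$ is squarefree, and argue componentwise. The key input is the gcd condition $\gcd(p_2,q_2,x^\ell-1)=1$: it says that at every irreducible factor $f_i$ of $\hat g$, at least one of $p_2,q_2$ has nonzero image in the field component. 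In that component, $(\bar p_1,\bar q_1)_i=\alpha_i(\bar p_2,\bar q_2)_i$, and by the reverse relation $(\bar p_2,\bar q_2)_i=\beta_i\alpha_i(\bar p_2,\bar q_2)_i$. Since this vector is nonzero, $\alpha_i\beta_i=1$, so $\alpha_i\neq 0$. Thus every component of $\alpha$ is nonzero, so $\alpha\in S^\times$, and we take $u=\alpha$.

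Finally, I will check that no component is vacuous: each factor of $\hat g$ divides $x^\ell-1$, so the gcd hypothesis genuinely applies there. This makes (b) hold with $u$ a unit of $S$, completing the equivalence.
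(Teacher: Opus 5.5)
Your proof is correct, and its skeleton matches the paper's. Both pass to cyclic $S$-submodules of $S^2$, extract mutual multipliers $\alpha,\beta$ with $\beta\alpha(\bar p_2,\bar q_2)=(\bar p_2,\bar q_2)$, and use the gcd hypothesis to force $\beta\alpha=1$.

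The real difference is in that last step. The paper argues globally: $\gcd(p_2,q_2,\hat g)=1$ gives, by B\'ezout, $ap_2+bq_2=1$ in $S$. So $(\bar p_2,\bar q_2)$ has trivial annihilator, and $(\beta\alpha-1)(\bar p_2,\bar q_2)=0$ forces $\beta\alpha=1$. You instead argue componentwise in $S\cong\prod_i\mathbb{F}_{2^{d_i}}$. That makes the role of the gcd condition at each irreducible factor of $\hat g$ transparent, but it relies on $\hat g$ being squarefree. The B\'ezout argument needs neither $\ell$ odd nor any decomposition of $S$.

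Your explicit identification $\langle g\rangle\cong S$ is also worth keeping. It is injective because $\langle g\rangle\cap\langle\hat g\rangle=\{0\}$ and surjective because $g\in S^\times$. It justifies a step the paper passes over when it writes $(p_1,q_1)\in M_g(p_2,q_2)$, which only makes sense after that identification, since elements of $M_g(p_2,q_2)$ have entries in $\langle g\rangle$. You could also get $p_1\equiv c\,p_2$ and $q_1\equiv c\,q_2 \pmod{\hat g}$ directly from $(p_1g,q_1g)\in M_g(p_2,q_2)$ by cancelling $g$ in $\mathbb{F}_2[x]$. That route again avoids using that $\ell$ is odd.
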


\begin{proof}
Under $\gcd(p, q, x^\ell - 1) = 1$, the pair $(p, q) \in S^2$ has trivial annihilator: the only $s \in S$ with $s(p, q) = (0, 0)$ is $s = 0$. Indeed, $\hat{g} \mid x^\ell - 1$ gives $\gcd(p, q, \hat{g}) = 1$, so B\'ezout produces $a, b \in \mathbb{F}_2[x]$ with $ap + bq = 1$ in $S$; if $sp = sq = 0$ then $s = s(ap + bq) = 0$.

\smallskip
\noindent $(b) \Rightarrow (a)$. If $(p_1, q_1) = u(p_2, q_2)$ in $S^2$ with $u \in S^\times$, then every scalar multiple of $(p_1, q_1)$ is a scalar multiple of $(p_2, q_2)$, and conversely via $u^{-1}$. So the two submodules coincide.

\smallskip
\noindent $(a) \Rightarrow (b)$. Suppose the submodules coincide. Then $(p_1, q_1) \in M_g(p_2, q_2)$, so $(p_1, q_1) = u (p_2, q_2)$ for some $u \in S$. Symmetrically, $(p_2, q_2) = v(p_1, q_1)$ for some $v \in S$. Substituting,
\bea
    (p_2, q_2) \;=\; vu \cdot (p_2, q_2),
\eea
so $(vu - 1)(p_2, q_2) = 0$. By the key fact applied to $(p_2, q_2)$, $vu - 1 = 0$, hence $u$ is a unit with inverse $v$.
\end{proof}

\subsection{Cyclic codes as ideals of $\bbF_q[x]/\langle x^\ell-1\rangle$}\label{subsec:cycliccodes}

Given a polynomial $\langle g \rangle$ defining a cyclic code $C$, the codespace of $C$ is all $\bbF_2^\ell$ vectors obtained from polynomials of $\langle g \rangle$ under the polynomial to coefficient mapping
\bea
g = c_0 + c_1x + c_2x^2 + \dots + c_{\ell-1}x^{\ell-1} \mapsto \vec{g} = (c_0, c_1, \dots, c_{\ell-1})
\eea
Recall that the coefficients are drawn from $\bbF_2$. Cyclic codes can also be identified with a corresponding binary circulant matrix, obtained by building a $\bbF_2^{\ell \times \ell}$ matrix with the first row equal to $(c_0, c_1, \dots, c_{\ell-1})$, and every subsequent row a cyclic shift:
\bea
\vec{g} = (c_0, c_1, \dots, c_{\ell-1}) \mapsto \mathrm{circ}(g) = \begin{bmatrix}
    c_0 & c_1  & c_2 & \dots & c_{\ell-2} & c_{\ell-1} \\
    c_{\ell-1} & c_0  & c_1 & \dots & c_{\ell-3} & c_{\ell-2} \\
    c_{\ell-2} & c_{\ell-1}  & c_0 & \dots & c_{\ell-4} & c_{\ell-3} \\
    \vdots & & & \dots & & \vdots \\
    c_2 & c_3  & c_4 & \dots & c_0 & c_1 \\
    c_1 & c_2  & c_3 & \dots & c_{\ell-1} & c_0 \\
\end{bmatrix}
\eea
Observe that the vertical columns of the binary circulant matrix are also circulant with respect to the vector and polynomial
\[
(c_0, c_{\ell-1}, c_{\ell-2}, \dots, c_2, c_1) \mapsto g(x^{-1})
\]
and $\mathrm{circ}(g)^T$ is equivalent to defining the binary circulant matrix defined by $g(x^{-1})$. As $g(x^{-1})$ is obtained from $g$ via the ring automorphism $\psi_{x^{-1}}$, all properties of $C = \langle g \rangle$ similarly hold for $C' = \langle g(x^{-1}) \rangle$.

The permutations of cyclic codes that map cyclic codes back to themselves or to other cyclic codes are well studied in the literature, and we have the following definitions:

\begin{definition} Let $C = \langle g \rangle$ be a cyclic code of $R_\ell$. A permutation $\sigma \in S_\ell$ is a \emph{permutation automorphism} of $C$ if and only if $\sigma(C) = C$. The set of all such permutations forms a subgroup of $S_\ell$, called the \emph{permutation automorphism group} of $C$, and denoted
\[
    \mathrm{PAut}(C) \;=\; \{\, \sigma \in S_\ell \,:\, \sigma(C) = C \,\}.
\]
 
Two distinguished families of permutations either always belong to $\mathrm{PAut}(C)$ or admit a clean characterization for membership in it.
 
\begin{enumerate}
    \item[(i)] \emph{Cyclic shift.} The map $\psi_i : g \mapsto x^ig\pmod{x^\ell-1}$ satisfies $\psi_i \in \mathrm{PAut}(C)$, since $C$ is cyclic by hypothesis.
    \item[(ii)] \emph{Multipliers.} For $j \in (\bbZ/\ell\bbZ)^\times$, define the multiplier permutation $\psi_{x^j} : g \mapsto g(x^j) \pmod{x^\ell-1}$. Then
    \[
        \psi_{x^j} \in \mathrm{PAut}(C) \;\Longleftrightarrow\; \langle g(x^j) \rangle = \langle g \rangle \;\Longleftrightarrow\; g(x^j) = u \cdot g \text{ in } R_\ell
    \]
\end{enumerate}
We can represent the multiplier group of $C$ as
\[
    M(C) \;=\; \{\, j \in (\bbZ/\ell\bbZ)^\ast \,:\, g(x^j) = ug, u \in R_\ell^\times \,\}.
\]
The affine maps $x \mapsto j x + i$ with $j \in M(C)$ and $i \in \bbZ/\ell\bbZ$ all lie in $\mathrm{PAut}(C)$, yielding the subgroup
\[
    \bbZ/\ell\bbZ \rtimes M(C) \;\le\; \mathrm{PAut}(C).
\]
\end{definition}

For a more thorough treatment see \cite{huffmanpless}. For most cyclic codes, one has equality $\mathrm{PAut}(C) = \bbZ/\ell\bbZ \rtimes M(C)$, but a small family of \emph{degenerate} codes admits permutation automorphisms outside this affine subgroup. We review the definition of a degenerate cyclic code:
\begin{definition}
    A code $C = \langle g \rangle$ is degenerate if every codeword has period less than $\ell$. The following statements are equivalent to $C$ being degenerate:
    \begin{enumerate}
        \item There exists a proper divisor $d \mid \ell$, $d < \ell$, such that $\psi_d(c) = c \ \forall c \in C$, i.e., cyclically shifting by $d$ positions returns every codeword back to itself for $d < \ell$.
        \item $\hat{g} \mid x^d-1 $ for $d \mid \ell, \ d < \ell$
    \end{enumerate}
\end{definition}

Non-trivial exceptions to $\mathrm{PAut}(C) = \bbZ/\ell\bbZ \rtimes M(C)$ include the punctured Reed--Muller codes and the binary Golay code, with $\mathrm{PAut}(C) = M_{23}$. It is a longstanding conjecture by Berger and Charpin~\cite{BergerCharpin}, recently confirmed for irreducible cyclic codes~\cite{feng2026permutationautomorphismgroupsirreducible}, that for almost all cyclic codes the permutation automorphism group coincides with the affine group generated by cyclic shifts and multiplier ring automorphisms of $R_\ell$. We note that the general case --- in particular codes defined by products of irreducible factors, which is the setting in this work --- remains open.

\begin{definition} Let $C = \langle g \rangle$ and $C' = \langle h \rangle$ be cyclic codes in $R_\ell = \bbF_2[x]/(x^\ell - 1)$. We say that $C$ and $C'$ are \emph{permutation equivalent}, written $C \sim C'$, if there exists $\sigma \in S_\ell$ with $\sigma(C) = C'$. The set of permutations realizing this equivalence is denoted
\[
    \mathrm{PEq}(C, C') \;=\; \{\, \sigma \in S_\ell \,:\, \sigma(C) = C' \,\}.
\]
This set is either empty (when $C \not\sim C'$) or a left coset of $\mathrm{PAut}(C)$ in $S_\ell$: fixing any $\sigma_0 \in \mathrm{PEq}(C, C')$, one has $\mathrm{PEq}(C, C') = \sigma_0 \cdot \mathrm{PAut}(C)$. In particular, $\mathrm{PEq}(C, C) = \mathrm{PAut}(C)$.

Two distinguished families of permutations carry cyclic codes to cyclic codes, paralleling the previous definition.

\begin{enumerate}
    \item[(i)] \emph{Cyclic shifts.} For $i \in \bbZ/\ell\bbZ$, the map $\psi_i : g \mapsto x^i g \pmod{x^\ell - 1}$ satisfies $\psi_i(C) = C$, so $\psi_i \in \mathrm{PEq}(C, C')$ if and only if $C = C'$.
    \item[(ii)] \emph{Multipliers.} For $j \in (\bbZ/\ell\bbZ)^\ast$, the multiplier is a ring automorphism of $R_\ell$, hence sends the ideal $\langle g \rangle$ to the ideal $\langle g(x^j) \rangle$:
    \[
        \psi_{x^j}(C) \;=\; \langle g(x^j) \rangle,
    \]
    which is itself a cyclic code. Consequently,
    \[
        \psi_{x^j} \in \mathrm{PEq}(C, C') \;\Longleftrightarrow\; \langle g(x^j) \rangle = \langle h \rangle \;\Longleftrightarrow\; g(x^j) = u \cdot h \text{ in } R_\ell, \text{ for some } u \in R_\ell^\times.
    \]
\end{enumerate}

When such a $j$ exists, we say that $C$ and $C'$ are \emph{multiplier equivalent}, and we write $C \sim_M C'$. The set of multipliers realizing the equivalence is
\[
    M(C, C') \;=\; \{\, j \in (\bbZ/\ell\bbZ)^\ast \,:\, g(x^j) = u h \text{ for some } u \in R_\ell^\times \,\},
\]
which is either empty or a left coset of $M(C)$ in $(\bbZ/\ell\bbZ)^\ast$. More generally, for any $j \in M(C, C')$ and $i \in \bbZ/\ell\bbZ$, the affine permutation $k \mapsto j k + i$ of $\bbZ/\ell\bbZ$ lies in $\mathrm{PEq}(C, C')$, contributing the coset
\[
    \sigma_j \cdot \bigl(\bbZ/\ell\bbZ \rtimes M(C)\bigr) \;\subseteq\; \mathrm{PEq}(C, C'),
\]
where $\sigma_j$ is any affine permutation with multiplier part $j$.
\end{definition}

The classical literature contains the following results regarding permutation equivalence:
\begin{itemize}
    \item For $\ell$ with $\gcd(\ell, \varphi(\ell)) = 1$, where $\varphi(\ell)$ denotes the Euler totient function, all permutation equivalences between cyclic codes of length $\ell$ are realized by multipliers~\cite{Palfy1987}, and the affine restriction is provably exhaustive (P\'alfy's regime).
    \item For $\ell = p^r$ a prime power with $r \geq 2$, permutation equivalences between cyclic codes are characterized completely as compositions $M\mu$ of an ordinary multiplier $\mu$ and a \emph{generalized multiplier} $M$~\cite{HuffmanJobPless1993, Guenda2010}.  Generalized multipliers are a finite extension of the affine group, and the resulting permutation equivalences are well-understood.
    \item For composite $\ell$ outside of P\'alfy's regime, sporadic non-affine, non-generalized-multiplier permutation equivalences may exist.  Dastbasteh-Lison\v{e}k~\cite{DastbastehLisonek2022} establish such equivalences, however, these constructions explicitly require $\bbF_q$ with $q \neq 2$ and do not apply in the $q=2$ regime we are working in. However, framing GB codes as cyclic submodules is extendable beyond $q = 2$, and in such cases the results of Ref.~\cite{DastbastehLisonek2022} would potentially apply.
\end{itemize}

\end{document}